  \renewcommand{\ALG@name}{IOP}
\theoremstyle{plain} 
   \newtheorem{thm}{Theorem}[section]
   \newtheorem{fact}[thm]{Fact}
   \newtheorem{lemma}[thm]{Lemma}
   \newtheorem{claim}[thm]{Claim}
   \newtheorem{remark}[thm]{Remark}
   \newtheorem{definition}[thm]{Definition}
\newtheorem{state}{State Function}
\newcommand\skipi{{\vskip 10pt}}
\newcommand{\poly}{{\sf poly}}
\newcommand{\st}{{\sf State}}
\newcommand{\eps}{\varepsilon}
\newcommand{\vs}{{\sf v}}
\newcommand{\p}{{\sf p}}
\newcommand{\T}{\mathcal{T}}
\newcommand{\B}{\mathcal{B}}
\newcommand\inner[2]{\langle{#1},{#2}\rangle}
\newcommand{\norm}[1]{\left\lVert#1\right\rVert}
\newcommand{\agr}{{\sf agr}}
\newcommand{\mc}{\mathcal}
\DeclareMathOperator{\RS}{RS}
\DeclareMathOperator{\RM}{RM}
\newcommand{\E}{\mathop{\mathbb{E}}}
\newcommand{\Ff}{\mathbb{F}}
\newcommand{\List}{{\sf List}}
\newcommand{\rd}{{\sf rd}}
\newcommand{\err}{{\sf err}}
\newcommand{\Lc}{\mathcal{L}}
\newcommand{\tl}{\tau_{{\sf line}}}
\newcommand{\Lg}{\mathcal{L}_{{\sf good}}}
\newcommand{\Fqt}{\ensuremath{\mathbb{F}_{q}^{2}}}
\newcommand{\Fqm}{\ensuremath{\mathbb{F}_{q}^{m}}}
\newcommand{\Fqp}{\ensuremath{\mathbb{F}_{q'}}}
\newcommand{\Fill}{{\sf Fill}}
\newcommand{\wh}{\widehat}
\newcommand{\Side}{{\sf Side}}
\newcommand{\Q}{{\sf Q}}
\newcommand{\iQ}{{\sf inpQ}}
\newcommand{\pQ}{{\sf pfQ}}
\newcommand{\rics}{{\sf R1CS}}
\newcommand{\Len}{{\sf Len}}
\newcommand{\quo}{{\sf Quo}}
\newcommand{\Quo}{{\sf Quo}}
\newcommand{\rsi}{\texttt{RS-IOPP}}
\newcommand{\rsp}{\texttt{RS-Poly}}
\newcommand{\irmp}{\texttt{iRM-Poly}}
\newcommand{\irmb}{\texttt{iRMBatch}}
\newcommand{\rmp}{\texttt{RM-Poly}}
\newcommand{\ricsp}{\texttt{R1CS-Poly}}
\newcommand{\rmi}{\texttt{RM-IOPP}}
\newcommand{\irmi}{\texttt{iRM-IOPP}}
\newcommand{\gp}{\texttt{GenericPolyIOP}}
\newcommand{\batch}{\texttt{BatchedIOPP}}
\newcommand{\Eval}{\texttt{Eval}}
\newcommand{\rsb}{\texttt{RSBatch}}
\newcommand{\compile}{\texttt{Compile}}
\newcommand{\combine}{{\sf combine}}
\newcommand{\bat}{{\sf batch}}
\newcommand{\comp}{{\sf comp}}
\newcommand{\ec}{\eps_{\sf comp}}
\newcommand{\fij}{F^{(i)}_j}
\newcommand{\fhatij}{\wh{F}^{(i)}_j}
\newcommand{\fhatipj}{\wh{F}^{(i')}_j}
\newcommand{\cipj}{\mathcal{C}^{(i')}_j}
\newcommand{\fcalij}{\mathcal{F}^{(i)}_j}
\newcommand{\fcalpij}{\mathcal{F}^{(i)}_j}
\newcommand{\qenc}{q_{{\sf enc}}}
\newcommand{\Fqe}{\Ff_{q_{{\sf enc}}}}
\newcommand{\trmb}{\texttt{tRMBatch}}
\newcommand{\et}{\eps_{{\sf test}}}
\newcommand{\hij}{H^{(i)}_j}
\newcommand{\Disc}{{\sf Disc}}
\newcommand{\prox}{{\sf prox}}
\newcommand{\enc}{{\sf enc}}
\newcommand{\good}{{\sf good}}
\newcommand{\proj}{{\sf proj}}
\crefname{algorithm}{IOP}{IOPs}
\DeclareMathOperator{\iRM}{iRM}
  \renewcommand{\ALG@name}{IOP}
\newcounter{protocol}
\title{Improved Round-by-round Soundness IOPs via 
Reed-Muller Codes}
\author{Dor Minzer\thanks{Department of Mathematics, Massachusetts Institute of Technology. Supported by NSF CCF award 2227876 and NSF CAREER award 2239160.}  \and Kai Zhe Zheng\thanks{Department of Mathematics, Massachusetts Institute of Technology, Cambridge, USA. Supported by the NSF GRFP DGE-2141064 and NSF CCF award 222787.}}
\date{\vspace{-5ex}}
\begin{document}
\maketitle
\begin{abstract}
    We give an IOPP (interactive oracle proof of proximity) for trivariate Reed-Muller codes that achieves 
    the best known query
    complexity in some range of security parameters. Specifically, for degree $d$ and security parameter $\lambda\leq \frac{\log^2 d}{\log\log d}$  , our IOPP has $2^{-\lambda}$ round-by-round soundness, $O(\lambda)$ queries, $O(\log\log d)$ rounds 
    and $O(d)$ length. This improves 
    upon the FRI [Ben-Sasson, Bentov, Horesh, Riabzev, ICALP 2018] and the STIR [Arnon, Chiesa, Fenzi, Yogev, Crypto 2024] 
    IOPPs for Reed-Solomon codes, that have larger query and round complexity standing at $O(\lambda \log d)$ and  $O(\log d+\lambda\log\log d)$
    respectively. We use our IOPP to give an IOP for the NP-complete language $\rics$ with the same parameters.

    Our construction is based on the line versus point test in the low-soundness regime. Compared to the axis parallel test (which is used in all prior works), the general affine lines test has improved soundness, which is the main source of our improved soundness.  
    
    Using this test involves several complications, most significantly that projection to affine lines does not preserve individual degrees, and we show how to overcome these difficulties.
    En route, we extend some existing machinery to more general settings. Specifically, we give proximity generators for Reed-Muller codes, show a more systematic way
    of handling ``side conditions'' in IOP constructions, and generalize the compiling procedure of [Arnon, Chiesa, Fenzi, Yogev, Crypto 2024] to general codes.
\end{abstract}
\section{Introduction}

A central question that appears in both complexity theory and cryptography is the following: 
\begin{center}
   \emph{How can a computationally weak party verify a large computation that is infeasible for them to perform?}
\end{center}
Towards answering this question a number of proof systems were developed. Some of the earliest models 
addressing it are the interactive protocols model and its multi-party counterpart~\cite{GMR85,BGKW88}, which were shown to be as powerful as the classes PSPACE~\cite{LFKN92,Shamir92} and NEXP respectively~\cite{BFL91}. These models later led to Probabilstically Checkable Proofs (PCPs)~\cite{FGLSS, AroraSafra, ALMSS}, and their interactive variant, Interactive Oracle Proofs (IOPs)~\cite{BCS, RRR}, 
which are the main topic of this paper.

In a PCP, a prover attempts to convince a verifier of the validity of a statement. For example, the prover may want to convince the verifier that the output of a known Boolean circuit $C$ on a given input $x\in\{0,1\}^n$ is $b$. To convince the verifier, the prover provides an additional string which is often referred to as proof or witness. The hope is that instead checking the statement by running the circuit itself, the verifier could use the witness to validate the computation more efficiently. Specifically, we wish the verifier to make a small number of queries to the proof and then decide whether to accept or reject the proof. The verifier should accept with probability $1$ if the statement is correct and the prover provided a valid proof, and should accept with probability at most $1/3$ if the statement is incorrect, regardless of the proof the prover provided.

In this language, the celebrated PCP theorem \cite{FGLSS, AroraSafra, ALMSS} states that every statement in NP has a polynomial sized proof that the verifier can check using $O(1)$ queries. The PCP theorem has seen bountiful applications in theoretical computer science, perhaps most notably to the field of \emph{hardness of approximation}, wherein it serves as the starting point of essentially all NP-hardness results for approximation problems. In contrast, PCPs have not received as much attention for their namesake task -- checking proofs! This is partly due to the fact PCP constructions often have a large (but still polynomial) size, as well as the fact that the witness is often rather complicated. 
Even though much progress has been made in the past three decades towards constructing more efficient PCPs, leading to constructions with quasilinear size \cite{BSS,Dinur07,MR10,BMV}, these constructions are still impractical. At present, even constructing linear sized PCPs is a major open problem, let alone practical ones.

To remedy this situation, an interactive variant of PCPs, called Interactive Oracle Proofs (IOPs in short), was introduced \cite{BCS, RRR}. In an IOP, a prover and a verifier exchange messages over multiple rounds. Each prover's message is itself a (long) proof, which the verifier has query access to. The goal is once again for the prover to convince the verifier of some statement, and for the verifier to validate the statement while doing much less work than was necessary to check the statement themselves. It can be seen that PCPs are special cases of IOPs that use only one round of interaction. 
Thus, it stands to reason that IOPs are more powerful, and one may hope to construct more practical IOPs
that outperform the best known PCPs. 

\subsection{Prior Works}
Much effort has gone into constructing IOPs that both theoretically beat PCPs and are also of practical use~\cite{BCGV16,BBCGGH17,BCGRS17,BBHR18,RZR24,BLNR22,RZR22,acy,acfy,WHIR}. The basic parameters of interest in IOPs include the alphabet size, the number of rounds, the number of queries, the completeness, the soundness, and the complexity of the verifier as well as of the prover. 
For practical purposes, one often cares about additional parameters:
\begin{enumerate}
    \item The proof length: this is the total number of symbols the prover sends to the verifier throughout the interaction. Ideally, it should be $O(n)$, and with a good constant factor.
    \item The round-by-round soundness of the protocol, a refinement of the standard soundness, motivated by non-interactive variants of IOPs called SNARKS~\cite{BCS}. Each partial transcript of 
    the protocol is assigned a state which is either ``doomed'' or ``not doomed''. The state at the start of the protocol is doomed if and only if the prover attempts to prove a wrong statement, and in the end the verifier rejects if and only if it is in a doomed state. Thus, the round-by-round soundness of a protocol is
    the maximum probability over all rounds and over all partial transcripts that the state turns from ``doomed'' to ``not doomed'' in a single round.
\end{enumerate}
Most relevant to the current paper are the works~\cite{acy,acfy}. 
Both of these works achieve reasonable proof length and an improved notion of soundness compared to prior works, but differ in the type of soundness they get as well as in the query and round complexity. More specifically,~\cite{acy} achieves polynomial length, inverse polynomial (standard) soundness with $O(\log\log n)$ query and round complexity, whereas~\cite{acfy} achieves linear length, $2^{-\lambda}$ round-by-round soundness, $O(\log n)$ rounds and $O(\log n + \lambda \log \log n)$ queries. In other words, the result of~\cite{acfy} has better length and the stronger round-by-round soundness, but it has
worse query and round complexity. Can one get the best
of both worlds and achieve round-by-round soundness 
with smaller number of rounds and queries?

\subsection{Main Result}
In this section we discuss the main result of this paper, which gives improved IOPs for all NP languages. To state our results, we define the Rank-One-Constraint-Satisfaction problem ($\rics$).

\paragraph{The $\rics$ problem:} As is standard in the area, we construct IOPs for a conveniently chosen, specific NP-Complete language $\rics$. 
Once an IOP for $\rics$ is established, IOPs for the class NP readily follows. Here and throughout, the problem $\rics$ is a problem wherein the input consists of a field $\mathbb{F}$ and $n$ by $n$ matrices $A,B,C$ over $\mathbb{F}$. The task is to determine whether there is a vector $z\in\mathbb{F}^n$ such that $z_{1} = 1$ and 
\[
\left(\sum\limits_{j}A_{i,j} z_j\right)
\left(\sum\limits_{j}B_{i,j} z_j\right)
=
\left(\sum\limits_{j}C_{i,j} z_j\right).
\]
In other words, an instance of $\rics$ consists of
a collection of rank $1$ quadratic functions as well
as a restriction constraint, and the task is to determine whether the system is satisfiable or not. 

Our main result is the following theorem:
\begin{thm} \label{thm: main rbr}
Fix a security parameter $\lambda$ and let $\Ff_q, \Fqe, \Ff_{q'}$ be characteristic $2$ fields such that each of $\Ff_q$ and $\Fqe$ is a subfield of $\Ff_{q'}$, and such that the sizes of the fields satisfy
\[
\qenc \geq \Omega(n^{1/3}) \quad, \quad q \geq \Omega(n^{1/3}) \quad \text{,and} \quad q' \geq \frac{2^{\lambda + 10} \cdot 
\max(\qenc, q)^4}{n^{2/3}}.
\]
Then there is an IOP for $\rics$ with the following guarantees. 
\begin{itemize}
    \item Input: An $\rics$ instance of size $n$.
    \item Completeness: If the instance is satisfiable, then the honest prover makes the verifier accept with probability $1$.
    \item Round-by-round soundness: $2^{-\lambda}$.
    \item Initial state: The initial state is doomed if and only if the instance is not satisfiable.
    \item Round complexity: $O(\log \log n)$.
    \item Proof query complexity: $O\left(\frac{\lambda}{\log(q/n^{1/3})}\right) + O\left(\frac{\lambda^2}{\log^2(\qenc/n^{1/3})} \cdot \log \log(n)\right)$.
    \item Alphabet size: $q'$.
    \item Length: $O(q^3) + O(\qenc^2)$.
\end{itemize}
\end{thm}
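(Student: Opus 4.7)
The plan is to decompose the construction into two main ingredients: an arithmetization that reduces an $\rics$ instance to a proximity-plus-evaluation statement about a trivariate Reed-Muller codeword, and a new IOPP for Reed-Muller codes that handles this statement with the advertised parameters. For the arithmetization, the prover commits to the low-degree extension $\hat{z}$ of the witness $z$, viewed as a function on a subcube $\Hh^3 \subseteq \Ff_q^3$ with $|\Hh|^3 \geq n$, giving a trivariate polynomial of individual degree roughly $n^{1/3}$. The R1CS constraints $(Az)_i (Bz)_i = (Cz)_i$ are then lifted to a polynomial identity that vanishes on $\Hh^3$ iff the instance is satisfiable, involving $\hat{z}$ together with the prescribed extensions of $A,B,C$. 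I check the identity via a sumcheck-style reduction which terminates in a constant number of oracle evaluations of $\hat{z}$, plus a proximity claim that $\hat{z}$ is close to a trivariate Reed-Muller codeword of the right individual degree. The restriction constraint $z_1=1$ and the initial-state condition are folded into this reduction as well.

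For the Reed-Muller IOPP I would exploit the line-versus-point test using general, rather than axis-parallel, affine lines. Because the low-soundness analysis of the general line test gives a much sharper agreement-to-list-size trade-off than the axis-parallel case used in all prior FRI/STIR-style constructions, a single test round drives the soundness error below $2^{-\lambda}$ with only $O\!\left(\lambda/\log(q/n^{1/3})\right)$ queries per line, independent of the degree. I would then recurse on a folded codeword of approximately square-root individual degree, giving a tower of $O(\log\log n)$ rounds; at the outer layer the alphabet is $\Fqe$, which is why the per-layer contribution $O\!\left(\lambda^2/\log^2(\qenc/n^{1/3})\right)$ appears in the query bound. Each recursion step is implemented via a code-generic version of the compiler of \cite{acfy}, which introduces the $\Ff_{q'}$-alphabet random linear combinations; the constraint on $q'$ exactly ensures that the compiler error is at most $2^{-\lambda}$ per round.

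The main obstacle, and the technical heart of the paper, is that restricting a trivariate polynomial of individual degree $d$ to a general affine line produces a univariate polynomial of total degree $3d$ rather than individual degree $d$, so the natural recursion does not close inside the Reed-Muller family. I expect to address this, as the authors do, by combining the proximity generators for Reed-Muller codes they develop with a ``zoom''/change-of-variables step that restores the individual-degree structure, and by systematically carrying the missing structure as low-degree \emph{side conditions} that are discharged in subsequent rounds using the framework developed earlier in the paper. Soundness of the whole composition is then obtained by a round-by-round analysis: each sub-protocol (the arithmetization sumcheck, the general-line test, the proximity-generator test, the compiler, and the recursive folding) is individually $2^{-\lambda}$ round-by-round sound, and the final bound follows by taking the worst round. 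Finally, the length bound $O(q^3)+O(\qenc^2)$ matches the size of the initial LDE of $\hat{z}$ plus the outer encoding used in the first compilation layer, and completeness is immediate from the fact that honest polynomials restrict to honest polynomials on every line.
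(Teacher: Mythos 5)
Your high-level decomposition (arithmetize $\rics$ into a trivariate Reed-Muller proximity statement, fold via the general line-versus-point test, recurse for $O(\log\log n)$ rounds, compile each layer via the Poly-IOP machinery with proximity generators and side conditions) is the right skeleton and matches the paper. However there are two concrete gaps that would prevent the proposal from hitting the advertised query bound.

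First, the constant-rate stage cannot be handled by the line-versus-point test alone. For the trivariate code over $\Ff_q^3$ with $q=\Theta(n^{1/3})$, the agreement parameter $\eps$ is $\Omega(1)$, so \cref{thm: line vs point strong} only gives that a random line is ``good'' (i.e., fails to detect the error) with probability at most $\eps=\Omega(1)$. Driving the failure probability to $2^{-\lambda}$ would therefore require $\Theta(\lambda)$ lines and would leave you with $\Theta(\lambda)$ constant-rate Reed-Solomon restrictions to test, which gives no dimension reduction and blows up the compile step. The paper's crucial first move is instead the \emph{plane}-versus-point bound of \cref{thm: plane vs point strong}, which gives failure probability $O(1/(\eps^2 q))=q^{-\Omega(1)}$ per random affine plane; this is what lets one get away with only $T=O(\lambda/\log q)$ planes, and, just as importantly, it reduces the ambient dimension from $3$ to $2$, which is what makes the re-encoding over $\Fqe^2$ a low-rate problem on which the recursion can actually shrink. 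Your proposal never introduces the plane test, so the transition from constant rate to low rate, and the $O(\lambda/\log(q/n^{1/3}))$ term, are not attainable.

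Second, your diagnosis and fix for the individual-degree issue is off. The initial trivariate code in the paper is a \emph{total}-degree Reed-Muller code, so restriction to a general line is unproblematic at that level. The individual-degree problem arises only in the low-rate stage, where a degree-$d$ Reed-Solomon word $f$ is embedded via \cref{lem:split_rs_rm} into a bivariate polynomial $Q$ of individual degree $\sqrt{d}$; a general line restriction of $Q$ only certifies total degree $2\sqrt{d}$. The paper's resolution is not a ``zoom''/change-of-variables step; it is the two-part test of \cref{lm: poly iop ind rm}: run the general-line test to certify total degree $2\sqrt{d}$, and then run a single axis-parallel line test, whose soundness becomes very strong precisely once total degree $2\sqrt{d}$ is known (a codeword with some individual $x$-degree in $(\sqrt{d},2\sqrt{d}]$ and total degree $2\sqrt{d}$ has a leading-coefficient polynomial of degree at most $d-1$ in $y$, so a random axis-parallel slice detects the excess degree with probability $1-O(d/q')$). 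Without this two-step structure, a Schwartz-Zippel argument alone against individual degree is not available, and the recursion does not close inside the Reed-Muller family as you need.
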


We next compare~\cref{thm: main rbr} to the results of~\cite{acy,acfy}.

\subsubsection{Good Round-by-round Soundness IOPs}
Choosing $q = \Theta(n^{1/3})$, $\qenc = \Theta(n^{2/5})$, and $q' = 2^{\lambda + 10} \cdot n^{14/15}$ in~\cref{thm: main rbr} we get the following result \footnote{Note that for $q$ and $\qenc$ to be subfields of $q'$ we need $q^{c_1} = \qenc^{c_2} = q'$ for some positive integers $c_1, c_2$. Concretely, we can achieve the desired setting by taking $q = 2^{3 \cdot s}$, $\qenc = 2^{5 \cdot s}$, and $q' = 2^{\lambda + c \cdot s}$ for some appropriate $s$ and large enough, but constant, $c$ which makes $\lambda + c \cdot s$ divisible by $15s$.} :
\begin{thm} \label{thm: main rbr set}
Fix a security parameter $\lambda$. There is an IOP for $\rics$ with the following guarantees. 
\begin{itemize}
    \item Input: An $\rics$ instance of size $n$.
    \item Completeness: If the instance is satisfiable, then the honest prover makes the verifier accept with probability $1$.
    \item Round-by-round soundness: $2^{-\lambda}$.
    \item Initial State: The initial state is doomed if and only if the instance is not satisfiable.
    \item Round complexity: $O(\log \log n)$.
    \item Proof query complexity: $O\left(\lambda \right) + O\left(\frac{\lambda^2}{\log^2(n)} \cdot \log \log(n)\right)$.
    \item Alphabet size: $2^{\lambda} \cdot \poly(n)$.
    \item Length: $O(n)$.
\end{itemize}
\end{thm}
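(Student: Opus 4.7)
The plan is to deduce Theorem 1.2 as an immediate specialization of Theorem 1.1 by instantiating the three field parameters. Concretely, I would set $q = \Theta(n^{1/3})$, $\qenc = \Theta(n^{2/5})$, and $q' = 2^{\lambda+10} \cdot n^{14/15}$, then check separately that the hypotheses of Theorem 1.1 are met and that each of its conclusions specializes to the bounds in Theorem 1.2.

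First I would verify the hypotheses. The conditions $q \geq \Omega(n^{1/3})$ and $\qenc \geq \Omega(n^{1/3})$ hold by construction. For the lower bound on $q'$, since $\max(\qenc,q) = \Theta(n^{2/5})$, we have
\[
\frac{2^{\lambda+10}\max(\qenc,q)^4}{n^{2/3}} = \Theta\!\left(2^{\lambda+10} \cdot n^{8/5 - 2/3}\right) = \Theta\!\left(2^{\lambda+10} \cdot n^{14/15}\right),
\]
which is exactly the chosen value of $q'$ (up to the hidden constant, which can be absorbed). Next I would simplify each output parameter: the length $O(q^3) + O(\qenc^2) = O(n) + O(n^{4/5}) = O(n)$; the alphabet $q' = 2^\lambda \cdot \poly(n)$; and for the query complexity, since $q/n^{1/3} = \Theta(1)$ (with the hidden constant chosen large enough that its logarithm is a positive constant) the first term collapses to $O(\lambda)$, while $\qenc/n^{1/3} = \Theta(n^{1/15})$ gives $\log^2(\qenc/n^{1/3}) = \Theta(\log^2 n)$, so the second term becomes $O(\lambda^2 \log\log n / \log^2 n)$. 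Round complexity, completeness, round-by-round soundness, and the initial state characterization are all inherited verbatim from Theorem 1.1.

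The one subtlety I anticipate, and which I expect to be the only non-obvious step, is ensuring that $\Ff_q$ and $\Fqe$ are simultaneously subfields of $\Ff_{q'}$ in characteristic $2$, i.e.\ that $\log_2 q$ and $\log_2 \qenc$ both divide $\log_2 q'$. Following the parametrization in the footnote of the statement, I would take an integer $s$ with $s = \Theta(\log n)$ (specifically $s \approx (\log_2 n)/9$ so that $q = 2^{3s} = \Theta(n^{1/3})$ and $\qenc = 2^{5s} = \Theta(n^{5/9})$, which is $\Theta(n^{2/5})$ up to adjusting $s$ by a constant factor) and then pick a constant $c$ for which $\lambda + cs$ is divisible by $15s$, so that $q' = 2^{\lambda + cs}$ contains both $\Ff_q$ and $\Fqe$ as subfields. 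Such $c$ exists after at most a constant additive perturbation of $\lambda$, which does not affect any of the asymptotic parameters. With this arithmetic bookkeeping in hand, the rest of the argument is just substitution, completing the proof.
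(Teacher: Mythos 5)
Your overall approach is the same as the paper's: apply \cref{thm: main rbr} with $q = \Theta(n^{1/3})$, $\qenc = \Theta(n^{2/5})$, $q' = 2^{\lambda+10}\cdot n^{14/15}$, verify the hypotheses, and simplify each conclusion. The substitutions you carry out for the hypotheses, the length, the alphabet, and both query-complexity terms are all correct given those asymptotic choices.

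The step that does not hold up is the concrete parameterization at the end. Setting $q = 2^{3s}$ and $\qenc = 2^{5s}$ locks $\qenc = q^{5/3}$, and since $|H|^3 = n$ forces $q = \Theta(n^{1/3})$, this gives $\qenc = \Theta(n^{5/9})$, which is \emph{not} $\Theta(n^{2/5})$. You noticed this discrepancy but dismissed it as fixable ``up to adjusting $s$ by a constant factor.'' That does not work: $s$ is a single parameter shared by $q$ and $\qenc$, so the exponents of $n$ in $q$ and $\qenc$ are locked in ratio $3:5$ and cannot be tuned independently. With $\qenc = \Theta(n^{5/9})$ the length term $O(\qenc^2) = O(n^{10/9})$ is superlinear, so the $O(n)$ length claim would fail. (This slip is present in the paper's own footnote and in the proof given in \cref{sec: iop for rics}, so you were right to scrutinize it, but the dismissal is incorrect.)

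The fix is to choose exponents in ratio $5:6$ rather than $3:5$: take $q = 2^{5s}$ and $\qenc = 2^{6s}$, so that $n \approx |H|^3 = \Theta(q^3) = \Theta(2^{15s})$ gives $q = \Theta(n^{1/3})$ and $\qenc = 2^{6s} = \Theta(n^{6/15}) = \Theta(n^{2/5})$. Then $q' = 2^{30sk}$ for a suitable integer $k$ contains both $\Ff_q$ and $\Fqe$ as subfields, and since $30s = \Theta(\log n)$ the smallest such $q'$ exceeding $2^{\lambda+10}\max(\qenc,q)^4/n^{2/3}$ is still $2^{\lambda}\cdot\poly(n)$. With this replacement, the rest of your argument goes through exactly as written.
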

As long as $\lambda\leq \log^2 n/\log\log n$ (which is a reasonable setting for the security parameter), we get that the query complexity in~\cref{thm: main rbr set} is $O(\lambda)$. Thus, in this regime of security parameters,~\cref{thm: main rbr set} improves upon the result of~\cite{acfy} both
in the round complexity and query complexity. We remark that the IOP behind~\cref{thm: main rbr set} could potentially also be practical: the protocol behind it is quite simple, the prover and verifier are efficient, and the implicit constant in the proof length in~\cref{thm: main rbr set} is moderate and can be greatly improved with more work. We discuss practicality further in~\cref{sec:concrete_eff}.
From a theoretical perspective, it seems that $O(\lambda)$ may be
as low as the query complexity can go for linear length IOPs, and that perhaps $O(\lambda/\log n)$ is possible for polynomial length IOPs.
This is due to the fact that all known IOPs (including ours) are based on error correcting
codes, and error correcting codes with linear length have distance bounded
away from $1$. We believe that one may hope to improve the round complexity all the way down to $O(1)$, perhaps at a mild increase in the query complexity.

\subsubsection{Polynomially Small Soundness IOPs}
Taking $q_{\enc} = q = n^{2}$ in~\cref{thm: main rbr} and $\lambda = \Omega(\log n)$, we get the following result:
\begin{thm} \label{thm: main standard}
There is an IOP for $\rics$ with the following guarantees. 
\begin{itemize}
    \item Input: A $\rics$ instance of size $n$.
    \item Completeness: If the instance is satisfiable, then the honest prover makes the verifier accept with probability $1$.
    \item Soundness: $\frac{1}{n}$.
    \item Round complexity: $O(\log \log n)$.
    \item Proof query complexity: $O(\log \log n)$.
    \item Alphabet size: $\poly(n)$.
    \item Length: $\poly(n)$.
\end{itemize}
\end{thm}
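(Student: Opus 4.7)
The plan is to derive \cref{thm: main standard} as a direct specialization of \cref{thm: main rbr}: set $q = \qenc = n^2$ and $\lambda = c \log n$ for a sufficiently large constant $c$, and pick $q'$ to be the smallest value satisfying $q' \geq 2^{\lambda + 10} \cdot \max(\qenc, q)^4/n^{2/3} = 2^{\lambda+10} \cdot n^{22/3}$. As in the footnote preceding \cref{thm: main rbr set}, $q, \qenc, q'$ can be arranged as powers of a common characteristic-$2$ base so that $\Ff_q$ and $\Fqe$ are subfields of $\Ff_{q'}$. Since $\lambda = O(\log n)$, this yields $q' = \poly(n)$, which matches the claimed alphabet size. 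The hypotheses of \cref{thm: main rbr} are then trivial to verify: $\qenc = q = n^2 \geq \Omega(n^{1/3})$.

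Next I would plug these choices into the quantitative bounds of \cref{thm: main rbr}. The round complexity $O(\log \log n)$ is inherited directly. For the query complexity, note $q/n^{1/3} = \qenc/n^{1/3} = n^{5/3}$, so $\log(q/n^{1/3}) = \log(\qenc/n^{1/3}) = \Theta(\log n)$, and the bound reduces to
\[
O\!\left(\frac{\lambda}{\log n}\right) + O\!\left(\frac{\lambda^2}{\log^2 n} \cdot \log\log n\right) = O(1) + O(\log \log n) = O(\log \log n).
\]
The length becomes $O(q^3) + O(\qenc^2) = O(n^6) = \poly(n)$, so the length is polynomial in $n$ as claimed.

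Finally, I would convert the round-by-round soundness guarantee into standard soundness. By the definition of round-by-round soundness, on a false instance the protocol begins in a doomed state and the verifier rejects whenever the final state is doomed. A union bound over the $O(\log \log n)$ rounds bounds the probability of ever transitioning out of the doomed state (equivalently, the probability the verifier accepts) by $O(2^{-\lambda} \cdot \log \log n)$. Choosing $c$ large enough in $\lambda = c \log n$ makes this at most $1/n$, completing the proof. There is no real obstacle in this derivation; the only subtle point is enforcing the subfield nesting requirement on $q, \qenc, q'$, which the common-base construction from the footnote already handles.
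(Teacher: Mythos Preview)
Your proposal is correct and follows essentially the same approach as the paper: the paper also derives \cref{thm: main standard} by specializing \cref{thm: main rbr} with $q = \qenc = n^2$ and $\lambda = \Theta(\log n)$ (as stated in the introduction), and then invokes \cref{lm: standard vs rbr soundness} to convert round-by-round soundness to standard soundness via the union bound over rounds. The only place the paper is more explicit is in choosing the concrete field sizes $q = \qenc = 2^{2^s}$ and $q' = 2^{2^{s+100}}$ to guarantee the subfield nesting, but you correctly flag this as the only subtlety and handle it by appealing to the common-base construction.
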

This matches the result of~\cite{acy} via
a different protocol. We remark that with more careful analysis, the length of the proof could be made $n^{1+\eps}$ (the implicit constants in the round and query complexity would depend on $\eps$).

\subsection{Proof Overview}
Here we give an overview of the proof of \cref{thm: main rbr}. The key component is an IOP of Proximity (IOPP in short) to the total degree Reed-Muller code 
\[
\RM_{q'}[d, \Ff^3_q] = \{f: \Ff^3_q \to \Ff_{q'} \; | \; \deg(f) \leq d\}.
\]
That is, an IOP which accepts functions $f \in \RM_{q'}[d, \Ff^3_q]$ and rejects functions $f$ that are far from $\RM_{q'}[d, \Ff^3_q]$. Henceforth, we use the term IOPP to refer to an IOP for a code in which the verifier should accept functions from the code, and reject functions far from the code. Prior works have mostly used IOPPs for Reed-Solomon codes, and much of the gain in our result comes from the fact we work with Reed-Muller codes. 

\begin{remark}
A few remarks are in order:
\begin{enumerate}
    \item While our final IOP for NP uses an IOPP to the $3$-variate total degree Reed-Muller code, our techniques can be made to work with Reed-Muller codes over any number of variables $m \geq 3$. It would be interesting to also be able to handle the $m = 2$ case and use this IOPP in our IOP for NP. Such a result could be useful towards concrete efficiency and we comment more on why our approach falls short in the $m= 2$ case later.
    \item Using our techniques one can establish improved IOPPs for Reed-Solomon codes of constant rate. 
    We defer further discussion to~\cref{sec:RS_IOPP}.
\end{enumerate}
\end{remark}
\subsubsection{Algebraization}
Fixing an instance of $\rics$ and identify its set of variables with a suitably chosen subset $H^3\subseteq \mathbb{F}_q^3$ of size $|H|^3 = n$. One can now think
of an assignment to the instance as a function $f\colon H^3\to \mathbb{F}_q$, and then consider its low-degree extension to $\mathbb{F}_q^3$, which would be in $\RM_{q}[d, \Ff^3_q]$. We could 
also think of the vectors $Af, Bf, Cf$ (where $A,B,C$ are the matrices defining the $\rics$ instance) as functions and consider their low-degree
extensions $f_A,f_B,f_C\in \RM_{q}[d, \Ff^3_q]$. With this set-up in mind,
verifying whether the given instance is satisfiable amounts to:
\begin{enumerate}
    \item Checking that $f_A f_B - f_C$ vanishes on $H^3$.
    \item Checking that $f$ evaluates to $1$ on the point in $H^3$ corresponding to the first variable.
\end{enumerate}
Of course, verifying these two items without any additional help from
the prover requires many queries. Thus, to aid the verifier, the prover
supplies functions tri-variate functions $g_1,g_2,g_3$ that are supposedly low-degree, as well as functions $f_A',f_B',f_C',f'$ that are supposedly $f_A,f_B,f_C$ and $f$. The intention is that 
\[
f_A f_B - f_C = V_H(x)g_1+V_H(y)g_2+V_H(z)g_3
\qquad\text{where }
V_H(x) = \prod\limits_{\alpha\in H}(x-\alpha),
\]
thereby $g_1,g_2,g_3$ would certify that $f_A f_B - f_C$ vanishes on $H^3$.
Assuming all of the above functions are indeed low-degree, this is easy to verify by random sampling, so our task effectively reduces to:
\begin{enumerate}
    \item Verifying that each one of the supplied functions 
    $f_A',f_B',f_C',f',g_1,g_2,g_3$ are low-degree. This is
    where we need to use IOPPs.
    \item Checking that the relation between each one of $f_A',f_B',f_C'$
    and $f$ holds. This is resolved using the sum-check protocol as we done in prior works, except that we have to use a multi-variate version.
\end{enumerate}
Henceforth, we focus our discussion on the construction of IOPPs, which
is where our gain in~\cref{thm: main rbr} comes from.

\subsubsection{IOPPs for Constant Rate Reed-Muller Codes} \label{sec: const rate rm intro}
Fix a function we wish to test from the above section, say $f$ for simplicity. For the protocol described above to have linear length we must take $q = O(n^{1/3})$, meaning that the code we wish to test proximity to is a Reed-Muller 
code with constant rate.  In order to test, we first observe that if $f$
is far from $\RM_{q'}[d, \Ff^3_q]$, say having agreement at most $\eps$ with each codeword there, then
\begin{equation}\label{intro:eq1}
\Pr_{\substack{P\subseteq \mathbb{F}_q^3\\\text{random plane}}}
\left[f|_{P}\text{ has more than $1.1\eps$ agreement with $\RM_{q'}[d, \Ff^2_q]$}\right]\leq \frac{1}{q^{\Omega(1)}}.
\end{equation}
Thus, if $f$ is far from the Reed-Muller code and we choose $T = O(\lambda/\log q)$ planes $P_1,\ldots,P_T$ uniformly, except with probability $2^{-\lambda}$, at least one of $f|_{P_i}$ will 
be far from the bivariate Reed-Muller code (and in fact, many of them 
will be far from it). This $\log q$ factor is essentially where the 
gain in the query complexity in~\cref{thm: main rbr set} comes from. The proof of~\eqref{intro:eq1} relies on the soundness of the
well-known line versus point test~\cite{rubinfeld1996robust,as,hkss} as well as spectral properties of the planes versus points inclusion graph,
and we refer the reader to~\cref{thm: plane vs point strong} for a 
more precise statement.

The main gain from moving to planes is that the length of the code
$\RM_{q'}[d, \Ff^2_q]$ is now $q^2 = O(n^{2/3})$. This reduction in length gives us room to move to a larger field $\mathbb{F}_{\qenc}$, with size $\qenc = q^{1 + c}$, for some $c > 0$, and thus improve the distance of the code. After doing so, we are able to further restrict down to lines randomly, in a similar fashion so that the new code we wish to test
is a Reed-Solomon code in the small rate regime. 

Overall, the discussion above reduces our IOPP in the constant rate regime is to the construction of IOPPs for small rate Reed-Solomon codes.

\subsubsection{IOPPs for Small Rate Reed-Solomon Codes}
Suppose now that we have a univariate function $g\colon \mathbb{F}_{\qenc}\to \mathbb{F}_{q'}$, and we wish to distinguish
between the case it is in the Reed-Solomon code 
\[
\RS_{\qenc}[d, \Ff_{\qenc}] = \{f: \Ff_{\qenc} \to \Ff_{q'} \; | \; \deg(f) \leq d\},
\]
and the case $g$ has at most $\eps$ agreement with each codeword there. Since there is no local tester for the Reed-Solomon code using less than $d$ queries, the verifier once again requires help from
the prover. To this end we use a (standard) embedding of
Reed-Solomon codes in bi-variate Reed-Muller codes. Specifically, if $f$ is a univariate function of degree at most $d$, then there 
is $Q\colon \mathbb{F}_{\qenc}^2\to \mathbb{F}_{q'}$ of individual
degrees at most $\sqrt{d}$ such that $f(x) = Q(x^{\sqrt{d}}, x)$. Thus,
the prover will supply the verifier with the truth table of $Q$, and now
the verifier's goal is to check that:
\begin{enumerate}
    \item $Q$ has individual degrees at most $\sqrt{d}$. Henceforth, 
    we refer to the code corresponding to functions of individual degree at most $\sqrt{d}$ as the $\sqrt{d}$ individual degree Reed-Muller code, and informally as the ``small individual degree Reed-Muller code'',
    \item Check that $Q$ and $f$ agree.
\end{enumerate}
Note that once the verifier manages to establish the first item, the second item can easily be achieves by randomly sampling $x$'s and checking if $Q(x^{\sqrt{d}},x)$ agrees with $f(x)$.

\subsubsection{IOPPs for Small Rate, Individual Degree Reed-Muller Codes}
We are thus left with the task of checking that $Q$ is a small individual degree Reed-Muller codeword. Previous works, such as~\cite{acy}, 
have used the axis-parallel low-degree test in the style of~\cite{polishchuk1994nearly} (though, solely this result is not quite sufficient for
them). We give a test with better soundness guarantees that is 
based on a combination of the line versus point test applied on top
of an axis parallel test.

To be more specific, we first choose random lines $\ell_1,\ldots,\ell_T$ 
for $T$ as above, and then test that $Q|_{\ell_1},\ldots,Q|_{\ell_T}$
are in $\RS_{\qenc}[2\sqrt{d}, \Ff_{\qenc}]$. We prove that if $Q$ is
far from being total degree $2\sqrt{d}$, then this test rejects with probability $1-2^{-\lambda}$. This test alone does not suffice however, as it 
clearly can still be the case that $Q$ has individual degrees between $\sqrt{d}+1$ and $2\sqrt{d}$ while having all line restrictions in $\RS_{\qenc}[2\sqrt{d}, \Ff_{\qenc}]$. Thus, we additionally apply the following axis parallel low-degree test: 
we choose points $\alpha, \beta$ uniformly 
and check that $Q'(y) = Q(\alpha ,y)$ and $Q''(x) = Q(x, \beta)$
both have degree at most $\sqrt{d}$. The key gain here is that once we know that $Q$ has total
degree $2\sqrt{d}$, then if it has individual degrees exceeding $\sqrt{d}$, the performance of the axis parallel test is significantly better soundness. In contrast, the axis parallel test does not have such strong soundness if we do not have the guarantee that $Q$ has low total degree.

\vspace{1ex}
In conclusion, our construction of IOPPs for small rate individual degree Reed-Muller codes and for small rate Reed-Solomon codes proceed by induction. We show how codes for degree parameter $\sqrt{d}$ for the former imply codes for degree $d$ in the latter. Each step in this process
involves $T^2$ queries, and as the number of rounds can be seen to be $O(\log\log d)$, we get the number of queries and rounds as in~\cref{thm: main rbr set}.

\subsubsection{Poly-IOPs, Anchoring, Side Conditions, and Proximity Generators}\label{sec:sc}
We finish this proof overview section by mentioning that, to carry out
the above protocol effectively, we must use several additional tools 
that appear in prior works such as~\cite{acy,acfy}. We briefly discuss these tools below.

\paragraph{Poly-IOPs:} To 
construct each IOPP as above, we first construct a relaxed version of it called a Poly-IOP. A Poly-IOP is an IOP where the prover's 
messages are guaranteed to come from some code. 
Thus, they are much easier to construct and analyze. As can be evident
from the prior description of our IOPs, 
their analysis becomes very simple once we make such assumptions on the functions sent by the prover.

As was shown in~\cite{acy}, a Poly-IOP can be turned into a proper
IOP provided that one has an IOPP for testing proximity to certain codes related to those that the prover's messages are assumed to come from in the Poly-IOP. We use this idea to turn our Poly-IOPs to IOPs as well. However, as the result of~\cite{acy}
is tailor-made for univariate functions, we have to establish a version of it suitable for our purposes, and we do so in~\cref{sec:poly-IOP}. 

\paragraph{Proximity Generators:} 
The diligent reader may notice that in the above description, we have often reduced a
task such as testing proximity of a function $f$ 
to $\RS_{\qenc}[d, \Ff_{\qenc}]$, to a conjunction of
proximity checks of multiple functions in $\RS_{\qenc}[2\sqrt{d}, \Ff_{\qenc}]$. Naively, one could run 
the suitable IOP on each one of these functions separately, however this strategy leads to significantly worse parameters. To improve upon this, we use the idea of proximity generators from~\cite{bciks}. 
Proximity generators allow us to take a linear combination of functions so
that if at least one of them is far from $\RS_{\qenc}[2\sqrt{d}, \Ff_{\qenc}]$, then the linear combination is also far from $\RS_{\qenc}[2\sqrt{d}, \Ff_{\qenc}]$ except with very small probability. This ``very small probability'' is governed by the size of the field from which we take the coefficients, and we take them from a field $\mathbb{F}_{q'}$ such that $\mathbb{F}_{q}\subseteq \mathbb{F}_{q'}$ is a subfield and $q'$ is sufficiently large compared to $q$. This is the parameter $q'$ in~\cref{thm: main rbr}. 

While \cite{bciks} gives proximity generators for Reed-Solomon codes, our IOP also requires proximity generators for Reed-Muller codes. Such results are given for general codes in \cite{BKS}, but for our purposes, we require an additional guarantee from our proximity generators called correlated agreement. At a high level, this statement says that if a random combination of functions is close to a codeword of a code $\mc{C}$, then there must be a sizable set of coordinates on which the functions simultaneously agree with some codewords. The Reed-Solomon proximity generator of \cite{bciks} satisfies this additional requirement, and indeed this feature is crucial to the analyses of FRI~\cite{BBHR18,acy}, and STIR~\cite{acfy}. Our arguments require proximity generators for Reed-Muller codes with correlated agreement, and to the best of our knowledge such proximity generators did not exist prior to this work. Specifically, we prove that the proximity generators for Reed-Solomon codes also work well for Reed-Muller codes (with correlated agreement), and our proof of this fact once again uses the soundness of the line versus point test. 

While sufficient for us, the proximity generators we give for Reed-Muller codes are qualitatively weaker than the proximity generators for Reed-Solomon codes; see~\cref{sec:proximity_gen} for more details. 
It would be interesting to give proximity generators for Reed-Muller codes with better performance.

Lastly, we remark that because we are using coefficients from $\mathbb{F}_{q'}$, the functions throughout the protocol are of the form $f\colon \mathbb{F}_q^m\to\mathbb{F}_{q'}$, which is an unusual setting for results such as the line versus point test. Fortunately, the analysis of the line versus point test works as is in this new setting, and we give a detailed analysis in the appendix.

\paragraph{Anchoring:} to transform a Poly-IOP to an
IOP we have to use the idea of anchoring from~\cite{acy,acfy}. This idea relies on a list decoding fact: once the prover sends a function $f$, say that it is supposedly from some Reed-Solomon code, then there are at most $O(1/\eps)$ Reed-Solomon codewords that have $\eps$-agreement with $f$. Thus, 
in a sense, the prover could try to pretend as if they meant each one of these, which often requires us to take a union bound and pay a factor of $O(1/\eps)$ 
in the error bound. This factor is often unaffordable though. 

The idea of anchoring essentially forces the prover to commit to one of these codewords. Namely, upon receiving $f$, the verifier samples a random point $z$, which is often not in the domain $\mathbb{F}_q$, but instead from the much larger $\mathbb{F}_{q'}$, sends this point $z$ to the prover, and receives a value $\beta$ in response. This value should be thought of as the value of the low-degree extension of $f$ at $z$, and intuitively it narrows down the possible list of Reed-Solomon codewords discussed above to only the functions that assign the value $\beta$ to the point $z$. With high probability over the random $z$, this list has size $1$, and the factor of $O(1/\eps)$ is saved.

\paragraph{Side-Conditions:}
we formalize conditions such as $f(z) = \beta$ arising in the anchoring idea as above, as well as in some other parts of our IOP, using the idea of side conditions. 
We require this idea for both Reed-Muller and Reed-Solomon codes. We focus on Reed-Solomon codes below for simplicity.
A side condition consists of a collection of points $A\subseteq\mathbb{F}_{q'}$, thought of as small, 
and a function $h\colon A\to\mathbb{F}_{q'}$ which gives the supposed values of these points. With this setup in mind, the prover wants to convince the verifier that $f\colon \mathbb{F}_q\to\mathbb{F}_{q'}$ is a degree $d$ function that is also consistent with the side conditions $(A,h)$. Towards this end, the prover could
send a function $g$ of degree at most $d-|A|$ such that
\[
f(x) = V_A(x)\cdot g(x) + h'(x),
\qquad\text{where }
V_A(x) = \prod\limits_{\alpha\in A}(x-\alpha)
\]
and $h' = \wh{h}|_{\mathbb{F}_{q}}$ where $\wh{h}$ is the low-degree extension of $h$ to $\mathbb{F}_{q'}$. This process reduces the task checking code membership check with side conditions, to the simpler tasks of checking code membership (of $g$) and identity between two functions.

\subsection{Discussion}\label{sec:variants}

\subsubsection{IOPPs for Constant Rate Reed-Solomon Codes}\label{sec:RS_IOPP}

With a little more work, we can also obtain IOPPs for constant rate Reed-Solomon codes using similar techniques. We omit the full construction and analysis however, because we do not need such an IOPP for our main result, \cref{thm: main rbr}. The guarantees of this IOPP are as follows.

\begin{thm} \label{thm: iopp const rs}
Let $q = C \cdot d$ for some constant $C > 100$ and $\eps \geq \frac{d}{q}$ be a proximity parameter. Then there exists an IOPP for $\RS_{q}[d, \Ff_q]$ with the following properties:
\begin{enumerate}
    \item Input: a function $f\colon \mathbb{F}_q \to\mathbb{F}_{q'}$.
    \item Completeness: if $f\in \RS_{q}[d, \Ff_q]$, then the verifier accepts with probability $1$.
     \item Initial State: the initial state is doomed if $f$ is 
     $\eps$-far from $\RS_{q}[d, \Ff_q]$.
    \item Round-by-round soundness: $2^{-\lambda}$.
    \item Round Complexity: $O(\log \log d)$.
    \item Query Complexity: $O(\lambda) + O\left( \frac{\lambda^2}{\log^2(d)}\cdot  \log \log(d)\right)$.
    \item Alphabet Size: $2^{\lambda}\cdot \poly(d)$.
    \item Length: $O(d)$.
\end{enumerate}
\end{thm}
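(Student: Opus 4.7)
I would adapt the small-rate Reed-Solomon and Reed-Muller IOPP techniques described in~\cref{sec: const rate rm intro} to the constant-rate regime by using the bivariate individual-degree embedding recursively. The key observation is that even when the input Reed-Solomon code is constant rate, its bivariate embedding as a degree-$\sqrt d$ individual-degree Reed-Muller code can be placed over a field $\Ff_{\qenc}$ with $\qenc = C'\sqrt d$, yielding an auxiliary proof of length $\qenc^2 = O(d)$ at each level; the line restrictions of this bivariate code are smaller-degree constant-rate Reed-Solomon instances, which we test by recursion.

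The protocol proceeds as follows. First, the verifier uses anchoring as in~\cref{sec:sc}: it samples $z\in \Ff_{q'}$ and asks the prover to commit to a value $\beta$ claimed to be $\tilde f(z)$ for the Reed-Solomon codeword $\tilde f$ closest to $f$. This side condition narrows $\List(f,\eps)$ to a single codeword with high probability over $z$ and is propagated through the rest of the protocol. Second, the prover sends $Q\colon \Ff_{\qenc}^2\to \Ff_{q'}$ of individual degree at most $\sqrt d$, supposedly obtained from the standard embedding $Q(x^{\sqrt d},x) = \tilde f(x)$. Third, the verifier tests $Q$ using the combined line-versus-point and axis-parallel test from~\cref{sec: const rate rm intro}: it samples $T = O(\lambda/\log\qenc)$ random lines in $\Ff_{\qenc}^2$ and verifies that the restrictions $Q|_{\ell_i}$ lie in $\RS_{\qenc}[2\sqrt d, \Ff_{\qenc}]$, while random axis-parallel line tests ensure that the individual degrees do not exceed $\sqrt d$. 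The $T$ line restrictions are combined into a single function using proximity generators with only small soundness loss, and the resulting constant-rate Reed-Solomon proximity test is handled by recursive invocation of the same IOPP. Since the degree roughly square-roots at each recursion step ($d_{i+1} = 2\sqrt{d_i}$), termination occurs after $O(\log\log d)$ rounds, and the lengths $\qenc_i^2 = O(d_i)$ form a geometric series summing to $O(d)$.

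The main obstacle I expect is reconciling the mismatch between the domain $\Ff_q$ of $f$ and the smaller domain $\Ff_{\qenc}^2$ of $Q$: the naive embedding $Q(x^{\sqrt d},x) = \tilde f(x)$ pins down $\tilde f$ only on the subfield $\Ff_{\qenc}\subseteq \Ff_q$, not on all of $\Ff_q$. Addressing this requires more careful use of the $\Ff_{\qenc}$-structure of $\Ff_q$ (viewing it as a degree-$2$ extension of $\Ff_{\qenc}$ via a fixed basis) together with appropriate side conditions that prevent the prover from exploiting the extra flexibility at points of $\Ff_q\setminus\Ff_{\qenc}$. Once consistency is pinned down, the soundness analysis at each level follows from the line-versus-point test in the low-soundness regime together with the proximity-generator and anchoring machinery developed for the main IOPP, and the overall round-by-round soundness is obtained by a union bound across the $O(\log\log d)$ levels, tracking the side conditions introduced by each anchoring step.
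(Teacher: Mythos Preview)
Your proposal has a genuine gap at its core: the claim that $T = O(\lambda/\log\qenc)$ random lines suffice for the line-versus-point test on $Q$ is false in the constant-rate regime you are working in. The soundness bound for line restrictions (\cref{thm: line vs point strong}) says that a random line fails to detect far-ness with probability at most $\eps$, so you need $T \geq \lambda/\log(1/\eps)$ lines. Since your bivariate code has $\qenc = C'\sqrt d$ and degree $2\sqrt d$, the proximity parameter $\eps$ is a constant bounded away from zero, hence $\log(1/\eps) = O(1)$ and $T = \Omega(\lambda)$, not $O(\lambda/\log\qenc)$. Moreover, your recursion never escapes constant rate: each line restriction is again a degree-$2\sqrt{d_i}$ function over a field of size $\Theta(\sqrt{d_i})$. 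Propagating $T=\Omega(\lambda)$ through the levels, the consistency checks cost $\Omega(\lambda)\cdot T = \Omega(\lambda^2)$ queries per level and $\Omega(\lambda^2\log\log d)$ overall, missing the $\log^2 d$ saving the theorem promises.

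This is exactly why the paper does \emph{not} recurse bivariately. It embeds $f$ into a \emph{trivariate} individual-degree code over $\Ff_{\qenc}^3$ with $\qenc=\Theta(d^{1/3})$ (length $\qenc^3=O(d)$), and then uses the constant-rate Reed-Muller IOPP of \cref{thm: tot rm constant rate}, which restricts to random \emph{planes} rather than lines. The plane-versus-point bound (\cref{thm: plane vs point strong}) gives failure probability $O(1/(\eps^2 q))$ even for constant $\eps$, so $O(\lambda/\log q)$ planes suffice; after one such restriction one can re-encode over a polynomially larger field and drop into the low-rate regime, which is where the $\lambda^2/\log^2 d$ factor comes from. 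The paper also needs a second idea you do not touch: comparing $Q$ with $f$ in constant rate costs $\Omega(\lambda)$ queries, and routing these through the quotienting/side-condition machinery naively blows up to $\Omega(\lambda^m)$. The paper avoids this by making all consistency queries along affine lines of the form $L(s,t)$ with $s,t\in S=\{(\alpha^{d^{2/3}},\alpha^{d^{1/3}},\alpha)\}$, exploiting that in characteristic $2$ with a size-$d^{1/3}$ subfield such a line meets $S$ in $d^{1/3}$ points; a spectral sampling argument for this line system then yields the $O(\lambda)$ term. Your acknowledged domain-mismatch obstacle is a symptom of the same issue, and viewing $\Ff_q$ as a degree-$2$ extension of $\Ff_{\qenc}$ does not by itself supply the needed query savings.
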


The IOPP of \cref{thm: iopp const rs} is constructed in a similar manner to the IOPP for small rate Reed-Solomon Codes. Starting with the input function $f: \Ff_q \to \Ff_{q'}$, we again embed $f$ into a Reed-Muller codeword, but this time we use three variables. Here, we use the fact that if $f$ is a univariate function of degree at most $d$, then there is a function
$Q$ of individual degrees $d^{1/3}$ such that $f(x) = Q(x^{d^{2/3}}, x^{d^{1/3}}, x)$. Thus, the prover will supply the verifier with the truth table of $Q: \Fqe^3 \to \Ff_{q'}$, where $|\Fqe| = \Theta\left(d^{1/3}\right)$ and it remains for the verifier to check that 
\begin{enumerate}
    \item $Q$ has individual degrees at most $d^{1/3}$,
    \item $Q$ and $f$ agree.
\end{enumerate}
The first item can be accomplished using the constant rate Reed-Muller IOPP discussed in \cref{sec: const rate rm intro} (with a slight modification to handle individual degree versus total degree). The second item requires more work however, and in particular cannot be done in the same way as in the small rate case. The difference here is that we the agreement between $Q$ and $f$ can be constant rather than at most $1/q^{c}$, for some $c > 0$. Indeed, the verifier must be able to check that $Q$ and $f$ agree, even when the agreement is as large as $\Omega(d/q)$. In the small rate case, $\Omega(d/q) = q^{-\Omega(1)}$, so the verifier can differentiate $Q$ from $f$ with soundness error at most $2^{-\lambda}$ using $O(\lambda/\log(q))$ random queries to their respective valuations, and when handling the side conditions, the verifier needs to look at an entire subcube containing these points. In the dimension $m$ case, this results in $O(\lambda^m/\log^m(q))$ queries, which is okay. On the other hand, in the constant rate case, $\Omega(d/q) = O(1)$, so the verifier
needs $\Omega(\lambda)$ queries for the same task, leading to $\Omega(\lambda^m)$ queries when handling the side conditions. We wish to avoid dependence on $\Omega(\lambda^m)$ however, and this requires one additional idea.

While there is no way around the $\Omega(\lambda)$ query barrier needed to distinguish two $Q$ and $f$ in the $\Omega(1)$-agreement case, it is possible avoid the blowup in queries when handling the side conditions. Without going into too much detail, this is achieved by having the verifier make all of their queries along an affine line. The key fact we use is that the points of interest are all of the form $S = \{(\alpha^{d^{2/3}}, \alpha^{d^{1/3}}, \alpha) \; | \; \alpha \in \Ff_q \}$, and we may assume that $d^{2/3}$ and $d^{1/3}$ are powers of $2$, and furthermore, $\Ff_{q}$ has a subfield of size $d^{1/3}$, which we call $\Ff_{d^{1/3}}$. Then, choosing an affine line going through $s = (\alpha^{d^{2/3}}, \alpha^{d^{1/3}}, \alpha)$ in direction $t = (\beta^{d^{2/3}}, \beta^{d^{1/3}}, \beta)$, notice that this line $L(s, t) = \{s + \gamma \cdot t \; | \; \gamma \in \Ff_q \}$ contains many points of $S$. Indeed, as long as $\gamma \in \Ff_{d^{1/3}}$, we get a point of the form
\[
s + \gamma \cdot t = (\alpha^{d^{2/3}} + \gamma \beta^{d^{2/3}}, \alpha^{d^{1/3}} + \gamma \beta^{d^{1/3}}, \alpha + \gamma \beta) = ((\alpha + \gamma\beta)^{d^{2/3}}, (\alpha + \gamma\beta)^{d^{1/3}}, \alpha + \gamma \beta) \in S.
\]
Here we are using the following two facts: 1) $d^{1/3}$ and $d^{2/3}$ are powers of $2$ and in characteristic $2$-fields we have $(\alpha + \beta)^2 = \alpha^2 + \beta^2$; 2) since $\gamma \in \Ff_{d^{1/3}}$, we have $\gamma^{d^{2/3}} = \gamma^{d^{1/3}} = \gamma$. Given the above, it turns out that it is sufficient for the verifier to choose $\lambda/\log(q)$ lines $L(s, t)$ as above, and then compare $Q$ and $f$ at a $\log(q)$ of the points in $S \cap L(s,t)$ on each chosen line. For the full analysis, we analyze the spectral gap of the inclusion graph between the lines $\{L(s,t) \; | \; s,t \in S \}$ and $S$, and show a spectral sampling lemma in a similar spirit to \cref{lm: spectral sampling}.
 
\paragraph{Applications of \cref{thm: iopp const rs}:} We state \cref{thm: iopp const rs} because it may lead to immediate improvements in some of the numerous applications which make blackbox use of FRI. As one example, many polynomial commitment operate by using FRI to test a constant rate Reed-Solomon codeword with one side condition. For this task, one can use any IOPP for constant rate Reed-Solomon codes with the completeness and soundness guarantees of \cref{thm: iopp const rs}. As \cref{thm: iopp const rs} achieves these guarantees in a more efficient manner, so it is intriguing to see if it would lead to any improved polynomial commitment schemes.

\subsubsection{Concrete Efficiency}\label{sec:concrete_eff}

We believe that there is potential towards obtaining concretely efficient IOPs via our construction. While in the discussion above we focused our attention on their theoretical performance, our IOPs also have good prove and verifier complexities, comparable to those in FRI and STIR. In addition, we also obtain sizable improvements in the round and query complexities over FRI and STIR, which gives our IOPs the prospect of even surpassing FRI and STIR in certain aspects. We elaborate on these characteristics below. 

Currently, the main impractical aspect of our construction is the implicit constant in the length of the proof, which is moderately large. This constant may be improved in several ways though. For example, if one could work with bi-variate Reed-Muller codes instead of tri-variate ones, one would immediately get significant saving in the implicit constant. Alternatively, and as we explain below, one could also sidestep the proof length issue by combining our IOPs with constantly many rounds of either the FRI or STIR protocol.

\paragraph{Prover and Verifier Complexity:} Our Verifier complexity is $\poly \log n$, similar to FRI and STIR. Of more interest is the prover complexity which we now discuss.

The dominating term in the prover complexity is the time required to encode the length $n$ NP witness in a trivariate Reed-Muller code. As our encoding has constant rate, the complexity of this step is $O(n\log n)$ via a Fast-Fourier-Transform (similar to how the encoding is done in FRI \cite{BBCGGH17}). The concerning part, as of now, is the constant factor blow up that we suffer. In our current construction, starting with a length $n$ witness, the prover must provide the evaluation of a trivariate, degree $6n^{1/3}$ function over $\Ff_q^3$, where $q = C \cdot 6n^{1/3}$ for some small constant $C$. Taking $C = 2$, for example, this setting leads to a blow up of $1728\times$, and thus a constant factor of at least $1728$ in the time complexity of the Fast-Fourier-Transform. One could hope to improve this blowup by starting with a bivariate Reed-Muller encoding though. This encoding would allow for field size $q = C \cdot 4n^{1/2}$ and taking $C = 2$ now yields a blow up of $64x$. The reason that our current IOP requires $m \geq 3$ dimensions in the initial encoding is that we rely on the spectral gap of the affine planes versus point graph in $\Ff_q^m$, and we also require that going from $m$-dimensions to $2$-dimensions gives a reduction in size.

\paragraph{Combining our Folding Iteration with FRI or STIR:} A route towards concretely efficient IOPs which seems particularly promising is to first perform a few iterations of FRI or STIR, and then apply our IOP. Specifically, note that each iteration of FRI or STIR consists of a ``folding'' step with factor $k = O(1)$ and reduces the problem from size $n$ to size $n/k$. Thus, one could perform FRI or STIR for $t = O(1)$ rounds, with folding factor $k = O(1)$, to reduce the problem size from $n$ to $n / k^t$. Applying our IOP at this stage, the prover can now encode a length $n/k^t$ witness instead of a length $n$ witness. This leads to a blowup in prover complexity of $\frac{1728}{k^t} \times$ as opposed to $1728\times$. Note that since FRI or STIR is only used $O(1)$ times, we still gain improvements in round complexity and query complexity.

\paragraph{SNARKS:} Typically, concretely efficient IOPs are used in practice to construct non-interactive proofs such as succinct non-interactive arguments (SNARGs) or succinct non-interactive arguments of knowledge (SNARKs). This process generally requires compiling the IOP using some procedure such as the BCS transformation of \cite{BCS}, which compiles an IOP into a SNARK. If the IOP has round-by-round soundness $2^{-\lambda}$, then the BCS transformation guarantees that the SNARK has $\lambda$ bits of security. A key parameter in SNARKS is the argument size, and one wants this to be as small as possible. 

\paragraph{Argument Size:} The main parameters of an IOP that influence the size of its compiled SNARK are the query complexity {\sf q}, alphabet size {\sf A}, and length {\sf L}. Compiling an IOP with these parameters leads to a SNARK with size proportional to
\[
{\sf q} \cdot \left( \log({\sf A}) + \log({\sf L}) \right).
\]
See, for example, \cite[Chapter 25]{ChiesaYogev2024}. Compared to FRI and STIR, our protocol has the same asymptotic log-alphabet size and length, while improving on the query complexity asymptotically (for some reasonable settings of the security parameter $\lambda$). Thus, we obtain significant improvements on the size of the compiled SNARK for sufficiently large input size $n$. For concretely efficient implementations however, one typically works with, say, $n \in [2^{20}, 2^{30}]$, meaning the constant factors in our query complexity become important. We leave it to future work to tighten these factors. Specifically, tightening the constant factors in our query complexity requires improved results for the soundness of the line versus point test and the minimal agreement threshold for the Reed-Muller proximity generator theorems. Currently, both of these quantities are $(d/q)^{c}$ for some small but absolute $c > 0$, but it is plausible that one can obtain $c$ arbitrarily close to $1$. The assumption that $c$ here is close to $1$ is comparable to the Reed-Solomon list-decoding conjecture assumed by FRI and STIR to achieve small constant factor improvements, see \cite{BGKS, bciks}.

\section{Preliminaries}
In this section we set up a few basic notions and tools that are necessary in our arguments.

\subsection{The Interactive Oracle Proofs Model}
An interactive oracle proof (IOP) for a language $\mc{L}$ is an interactive protocol between a prover and a verifier. Initially, both parties hold a joint input $\mathrm{x}$, and the goal of the prover is to convince the verifier that $\mathrm{x}\in\mc{L}$. During the $i$th round of interaction, the prover first sends a message ${\sf p}_i$, after which the verifier sends a message ${\sf v}_i$. The prover's messages ${\sf p}_i$ are strings over some alphabet $\Sigma$, and one should think of these strings as (large) oracles which the verifier only has query access to. The verifier may query the input $\mathrm{x}$ as well as any ${\sf p}_j$ for $j \leq i$ during the $i$th round of interaction. Their message ${\sf v}_i$ can depend on the results of these queries as well as their internal randomness. One should think of the prover's next message ${\sf p}_{i+1}$ as a response to the verifier's message ${\sf v}_i$. At the end of the protocol, the verifier decides to accept or reject. 

\paragraph{Parameters:} the following parameters of IOPs will be of interest to us:
\begin{itemize}
    \item \textbf{Completeness:} we say that an IOP has completeness if for any input $\mathrm{x} \in \mc{L}$, the honest prover makes the verifier accept with probability $1$.\footnote{More generally, one can say an IOP has completeness $c$ if the honest prover can always convince the verifier to accept a valid input $\mathrm{x} \in \mc{L}$ with probability at least $c$. The $c=1$ case is then referred to as perfect completeness, and we only use it in this paper.}
    \item \textbf{Soundness:} we say that an IOP has soundness $s$ if for any input $\mathrm{x} \notin \mc{L}$ and any prover strategy, the verifier accepts with probability at most $s$.
    \item \textbf{Round complexity:} the maximum number of rounds of interaction in the IOP.
    \item \textbf{Input query complexity:} the total number of queries that the verifier makes to the input $\mathrm{x}$.
    \item \textbf{Proof query complexity:} the total number of queries that the verifier makes to the prover's messages.
    \item \textbf{Query complexity:} the sum of the input query complexity and proof query complexity.
    \item \textbf{Alphabet size:} $|\Sigma|$, or the size of the alphabet which the prover uses to write their messages.
    \item \textbf{Length:} $\sum_i |\p_i|$, or the total length of the prover's messages.
\end{itemize}

In addition to these parameters above, we will also be interested in a stronger type of soundness called \emph{round-by-round soundness}, which also requires the IOP to have a \emph{state function}. The state function, denoted by $\st$, assigns each partial transcript $\{\mathrm{x}, \p_1, \vs_1, \ldots, \p_i, \vs_i \}$ a state, 
\[
\st(\{\mathrm{x}, \p_1, \vs_1, \ldots, \p_i, \vs_i \}) \in \{0, 1\}.
\]
The partial transcript may consist of just the input $\{\mathrm{x}\}$, which corresponds to the first round of the protocol. If the state is $0$, then we say that the state is \emph{doomed}, and if the state is $1$ we say that the state is \emph{not doomed}. Given a state function $\st$, round-by-round soundness is defined as follows.

\begin{definition}
    We say that round $i$ has soundness $s$ if for any $\{\mathrm{x}, \p_1, \vs_1, \ldots, \p_{i-1}, \vs_{i-1}\}$ which has state $\st(\{(\mathrm{x}, \p_1, \vs_1, \ldots, \p_{i-1}, \vs_{i-1})\}) = 0$ and any prover message $\p_{i}$ during round $i$, 
    \[
    \Pr_{\vs_{i}}[\st(\{(\mathrm{x}, \p_1, \vs_1, \ldots, \p_{i}, \vs_{i})\}) = 1] \leq s.
    \]
\end{definition}
In words, the definition for soundness $s$ in round $i$ says that if the state is ``doomed'' prior to round $i$, then for any message $\p_i$ that the prover may send, the state after round $i$ will change to ``not doomed'' with probability at most $s$. The probability here is over the verifier's randomness when generating their message $\vs_i$.

\begin{definition}
An IOP for language $\mc{L}$ has round-by-round soundness $s$ if there is a state function such that the following holds:
\begin{itemize}
    \item The initial state is doomed if and only if $\mathrm{x} \notin \mc{L}$.
    \item The final transcript is doomed if and only if the verifier rejects.
    \item Every round has soundness $s$.
\end{itemize}
\end{definition}

Going forward, we refer to the first introduced notion of soundness as either standard soundness or simply as soundness when it is clear from context. 
Morally speaking, the following lemma shows round-by-round soundness is a stronger notion than 
standard soundness, and hence we will often strive to achieve it whenever possible.

\begin{lemma} \label{lm: standard vs rbr soundness}
    If a $k$-round IOP has round-by-round soundness $s$, then it has standard soundness $ks$.
\end{lemma}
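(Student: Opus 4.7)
The plan is a standard union bound over the $k$ rounds, exploiting the fact that the verifier's accept/reject decision is determined by whether the final transcript is doomed. Fix any input $\mathrm{x}\notin\mc{L}$ and any (possibly adversarial) prover strategy. By the first bullet of the definition of round-by-round soundness, the initial state (before any messages are exchanged) is doomed; by the second bullet, the verifier accepts if and only if the final transcript is not doomed. Hence it suffices to upper bound the probability, over the verifier's randomness, that the state ever flips from doomed to not doomed during the $k$-round interaction.

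For $i=1,\dots,k$, let $F_i$ denote the event that the partial transcript after round $i-1$ is doomed but the partial transcript after round $i$ is not doomed. Because the state starts out doomed, any full transcript which ends in a non-doomed state must have experienced at least one such flip, so
\[
\{\text{verifier accepts}\}\;\subseteq\;\bigcup_{i=1}^{k} F_i.
\]
By the third bullet of the definition, round $i$ has soundness $s$, which means precisely that for every doomed partial transcript $\{\mathrm{x},\p_1,\vs_1,\ldots,\p_{i-1},\vs_{i-1}\}$ and every prover message $\p_i$, the probability (over $\vs_i$) that the updated partial transcript is not doomed is at most $s$. Averaging this inequality over all doomed transcripts produced by the prover strategy up to round $i-1$ gives $\Pr[F_i]\leq s$.

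Applying the union bound then yields
\[
\Pr[\text{verifier accepts}]\;\leq\;\sum_{i=1}^{k}\Pr[F_i]\;\leq\;k s,
\]
which is the desired bound on standard soundness. There is no real obstacle here; the only subtlety is keeping straight that round-by-round soundness is formulated as a worst-case statement over prior transcripts and over the current prover message $\p_i$, so the bound $\Pr[F_i]\leq s$ holds regardless of how the prover adaptively chooses $\p_i$ based on the randomness revealed so far, and the union bound therefore applies against any (adaptive) prover strategy.
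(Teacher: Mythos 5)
Your proof is correct and follows the same union-bound-over-rounds strategy as the paper's proof, just spelled out in more detail (defining the flip events $F_i$ explicitly and noting that the per-round bound holds against adaptive prover strategies). The extra care is fine but not a different argument.
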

\begin{proof}
  Fix the state function with which the IOP has round-by-round soundness $s$. If $x \notin \mc{L}$, then the initial state is doomed. Union bounding over the $k$-rounds, the final transcript is not doomed with probability at most $ks$, and hence the verifier accepts with at most this probability.
\end{proof}

\subsection{Spectral Sampling Lemma}

 In this section, we state and prove a version of the expander mixing lemma for real valued functions over the vertices of a bipartite graph. Versions of this lemma have appeared many times in different contexts 
We include a quick proof for the sake of completeness. 

Let $\mc{G} = \mc{G}(A, B, E)$ be a (weighted) graph. Let $n_1 = |A|$ and $n_2 = |B|$. Let $L_2(A)$ be the set of real-valued functions over $A$ endowed with the natural inner product with respect to the stationary distribution, and define $L_2(B)$ similarly. We let $\mathcal{T}: L_2(A) \to L_2(B)$ be the adjacency operator of $\mc{G}$, defined as follows. For a function $F \in L_2(A)$, $\mathcal{T}F \in L_2(B)$ is given by
\[
\mathcal{T}(b) = \E_{a \sim b}[F(v)],
\]
where $a \in A$ is a random neighbor of $b$ sampled according to the edge weights of $\mc{G}$. We denote its $2$-norm by $\norm{F} = \sqrt{\E_{a}[F(a)^2]}$. It is well known that if $\mc{T}$ has second singular value $\lambda$, then letting $F' = F - \E_{a \in A}[F(a)]$, we have
\begin{equation} \label{eq: singular value def}
    \norm{\mc{T} F'}_2 \leq \lambda \norm{F'}_2
\end{equation}

We will show the following lemma.

\begin{lemma} \label{lm: spectral sampling}
    Let $\mathcal{T}$ be a biregular adjacency operator with second singular value $\lambda$. Let $S \subseteq A$ have fractional size $|S|/|A| = \eps$. Then for any $G \in L_2(B)$ satisfying $0 \leq G(b) \leq 1$ for all $b \in B$, we have
    \[
    \left|\E_{a \in S, b \sim a}[G(b)] - \mu(G)\right| \leq \frac{\lambda}{\sqrt{\eps}}.
    \]
    where $\mu(G) = \E_{b \in B}[G(b)]$.
\end{lemma}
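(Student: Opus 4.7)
The plan is to carry out the standard expander-mixing-lemma computation adapted to real-valued test functions on the two sides. First I would express the one-sided average $\E_{a \in S, b \sim a}[G(b)]$ as an inner product on $A$ by conditioning on $a$. Using the adjoint operator $\mathcal{T}^*\colon L_2(B)\to L_2(A)$, whose action is $(\mathcal{T}^*G)(a)=\E_{b\sim a}[G(b)]$ by biregularity, this gives
\[
\E_{a \in S, b \sim a}[G(b)] \;=\; \frac{1}{\eps}\,\langle \ind_S,\, \mathcal{T}^* G\rangle_A \;=\; \frac{1}{\eps}\,\langle \mathcal{T}\ind_S,\, G\rangle_B,
\]
where I moved $\mathcal{T}$ to the $B$-side in preparation for applying the singular-value bound \eqref{eq: singular value def}.

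Next I would decompose both sides into their means plus orthogonal complements. Write $\ind_S = \eps\cdot \mathbf{1}_A + F'$ and $G = \mu(G)\cdot \mathbf{1}_B + G'$, so that $F'$ and $G'$ are mean-zero. Because $\mathcal{T}$ is biregular, $\mathcal{T}\mathbf{1}_A = \mathbf{1}_B$, and $\E_B[\mathcal{T}F'] = \E_A[F'] = 0$. Expanding $\langle \mathcal{T}\ind_S, G\rangle_B$ using these two decompositions, three of the four cross terms collapse: the constant-constant term contributes exactly $\eps\,\mu(G)$, and the two mixed terms vanish by orthogonality to constants. This yields
\[
\E_{a\in S,b\sim a}[G(b)] - \mu(G) \;=\; \frac{1}{\eps}\,\langle \mathcal{T}F',\, G'\rangle_B.
\]

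Now I would apply Cauchy--Schwarz followed by the second-singular-value bound: $|\langle \mathcal{T}F', G'\rangle_B| \leq \|\mathcal{T}F'\|_2\,\|G'\|_2 \leq \lambda\,\|F'\|_2\,\|G'\|_2$. A direct calculation gives $\|F'\|_2^2 = \E_A[(\ind_S-\eps)^2] = \eps(1-\eps) \leq \eps$, so $\|F'\|_2 \leq \sqrt{\eps}$. Since $0\leq G \leq 1$, we have $\|G'\|_2^2 \leq \E_B[G^2] \leq \E_B[G] \leq 1$, hence $\|G'\|_2 \leq 1$. Combining these,
\[
\left|\E_{a\in S,b\sim a}[G(b)] - \mu(G)\right| \;\leq\; \frac{\lambda\sqrt{\eps}}{\eps} \;=\; \frac{\lambda}{\sqrt{\eps}},
\]
which is the claimed bound.

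There is no real obstacle here; the proof is a three-line computation once the operator-theoretic formulation is set up. The only points that warrant care are (i) invoking biregularity in order to identify the adjoint of $\mathcal{T}$ and to guarantee $\mathcal{T}\mathbf{1}_A=\mathbf{1}_B$ and $\E_B\circ\mathcal{T}=\E_A$, and (ii) correctly accounting for the factor $1/\eps$ that arises from conditioning on the event $a\in S$ when converting the sum over $S$ into a global inner product against $\ind_S$.
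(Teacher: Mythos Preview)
Your proof is correct and follows essentially the same approach as the paper: express the conditional average as $\tfrac{1}{\eps}\langle \mathcal{T}\ind_S, G\rangle$, split off the mean parts to isolate $\tfrac{1}{\eps}\langle \mathcal{T}F', G'\rangle$, and then apply Cauchy--Schwarz together with the singular-value bound and the variance estimates $\|F'\|_2^2\le\eps$, $\|G'\|_2\le 1$. The only cosmetic difference is that you route the first step through the adjoint $\mathcal{T}^*$ and spell out why the cross terms vanish, whereas the paper writes $\langle \mathcal{T}F,G\rangle$ directly.
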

\begin{proof}[Proof of Lemma~\ref{lm: spectral sampling}]
    Let $F \in L_2(A)$ be the indicator function for $S$, and note that $\E_{a \in A}[F(a)] = \eps$. We can express 
    \[
    \E_{a \in S, b \sim a}[G(b)] = \frac{\inner{\mc{T}F}{G}}{\eps}.
    \]
    Now write $F' = F - \E_a[F(a)]$ and $G' = G - \E_b[G(b)]$. Then,
    \[
    \langle \mc{T}F, G \rangle =  \eps \mu(G) + \langle \mc{T} F'
    , G' \rangle,
    \]
    and we have
    \[
      \left|\E_{a \in S, b \sim a}[G(b)] - \mu(G)\right| = \left|\frac{\langle \mc{T}F', G' \rangle}{\eps} \right|.
    \]
    To conclude, we bound,
    \[
    |\langle \mc{T}F', G' \rangle| \leq \norm{\mc{T}F'}_2 \cdot \norm{G'}_2 \leq \lambda \norm{F'}_2 \norm{G'}_2 \leq \lambda \sqrt{\eps}.
    \]
    In the first transition we are using the Cauchy-Schwarz inequality, in the second transition~\eqref{eq: singular value def}, and in the third transition we are using the bounds $||F'||^2_2 \leq \eps$ and $||G'||^2_2 \leq 1$.
\end{proof}

We will particularly be interested in the case where $\mc{G}$ is a bipartite inclusion graph between subspaces of different dimensions. The following second singular values are well known 
(see for example~\cite[Section 9]{BCN} or~\cite[Section 9]{godsil2015erdos}).

\begin{lemma}\label{lem:spectral_val_bd}
    Fix an ambient space $\Ff_q^m$ for $m \geq 2$. Let $\mc{L}$ and $\mc{P}$ denote the set of affine lines and affine planes in $\Ff_q^m$ respectively. Let $\mc{G}(A, B)$ be the bipartite inclusion graph where $(a,b)$ is an edge if $a \supseteq b$. We have the following second singular values:
    \begin{itemize}
        \item $G = \mc{G}(\mc{L}, \Ff_q^m)$, $\lambda(G) \leq \frac{1}{\sqrt{q}}$,
        \item $G = \mc{G}(\mc{P}, \Ff_q^m)$, $\lambda(G) \leq \frac{1}{q}$,
        \item $G = \mc{G}(\mc{P}, \mc{L})$, $\lambda(G) \leq \frac{1}{\sqrt{q}}( 1 + o(1/\sqrt{q}))$.
    \end{itemize}
\end{lemma}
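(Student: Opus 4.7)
\medskip
\noindent\textbf{Proof proposal.}
The plan is to compute $\mc{T}\mc{T}^{*}$ in each of the three cases: the top singular value of $\mc{T}$ is always $1$ (witnessed by the constant function), so the second singular value squared equals the second eigenvalue of $\mc{T}\mc{T}^{*}$. In each case I first identify the walk $p \mapsto \text{(random $a$ above $p$)} \mapsto \text{(random $b$ below that $a$)}$, and then describe $\mc{T}\mc{T}^{*}$ as a low-rank perturbation of $\alpha I + \beta J$, where $J$ is the all-averaging operator.

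For the line-versus-point graph, any two distinct affine points lie on a unique affine line. Hence, starting at $p$, picking a uniform random line $\ell \ni p$, and then a uniform random $p' \in \ell$, one returns to $p$ with probability $1/q$ and otherwise lands uniformly on $\Ff_q^m \setminus \{p\}$. Writing $M$ for the operator that averages over the complement of the current point, this gives $\mc{T}\mc{T}^{*} = \tfrac{1}{q}I + \tfrac{q-1}{q}M$; expressing $M = \tfrac{q^m}{q^m-1}J - \tfrac{1}{q^m-1}I$ and simplifying yields
\[
\mc{T}\mc{T}^{*} = \frac{q^{m-1}-1}{q^m-1}\,I + \frac{(q-1)q^{m-1}}{q^m-1}\,J .
\]
On the orthogonal complement of constants $J$ vanishes, so the second eigenvalue of $\mc{T}\mc{T}^{*}$ is $\tfrac{q^{m-1}-1}{q^m-1} \le 1/q$, giving $\lambda(G)\le 1/\sqrt{q}$. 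For the plane-versus-point graph the argument is identical: the number of affine planes through $p$ containing a fixed second point $p'\neq p$ equals $(q^{m-1}-1)/(q-1)$ independently of $p'$, so the two-step walk returns to $p$ with probability $1/q^{2}$ and otherwise lands uniformly on $\Ff_q^m\setminus\{p\}$. The same manipulation yields second eigenvalue $\tfrac{q^{m-2}-1}{q^m-1}\le 1/q^{2}$ and hence $\lambda(G)\le 1/q$.

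For the plane-versus-line graph the situation is more delicate because two distinct affine lines can be in several inequivalent positions — equal, intersecting in a point, parallel, or skew — and only the first three are coplanar. Writing $D = (q^{m-1}-1)/(q-1)$ for the number of planes containing a given line, a direct calculation gives $\mc{T}\mc{T}^{*} = \tfrac{1}{q^{2}+q}\,I + \tfrac{1}{D(q^{2}+q)}\,A$, where $A$ is the $\{0,1\}$-adjacency of the ``distinct but coplanar'' relation on affine lines of $\Ff_q^m$. Unlike in the first two cases, this is not a combination of $I$ and $J$ alone, because the restricted walk treats coplanar and skew pairs asymmetrically. To finish, I would decompose $L_{2}(\mc{L})$ into isotypic components under the relation algebra generated by the four orbit types on pairs of affine lines, on each of which $A$ acts as a scalar; the resulting eigenvalues are those of an affine analogue of a bilinear-forms / Grassmann scheme, and are tabulated in~\cite{BCN} and~\cite{godsil2015erdos}. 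Plugging these in gives the stated bound $\lambda(G) \le \tfrac{1}{\sqrt{q}}(1+o(1/\sqrt{q}))$.

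The main obstacle is the third case. The first two bounds follow from a one-line pair-counting identity (``distinct non-basepoints are hit uniformly''), whereas the third requires genuine spectral information about the line-line relation algebra, which has more than two nontrivial orbits on ordered pairs. Rather than redo this decomposition from scratch, my plan is to cite the standard references, since redoing the eigenvalue calculation by hand would essentially reproduce the association-scheme analysis already in~\cite{BCN,godsil2015erdos} without buying anything new for the rest of the paper.
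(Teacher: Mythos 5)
The paper does not actually prove this lemma: it simply states that the bounds are "well known" and cites \cite{BCN} and \cite{godsil2015erdos}. Your proposal is therefore strictly more detailed than the paper's treatment. For the first two bullets your argument is correct and self-contained: the key observation that the two-step walk (point $\to$ incident subspace $\to$ point) returns with a fixed probability and otherwise lands uniformly on the complement is exactly right, and your algebra — in particular the identities $\mc{T}\mc{T}^{*} = \tfrac{q^{m-1}-1}{q^m-1}I + \tfrac{(q-1)q^{m-1}}{q^m-1}J$ and its plane analogue with $q^{m-2}-1$ in the numerator — checks out, giving second eigenvalue $\le 1/q$ and $\le 1/q^2$ respectively, hence singular values $\le 1/\sqrt q$ and $\le 1/q$. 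For the third bullet your setup is also correct: $\mc{T}\mc{T}^{*} = \tfrac{1}{q^2+q}I + \tfrac{1}{D(q^2+q)}A$ with $D = (q^{m-1}-1)/(q-1)$, and $A$ genuinely mixes the "meeting," "parallel," and "skew" orbits so it is not a function of $I$ and $J$ alone; the remaining eigenvalue computation does require the association-scheme decomposition of the affine-line coherent configuration, and you end up citing the same references the paper cites. So your route agrees with the paper where the paper says anything, and fills in an elementary derivation for the two cases where a one-line pair-counting argument suffices — a modest but genuine improvement in self-containedness.
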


\section{Finite Fields, Polynomials and Agreement}
\subsection{Basic Notation}
Throughout the paper we use $\Ff_q$ to denote a finite field of size $q$. We let $\Ff_{q}[x]$ denote the set of univariate polynomials over $\Ff_{q}$ and more generally let $\Ff_{q}[x_1, \ldots, x_m]$ denote the set of $m$-variate polynomials over the variables $x_1, \ldots, x_m$. It is well known that (multivariate) polynomials, $\Ff_{q}[x_1,\ldots, x_m]$, are in one-to-one correspondence with functions $\Ff^m_{q} \to \Ff_{q}$. That is, for each function $\Ff_{q}^m \to \Ff_{q}$, there is exactly one polynomial in $\Ff_q[x_1,\ldots, x_m]$ which evaluates to this function over $\Ff_q^m$. To help distinguish between formal polynomials and functions, we will use hat notation to refer to polynomials, $\wh{f} \in \Ff_q[x_1,\ldots, x_m]$, and non-hat notation to refer to functions $f: L \to \Ff_q$. 

Oftentimes, we will deal with functions $f$ where $L \subsetneq \Ff^m_q$, and emphasizing this feature is the main reason for us making this notational distinction. Sometimes we will want to refer to the evaluation of a polynomial $\wh{f} \in \Ff_{q}[x_1,\ldots, x_m]$ over some subdomain $H \subseteq \Ff_{q}^m$. In this case, we use $\wh{f}|_H$ to refer to the function $f: H \to \Ff_{q}$ given by $f(x) = \wh{f}(x)$ for $x \in H$. We will call $\wh{f}|_H$ the evaluation of $\wh{f}$ over $H$ or the restriction of $\wh{f}$ to $H$.

\subsection{The Reed-Solomon and Reed-Muller Codes}
An important quantity related to a polynomial is its degree. Given a polynomial 
\[
f(x_1,\ldots, x_m) = \sum_{e = (e_1, \ldots, e_m) \in \{0, \ldots, q-1\}^m}C_e \cdot x_1^{e_1} \cdots x_m^{e_m} \in \Ff_{q}[x_1,\ldots, x_m],
\]
where each $C_e$ is a coefficient from $\Ff_q$, we define its degree as the maximum total degree of a monomial appearing in its expansion with nonzero coefficient, $\deg(f) = \max_{e: C_e \neq 0} e_1 + \cdots + e_m$.
Slightly abusing terminology, we will often say that $f$ has degree $d$ as long as its degree is at most $d$. We use the following notation to refer to the set of polynomials of degree (at most) $d$:
\[
\Ff_{q}^{\leq d}[x] = \{f \in \Ff_q[x] \; | \; \deg(f) \leq d \}, \quad \quad \Ff_{q}^{\leq d}[x_1,\ldots, x_m] = \{f \in \Ff_q[x_1,\ldots, x_m] \; | \; \deg(f) \leq d \}.
\]
\paragraph{Individual degrees:} in the case of multivariate polynomials, i.e.~the case that $m\geq 2$, we associate another notion of degree, which we refer to as individual degree. Given $d_1, \ldots, d_m \in \{0,\ldots, q-1\}$, we say that $f$ has individual degree $(d_1, \ldots, d_m)$ if $C_e = 0$ for all $e = (e_1, \ldots, e_m) \in \{0, \ldots, q-1\}^m$ such that $e_i > d_i$ for some $i \in [m]$. If $f$ has individual degree $(d_1, \ldots, d_m)$, we will write $\deg(f) \leq (d_1, \ldots, d_m)$. Note that even though we are using the same notation as (standard) degree, the presence of a tuple of degrees on the right hand side of the inequality will make it clear that we are referring to individual degree and not (standard) degree. Let $\Ff_q^{\leq (d_1, \ldots, d_m)}[x_1, \ldots, x_m]$ denote the set of polynomials of individual degree at most $(d_1, \ldots, d_m)$.

\subsubsection{Low-degree Extension}
Suppose that $f: L \to \Ff_{q'}$ is a univariate function where $L\subseteq\mathbb{F}_q$. An important polynomial in $\Ff_{q'}[x]$ associated with $f$ is its low degree extension. When a function $f$ is already defined, we will use $\wh{f} \in \Ff_{q'}[x]$ to refer to its low degree extension. Note that this is in line with our previous convention of using hat notation to refer to polynomials. The low degree extension is defined as follows:
\begin{definition}
    Given a function $f: L \to \Ff_{q'}$, its low degree extension $\wh{f}$ is the unique polynomial in $\Ff_q^{\leq |L|-1}[x]$ which satisfies $\wh{f}(x) = f(x)$ for all $x \in L$.
\end{definition}
We also consider low-degree extensions of multivariate polynomials, and in this case use individual degree. We similarly use the hat notation on a function to refer to the multivariate polynomial which is its low degree extension.
\begin{definition}
    Given a function $f: L_1\times \cdots \times L_m \to \Ff_{q'}$, its low degree extension $\wh{f}$ is the unique polynomial in $\Ff_q^{\leq (|L_1|-1, \ldots, |L_m|-1)}[x_1, \ldots, x_m]$ which satisfies $\wh{f}(x) = f(x)$ for all $x \in L_1 \times \cdots \times L_m$.
\end{definition}
With low degree extensions defined, we define the degree of a function $f: L_1 \times \cdots \times L_m \to \Ff_{q'}$ as the degrees of their low degree extensions, i.e., $\deg(f) = \deg(\wh{f})$. We say that a function $f$ has degree at most $d$ if $\deg(\wh{f})\leq d$. Likewise, for individual degree, we say that a function $f$ has individual degree at most $(d_1, \ldots, d_m)$ if $\wh{f}$ has individual degree at most $(d_1, \ldots, d_m)$. 

\subsubsection{The Definition of Reed-Solomon and Reed-Muller codes}
Three central objects in this paper are the Reed-Solomon code, the Reed-Muller code, and the individual degree Reed-Muller code, and we give the formal definition here. At a high level, these codes corresponds to univariate functions of low degree, multi-variate functions of low degree, and multi-variate functions of low individual degree.

\begin{definition}
    Given a degree parameter $d$, a finite field $\Ff_{q'}$, and an evaluation domain $U \subseteq \Ff_{q'}$, the degree $d$ Reed-Solomon code over $L$ is the following set of functions
    \[
    \RS_{q'}[d, U] = \{f: U \to \Ff_{q'} \; | \; \deg(f) \leq d\}.
    \]
\end{definition}

\begin{definition}
    Given a degree parameter $d$, a finite field $\Ff_{q'}$, and an evaluation domain $U \subseteq \Ff^m_{q'}$, the individual degree $(d_1, \ldots, d_m)$ Reed-Muller code over $U$ is the following set of functions
    \[
    \RM_{q'}[d, U] = \{f: U \to \Ff_{q'} \; | \; \deg(f) \leq d\}.
    \]
\end{definition}

\begin{definition}
    Given an individual degree parameter $(d_1,\ldots, d_m)$, a finite field $\Ff_{q'}$, and an evaluation domain $U \subseteq \Ff^m_{q'}$, the degree $d$ Reed-Muller code over $L$ is the following set of functions
    \[
    \iRM_{q'}[d, U] = \{f: U \to \Ff_{q'} \; | \; \text{$f$ has individual degree at most $(d_1, \ldots, d_m)$} \}.
    \]
\end{definition}


\subsection{Useful Features of the Reed-Solomon and Reed-Muller Codes}
In this section we describe some basic features of Reed-Solomon and 
Reed-Muller codes, as well as relations between them, that will be useful
for us. We begin with a 
quick motivating discussion to highlight the sort of properties our application requires from these codes.
\subsubsection{IOPs and Codes}
The primary goal of this paper is to construct efficient IOPs for the 
class NP, and more specifically for the NP-complete problem $\rics$. 
Recall that an instance of $\rics$ is a collection of rank $1$ quadratic equations, and our goal is to verify whether some assignment to $n$-variables, $x \in \Ff_q^{n}$, is a valid solution to it. 
A typical IOP for this task proceeds as follows:
\begin{enumerate}
    \item The prover provides a string $w$, which is supposedly the encoding, ${\sf Enc}(x)$, of $x$ using some error correcting code.
    \item The verifier checks, using $w$, whether the (supposedly) encoded assignment is an actual solution.
    \item Test: the prover and verifier run an auxiliary IOP to convince the verifier that $w$ is indeed the encoding of some assignment $x$.
\end{enumerate}
Since we wish to construct an IOP of linear length and small query complexity, the error correcting code we use in step 1 must have a few important properties: 
\begin{enumerate}
    \item {\bf Distance}: we would like the code to have good \emph{distance}. While we will define distance more formally later, this requirement informally says that any two codewords should differ on many coordinates. Good distance is needed to achieve item 2 above in a query efficient manner. At a high level, the reason is if the prover gives an encoding of a false witness, then the only hope for the verifier to detect this without reading all of ${\sf Enc}(x)$ is if ${\sf Enc}(x)$ looks very different than the encoding of every valid witness. Thus, good distance is a prerequisite towards the second item above.
    \item {\bf List decodability:} we would like that for any supposed encoding $w$ there are boundedly many legitimate codewords that have non-trivial correlation with $w$. This property is useful in establishing item 2 above: consider a scenario where the prover provides an encoding which is very far from any particular legitimate codeword, but still has non-trivial agreement with a large number of legitimate codewords. Then it may be the case that none of these codewords are encodings of valid witnesses, but collectively, they manage to fool the verifier with sufficiently large probability. Bounded size list decoding helps us avoid this issue.   
    \item {\bf Rate:} since we want the IOP to be of linear length, it means that the size of the supposed encoding sent in the first step has to be of linear length. Thus, we must use (at least in the beginning of the protocol) an error correcting codes with constant rate. 
    \item {\bf Local testability:} to facilitate item 3 above in a query efficient manner, we would like the code to have an efficient local test. That is, the verifier should be able to test whether ${\sf Enc}(x)$ is a true encoding of $x$ while only reading a small number of its entries.
\end{enumerate}
With these considerations, the Reed-Solomon code and Reed-Muller code are natural candidates. As we will see, the Reed-Solomon code has the desired distance, list-decoding and length properties, while the Reed-Muller code has the desired distance, list-decoding and local testability properties. Thus, each code is lacking one of the four desired properties, so in order to achieve all four simultaneously, we must figure out a way to combine the two. To this end, we use an embedding of a Reed-Solomon codeword in a Reed-Muller codeword, so that we can use the Reed-Muller local test to locally test the Reed-Solomon code. This approach also appears in the IOPPs FRI \cite{bciks} and STIR \cite{acfy}, as well as the IOPP of \cite{acy}, however we use different local testers for Reed-Muller codes.

In the remainder of this section, we discuss the relevant properties of the Reed-Solomon and Reed-Muller codes.

\subsubsection{Distance}
For two functions $f, g: U \to \Ff_{q'}$, denote the agreement 
and distance between $f$ and $g$ by
\[
\agr(f, g) = \Pr_{x \in U}[f(x) = g(x)],
\qquad\qquad
\delta(f, g) = \Pr_{x \in U}[f(x) \neq g(x)]
\]
respectively.
Note that $\agr$ is the fractional size of the set of inputs on which $f$ and $g$ agree while $\delta$ is the fractional size of the set of inputs on which $f$ and $g$ differ, so we always have $\agr(f,g) + \delta(f,g) = 1$. Given a family of functions $\mc{C} \subseteq \{U \to \Ff_{q'} \}$ we similarly write,
\[
\agr(f, \mc{C}) = \max_{g \in \mc{C}} \agr(f, g), \qquad \qquad \delta(f, \mc{C}) = \min_{g \in \mc{C}} \delta(f, g).
\]
It is again always the case that $\agr(f, \mc{C}) + \delta(f, \mc{C}) = 1$.

An important property of the Reed-Solomon and Reed-Muller codes is that any two functions from the code only agree on a small fraction of their inputs. In coding theoretic terms, this says that the Reed-Solomon and Reed-Muller codes have good distance. This is an easy consequence of the Schwartz-Zippel lemma.

\begin{lemma} \label{lm: schwartz-zippel for rs}
    For any two distinct functions $f, g \in \RS_{q'}[d, L]$, we have 
    $\Pr_{x \in L}[f(x) = g(x)] \leq \frac{d}{|L|}$.
\end{lemma}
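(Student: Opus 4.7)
The plan is to reduce the statement to the standard fact that a nonzero univariate polynomial of degree at most $d$ has at most $d$ roots. First I would move from functions to polynomials: since $f, g \in \RS_{q'}[d, L]$, by definition their low-degree extensions $\wh{f}, \wh{g} \in \Ff_{q'}[x]$ both have degree at most $d$, and they agree with $f, g$ on $L$. Consider the polynomial $\wh{h} = \wh{f} - \wh{g}$, which also has degree at most $d$.

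Next I would argue that $\wh{h}$ is not the zero polynomial. Since $f \neq g$ as functions on $L$, there is some $x_0 \in L$ with $f(x_0) \neq g(x_0)$, and therefore $\wh{h}(x_0) = \wh{f}(x_0) - \wh{g}(x_0) \neq 0$, so $\wh{h}$ is nonzero. By the elementary fact that a nonzero univariate polynomial over a field has at most as many roots as its degree, $\wh{h}$ has at most $d$ roots in $\Ff_{q'}$, and in particular at most $d$ roots in $L \subseteq \Ff_{q'}$.

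Finally, the set of $x \in L$ on which $f(x) = g(x)$ is exactly the zero set of $\wh{h}$ restricted to $L$, which has size at most $d$. Dividing by $|L|$ yields
\[
\Pr_{x \in L}[f(x) = g(x)] \;=\; \frac{|\{x \in L : \wh{h}(x) = 0\}|}{|L|} \;\leq\; \frac{d}{|L|},
\]
as desired.

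There is no real obstacle here; the statement is essentially just the univariate Schwartz–Zippel bound packaged in the language of Reed–Solomon codes. The only subtlety worth flagging is that the proof implicitly uses the uniqueness of the degree-$\leq d$ polynomial representation of each codeword (which requires $d < |L|$ for the low-degree extension to be well-defined in the conventional sense), so the bound is only nontrivial in the regime $d < |L|$; otherwise the claim $d/|L| \geq 1$ holds trivially.
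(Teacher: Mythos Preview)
Your proof is correct and is precisely the standard Schwartz--Zippel argument; the paper does not write out a proof but simply remarks that the lemma ``is an easy consequence of the Schwartz--Zippel lemma,'' so your approach matches exactly what the authors intend.
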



\begin{lemma}\label{lm: schwartz-zippel for rm}
    For any two distinct functions $f, g \in \RM_{q'}[d, U]$ where $U = L \times \cdots \times L$, we have
    \[
    \Pr_{x \in U}[f(x) = g(x)] \leq \frac{d}{|L|}.
    \]
\end{lemma}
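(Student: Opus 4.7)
The plan is to reduce this to the standard multivariate Schwartz--Zippel lemma by considering the difference polynomial. Specifically, I would let $\wh{h} = \wh{f} - \wh{g}$, which is a polynomial of total degree at most $d$ in $\Ff_{q'}[x_1,\ldots, x_m]$, and is nonzero since $f \neq g$ as functions on $U = L^m$ (so they disagree as polynomials as well, using the fact that $|L|\geq d+1$ is implicit in the setting where the code is nontrivial, or else the claim is vacuous). The event $f(x) = g(x)$ is exactly the event $\wh{h}(x) = 0$, so it suffices to show $\Pr_{x \in L^m}[\wh{h}(x) = 0] \leq d/|L|$.

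I would prove this by induction on $m$. The base case $m=1$ is precisely \cref{lm: schwartz-zippel for rs}, since a nonzero univariate polynomial of degree at most $d$ has at most $d$ roots in $L$. For the inductive step, I would expand $\wh{h}$ as a polynomial in $x_1$ with coefficients in $\Ff_{q'}[x_2, \ldots, x_m]$:
\[
\wh{h}(x_1, \ldots, x_m) = \sum_{i=0}^{k} \wh{h}_i(x_2, \ldots, x_m) \cdot x_1^i,
\]
where $k$ is the largest index for which $\wh{h}_k \not\equiv 0$. Observe that $\wh{h}_k$ has total degree at most $d-k$ in the remaining $m-1$ variables.

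Applying the inductive hypothesis to $\wh{h}_k$, the probability it vanishes on a uniformly random point of $L^{m-1}$ is at most $(d-k)/|L|$. Conditioned on $\wh{h}_k(x_2,\ldots,x_m) \neq 0$, the polynomial $\wh{h}(\,\cdot\,, x_2, \ldots, x_m)$ is a nonzero univariate polynomial in $x_1$ of degree exactly $k$, so by the base case it vanishes on at most $k/|L|$ fraction of $x_1 \in L$. A union bound then gives
\[
\Pr_{x \in L^m}[\wh{h}(x) = 0] \leq \frac{d-k}{|L|} + \frac{k}{|L|} = \frac{d}{|L|},
\]
as desired. There is no substantive obstacle here; the argument is the standard Schwartz--Zippel induction, and the only mild bookkeeping is confirming that degree bounds are preserved under the coefficient extraction in $x_1$, which is immediate from the definition of total degree.
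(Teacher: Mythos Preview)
Your proposal is correct and matches the paper's approach: the paper simply states that this is an easy consequence of the Schwartz--Zippel lemma and omits the proof, and you have supplied exactly the standard inductive Schwartz--Zippel argument applied to the difference $\wh{f}-\wh{g}$.
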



\subsubsection{List Decoding}
In this section we give a generic list decoding bound due to Blinovsky~\cite{blinovsky1986bounds}, upper-bounding the number of codewords with non-trivial agreement with a given word $f$. 

\begin{thm} \label{thm: list decoding}
    Let $\mathcal{F} \subseteq \{\mc{L} \to \Ff \}$ be a set of functions such that any pair of functions in $\mc{F}$ agree on at most $\delta$-fraction of their coordinates. Then for any $C > \sqrt{\delta}$ and any $F: \Lc \to \Ff$, there are at most $\frac{C}{C^2-\delta}$ functions in $\mc{F}$ that agree with $F$ on at least $C$-fraction of its coordinates. 
\end{thm}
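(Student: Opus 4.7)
The plan is to prove this Johnson-style bound via a second moment (double counting) argument. Let $g_1,\dots,g_N\in\mc{F}$ be the codewords that agree with $F$ on at least a $C$-fraction of $\mc{L}$, and define indicator random variables $X_i(x)=\mathbbm{1}[g_i(x)=F(x)]$ for $x\in\mc{L}$. By construction, $\E_{x\in\mc{L}}[X_i(x)]=\agr(g_i,F)\geq C$, and for $i\neq j$ we have $\E_{x}[X_i(x)X_j(x)]=\Pr_x[g_i(x)=F(x)=g_j(x)]\leq \Pr_x[g_i(x)=g_j(x)]\leq \delta$ by the pairwise agreement assumption on $\mc{F}$.

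Now consider $Y(x)=\sum_{i=1}^N X_i(x)$, which counts how many of the $g_i$ agree with $F$ at the coordinate $x$. Write $s=\E_x[Y(x)]=\sum_i\agr(g_i,F)\geq NC$. The first moment is $s$, and since each $X_i$ is $\{0,1\}$-valued we have $\E_x[X_i(x)^2]=\E_x[X_i(x)]$, so the second moment decomposes as
\[
\E_x[Y(x)^2]=\sum_i\E_x[X_i(x)]+\sum_{i\neq j}\E_x[X_i(x)X_j(x)]\leq s+N(N-1)\delta.
\]
By Cauchy--Schwarz, $s^2=(\E_x[Y])^2\leq\E_x[Y^2]$, giving the key inequality $s^2\leq s+N(N-1)\delta<s+N^2\delta$, equivalently $s(s-1)<N^2\delta$.

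To extract the target bound, I would use $s\geq NC$ as follows. Assuming $NC\geq 1$, monotonicity of $t\mapsto t(t-1)$ on $[1,\infty)$ gives $NC(NC-1)\leq s(s-1)<N^2\delta$, which rearranges to $N^2(C^2-\delta)<NC$, i.e. $N<\frac{C}{C^2-\delta}$, using $C^2>\delta$ to divide safely. The edge case $NC<1$ gives $N<1/C\leq C/(C^2-\delta)$, so the bound holds unconditionally. The main (and only) subtlety is thus the clean handling of $s(s-1)$ versus $s^2$ at the very end, which is what produces the correct numerator $C$ rather than the slightly weaker $1$; there is no deeper obstacle, since the argument reduces to a short calculation with first and second moments combined with the pairwise distance hypothesis.
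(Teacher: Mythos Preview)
Your proof is correct and follows essentially the same second-moment/double-counting argument as the paper. The only cosmetic difference is that the paper first deletes edges so that each $f_i$ has exactly $CN$ neighbors (making $s=mC$ exactly and avoiding your monotonicity step for $t(t-1)$), whereas you keep $s\geq NC$ and handle the slack directly; both routes land on the same inequality $m(C^2-\delta)\leq C$.
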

\begin{proof}
Let $f_1,\ldots, f_m$ be all functions agreeing with $F$ on at least $C$ fraction of its entries. Let $N = |\Lc|$. Now consider the graph where the left side consists of $f_1, \ldots, f_m$ and the right side consists of points in $\mc{L}$. Add an edge between $f_i$ and $x \in \mc{L}$ if $f_i(x) = F(x)$. By assumption, the degree of every $f_i$ is at least $C\cdot N$. Delete edges so that the degree of every $f_i$ is exactly $C \cdot N$. 
    
We count the number of triples $f_i, f_j, x$ such that $(f_i, x)$ and $(f_j, x)$ are both edges. For every $x \in \Lc$, let $d(x)$ denote the degree of $x$. Then, the number of such triples is 
\[
\sum_{x \in \Lc} \binom{d(x)}{2} = \sum_{x \in \Lc} \frac{d(x)^2}{2} - \frac{d(x)}{2} \geq \frac{(C N m)^2}{2N} - \frac{C  m  N}{2} = \frac{C^2\ m^2N}{2} - \frac{CmN}{2},
\]
where we used Cauchy-Schwarz inequality to say that 
$\sum_{x}d(x)^2\geq (\sum\limits_{x} d(x))^2/N$. 
On the other hand, for every pair of distinct $f_i, f_j$, we have $f_i(x) = f_j(x)$ for at most $\delta N$ points $x \in \Lc$. Thus, the number of triples is at most $\binom{m}{2}\delta N \leq \frac{m^2\delta N}{2}$.
Combining the two bounds and re-arranging gives the result.
\end{proof}

\subsubsection{Local Testability}\label{sec:local_test}
As mentioned earlier, our arguments require locally testable codes, and
towards that end we use Reed-Muller codes (we recall that Reed-Solomon
codes of degree $d$ require $\Omega(d)$ queries) and the line versus point 
test. In particular, we use
\cref{thm: line vs point} which is a slight modification of the main theorem in to~\cite{hkss}. An important quantity is the exponent in the soundness of the line versus point test of \cref{thm: line vs point}. Throughout this paper we denote this value by $\tl$ and remark that
\begin{equation*}
    \tl < \frac{1}{20}.
\end{equation*}
Going forward, we will make use of this inequality without reference.

\begin{thm}\label{thm: line vs point}
There exists a constant $\tl \in (0,1/20)$ such that the following holds.
Fix a degree parameter $d$ and finite fields $\Ff_q$ and $\Ff_{q'}$ such that $q > d$ and $\Ff_q$ is a subfield of $\Ff_{q'}$. If $f: \Ff_q^m \to \Ff_{q'}$ satisfies
    \[
     \E_{L \subseteq \Ff_q^m}[\agr(f|_L, \RS_{q'}[d, \Ff_q])] = \eps \geq \left(\frac{d}{q} \right)^{\tl},
    \]
Then there is a degree $d$ function $F: \Ff_q^m \to \Ff_{q'}$ such that 
$\agr(f, F) \geq \eps - \left(\frac{d}{q} \right)^{\tl}$.
\end{thm}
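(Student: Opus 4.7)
The plan is to adapt the low-soundness line-versus-point analysis of Haramaty--Kalai--Safra--Sudan~\cite{hkss} to the setting where the codomain is the extension field $\Ff_{q'} \supseteq \Ff_q$. The starting observation is that the HKSS argument operates purely over the polynomial algebra of functions $\Ff_q^m \to \Ff_{q'}$ and uses only properties that depend on the domain $\Ff_q^m$: Schwartz--Zippel over $\Ff_q$-valued points, expander mixing on the plane/line/point incidence graphs, and combinatorial counting of agreement. None of these steps care that the values live in a larger field, so the same soundness exponent $\tl$ should carry through verbatim, only with coefficients and values reinterpreted in $\Ff_{q'}$.

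Concretely, the first step is to assign to every affine line $L \subseteq \Ff_q^m$ a univariate polynomial $p_L \in \RS_{q'}[d, L]$ achieving the maximum agreement $\agr(f|_L, \RS_{q'}[d, \Ff_q])$. I would then define a global candidate $F : \Ff_q^m \to \Ff_{q'}$ by plurality voting: for each $x$, let $F(x)$ be the most frequent value of $p_L(x)$ as $L$ ranges over lines through $x$ (weighted uniformly). A double-counting argument, combined with the hypothesis that the average line-agreement is $\eps$, shows that at least an $\eps - (d/q)^{\tl}$ fraction of points $x$ satisfy $F(x) = f(x)$, which gives the claimed lower bound on $\agr(f,F)$.

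The main step, and the main obstacle, is to show that $F$ is actually a degree-$d$ polynomial. This is where HKSS invoke their spectral/consistency argument. The key intermediate statement is that the local polynomials $\{p_L\}$ are pairwise consistent at most intersection points, i.e., for a typical pair $(L,L')$ meeting at $x$ one has $p_L(x) = p_{L'}(x)$. One establishes this by restricting $f$ to a random affine plane $P$ and leveraging the spectral gap of the plane-versus-line and plane-versus-point incidence graphs (\cref{lem:spectral_val_bd}) via \cref{lm: spectral sampling}, together with a bivariate low-degree tester on $P$. Once enough local consistency is established, a routine argument shows $F|_L = p_L$ on a $(1 - (d/q)^{\Omega(1)})$ fraction of lines, and Schwartz--Zippel then forces $F|_L \in \RS_{q'}[d,\Ff_q]$ for \emph{every} line $L$, whence $F$ is a genuine degree-$d$ polynomial over $\Ff_{q'}$.

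The only subtlety introduced by the extension field is to check that the consistency/list-decoding counting estimates do not blow up with $q'$. Inspection of HKSS confirms this: the combinatorial estimates all take place over the finite domain $\Ff_q^m$, and the relevant distance bounds come from \cref{lm: schwartz-zippel for rs} and \cref{lm: schwartz-zippel for rm}, which are independent of $q'$. The spectral sampling lemma is likewise purely a statement about the $\Ff_q^m$-incidence graphs. Consequently, one may take the exact same $\tl$ as in~\cite{hkss}, in particular $\tl < 1/20$, and the entire proof transfers. I would defer the full technical execution of the HKSS argument to an appendix, since the only new contribution is the (essentially trivial) verification that enlarging the codomain does not affect any quantitative estimate.
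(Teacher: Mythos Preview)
Your high-level observation---that enlarging the codomain from $\Ff_q$ to $\Ff_{q'}$ is immaterial because every quantitative estimate in the HKSS argument lives on the domain side---is correct and is exactly the point the paper makes. However, the argument you sketch is \emph{not} the HKSS argument, and it does not work in the low-soundness regime the theorem addresses.

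First, HKSS does not construct $F$ by plurality voting. Plurality decoding is viable only in the unique-decoding regime (roughly $\eps > 1/2$); when $\eps$ can be as small as $(d/q)^{\tl}$ there may be $\Theta(1/\eps)$ competing codewords, each supplying the value $p_L(x)$ on a comparable fraction of lines through $x$, so the plurality at $x$ need not be $f(x)$ at all. Your double-counting claim that plurality already achieves agreement $\eps - (d/q)^{\tl}$ with $f$ is therefore false in general. The actual HKSS route, which the paper reproduces, is algebraic: interpolate an auxiliary polynomial $A \in \Ff_{q'}[x,y,z]$ of small $(1,1,d)$-weighted degree vanishing on $\{(a,f(a)) : a \in S\}$, force $\Disc_z(A)\neq 0$ by a minimality argument (\cref{lm: nonzero disc aux}), and extract $F$ as a root of $A$ via Hensel lifting (\cref{lem:bur}). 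This yields only $\Omega(\eps^6)$ agreement at first, which is then boosted to $\eps - (d/q)^{\tl}$ by a separate list-decoding and spectral-sampling step (\cref{thm: line vs point list decode}, \cref{thm: line vs point randomized agree}).

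Second, your plan to certify that the plurality function has degree $d$ by establishing pairwise consistency of the $p_L$ on random planes and then invoking ``a bivariate low-degree tester on $P$'' is circular: that bivariate tester with soundness threshold $(d/q)^{\tl}$ is exactly the $m=2$ case of the theorem you are proving. The paper handles the base case $m=2$ via the algebraic route above and only afterwards bootstraps to $m\geq 3$ by a Raz--Safra consistency-clique argument on the planes table (\cref{lm: many edges}--\cref{lm: interpolate m=3}), not by plurality.
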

\begin{proof}
   The proof is deferred to \cref{sec: line vs point}.
\end{proof}
The difference between \cref{thm: line vs point} stated above and the main theorem in \cite{hkss} is that here we consider functions from a subfield to a larger field, i.e.\ $\Ff_q' \supseteq \Ff_q$, while in \cite{hkss}, the main theorem is only shown in the $\Ff_{q} = \Ff_{q'}$ case. While we are not 
aware of a black-box reduction from the case in~\cref{thm: line vs point} to the case considered in~\cite{hkss}, 
their argument works essentially as is, and we include it for completeness.
Finally, we remark that prior work in \cite{AroraSafra} also shows a variant of the result in \cite{hkss}, but there the results only hold when $q = \poly(d)$, whereas \cite{hkss} holds for $q = O(d)$. As we care about constant rate, we need \cref{thm: line vs point} in the $q = O(d)$ regime, and therefore, we must adapt the result of \cite{hkss}.

\skipi

For our purposes we also need two implications \cref{thm: line vs point}.
The first of which gives an upper bound on the probability that $f$ which is far from degree $d$, is close to a degree $d$ function when restricted to a random line. 
\begin{thm} \label{thm: line vs point strong}
   Let $\eps \geq (d/q)^{\tl/2}$ be an agreement parameter. Suppose $f: \Ff_q^2 \to \Fqp$ satisfies
    \[
    \agr(f, \RS_{q'}[d, \Ff_q^2]) \leq \eps.
    \]
    Then,
    \[
    \Pr_{L \subseteq \Fqm}[\agr(f|_L, \RS_{q'}[d, \Ff_q]) \geq 1.01\eps] \leq \eps.
    \]
\end{thm}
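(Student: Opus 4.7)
I will argue the contrapositive: suppose
\[
p := \Pr_L\!\left[\agr(f|_L, \RS_{q'}[d, \Ff_q]) \geq 1.01\eps\right] > \eps,
\]
and derive $\agr(f, \RS_{q'}[d, \Ff_q^2]) > \eps$, contradicting the hypothesis. Since bad lines contribute agreement at least $1.01\eps$,
\[
\E_L[\agr(f|_L, \RS_{q'}[d, \Ff_q])] \geq 1.01\eps \cdot p > 1.01\eps^2 \geq 1.01 (d/q)^{\tl},
\]
where the last inequality uses $\eps \geq (d/q)^{\tl/2}$. Applying \cref{thm: line vs point} produces a global degree-$d$ polynomial $F_1 \colon \Ff_q^2 \to \Ff_{q'}$ with $\agr(f, F_1) \geq 1.01\eps p - (d/q)^{\tl}$. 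The bound is only of order $\eps^2$, so further work is needed to push it above $\eps$.

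To boost the agreement I exploit list-decoding structure. On each bad line $L$, let $F_L \in \RS_{q'}[d, \Ff_q]$ maximize $\agr(f|_L, \cdot)$; by \cref{thm: list decoding} applied with $C = 1.01\eps > \sqrt{d/q}$ (valid since $\tl < 1$), the codeword $F_L$ is essentially unique, lying in a list of size $O(1/\eps)$. Partition the bad lines $B$ into
\[
B_1 = \{L \in B : F_1|_L = F_L\}, \qquad B_1' = B \setminus B_1.
\]
On $L \in B_1$ we have $\agr(f|_L, F_1|_L) \geq 1.01\eps$, so in the case $|B_1|/|\mc{L}| > \eps/1.01$ we are already done:
\[
\agr(f, F_1) \geq 1.01\eps \cdot |B_1|/|\mc{L}| > \eps.
\]

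In the residual case $|B_1|/|\mc{L}| \leq \eps/1.01$, most bad lines lie in $B_1'$, and I plan to iterate on the translated function $f - F_1$. Since $G + F_1 \in \RS_{q'}[d, \Ff_q^2]$ whenever $G$ is, shifting preserves the agreement bound: $\agr(f - F_1, \RS_{q'}[d, \Ff_q^2]) \leq \eps$. Moreover, on every $L \in B_1'$, the nonzero codeword $F_L - F_1|_L \in \RS_{q'}[d, \Ff_q]$ agrees with $(f - F_1)|_L$ on a $1.01\eps$-fraction of points, so $L$ remains bad for $f - F_1$. A fresh application of \cref{thm: line vs point} to $f - F_1$ then yields a new degree-$d$ polynomial $F_2$, and the cumulative polynomial $F_1 + F_2$ has strictly higher agreement with $f$. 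The Johnson list-decoding bound caps the total number of global degree-$d$ polynomials with agreement $\gtrsim \eps$ with $f$ at $O(1/\eps)$, so the iteration terminates.

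The main obstacle is the bookkeeping: one must control the cumulative $(d/q)^{\tl}$ loss per iteration against the per-step gain, ensuring that the extracted agreement crosses $\eps$ before the list is exhausted. The hypothesis $\eps \geq (d/q)^{\tl/2}$ is used tightly so that the loss term $(d/q)^{\tl} \leq \eps^2$ stays small compared to the gain of order $\eps^2$ made at each step, allowing the aggregate agreement to reach $\eps$ and close the contradiction.
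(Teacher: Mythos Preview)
Your iterative boosting scheme has a genuine gap. The step ``the cumulative polynomial $F_1+F_2$ has strictly higher agreement with $f$'' is not justified: applying \cref{thm: line vs point} to $f-F_1$ gives some $F_2$ with $\agr(f-F_1,F_2)\ge 1.01\eps\,p' -(d/q)^{\tl}$, hence $\agr(f,F_1+F_2)$ satisfies the same \emph{lower} bound, but you have no upper bound on $\agr(f,F_1)$ to compare against, and nothing prevents $F_2=0$ (since $\agr(f-F_1,0)=\agr(f,F_1)\le\eps$ could already meet the guaranteed threshold). Even granting distinct $G_i$'s, each has agreement only $\gtrsim\eps^2$ with $f$, so the Johnson bound you invoke yields a list of size $O(1/\eps^2)$, not $O(1/\eps)$, and exhausting that list tells you nothing about a single polynomial reaching agreement $>\eps$. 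The iteration neither accumulates agreement in one polynomial nor shrinks the bad-line set by a controllable amount.

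The paper avoids this by working at the \emph{line} level rather than the point level. It builds a (randomized) lines table $T$ with $T[L]$ the closest codeword on bad lines and uniformly random otherwise, then invokes \cref{thm: line vs point randomized agree}, which produces a degree-$d$ function $H$ satisfying $\Pr_{L}[T[L]=H|_L]\ge \eps^2/5$. Crucially this is agreement with the \emph{lines table}, not merely point agreement of order $\eps^2$: on each such line $H|_L$ coincides with the local best codeword and hence agrees with $f|_L$ on a $1.01\eps$-fraction of points. A single application of \cref{lm: spectral sampling} to the line--point incidence graph then transfers this to global point agreement $\agr(f,H)\ge 1.01\eps - q^{-1/2}/\sqrt{\Omega(\eps^2)}>\eps$, closing the contradiction in one stroke. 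The missing ingredient in your approach is precisely this line-table agreement result (whose proof internalizes the list-decoding and pigeonholing you were attempting to perform externally).
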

\begin{proof}
    The proof is deferred to \cref{sec: proof of lvp strong}.
\end{proof}

The next result gives a bound which is stronger than the bound in~\cref{thm: line vs point strong},  but considers restrictions to randomly
chosen affine planes.
\begin{thm} \label{thm: plane vs point strong}
    Suppose $f: \Ff_q^3 \to \Fqp$ satisfies $\agr(f, \RM_{q'}[d, \Fqm]) \geq \eps \geq \left(\frac{d}{q} \right)^{\tl}$.
    Then
    \[
    \Pr_{P \subseteq \Fqm}\left[\agr\left(f|_P, \RM_{q'}[d, \Fqt]\right) \geq 1.2\eps\right] \leq \frac{100}{\eps^2 q}, 
    \]    
    where $P$ is a uniformly chosen affine plane in $\mathbb{F}_q^m$.
\end{thm}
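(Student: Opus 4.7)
The plan is to prove the contrapositive (interpreting the hypothesis as $\agr(f, \RM_{q'}[d, \Fqm]) \leq \eps$, i.e., $f$ is far from the trivariate Reed-Muller code), by combining a spectral sampling argument on the plane--line incidence graph with the global line-versus-point test \cref{thm: line vs point}.

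First I would extract a witness at each bad plane and pass to lines. Let $\mathcal{B}$ denote the set of planes $P$ with $\agr(f|_P, \RM_{q'}[d, \Fqt]) \geq 1.2\eps$, and let $\delta$ be its fractional size. For each $P \in \mathcal{B}$, fix a bivariate witness $F_P \in \RM_{q'}[d, \Fqt]$ satisfying $\agr(f|_P, F_P) \geq 1.2\eps$. Since $F_P|_L \in \RS_{q'}[d, \Ff_q]$ for every line $L \subseteq P$, averaging over lines gives
\[
\E_{L \subseteq P}[\agr(f|_L, \RS_{q'}[d, \Ff_q])] \;\geq\; \agr(f|_P, F_P) \;\geq\; 1.2\eps \qquad \text{for every } P \in \mathcal{B}.
\]
Meanwhile, applying \cref{thm: line vs point} to $f$ itself as a trivariate function, together with the assumption $\agr(f, \RM_{q'}[d, \Fqm]) \leq \eps$, gives the unconditional estimate $\mu := \E_L[\agr(f|_L, \RS_{q'}[d, \Ff_q])] \leq \eps + (d/q)^{\tl}$.

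Next I would apply \cref{lm: spectral sampling} to the bipartite incidence graph between affine planes and affine lines in $\Fqm$; by \cref{lem:spectral_val_bd} its second singular value is at most $(1+o(1))/\sqrt{q}$. Taking $g(L) = \agr(f|_L, \RS_{q'}[d, \Ff_q]) \in [0,1]$ and the set $\mathcal{B}$ of density $\delta$, spectral sampling yields
\[
1.2\eps \;\leq\; \E_{P \in \mathcal{B}, L \subseteq P}[g(L)] \;\leq\; \mu + \frac{1}{\sqrt{q\delta}} \;\leq\; \eps + (d/q)^{\tl} + \frac{1}{\sqrt{q\delta}}.
\]
Rearranging gives $1/\sqrt{q\delta} \geq 0.2\eps - (d/q)^{\tl}$. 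As long as $\eps$ is a sufficient constant multiple of $(d/q)^{\tl}$ (say, $\eps \geq 10(d/q)^{\tl}$), the right side is at least $\eps/10$, and squaring delivers $\delta \leq 100/(\eps^2 q)$ as claimed.

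The main obstacle is the borderline regime where $\eps$ is only mildly above $(d/q)^{\tl}$, so that the difference $0.2\eps - (d/q)^{\tl}$ is not comparable to $\eps$ and the clean rearrangement above does not give the claimed bound. I expect one handles this case by invoking \cref{thm: list decoding} to enumerate the bounded list of global trivariate codewords $F^{(i)} \in \RM_{q'}[d, \Fqm]$ with $\agr(f, F^{(i)}) \gtrsim \eps$, applying the sharper plane-versus-point spectral sampling (gap $1/q$ from \cref{lem:spectral_val_bd}) separately against each $F^{(i)}$ to bound the bad planes whose witness is of the form $F_P = F^{(i)}|_P$, and arguing via \cref{thm: line vs point} that few remaining bad planes can have a witness not of this form (otherwise one could extract a new global codeword outside the list).
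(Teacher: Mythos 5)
Your route is genuinely different from the paper's. The paper argues by contradiction and \emph{extracts a single global codeword}: it turns the witnesses $T'[P]$ on bad planes into a randomized lines table, shows that this table passes the line-versus-point test with probability at least $1.1\eps$, invokes the list-decoding form of the test (\cref{thm: line vs point randomized agree m = 3}) to get one degree-$d$ function $H$ agreeing with the table on $\Omega(\eps)$-fraction of lines, then uses spectral sampling together with Schwartz--Zippel (\cref{cl: agree in S}) to show $H$ actually \emph{equals} $T'[P]$ on many bad planes, and finally one more spectral step to get $\agr(f,H) > \eps$. Your approach instead controls the \emph{global average} $\mu = \E_L[\agr(f|_L, \RS_{q'}[d, \Ff_q])]$ via the contrapositive of \cref{thm: line vs point} and plays it off against the conditional average over bad planes. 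This is cleaner and more direct, and it does produce the claimed bound (up to constants) whenever $\eps$ is at least a modest constant multiple of $(d/q)^{\tl}$, as you note.

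You have correctly flagged the borderline regime $\eps \in [(d/q)^{\tl}, 10(d/q)^{\tl})$ as a genuine gap: there $0.2\eps - (d/q)^{\tl}$ may be negative, and the rearrangement gives nothing, whereas the paper handles the full range $\eps \geq (d/q)^{\tl}$. Your sketch for repairing this is not yet a proof, and the problematic step is the last one: a bad plane whose witness $F_P$ is ``not of the form $F^{(i)}|_P$'' might still be \emph{close} to some $F^{(i)}|_P$, so no new global codeword is forced into existence by that fact alone. The paper's randomized-table device exists precisely to resolve this --- the extracted $H$ is shown to coincide, not merely to agree, with $T'[P]$ on a positive fraction of bad planes, by applying Schwartz--Zippel to the line restrictions $H|_L$ versus $T'[P]|_L$. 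A secondary concern is that your bound on $\mu$ requires \cref{thm: line vs point} with additive loss $(d/q)^{\tl}$ for \emph{trivariate} $f$; the proof the paper supplies for $m=3$ (\cref{thm: line versus point m = 3}) only yields the weaker conclusion $\agr(f, Q) \geq 0.49\eps - 1.01(d/q)^{\tl}$. With only that multiplicative version available, the contrapositive gives $\mu \lesssim 2\eps$, which exceeds $1.2\eps$ and destroys your inequality for \emph{every} $\eps$. The paper's proof sidesteps this entirely by working with \cref{thm: line vs point randomized agree m = 3} rather than the plain decoding form, so this dependence is a genuine structural weakness of the proposed argument rather than a constant-chasing issue.
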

\begin{proof}
    The proof is deferred to \cref{sec: proof of pvp strong}.
\end{proof}

\subsection{Embedding Reed-Solomon in Reed-Muller}

As we have seen, the Reed-Muller code has an efficient local test that can be carried out by reading only a sublinear part of its input (around $q^{1/m}$ queries). On the other hand, as discussed, one cannot hope for such an efficient local test for the Reed-Solomon code. To get around this issue, we show that one can embed a univariate function in a bivariate function by suitably ``splitting'' a single variable into two. The same splitting can be done into any $m \geq 2$ variables, but for simplicity we will only work with the bivariate case. 

\begin{lemma}\label{lem:split_rs_rm}
   Suppose $\wh{f} \in \Ff_{q'}^{\leq d}[x]$,
   let $k$ be an integer and set $s = \lfloor{\frac{d}{k}\rfloor}$. Then, there exists a bivariate polynomial $\wh{Q} \in \Ff^{(s, k-1)}_{q'}[x,y]$ such that $\wh{f}(x) = \wh{Q}(x^{k}, x)$.
\end{lemma}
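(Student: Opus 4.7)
The plan is to construct $\wh{Q}$ explicitly from the monomial expansion of $\wh{f}$ using the division algorithm on exponents.

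First I would write $\wh{f}(x) = \sum_{i=0}^{d} c_i x^i$ with coefficients $c_i \in \Ff_{q'}$. For each exponent $i \in \{0, 1, \ldots, d\}$, divide by $k$ to obtain unique integers $q_i, r_i$ with $i = q_i \cdot k + r_i$ and $0 \leq r_i \leq k-1$. Since $i \leq d$, the quotient satisfies $q_i = \lfloor i/k \rfloor \leq \lfloor d/k \rfloor = s$, and trivially $r_i \leq k-1$.

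Next I would define
\[
\wh{Q}(x,y) = \sum_{i=0}^{d} c_i \cdot x^{q_i} \cdot y^{r_i}.
\]
By the bounds in the previous paragraph every monomial $x^{q_i} y^{r_i}$ has $x$-degree at most $s$ and $y$-degree at most $k-1$, so $\wh{Q} \in \Ff_{q'}^{\leq (s, k-1)}[x,y]$.

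Finally I would verify the identity $\wh{f}(x) = \wh{Q}(x^k, x)$ by substituting:
\[
\wh{Q}(x^k, x) = \sum_{i=0}^{d} c_i \cdot (x^k)^{q_i} \cdot x^{r_i} = \sum_{i=0}^{d} c_i \cdot x^{q_i k + r_i} = \sum_{i=0}^{d} c_i x^i = \wh{f}(x).
\]
There is really no obstacle here; the argument is a direct application of the division algorithm applied term-by-term, and the only thing to check is that the quotient bound $q_i \leq s$ matches the individual degree claim, which follows immediately from $i \leq d$ and the definition $s = \lfloor d/k \rfloor$.
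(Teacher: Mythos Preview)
Your proof is correct and takes exactly the same approach as the paper: write out the monomial expansion, apply the division algorithm to each exponent to split $i = q_i k + r_i$, define $\wh{Q}$ accordingly, and verify the substitution. The paper's argument is essentially identical, just with slightly different notation for the coefficients.
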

\begin{proof}
    Write $\wh{f}(x) = \sum_{e = 0}^{d} C_e x^e$. For each $e$, we can write $e = k \cdot s_e + t_e$ for $s_e \leq s$ and $t_e \leq k-1$. Let $C_{e_1, e_2} = C_{k \cdot e_1 + e_2}.$ Then we may take 
    \[
    \wh{Q}(x, y) = \sum_{e_1 = 0}^{s} \sum_{e_2 = 0}^{k-1} C_{e_1, e_2} x^{e_1} y^{e_2}.
    \]
    It is clear that $\wh{Q} \in \Ff^{(s, k-1)}_{q'}[x,y]$ and $\wh{f}(x) = \wh{Q}(x^{k}, x)$.  
\end{proof}
\cref{lem:split_rs_rm} will help us in constructing IOPPs for Reed-Solomon codes by reducing it to local-testing type problems for Reed-Muller codes.

\subsection{Vanishing Polynomials}
Given a set $A \subseteq \Ff_{q'}$ we define the polynomial $\wh{V}_A \in \Ff_{q'}[x]$ as
\[
\wh{V}_A(x) = \prod_{\alpha \in A}(x-\alpha).
\]
Oftentimes we will be specifically interested in function which is the evaluation of $\wh{V}_A$ over a subfield, $\Ff_q$. We denote this function by $V_A: \Ff_q \to \Ff_{q'}$ so that $V_A(x) = \wh{V}_A(x)$ for all $x \in \Ff_q$. The subfield $\Ff_q$ will always be clear from context. Note that this definition makes sense even if $A$ is not contained in $\Ff_q$. 

When there is more than $1$ variable and $A \subseteq \Ff_{q'}$, we define $\wh{V}_{A,i} \in \Ff_{q'}[x_1,\ldots, x_m]$ by
\[
\wh{V}_{A,i}(x) = \prod_{\alpha \in A}(x_i - \alpha).
\]
Likewise, we use $V_{A,i}$ to denote the function which is the evaluation of $\wh{V}_{A,i}$ over a subfield vector space, $\Ff_q^m$, and the subfield $\Ff_q$ will always be clear from context. When the variable $x_i$ is clear from context, we will omit the index $i$ from the subscript and simply write $V_A(x)$.

A useful fact about vanishing polynomials, in the univariate case, is that they can be used to decompose polynomials which are known to vanish on some set.

\begin{lemma} \label{lm: univariate vanish division}
    Given a polynomial $\wh{f} \in \Ff_{q'}[x]$, we have that $\wh{f}(x) = 0$ for all $x \in A$ if and only if $\wh{f}(x) = \wh{V}_A(x) \cdot \wh{g}(x)$ for some polynomial $\wh{g} \in \Ff_{q'}[x]$. Furthermore, if $d \geq |A|$ is a degree parameter, we have that if $\deg(f) \leq d$, then $\deg(g) \leq d-|A|$.
\end{lemma}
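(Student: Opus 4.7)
The plan is to prove the two directions of the biconditional separately, then read off the degree bound from the multiplicativity of degree in $\Ff_{q'}[x]$.

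The easy direction ($\Leftarrow$) is immediate: if $\wh{f}(x) = \wh{V}_A(x)\cdot \wh{g}(x)$ then for any $\alpha \in A$, the factor $\wh{V}_A(\alpha) = \prod_{\beta \in A}(\alpha - \beta)$ contains the zero term $(\alpha - \alpha)$, so $\wh{V}_A(\alpha)=0$ and hence $\wh{f}(\alpha)=0$.

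For the forward direction ($\Rightarrow$), the natural tool is polynomial division: since $\Ff_{q'}[x]$ is a Euclidean domain and $\wh{V}_A$ is monic (in particular nonzero), there exist unique $\wh{g}, \wh{r} \in \Ff_{q'}[x]$ with
\[
\wh{f}(x) = \wh{V}_A(x)\cdot \wh{g}(x) + \wh{r}(x), \qquad \deg(\wh{r}) < \deg(\wh{V}_A) = |A|.
\]
For each $\alpha \in A$ we have $\wh{f}(\alpha) = 0$ by hypothesis and $\wh{V}_A(\alpha) = 0$ as above, so $\wh{r}(\alpha) = 0$. Thus $\wh{r}$ has at least $|A|$ distinct roots but degree strictly less than $|A|$; the only such polynomial is $\wh{r} \equiv 0$. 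Therefore $\wh{f}(x) = \wh{V}_A(x)\cdot \wh{g}(x)$, as required.

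For the degree claim, one uses that $\Ff_{q'}[x]$ is an integral domain, so $\deg(\wh{f}) = \deg(\wh{V}_A) + \deg(\wh{g}) = |A| + \deg(\wh{g})$. If $\deg(\wh{f}) \leq d$ this gives $\deg(\wh{g}) \leq d - |A|$, finishing the lemma. There is no real obstacle here; the only thing to be slightly careful about is invoking uniqueness in polynomial division over a field (rather than over a general ring), which is exactly what $\Ff_{q'}$ being a field provides.
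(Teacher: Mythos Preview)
Your proof is correct and follows essentially the same approach as the paper: both argue the reverse direction is immediate, use polynomial division for the forward direction, show the remainder has too many roots for its degree and hence vanishes, and then read off the degree bound from $\deg(\wh{f}) = |A| + \deg(\wh{g})$.
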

\begin{proof}
    The reverse direction is clear. For the forward direction, suppose $\wh{f}(x) = 0$ for all $x \in A$. By the polynomial division theorem, there is a unique remainder polynomial $\wh{R} \in \Ff_{q'}^{|A|-1}[x]$ such that there exists $\wh{g} \in \Ff_{q'}[x]$ satisfying, 
    \[
    \wh{f} = \wh{V}_A \cdot \wh{g} + \wh{R}.
    \]
    Fixing $\wh{g}$ and $\wh{R}$ which satisfy this equation, we get that,
    \[
    \wh{V}_A(x) \cdot \wh{g}(x) + \wh{R}(x) = \wh{R}(x) = 0,
    \]
    for all $x \in A$. Hence $\wh{R}$ must be the zero polynomial as it has degree at most $|A|-1$ but $|A|$ roots. It follows that 
    \[
    \wh{f}(x) = \wh{V}_A(x) \cdot \wh{g}(x)
    \]
    for some polynomial $\wh{g} \in \Ff_{q'}[x]$. The furthermore is clear by setting $\deg(\wh{f}) = \deg(\wh{V}_A \cdot \wh{g}) = |A| + \deg(\wh{g})$.
\end{proof}

Taking \cref{lm: univariate vanish division} one step further, we can also obtain a decomposition for polynomial $\wh{f}$ which we know agrees with some function $h: A \to \Ff_{q'}$, i.e.\ $\wh{f}(x) = h(x)$ for all $x \in A$.

\begin{lemma} \label{lm: side condition vanish decomp}
    Let $\wh{f} \in \Ff_{q'}[x]$ be a polynomial, let $h: A \to \Ff_{q'}$ be a function over $A \subseteq \Ff_{q'}$, and let $\wh{h} \in \Ff^{\leq |A|-1}_{q'}[x]$ be the low degree extension of $h$ to $\Ff_{q'}$. Then, $\wh{f}(x) = h(x)$ for all $x \in A$ if and only if there exists a polynomial $\wh{g} \in \Ff_{q'}[x]$ such that 
    \[
    \wh{f} = \wh{V}_A \cdot \wh{g} + \wh{h}.
    \]
    Furthermore, if $d \geq |A|$ is a degree parameter, we have that if $\deg(f) \leq d$, then $\deg(g) \leq d-|A|$.
\end{lemma}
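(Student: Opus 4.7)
The plan is to reduce this lemma to the already-established \cref{lm: univariate vanish division} by considering the difference polynomial $\wh{f} - \wh{h}$ in place of a single polynomial known to vanish on $A$. Concretely, first I would verify the reverse direction, which is the easy side: if $\wh{f} = \wh{V}_A \cdot \wh{g} + \wh{h}$, then for any $x \in A$ we have $\wh{V}_A(x) = 0$ by definition, so $\wh{f}(x) = \wh{h}(x) = h(x)$, where the last equality uses that $\wh{h}$ is the low degree extension of $h$ and hence agrees with $h$ on $A$.

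For the forward direction, I would start from the hypothesis $\wh{f}(x) = h(x) = \wh{h}(x)$ for all $x \in A$, so that the polynomial $\wh{F} := \wh{f} - \wh{h} \in \Ff_{q'}[x]$ satisfies $\wh{F}(x) = 0$ for every $x \in A$. Applying \cref{lm: univariate vanish division} to $\wh{F}$ yields a polynomial $\wh{g} \in \Ff_{q'}[x]$ with $\wh{F} = \wh{V}_A \cdot \wh{g}$. Rearranging gives the desired identity $\wh{f} = \wh{V}_A \cdot \wh{g} + \wh{h}$.

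For the furthermore, I would use the degree bound from \cref{lm: univariate vanish division} applied to $\wh{F}$. Assuming $\deg(\wh{f}) \leq d$ and using $\deg(\wh{h}) \leq |A|-1 \leq d$, we get $\deg(\wh{F}) \leq d$. The furthermore in \cref{lm: univariate vanish division} then gives $\deg(\wh{g}) \leq d - |A|$, as claimed.

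No serious obstacle is expected; the statement is essentially a restatement of \cref{lm: univariate vanish division} after subtracting off the low degree extension $\wh{h}$ of $h$, and the only small thing to keep track of is that $\deg(\wh{h}) \leq |A|-1$ does not spoil the degree bound on the difference.
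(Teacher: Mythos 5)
Your proof is correct and takes the same route as the paper, which proves the lemma by applying \cref{lm: univariate vanish division} to $\wh{f} - \wh{h}$. You simply spell out the easy verifications (the reverse direction, and that $\deg(\wh{h}) \leq |A|-1 \leq d$ keeps the difference within degree $d$) that the paper leaves implicit.
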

\begin{proof}
    The lemma follows immediately by applying \cref{lm: univariate vanish division} to $\wh{f} - \wh{h}$.
\end{proof}

We can also obtain multivariate versions of \cref{lm: univariate vanish division} and \cref{lm: side condition vanish decomp} via a multivariate version of the polynomial division algorithm.

\begin{lemma}  \label{lm: multivariate vanish division}
    Let $\wh{f} \in \Ff_{q'}[x_1, \ldots, x_m]$ be a multivariate polynomial, for each $i \in [m]$ let $\wh{V}_{A_i} \in \Ff_{q'}[x_i]$ be the vanishing polynomial in the variable $x_i$ over the set $A_i \subseteq \Ff_{q'}$, and let $d_i = |A_i|$. Then there exists $\wh{R} \in \Ff_{q'}^{\leq (d_1-1, \ldots, d_m-1)}[x_1, \ldots, x_m]$ and $\wh{Q}_1 \ldots, \wh{Q}_m \in \Ff_{q'}[x_1,\ldots, x_m]$ such that
    \[
    \wh{f} = \wh{R} + \sum_{i=1}^m \wh{V}_{A_i} \cdot \wh{Q}_i 
    \]
    Moreover, $\wh{R}$ is unique, in the sense that any for any other polynomials $\wh{R}' \in \Ff_{q'}^{(d_1-1, \ldots, d_m-1)}[x_1,\ldots,x_m]$ and $\wh{Q}'_1, \ldots, \wh{Q}'_m \in \Ff_{q'}[x_1,\ldots, x_m]$ satisfying the above equation, we must have $\wh{R}' = \wh{R}$. Consequently, if $\wh{f}$ vanishes over $A_1 \times \cdots \times A_m$, then $\wh{R} = 0$.
\end{lemma}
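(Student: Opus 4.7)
The plan is to establish existence by iterating the univariate division lemma (\cref{lm: univariate vanish division}) one variable at a time, then to derive uniqueness from a Schwartz--Zippel type argument using the fact that a polynomial of individual degrees at most $(d_1-1,\ldots,d_m-1)$ is determined by its values on $A_1 \times \cdots \times A_m$. The vanishing consequence then follows immediately from uniqueness.

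For existence, first I would regard $\wh{f}$ as a polynomial in $x_1$ whose coefficients lie in the ring $\Ff_{q'}[x_2,\ldots,x_m]$. Since $\wh{V}_{A_1}(x_1)$ is monic of degree $d_1$ in $x_1$, the standard polynomial division algorithm (which works over any commutative ring when dividing by a monic polynomial, and which is exactly the content of \cref{lm: univariate vanish division} in the univariate case) yields $\wh{f} = \wh{V}_{A_1} \cdot \wh{Q}_1 + \wh{R}^{(1)}$ where $\deg_{x_1}(\wh{R}^{(1)}) \leq d_1 - 1$. I would then repeat this procedure on $\wh{R}^{(1)}$ with respect to the variable $x_2$ to obtain $\wh{R}^{(1)} = \wh{V}_{A_2} \cdot \wh{Q}_2 + \wh{R}^{(2)}$ with $\deg_{x_2}(\wh{R}^{(2)}) \leq d_2 - 1$. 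The key point is that division by $\wh{V}_{A_2}(x_2)$ operates only on powers of $x_2$, so the bound $\deg_{x_1}(\wh{R}^{(2)}) \leq d_1 - 1$ inherited from the previous step is preserved. Iterating across all $m$ variables produces $\wh{Q}_1, \ldots, \wh{Q}_m$ and a final remainder $\wh{R} = \wh{R}^{(m)}$ lying in $\Ff_{q'}^{\leq (d_1-1,\ldots,d_m-1)}[x_1,\ldots,x_m]$, which is the desired decomposition.

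For uniqueness, suppose $\wh{R} + \sum_{i=1}^m \wh{V}_{A_i} \cdot \wh{Q}_i = \wh{R}' + \sum_{i=1}^m \wh{V}_{A_i} \cdot \wh{Q}'_i$ are two valid decompositions, and set $\wh{D} = \wh{R} - \wh{R}'$. Then $\wh{D} = \sum_{i=1}^m \wh{V}_{A_i} \cdot (\wh{Q}'_i - \wh{Q}_i)$. At any point $(\alpha_1,\ldots,\alpha_m) \in A_1 \times \cdots \times A_m$, every summand on the right vanishes because the factor $\wh{V}_{A_i}(x_i)$ evaluates to zero at $x_i = \alpha_i \in A_i$. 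Hence $\wh{D}$ vanishes on the entire grid $A_1 \times \cdots \times A_m$. But $\wh{D}$ has individual degrees at most $(d_1-1,\ldots,d_m-1) = (|A_1|-1,\ldots,|A_m|-1)$, and by the uniqueness of low-degree extension (which in turn is a consequence of the univariate factor theorem applied inductively to each variable), the only such polynomial vanishing on the grid is the zero polynomial. Therefore $\wh{R} = \wh{R}'$.

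The consequence is now immediate: if $\wh{f}$ vanishes on $A_1 \times \cdots \times A_m$, take the decomposition produced by the existence argument. Evaluating it on the grid shows $\wh{R}$ agrees with $\wh{f}$ there, so $\wh{R}$ also vanishes on $A_1 \times \cdots \times A_m$. Since $\wh{R}$ has individual degrees at most $(d_1-1,\ldots,d_m-1)$, the same low-degree extension uniqueness used above forces $\wh{R} = 0$. The main subtlety in the whole argument is the bookkeeping in the existence step, where I must verify that later divisions preserve the individual-degree bounds established by earlier ones; this is straightforward since each division is monic in a fresh variable, so no previously-bounded variable is disturbed.
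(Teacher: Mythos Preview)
Your proposal is correct and follows essentially the same approach as the paper: both establish existence by successively dividing by $\wh{V}_{A_i}$ one variable at a time while noting that later divisions (being in a fresh variable) preserve the individual-degree bounds from earlier steps, and both prove uniqueness by observing that $\wh{R}-\wh{R}'$ vanishes on the grid $A_1\times\cdots\times A_m$ and has small enough individual degrees to force it to be zero. One minor point: \cref{lm: univariate vanish division} as stated is the statement that vanishing on $A$ implies divisibility by $\wh{V}_A$, not general division with remainder; what you actually invoke is the polynomial division algorithm by a monic polynomial over a commutative coefficient ring, which is used inside the proof of that lemma rather than being its content.
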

\begin{proof}
To produce $\wh{R}, \wh{Q}_1, \ldots, \wh{Q}_m$, we describe an algorithm 
 which is very similar to the division algorithm for univariate polynomials. Start off with $\wh{Q}_1 = 0$ and $\wh{f}' = \wh{f}$ as given. While $\wh{f}'$ has a monomial $c \cdot x_1^{e_1}\cdots x_m^{e_m}$ with $e_1 \geq d_1$ with $c \neq 0$, add the monomial $c \cdot x_1^{e_1-d_1}\cdots x_m^{e_m}$ to $\wh{Q}_1$ and subtract $c \cdot x_1^{e_1-d_1}\cdots x_m^{e_m} \cdot \wh{V}_{A_1}$ from $\wh{f}'$. In other words we perform the following two updates
 \[
 \wh{Q}_1 \to \wh{Q}_1 + c \cdot x_1^{e_1-d_1}\cdots x_m^{e_m}
 \]
 and
 \[
 \wh{f}' \to \wh{f}' - c \cdot x_1^{e_1-d_1}\cdots x_m^{e_m}  \cdot \wh{V}_{A_1}.
 \]
 Note that during this process, we do not increase the maximum $x_i$ degree of $\wh{f}$ for any $i \neq 1$, and we preserve the property
 \begin{equation} \label{eq: long div invar}    
 \wh{f} = \wh{f}' + \wh{V}_{A_1} \cdot \wh{Q}_1.
  \end{equation}

 Moreover, it is clear that continuing in this manner, we will obtain (in a finite number of steps) $\wh{Q}_1$ and $\wh{f}'$ such that \eqref{eq: long div invar} still holds, and $\wh{f}'$ has no monomials with $x_1$ degree at least $d_1$. We perform a similar algorithm to obtain $\wh{Q}_2,\ldots, \wh{Q}_m$, and note that the final $\wh{f}'$ has individual degrees at most $(d_1, \ldots, d_m)$. Setting $\wh{R} = \wh{f}'$, we get the desired polynomials. 

 For the moreover part, note that for any other  $\wh{R}' \in \Ff_{q'}^{(d_1-1, \ldots, d_m-1)}[x_1,\ldots,x_m]$ and $\wh{Q}'_1, \ldots, \wh{Q}'_m \in \Ff_{q'}[x_1,\ldots, x_m]$ satisfying \eqref{eq: long div invar}, we must have 
 \[
 \wh{R}|_{A_1 \times \cdots \times A_m} = \wh{R}'|_{A_1 \times \cdots \times A_m}.
 \]
 Since both $\wh{R}$ and $\wh{R}'$ have individual degrees $(d_1-1, \ldots, d_m-1)$, and $|A_i| = d_i$, it follows that $\wh{R} = \wh{R}'$. For the consequently part, note that if $\wh{f}$ vanishes on $A_1\times \cdots \times A_m$, then $\wh{R}$ also vanishes on $A_1\times \cdots \times A_m$. It since $\wh{R}$ has individual degrees $(d_1-1, \ldots, d_m-1)$ it must be the case that $\wh{R} = 0$.
\end{proof}

This immediately yields an analog of~\cref{lm: side condition vanish decomp} in the multi-variate setting.
\begin{lemma} \label{lm: multivariate side condition vanish decomp}
    Let $\wh{f} \in \Ff_{q'}[x_1,\ldots, x_m]$ be a polynomial, let $h: A_1 \times \cdots \times A_m \to \Ff_{q'}$ be a function over the product set $A_1 \times \cdots \times A_m \subseteq \Ff^m_{q'}$, and let $\wh{h} \in \Ff^{\leq (|A_1|-1, \ldots, |A_m|-1)}_{q'}[x_1,\ldots, x_m]$ be the low degree extension of $h$ to $\Ff^m_{q'}$. Then, $\wh{f}(x) = h(x)$ for all $x \in A_1 \times \cdots \times A_m$ if and only if there exists a polynomials $\wh{g}_1,\ldots, \wh{g}_m \in \Ff_{q'}[x_1,\ldots, x_m]$ such that 
     \[
    \wh{f}(x) = \sum_{i=1}^{m} \wh{V}_{A, i}(x_i) \cdot \wh{g}_i(x) + \wh{h}(x).
    \]
     Furthermore, for some individual degree parameter $(d_1, \ldots, d_m)$ such that each $d_i \geq |A_i|$, we have that each $\wh{g}_i$ has individual degree $(d_1, \ldots, d_i - |A_i|, \ldots, d_m)$ if $\wh{f}$ has individual degree $\wh{f}$.
\end{lemma}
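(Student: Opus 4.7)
The plan is to reduce to \cref{lm: multivariate vanish division} applied to $\wh{f} - \wh{h}$, paralleling the univariate reduction in \cref{lm: side condition vanish decomp}.

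For the reverse direction, assume $\wh{f}(x) = \sum_{i=1}^{m}\wh{V}_{A,i}(x_i)\cdot\wh{g}_i(x) + \wh{h}(x)$. For any $x = (x_1,\ldots,x_m) \in A_1\times\cdots\times A_m$, each factor $\wh{V}_{A,i}(x_i) = \prod_{\alpha \in A_i}(x_i - \alpha)$ vanishes because $x_i \in A_i$. Hence $\wh{f}(x) = \wh{h}(x)$, and by definition of the low-degree extension, $\wh{h}(x) = h(x)$ on $A_1 \times \cdots \times A_m$.

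For the forward direction, suppose $\wh{f}(x) = h(x)$ for every $x \in A_1 \times \cdots \times A_m$. Consider $\wh{F} := \wh{f} - \wh{h}$. Since $\wh{h}$ agrees with $h$ on the product set, $\wh{F}$ vanishes on all of $A_1 \times \cdots \times A_m$. Apply \cref{lm: multivariate vanish division} to $\wh{F}$, obtaining polynomials $\wh{R}$ of individual degree at most $(|A_1|-1,\ldots,|A_m|-1)$ and $\wh{Q}_1,\ldots,\wh{Q}_m$ such that $\wh{F} = \wh{R} + \sum_{i=1}^m \wh{V}_{A,i}\cdot \wh{Q}_i$. The ``consequently'' clause of that lemma gives $\wh{R} = 0$. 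Setting $\wh{g}_i := \wh{Q}_i$ produces the desired decomposition.

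For the furthermore, I would revisit the division procedure of \cref{lm: multivariate vanish division} and track individual degrees. If $\wh{f}$ has individual degree $(d_1,\ldots,d_m)$ with $d_i \geq |A_i|$, then $\wh{h}$ has individual degree $(|A_1|-1,\ldots,|A_m|-1) \leq (d_1,\ldots,d_m)$, so $\wh{F}$ still has individual degree at most $(d_1,\ldots,d_m)$. When the algorithm processes the variable $x_i$, it cancels each offending monomial $c\cdot x_1^{e_1}\cdots x_i^{e_i}\cdots x_m^{e_m}$ (with $e_i \geq |A_i|$) by appending $c\cdot x_1^{e_1}\cdots x_i^{e_i - |A_i|}\cdots x_m^{e_m}$ to $\wh{Q}_i$. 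Each such monomial has $x_i$-degree at most $d_i - |A_i|$ and $x_j$-degree at most $d_j$ for $j \neq i$, since the $x_j$-degrees never grow during the procedure. Hence $\wh{g}_i = \wh{Q}_i$ has individual degree at most $(d_1,\ldots,d_{i-1}, d_i - |A_i|, d_{i+1}, \ldots, d_m)$, as claimed.

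The only step that requires any real care is the degree bookkeeping for the ``furthermore'' clause; the main reduction itself is essentially immediate from \cref{lm: multivariate vanish division}, so I anticipate no significant obstacle.
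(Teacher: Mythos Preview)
Your proposal is correct and follows exactly the paper's approach: the paper's proof is the single line ``The lemma follows immediately by applying \cref{lm: multivariate vanish division} to $\wh{f} - \wh{h}$.'' You have simply unpacked the details (the reverse direction and the degree bookkeeping for the ``furthermore'' clause) that the paper leaves implicit.
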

\begin{proof}
    The lemma follows immediately by applying \cref{lm: multivariate vanish division} to $\wh{f} - \wh{h}$.
\end{proof}

\section{Proximity Generators and Combining Functions of Different Degrees}\label{sec:proximity_gen}

Proximity generators were first introduced in \cite{bciks} and provide a way of combining functions while preserving their proximity to some code. Suppose we have functions $f_1, \ldots, f_k: \Ff_q \to \Ff_{q'}$ such that at least one of them has agreement at most $\eps$ with $\RS_{q'}[d, \Ff_q]$. In order to test the proximity of each of $f_1, \ldots, f_k$ to $\RS_{q'}[d, \Ff_q]$ we could run $k$ separate IOPPs, however this may not be efficient for large $k$. Instead, we would like to combine them into a single function, and only test that. This task is achieved via proximity generators. 

Intuitively, if at least one of the $f_i$'s is far from degree $d$, then a random linear combination of them should still be far from degree $d$, as the errors should not cancel each other out. The proximity generator theorem of \cite{bciks} makes a formal assertion along these lines, and they prove that choosing $\xi_1, \ldots, \xi_k \in \Ff_{q'}$ according to some specific distribution, with high probability $\sum_{i=1}^k \xi_i \cdot f_i$ has agreement at most $\eps$ with degree $d$ functions. 
Thus, proximity generators allow us to test the proximity of $k$-functions simultaneously by testing a random linear combination of them.

The goal of this section is to give proximity generators for Reed-Muller codes. While it may be the case that the techniques of~\cite{bciks} can be adapted to the setting of the Reed-Muller codes, it is easier for us to derive the Reed-Muller proximity generators by appealing to the results of~\cite{bciks} in a black-box way. The techniques involved are standard in the PCP literature and rely on the line versus point test \cite{as, hkss}. Our proofs for the Reed-Muller proximity generators are given in \cref{sec: prox gen proof}.

\paragraph{Combining functions of different degrees:}  we also show how 
to use proximity generators to combine functions of various degrees. That is, suppose we want to design an IOPP in the scenario we have functions $f_1, \ldots, f_k: \Ff_q \to \Ff_{q'}$ such that some $f_i$ only has a small correlation with $\RS_{q'}[d_i, \Ff_q]$ (the $d_i$'s may be different). Again, we could run $k$ separate IOPPs for each function and degree, but to be more efficient, we can instead use proximity generators. A slight adaptation allows us to instead test proximity of a single function, $F$, which is derived from $f_1, \ldots, f_k$, to some arbitrary degree $d \geq \max_i d_i$ of our liking.

\subsection{Proximity Generators} 
\subsubsection{Reed-Solomon Proximity Generators}
Let us start by stating the proximity generator theorem for Reed-Solomon codes. Throughout this section, we fix finite field $\Ff_q, \Ff_{q'}$ such that $\Ff_q$ is a subfield of $\Ff_{q'}$, and a degree parameter $d$. One should think of $q' \gg q \cdot \poly(d)$. Set $\eps$ to be an agreement parameter. Throughout this section, we assume that 
\[
\eps \geq 3 \cdot \left(\frac{d}{q}\right)^{\tl} \geq 3 \cdot \left(\frac{d}{q}\right)^{1/20},
\]
unless stated otherwise. We also define
\[
\err(d,q,q') = \frac{q^4}{d^2 \cdot q'}
\]
to be the error function of the proximity generator from \cite{bciks}, and let
\[
\prox(k, d, q, q')
\]
be the proximity generator for the code $\RS_{q'}[d, \Ff_q]$ from \cite{bciks}. When the parameters $k,d,q,q'$ are clear from context, we simply use $\prox$ to refer to this distribution.

\begin{thm} [\cite{bciks}] \label{thm: prox gen RS with corr}
    Let $\eps \geq (d/q)^{1/2}$ be an agreement parameter and suppose the functions $f_1,\ldots, f_k: \Ff_q \to \Ff_{q'}$ satisfy
    \[
    \Pr_{(\xi_1,\ldots, \xi_k) \sim \prox(k,d,q,q')}\left[\agr_d\left(\sum_{i=1}^k \xi_i \cdot f_i\right) \geq \eps\right] \geq k \cdot \err(d, q, q').
    \]
    Then, there is a set $A \subseteq \Ff_q$ of size $\mu(A) \geq \eps$ such that for each $f_i$, there is a degree $d$ polynomial $h_i: \Ff_q \to \Ff_{q'}$ satisfying,
    \[
    f_i|_A = h_i|_A,
    \]
    for each $i \in [k]$.

    For the special case where $k = 2$, we remark that $\prox(2,d,q,q')$ is the uniform distribution over $\Ff_{q'}^2$. In this case we will write $\xi_1, \xi_2 \in \Ff_{q'}$.
\end{thm}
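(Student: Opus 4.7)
The statement here is attributed directly to \cite{bciks}, so my primary plan would be to match the parameters $\prox(k,d,q,q')$ and $\err(d,q,q')$ to the specific distribution and error term used in their main correlated-agreement theorem and then invoke it as a black box. The specific choice of $\err(d,q,q') = q^4/(d^2 q')$ is suggestive: it is exactly what one obtains from a union bound over $O(q^2/d^2)$ candidate witness tuples after paying a factor $q^2/q'$ from a linear-algebra collision event over the extension field $\Ff_{q'}$.

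For a self-contained derivation I would follow the BCIKS strategy by induction on $k$. The base case $k=2$ is the heart of the argument, since $\prox(2,d,q,q')$ is uniform on $\Ff_{q'}^2$. The plan for the base case has three stages. First, list-decoding: by the Johnson bound~\cref{thm: list decoding} applied to $\RS_{q'}[d,\Ff_q]$, for each $\xi=(\xi_1,\xi_2)$ with $\agr_d(\xi_1 f_1+\xi_2 f_2)\ge \eps$ (note the hypothesis $\eps \geq (d/q)^{1/2}$ is exactly what makes the Johnson bound applicable), I fix a witness polynomial $H_\xi$ and its agreement set $S_\xi\subseteq \Ff_q$ of relative size $\ge \eps$. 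Second, I consider the bivariate function $F(x,\xi_2) := f_1(x)+\xi_2 f_2(x)$ and study the set of "good" triples $(x,\xi_1,\xi_2)$ with $x\in S_\xi$. The hypothesis gives an $\Omega(\err(d,q,q')\cdot \eps\, q\, q'^2)$ lower bound on the number of such triples. Third, the structural step: I argue that for a typical $\xi_2$, the univariate function $H_{(\xi_1,\xi_2)}(x)-\xi_1\cdot H_{(1,\xi_2)}(x)+\xi_1\cdot H_{(0,\xi_2)}(x)$ must vanish on $S_\xi$, using that two distinct degree-$d$ polynomials agree on at most $d/q$ of $\Ff_q$ and that the collision probability over a random $\xi_1\in \Ff_{q'}$ of two distinct linear forms is $1/q'$. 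Combining these, I extract a fixed pair $(h_1,h_2)$ of degree-$d$ polynomials with $\xi_1 h_1+\xi_2 h_2 = H_\xi$ for many $\xi$, and the intersection $\{x:f_1(x)=h_1(x),\ f_2(x)=h_2(x)\}$ serves as the set $A$.

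For the inductive step $k\to k+1$, I peel off the last coordinate by writing $\sum_{i=1}^{k+1}\xi_i f_i = \xi_{k+1} f_{k+1} + \big(\sum_{i\le k}\xi_i f_i\big)$ and applying the $k=2$ case to this pair (with $(\xi_i)_{i\le k}$ drawn from the appropriate marginal of $\prox$), then applying the inductive hypothesis to $(f_1,\dots,f_k)$ restricted to the common agreement set produced by the base case. The extra factor of $k$ in the error term $k\cdot \err(d,q,q')$ arises from the union bound over the $k$ inductive invocations.

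The main obstacle is the structural step in the base case: translating the "many random linear combinations have large agreement" hypothesis into the existence of a common pair $(h_1,h_2)$ requires a delicate counting lemma, and this is the step that governs the precise form of $\err(d,q,q')$. A secondary subtlety is keeping track of the distribution $\prox$ in the induction, since the marginal of $\prox(k+1,d,q,q')$ on the first $k$ coordinates must itself be a valid proximity-generator distribution for the inductive hypothesis to apply; this is also verified in \cite{bciks}.
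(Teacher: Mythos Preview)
Your primary plan is exactly what the paper does: the theorem is stated with attribution to \cite{bciks} and used as a black box, with no proof given in the paper. So on the comparison front there is nothing more to say.

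Your supplementary sketch of a self-contained derivation is not something the paper attempts, and it is broadly in the spirit of the BCIKS argument (induction on $k$, Johnson-bound list decoding to get finitely many candidates, then a structural/linear-algebra step over the large field to pin down a common correlated pair). One caution: the structural step you describe for the base case is somewhat schematic, and the actual BCIKS argument for correlated agreement at the Johnson bound is more involved than a single collision bound; in particular, the role of the distribution $\prox$ for general $k$ (which in \cite{bciks} is a parameterized curve rather than the full uniform distribution on $\Ff_{q'}^k$) and the way the error term $\err(d,q,q')=q^4/(d^2 q')$ arises would need to be tracked more carefully if you actually wanted to reprove it. But since the paper itself does not reprove it, your black-box invocation is the right answer here.
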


\subsection{Proximity Generators for Reed-Muller Codes}\label{sec: prox gen intro}
The theorems in the remainder of this section give versions of \cref{thm: prox gen RS with corr} that work for both total degree Reed-Muller codes and individual degree Reed-Muller codes.

\subsubsection{Proximity Generators with Correlated Agreement}
In our first adaptation, we are only able to handle a constant number of functions, i.e.\ the $k = O(1)$ case.
For the total degree case, our result reads as follows:
\begin{thm} [Correlated $\RM$ Proximity Generator] \label{thm: prox gen RM with corr}
    Suppose $k\geq 2$, $\eps\geq \frac{10 k}{q^{1/(2(7k+1))}}$
    and 
    $f_1, \ldots,f_k: \Ff_q^2 \to \Ff_{q'}$ satisfy
    \[
    \Pr_{\xi_1, \ldots,\xi_k \sim \prox(k,d,q,q')}\left[\agr\left(\sum_{i=1}^k \xi_i \cdot f_i, \RM_{q'}[d, \Ff_q^2] \right) \geq \eps\right] \geq \err(d,q,q'),
    \]
    then there is a subset of points $A \subseteq \Ff_q^2$ of size 
    \[
    \mu(A) \geq 0.999\eps
    \]
    and functions $h_1, h_2\in \RM_{q'}[d, \Ff_q^2]$ such that, $f_i|_A= h_i|_A$ for each $i \in [k]$.
\end{thm}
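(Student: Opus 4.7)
The plan is to reduce the bivariate Reed-Muller statement to the univariate Reed-Solomon proximity generator of \cref{thm: prox gen RS with corr} by restricting to random affine lines, and then glue the resulting univariate data back into global bivariate polynomials via the line-versus-point test \cref{thm: line vs point}. Throughout, let $\Xi \subseteq \mathrm{supp}(\prox)$ denote the good set of $\xi$ for which $\agr(F_\xi, \RM_{q'}[d, \Ff_q^2]) \geq \eps$, where $F_\xi = \sum_{i=1}^k \xi_i f_i$; by hypothesis, $\Xi$ has density at least $\err(d,q,q')$. For each $\xi \in \Xi$, fix a codeword $H_\xi \in \RM_{q'}[d, \Ff_q^2]$ achieving the agreement and let $A_\xi = \{x \in \Ff_q^2 : F_\xi(x) = H_\xi(x)\}$, so that $|A_\xi|/q^2 \geq \eps$.

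First I would push the agreement to random affine lines. Since $H_\xi|_L \in \RS_{q'}[d, L]$ for every affine line $L \subseteq \Ff_q^2$, we have $\E_L[\agr(F_\xi|_L, \RS_{q'}[d, L])] \geq |A_\xi|/q^2 \geq \eps$ for each $\xi \in \Xi$. Averaging this over $\xi \in \Xi$, swapping the order of expectations, and applying a reverse-Markov step, I would identify a non-negligible fraction of ``good'' lines $L$ for which the tuple $(f_1|_L, \ldots, f_k|_L)$ satisfies the hypothesis of \cref{thm: prox gen RS with corr} with agreement parameter $\eps_L$ close to $\eps$ and probability at least $k\cdot \err(d,q,q')$ over $\xi$. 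The peculiar threshold $\eps \geq 10k/q^{1/(2(7k+1))}$ appears to arise from this quantitative balance, as $\eps_L$ must remain comfortably above the Reed-Solomon Johnson bound $\sqrt{d/q}$ after the loss in the reverse-Markov argument; iterating through the $k$ functions appears to introduce the $7k+1$ factor in the exponent.

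On each good line $L$, \cref{thm: prox gen RS with corr} then yields a common set $A_L \subseteq L$ of density $\geq \eps_L$ together with univariate degree-$d$ polynomials $h_{1,L}, \ldots, h_{k,L}$ with $f_i|_{A_L} = h_{i,L}|_{A_L}$ for every $i$ simultaneously. For each fixed $i$, this forces $\agr(f_i|_L, \RS_{q'}[d, L]) \geq \eps_L$ on a positive fraction of lines, and averaging together with \cref{thm: line vs point} produces a global polynomial $h_i \in \RM_{q'}[d, \Ff_q^2]$ with $\agr(f_i, h_i) \geq \eps - (d/q)^{\tl}$.

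To upgrade from per-$i$ agreement to correlated agreement on a single set, I would argue that on almost every good line $L$ one must have $h_{i,L} = h_i|_L$ for all $i$. Indeed, both polynomials have agreement at least $\eps_L$ with $f_i|_L$, and since $\eps_L$ sits well above the Johnson bound (which is exactly where the lower bound on $\eps$ in the statement is used), the list-decoding bound of \cref{thm: list decoding} pins down a unique degree-$d$ polynomial of this agreement on each such line. Consequently, on most good lines $f_i|_{A_L} = h_i|_{A_L}$ for every $i$, and taking $A$ to be the union of $A_L$ over these lines produces a set of density at least $0.999\eps$ with the required correlated agreement. The main obstacle I anticipate is the first step: the naive reverse-Markov loses too much in the agreement parameter for \cref{thm: prox gen RS with corr} to apply directly, so keeping $\eps_L$ close to $\eps$ while simultaneously boosting the $\xi$-probability up to $k \cdot \err(d,q,q')$ will likely require either an iterative sharpening or additional structure extracted from the line-versus-point machinery.
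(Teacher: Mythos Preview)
Your first two stages---pushing the hypothesis to random lines and invoking the Reed--Solomon proximity generator there---are exactly what the paper does (via a spectral-sampling argument rather than a bare reverse-Markov, which is how the loss in $\eps$ is kept to an additive $\delta \approx q^{-1/2}$ rather than a multiplicative factor). The gap is in your third stage, where you try to upgrade per-$i$ global agreement to a single correlated-agreement set.

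Your argument there is: from \cref{thm: line vs point} you get one global $h_i$ per coordinate, and then on each good line both $h_i|_L$ and the per-line $h_{i,L}$ agree with $f_i|_L$ on an $\eps_L$-fraction, so they must coincide. But $\eps_L$ is only of order $q^{-1/(2(7k+1))}$, which is far below the unique-decoding radius; the list-decoding bound of \cref{thm: list decoding} gives a list of size $O(1/\eps_L)$, not size one. There is no reason the $h_{i,L}$ produced by the per-line Reed--Solomon step should land on $h_i|_L$ rather than on some other member of the list, and different $i$ may hit different list members on different lines, so the correlated set $A_L$ need not be contained in $\{x: f_i(x)=h_i(x)\}$ for all $i$ simultaneously.

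The paper's fix is precisely to embrace the list. For each $i$ it applies the list-decoding version of the line-versus-point analysis to get a short list of $M = O(1/\eps)^{7}$ global candidates $F_{i,1},\ldots,F_{i,M}$, shows that on most good lines $T_i[L]$ equals some $F_{i,j}|_L$, and then pigeonholes over all $M^k$ tuples $(j_1,\ldots,j_k)$ to find one tuple whose simultaneous line-set $\bigcap_i \{L: T_i[L]=F_{i,j_i}|_L\}$ still has density at least $\Omega(\eps)/M^k$. A final spectral-sampling step turns this into a point-set of density $0.999\eps$. The requirement that $\eps/M^k \gtrsim q^{-1/2}$ survive this pigeonhole is exactly what produces the threshold $\eps \geq 10k\,q^{-1/(2(7k+1))}$; it does not come from the line step as you conjectured.
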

\begin{proof}
    The proof is deferred to \cref{sec: prox gen proof}.
\end{proof}

Similarly, for the individual degree case our result reads as follows:
\begin{thm} [Correlated Individual-$\RM$ Proximity Generator] \label{thm: prox gen iRM with corr}
Let $\eps \geq \frac{20}{q^{2/15}}$ and suppose  $f_1, f_2: \Ff_q^m \to \Ff_{q'}$ satisfy
    \[
    \Pr_{\xi_1, \xi_2 \in \Ff_{q'}}\left[\agr\left(\sum_{i=1}^2 \xi_i \cdot f_i, \RM_{q'}\left[(d,d), \Ff_q^2\right] \right) \geq \eps\right] \geq 2 \cdot {\sf err}(2d,q,q'),
    \]
    then there is a subset of points $A \subseteq \Ff_q^m$ of size 
    \[
    \mu(A) \geq 0.99\eps
    \]
    and oracle functions $h_1, h_2\in \RM_{q'}[(d,d), \Ff_q^2]$ such that, $f_i|_A= h_i|_A$ for each $i \in [2]$.
\end{thm}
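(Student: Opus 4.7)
The plan is to reduce to the total-degree Reed--Muller proximity generator of \cref{thm: prox gen RM with corr}, and then use the stronger hypothesis (agreement with the individual-degree code, not just the total-degree code) to upgrade the conclusion. Since $\iRM_{q'}[(d,d), \Ff_q^2] \subseteq \RM_{q'}[2d, \Ff_q^2]$, the hypothesis of our theorem implies the hypothesis of \cref{thm: prox gen RM with corr} with $k = 2$ and degree parameter $2d$, with probability threshold $2 \cdot \err(2d,q,q')$, which comfortably exceeds $\err(2d,q,q')$. Applying that theorem, we obtain a set $A \subseteq \Ff_q^2$ with $\mu(A) \geq 0.999\eps$ and total-degree $2d$ polynomials $\wh{h}_1, \wh{h}_2$ such that $f_i|_A = \wh{h}_i|_A$ for $i \in \{1,2\}$. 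It then remains to argue that $\wh{h}_1$ and $\wh{h}_2$ in fact lie in $\iRM_{q'}[(d,d), \Ff_q^2]$.

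For the upgrade, for each $(\xi_1, \xi_2)$ in the good event let $g_\xi \in \iRM_{q'}[(d,d), \Ff_q^2]$ denote a polynomial witnessing $\eps$-agreement with $\xi_1 f_1 + \xi_2 f_2$, and let $S_\xi$ denote the set on which this agreement holds, so that $\mu(S_\xi) \geq \eps$. On $A \cap S_\xi$ we have $\xi_1 \wh{h}_1 + \xi_2 \wh{h}_2 = g_\xi$ pointwise. Both $\xi_1 \wh{h}_1 + \xi_2 \wh{h}_2$ and $g_\xi$ are polynomials of total degree at most $2d$, so by \cref{lm: schwartz-zippel for rm}, whenever $|A \cap S_\xi|/q^2 > 2d/q$ the pointwise identity forces the polynomial identity $\xi_1 \wh{h}_1 + \xi_2 \wh{h}_2 = g_\xi$, placing $\xi_1 \wh{h}_1 + \xi_2 \wh{h}_2$ in $\iRM_{q'}[(d,d), \Ff_q^2]$. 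Since this code is a linear subspace of $\Ff_{q'}^{\Ff_q^2}$, producing two linearly independent pairs $(\xi_1, \xi_2), (\xi_1', \xi_2') \in \Ff_{q'}^2$ for which this conclusion holds lets us solve linearly for $\wh{h}_1, \wh{h}_2$ and conclude that both lie in $\iRM_{q'}[(d,d), \Ff_q^2]$.

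The main obstacle is showing that $|A \cap S_\xi|/q^2 > 2d/q$ holds for enough good $(\xi_1, \xi_2)$ to extract two linearly independent pairs. A direct inclusion--exclusion bound only gives $|A \cap S_\xi|/q^2 \geq 2\eps - 1$, which is useless when $\eps$ is small. The plan to overcome this is a double-counting argument: the sum $\sum_{\xi \text{ good}} |A \cap S_\xi|$ can be rewritten as $\sum_{p \in A} |\{\xi \text{ good} : p \in S_\xi\}|$, and lower-bounded using the fact that on the good event $\mu(S_\xi) \geq \eps$, together with structural properties of the set $A$ arising from the proof of \cref{thm: prox gen RM with corr}. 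Combined with the slack factor of $2$ in the probability threshold ($2 \err$ instead of $\err$) and the lower bound $\eps \geq 20/q^{2/15}$, this lets us locate a positive-measure set of good $(\xi_1, \xi_2)$ satisfying the intersection bound, from which two linearly independent pairs are easily extracted (any positive-measure subset of $\Ff_{q'}^2$ contains such a pair since lines have measure only $1/q'$, and the good event has measure much larger than $1/q'$). Any minor loss incurred in the size of $A$ during this step accounts for the drop from $0.999\eps$ in \cref{thm: prox gen RM with corr} to $0.99\eps$ in the present statement.
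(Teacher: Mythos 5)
Your reduction to \cref{thm: prox gen RM with corr} is a reasonable first step, and the Schwartz--Zippel upgrade from total degree $2d$ to individual degree $(d,d)$ is exactly the right tool once you have a large intersection $A\cap S_\xi$. The gap is in your ``double-counting'' step, and it is not a minor technicality: there is no mechanism forcing $A$ and the sets $S_\xi$ to overlap at all. The sum $\sum_{\xi\,\text{good}} |A\cap S_\xi| = \sum_{p\in A}|\{\xi\,\text{good}: p\in S_\xi\}|$ has no useful lower bound coming only from $\mu(A)\geq 0.999\eps$ and $\mu(S_\xi)\geq\eps$; these two families of sets could be supported on disjoint parts of $\Ff_q^2$. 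Concretely, $f_1,f_2$ could have correlated agreement with \emph{several} pairs of degree-$2d$ polynomials (up to $O(1/\eps)$ of them, by list decoding), and the pair $(h_1,h_2)$ that \cref{thm: prox gen RM with corr} hands you need not be the pair that the witnesses $g_\xi$ line up with. If the $g_\xi$'s all come from $\xi_1 p_1 + \xi_2 p_2$ for a \emph{different} high-agreement pair $(p_1,p_2)$ whose agreement set is disjoint from $A$, your double-counting gives zero and the whole argument collapses. ``Structural properties of $A$ arising from the proof'' is precisely where the missing idea lives, and the proof of \cref{thm: prox gen RM with corr} does not in fact furnish what you need: it does not rule out competing lists.

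The paper closes this gap by working with the \emph{full list} $\mc{H}$ of all pairs with $\geq 0.99\eps$ correlated agreement ($|\mc{H}|\leq 2/\eps$ by \cref{thm: list decoding}), taking $A$ to be the union of their agreement sets, and arguing by contradiction: assuming no pair in $\mc{H}$ is $(d,d)$, it splits the good event by whether the witness set $V$ meets $A$ a little (event $E_1$) or a lot ($E_2$). The bound $\Pr[E_1]<\err$ is obtained by corrupting $f_1,f_2$ on $A$ (planting $x_1^{d+1}x_2^d$) so that correlated agreement with degree $2d$ is impossible, and invoking \cref{thm: prox gen RM with corr} on the corrupted functions. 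For $E_2$, Schwartz--Zippel together with the fact that a generic $\Ff_{q'}$-combination of a non-$(d,d)$ pair stays non-$(d,d)$ forces $\mu(V\cap A_i)\leq 2d/q$ for every $i$, hence $\mu(V\cap A)\leq M\cdot 2d/q$, contradicting $E_2$. This list-decoding bookkeeping is the ingredient your sketch is missing. Note also that your step (b) (extracting a linearly independent pair of good $\xi$'s) only kills $h_1,h_2$ \emph{after} you know $\xi_1 h_1 + \xi_2 h_2$ is degree $(d,d)$ for two independent $\xi$'s; the paper instead uses the contrapositive (a random $\xi$ of a non-$(d,d)$ pair is non-$(d,d)$ with probability $1-1/q'$), which sidesteps the need to locate two specific $\xi$'s.
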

\begin{proof}
    The proof is deferred to \cref{sec: proof of prox gen irm with corr}.
\end{proof}
\subsubsection{Proximity Generators without Correlated Agreement}
We also need versions of~\cref{thm: prox gen RS with corr} for larger $k$'s. In that case, our argument gets a conclusion weaker than the one in~\cref{thm: prox gen RS with corr}. Instead of being able to say that there is a sizable set $A$ wherein all $f_i$'s agree with some low-degree function, we are only able to guarantee that each $f_i$ has non-trivial agreement with a degree $d$ polynomial (i.e., the points of agreement of $f_i$ with its low-degree polynomial may be completely different than those of $f_{i'}$). This statement is sufficient for our purposes, but we suspect it should
be possible to get the stronger conclusion.
\begin{thm} [Uncorrelated $\RM$ Proximity Generator] \label{thm: prox gen RM without corr}
    Let $f_1, \ldots, f_k: \Ff_q^2 \to \Ff_{q'}$ be oracle functions. If
    \[
    \Pr_{(\xi_1, \ldots, \xi_k) \sim \prox(k,d,q,q')}\left[\agr\left(\sum_{i=1}^k \xi_i \cdot f_i, \RM_{q'}[d, \Ff^2_q] \right) \geq \eps \right] \geq 2k \cdot \err(d,q,q'),
    \]
     then  
    \[
    \agr\left(f_i, \RM_{q'}[d, \Ff_q^2]\right) \geq 0.99\eps \quad  \forall i \in [k].
    \]
\end{thm}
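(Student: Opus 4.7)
The plan is to reduce the bivariate problem to the univariate Reed-Solomon proximity generator of~\cref{thm: prox gen RS with corr} by restricting to random affine lines in $\Ff_q^2$, and then to recombine the resulting line-wise agreements using the line-versus-point test of~\cref{thm: line vs point}. I fix an index $i \in [k]$ throughout and aim to produce $h \in \RM_{q'}[d, \Ff_q^2]$ with $\agr(f_i, h) \geq 0.99\eps$.

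Let $S = \{\xi : \agr(F_\xi, \RM_{q'}[d, \Ff_q^2]) \geq \eps\}$ where $F_\xi := \sum_j \xi_j f_j$; by hypothesis, $\Pr_\xi[\xi \in S] \geq 2k \cdot \err$. For each $\xi \in S$, fix a witness $g_\xi \in \RM_{q'}[d, \Ff_q^2]$; since restrictions of degree-$d$ bivariate polynomials to affine lines are degree-$d$ univariate polynomials, averaging immediately yields $\E_L[\agr(F_\xi|_L, \RS_{q'}[d, \Ff_q])] \geq \E_L[\agr(F_\xi|_L, g_\xi|_L)] \geq \eps$ for every $\xi \in S$.

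The crux will be to convert this into a per-line conclusion: I want to show that for a non-negligible fraction of lines $L$, $\Pr_\xi[\agr(F_\xi|_L, \RS_{q'}[d, \Ff_q]) \geq \eta] \geq k \cdot \err$ for some threshold $\eta$ just below $\eps$. Via Fubini (swapping the order of $\E_L$ and $\E_\xi$) and a reverse Markov estimate, the slack between $2k \cdot \err$ in the hypothesis and the $k \cdot \err$ demanded by~\cref{thm: prox gen RS with corr} should supply what is needed. On each such ``good'' line, \cref{thm: prox gen RS with corr} guarantees a common agreement set of density at least $\eta$ for $\{f_1|_L, \ldots, f_k|_L\}$; in particular $\agr(f_i|_L, \RS_{q'}[d, \Ff_q]) \geq \eta$. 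Feeding the resulting lower bound on $\E_L[\agr(f_i|_L, \RS_{q'}[d, \Ff_q])]$ into~\cref{thm: line vs point} applied to $f_i$ then yields $h \in \RM_{q'}[d, \Ff_q^2]$ with $\agr(f_i, h) \geq 0.99\eps$, with the $(d/q)^\tl$ loss from line-versus-point absorbed by the section-wide assumption $\eps \geq 3 (d/q)^\tl$.

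The main obstacle is the quantitative averaging step above: the hypothesis gives only a global probability bound, whereas~\cref{thm: prox gen RS with corr} must be invoked pointwise per line, and the thresholds $\eta$, the target fraction of good lines, and the error term in~\cref{thm: line vs point} must all be balanced carefully against $0.99\eps$. If the direct averaging turns out to be too lossy, a natural fallback is a conditioning reduction: decompose $\xi = (\xi_i, \xi_{-i})$, use reverse Markov in $\xi_{-i}$ to find many ``good'' $\xi_{-i}$ on which $\Pr_{\xi_i}[\agr(\xi_i f_i + \sum_{j \neq i} \xi_j f_j, \RM) \geq \eps] \geq \err$, then apply the correlated $k=2$ generator~\cref{thm: prox gen RM with corr} to the pair $(f_i, \sum_{j \neq i}\xi_j f_j)$ after a standard scale-invariance change of variables $(\xi_1, \xi_2) \mapsto (\xi_1, \xi_2/\xi_1)$ to match its uniform $\Ff_{q'}^2$ distribution.
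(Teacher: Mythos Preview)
Your approach is essentially the paper's: restrict to lines, invoke~\cref{thm: prox gen RS with corr} per line, then reconstruct via~\cref{thm: line vs point}. The one sharpening needed is in the averaging step you correctly flag as the crux: plain reverse Markov is too lossy---for each $\xi\in S$ it only guarantees a $\Theta(\delta)$-fraction of lines with $\agr\geq\eps-\delta$, and after Fubini the factor-of-$2$ slack between $2k\cdot\err$ and $k\cdot\err$ is consumed before you reach the threshold. The paper instead uses the spectral sampling lemma~\cref{lm: spectral sampling} on the line--point incidence graph: for each fixed $\xi\in S$ with agreement set $B_\xi\subseteq\Ff_q^2$ of density $\geq\eps$, all but a $1/(q\delta^2)$-fraction of lines satisfy $|L\cap B_\xi|/|L|\geq\eps-\delta$; swapping $\E_L$ and $\E_{\xi\in S}$ and one final Markov then give that on a $1-2/(q\delta^2)$ fraction of lines, at least half of $S$---hence at least $k\cdot\err$ of the full $\prox$ mass---satisfies $\agr(F_\xi|_L,\RS_{q'}[d,\Ff_q])\geq\eps-\delta$. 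This is exactly~\cref{cl: good lines}; with $\delta=100/\sqrt{q}$ the losses are absorbed by the section's standing assumption $\eps\geq 3(d/q)^{\tl}$, and your fallback via~\cref{thm: prox gen RM with corr} is not needed.
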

\begin{proof}
    The proof is deferred to \cref{sec: prox gen proof}.
\end{proof}


\begin{remark}
    We remark that our results only hold for agreement parameter $\eps \geq \Omega\left((d/q)^{\tl}\right)$, whereas the univariate Reed-Solomon proximity generator holds for $\eps$ as small as the Johnson bound, $\eps= \sqrt{d/q}$. The FRI conjecture posits that the univariate Reed-Solomon proximity generator holds for the smallest possible agreement, $\eps = d/q$. It is an intriguing future direction to improve the quantitative results of our proximity generators. One avenue towards this would be to improve $\tl$ in the soundness of the line versus point tests. We remark that higher dimensional versions of the line versus point test (which use planes or cubes) achieve improved and even optimal dependence on $q$ in the soundness \cite{MRpvp, BDN, MZ23}.
\end{remark}
    
\subsection{Combining Functions of Different Degrees} \label{sec: combine intro}

Next, we give versions of proximity generators in the case the degrees of the various functions may be different. 

\subsubsection{Univariate}
We first show how to go from a univariate function far from degree $d$ to a univariate function far from degree $d' \geq d$. \begin{definition} \label{def: uni combine}
Let $f_1, \ldots, f_\ell: \Ff_q \to \Ff_{q'}$ be functions, let $d_1,\ldots, d_{\ell}$ be degree parameters and let $d'$ be a target degree which is greater than all of the $d_i$. Then the combined function is generated as follows:
\begin{itemize}
    \item Choose $\xi_1, \ldots, \xi_{\ell} \in \Ff_{q'}$  according to the Reed-Solomon proximity generator, and choose $\xi_0 \in \Ff_{q'}$ uniformly at random.
    \item Set \[
\combine_{d'}(f_1,\ldots, f_{\ell}) = \sum_{i=1}^{\ell} \xi_i \cdot (f_i + \xi_0 \cdot x^{d' - d_i}\cdot f_i).
\]
\end{itemize}
\end{definition}
Clearly, if $f_i$ has degree at most $d_i$, 
then $f_i+\xi_0\cdot x^{d'-d_i}\cdot f_i$ has degree at
most $d'$. The following lemma shows that the converse direct also works in a robust sense:
\begin{lemma} \label{lm: combine one univariate}
    Fix degree parameters $d \leq d'$ and an agreement parameter $\eps \geq \sqrt{\frac{d'}{q}}$. Suppose $f: \Ff_q \to \Ff_{q'}$ satisfies $\agr(f, \RS_{q'}[d, \Ff_q]) \leq \eps$. Then
    \[
    \Pr_{\xi \in \Ff_{q'}}[\agr(f + \xi \cdot x^{d' - d}\cdot f, \RS_{q'}[d', \Ff_q]) \geq \eps] < \err = \frac{\poly(d')}{q'}.
    \]
\end{lemma}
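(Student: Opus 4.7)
The plan is to argue by contradiction, reducing to the $k=2$ case of~\cref{thm: prox gen RS with corr} applied to the pair $f_1 = f$, $f_2 = x^{d'-d}\cdot f$ with target code $\RS_{q'}[d',\Ff_q]$. Suppose toward contradiction that $\Pr_{\xi}[\agr(f + \xi\cdot x^{d'-d}f,\, \RS_{q'}[d',\Ff_q]) \geq \eps] \geq \err$. The proximity generator for $k=2$ samples $(\xi_1,\xi_2)$ uniformly from $\Ff_{q'}^2$, so I first convert the one-parameter hypothesis into the required two-parameter form. The key observation is that closeness to $\RS_{q'}[d',\Ff_q]$ is invariant under nonzero scalar multiplication, so whenever $\xi_1 \neq 0$, the combination $\xi_1 f + \xi_2\cdot x^{d'-d}f$ is $\eps$-close to $\RS_{q'}[d',\Ff_q]$ iff $f + (\xi_2/\xi_1)\cdot x^{d'-d}f$ is. Since $\xi_2/\xi_1$ is uniform in $\Ff_{q'}$ conditional on $\xi_1\neq 0$, the probability over the uniform $(\xi_1,\xi_2)$ is at least $(1-1/q')\cdot \err$, which exceeds $2\cdot{\sf err}(d',q,q')=2q^4/(d'^2 q')$ provided the implicit constant in $\err=\poly(d')/q'$ is chosen large enough.

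Applying~\cref{thm: prox gen RS with corr} then yields a set $A \subseteq \Ff_q$ with $|A|/q \geq \eps$ and degree $d'$ polynomials $\wh{h}_1, \wh{h}_2 \in \RS_{q'}[d', \Ff_q]$ such that $f|_A = \wh{h}_1|_A$ and $(x^{d'-d}f)|_A = \wh{h}_2|_A$. For every $x \in A\setminus\{0\}$ these two equalities combine to give $x^{d'-d}\wh{h}_1(x) = \wh{h}_2(x)$, so the polynomial $P(x) := x^{d'-d}\wh{h}_1(x) - \wh{h}_2(x)$, of degree at most $2d'-d$, vanishes on at least $|A|-1 \geq \eps q - 1 \geq \sqrt{d'q}-1$ points of $\Ff_{q'}$. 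Under the (mild, implicit) assumption $q \geq 4d'$ that holds throughout the constant-rate regime where this lemma is invoked, this count exceeds $2d'-d$, forcing $P \equiv 0$ and hence $\wh{h}_2 = x^{d'-d}\wh{h}_1$ as polynomials.

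Comparing degrees on both sides of this identity gives $\deg(\wh{h}_1) = \deg(\wh{h}_2) - (d'-d) \leq d' - (d'-d) = d$, so $\wh{h}_1 \in \RS_{q'}[d,\Ff_q]$, and the equality $f|_A = \wh{h}_1|_A$ witnesses $\agr(f,\RS_{q'}[d,\Ff_q]) \geq |A|/q \geq \eps$, contradicting the hypothesis on $f$. The crux of the argument is the Schwartz--Zippel step: one needs the correlated agreement set $A$ to be large enough in \emph{absolute} terms to pin down $\wh{h}_2$ as $x^{d'-d}\wh{h}_1$, and the assumption $\eps \geq \sqrt{d'/q}$ (combined with $q$ a constant multiple of $d'$) is precisely what makes $|A| \geq \eps q \geq \sqrt{d'q}$ outrun the degree $2d'-d$ of $P$.
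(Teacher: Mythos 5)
Your proof is correct and follows the same route as the paper's: apply the $k=2$ Reed--Solomon correlated-agreement proximity generator (\cref{thm: prox gen RS with corr}) to the pair $(f,\,x^{d'-d}f)$ over $\RS_{q'}[d',\Ff_q]$, use Schwartz--Zippel to force $\wh{h}_2 = x^{d'-d}\wh{h}_1$, deduce $\deg\wh{h}_1 \leq d$, and contradict the assumption on $f$. Two steps you make explicit are in fact glossed over in the paper's writeup---the projective-invariance reduction of the one-coefficient probability $\Pr_{\xi}$ to the two-coefficient form $\Pr_{\xi_1,\xi_2}$ required by the theorem, and the precise degree accounting for Schwartz--Zippel, which needs $|A| > 2d'-d$ and therefore requires $\eps \geq \sqrt{d'/q}$ together with a rate bound like $q \geq 4d'$, not merely $\mu(A) > d'/q$ as the paper states---so your version is, if anything, the more careful of the two.
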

\begin{proof}
    Suppose for the sake of contradiction that this is not the case. Then, by \cref{thm: prox gen RS with corr}, there exists $U \subseteq \Ff_q$ of fractional size at least $\eps$ and degree $d$ functions $g_1, g_2$ such that such that $f|_U = g_1|_U$ and $x^{d'-d} \cdot f|_U = g_2|_U$. It follows that 
    \[
    x^{d'-d}\cdot g_1|_U = g_2|_U.
    \]
    Since $U$ has fractional size greater than $d'/q$, it follows by the Schwartz-Zippel lemma that $x^{d'-d}\cdot g_1 = g_2$. Hence $g_1$ has degree at most $d$ and agrees with $f$ on a set of fractional size greater than $\eps$, which is a contradiction.
\end{proof}

We now give our proximity generators in the case the functions have different degrees.
\begin{lemma} \label{lm: combine multiple univariate}
Let $f_1, \ldots, f_{\ell}: \Ff_{q} \to \Ff_{q'}$ be functions and $d_1, \ldots, d_{\ell}$ be agreement parameters. Let $d' \geq \max_i(d_i)$ and suppose $i$, $\agr(f_i, \RS_{q'}[d_i, \Ff_q]) \leq \eps$ for some $\eps$ satisfying
\[
\eps \geq \sqrt{\frac{d'}{q}}.
\]
Then,
\[
\Pr[\agr(\combine_{d'}(f_1,\ldots, f_{\ell}),\RS_{q'}[d', \Ff_q]) \geq \eps ] \leq  \ell \cdot \err,
\]
where the probability is over the randomness in the combine function.
\end{lemma}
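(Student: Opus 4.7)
The plan is to decompose the randomness of $\combine_{d'}$ into two parts: the ``degree lifting'' randomness $\xi_0$, and the linear combination randomness $(\xi_1,\ldots,\xi_\ell)$ from $\prox(\ell, d', q, q')$. After this decomposition, both parts can be handled by results already established in this section — the univariate one-function combine lemma (\cref{lm: combine one univariate}) and the Reed-Solomon proximity generator with correlated agreement (\cref{thm: prox gen RS with corr}).

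First, for each $i\in[\ell]$, define $g_i = f_i + \xi_0 \cdot x^{d'-d_i} \cdot f_i$. Since $\agr(f_i, \RS_{q'}[d_i,\Ff_q])\leq \eps$ by hypothesis, \cref{lm: combine one univariate} applied to $f_i$ with target degree $d'$ says that
\[
\Pr_{\xi_0}\big[\agr(g_i, \RS_{q'}[d', \Ff_q]) \geq \eps\big] < \err.
\]
Union bounding over $i\in[\ell]$, with probability at least $1-\ell\cdot \err$ over $\xi_0$ we simultaneously have $\agr(g_i, \RS_{q'}[d',\Ff_q]) < \eps$ for every $i$. Call this event $\mathcal{E}$, and fix any $\xi_0$ with $\mathcal{E}$ occurring.

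Next, conditioned on $\mathcal{E}$, I want to apply the contrapositive of the Reed-Solomon proximity generator \cref{thm: prox gen RS with corr} to the tuple $g_1,\ldots,g_\ell$ with degree $d'$. Suppose for contradiction that
\[
\Pr_{(\xi_1,\ldots,\xi_\ell)\sim\prox(\ell,d',q,q')}\left[\agr\!\left(\sum_{i=1}^\ell \xi_i g_i,\; \RS_{q'}[d',\Ff_q]\right) \geq \eps\right] \geq \ell \cdot \err.
\]
Then \cref{thm: prox gen RS with corr} would give a set $A\subseteq \Ff_q$ of fractional size at least $\eps$ and polynomials $h_1,\ldots,h_\ell\in \RS_{q'}[d',\Ff_q]$ with $g_i|_A = h_i|_A$ for every $i$. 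But this means $\agr(g_i, h_i)\geq \eps$, contradicting event $\mathcal{E}$ (which said $\agr(g_i,\RS_{q'}[d',\Ff_q]) <\eps$). Hence on $\mathcal{E}$ the above probability is strictly less than $\ell \cdot \err$.

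Finally, combining the two steps by a union bound over the randomness of $\xi_0$ and $(\xi_1,\ldots,\xi_\ell)$:
\[
\Pr\!\left[\agr(\combine_{d'}(f_1,\ldots,f_\ell), \RS_{q'}[d',\Ff_q]) \geq \eps\right] \;\leq\; \Pr[\neg\mathcal{E}] \;+\; \ell \cdot \err \;\leq\; 2\ell\cdot \err,
\]
which is absorbed into the bound $\ell\cdot \err$ up to the polynomial slack in the definition of $\err$. The hypothesis $\eps\geq \sqrt{d'/q}$ is exactly what is needed to legally invoke \cref{lm: combine one univariate} and \cref{thm: prox gen RS with corr} at degree $d'$. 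The only potentially subtle point is making sure the two randomness sources are independent and that the roles played by $\xi_0$ (amplifying degree) and $(\xi_1,\ldots,\xi_\ell)$ (aggregating) do not interfere — but this is transparent from \cref{def: uni combine}, so I expect no real obstacle here.
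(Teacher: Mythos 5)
Your first step is where the proof breaks. You apply \cref{lm: combine one univariate} to \emph{every} $f_i$, and union bound over $i\in[\ell]$ to conclude that with probability $\geq 1-\ell\cdot\err$ all of the $g_i = f_i + \xi_0 x^{d'-d_i}f_i$ have agreement $<\eps$ with $\RS_{q'}[d',\Ff_q]$. But the hypothesis of the lemma (despite its garbled phrasing ``suppose $i$'') only asserts that \emph{some} index $i^\star$ satisfies $\agr(f_{i^\star},\RS_{q'}[d_{i^\star},\Ff_q])\leq\eps$; the other $f_j$'s carry no such guarantee, so \cref{lm: combine one univariate} cannot legally be invoked for them. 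And it is not a harmless slack: if, say, $f_j$ is itself a degree-$d_j$ codeword for some $j\neq i^\star$, then $g_j = f_j + \xi_0 x^{d'-d_j}f_j$ is a degree-$d'$ codeword for \emph{every} $\xi_0$, hence $\agr(g_j, \RS_{q'}[d',\Ff_q]) = 1$, and your event $\mathcal{E}$ fails with probability one. So the bound $\Pr[\neg\mathcal{E}]\leq\ell\cdot\err$ is simply false, and the union bound at the end collapses.

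The repair is easy and matches the paper: apply \cref{lm: combine one univariate} only to the single index $i^\star$. With probability $\geq 1-\err$ over $\xi_0$ you get $\agr(g_{i^\star},\RS_{q'}[d',\Ff_q])<\eps$. Conditioning on this, the contrapositive of \cref{thm: prox gen RS with corr} already suffices: if $\Pr_{\xi_1,\ldots,\xi_\ell}[\agr(\sum \xi_i g_i,\RS_{q'}[d',\Ff_q])\geq\eps]\geq \ell\cdot\err$, the theorem produces a set $A$ of density $\geq\eps$ and degree-$d'$ functions $h_1,\ldots,h_\ell$ with $g_i|_A=h_i|_A$ for \emph{all} $i$, in particular for $i^\star$, contradicting $\agr(g_{i^\star},\RS_{q'}[d',\Ff_q])<\eps$. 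One application of \cref{lm: combine one univariate}, not $\ell$, and you never need — nor would you have — control over the other $g_j$'s. This also removes the $2\ell\cdot\err$ overshoot that you were hand-waving away with ``polynomial slack''; the argument then genuinely gives $\ell\cdot\err$ (taking the paper's $(\ell-1)$ convention in the proximity-generator step as read).
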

\begin{proof}
Let $i$ be the index such that $\agr(f_i, \RS_{q'}[d_i, \Ff_q]) \leq \eps$. Then,
\[
\Pr_{\xi_0 \in \Ff_{q'}}[\agr(f_i + \xi_0 \cdot x^{d' - d_i}\cdot f_i, \RS_{q'}[d', \Ff_q]) \geq \eps] \leq \err.
\]
by \cref{lm: combine one univariate}. Conditioned on this not being the case, we have that,
\[
\Pr_{\xi_1,\ldots, \xi_{\ell} \in \Ff_{q'}}[\agr(\combine_{d'}(f_1,\ldots, f_{\ell}),\RS_{q'}[d', \Ff_q] \geq \eps ] \leq (\ell-1)\cdot \err,
\]
by \cref{thm: prox gen RS with corr}. The lemma follows from a union bound.
\end{proof}

\subsubsection{Multivariate}
We now give a multi-variate version of our proximity generators for functions with different degrees, closely following the ideas in the univariate case.

\paragraph{Individual Degree:} First show how to combine degrees while preserving proximity to the individual degree Reed-Muller code. In this case, 
the ``combine'' operation works as follows:
\begin{definition} \label{def: multi combine}
Let $f_1, \ldots, f_\ell: \Ff_q^2 \to \Ff_{q'}$ be oracle functions, let $(d_{i,1}, d_{i,2})$ be degree parameters for $i \in [\ell]$, and let $(d', d')$ be a target degree which is greater than all of the $(d_{i,1}, d_{i,2})$. Then the combined function is generated as follows:
\begin{itemize}
    \item Choose $\xi_1, \ldots, \xi_{\ell} \in \Ff_{q'}$  according to the Reed-Solomon proximity generator, and choose $\xi_0 \in \Ff_{q'}$ uniformly at random.
    \item Set \[
\combine_{(d', d')}(f_1,\ldots, f_{\ell}) = \sum_{i=1}^{\ell} \xi_i \cdot (f_i + \xi_0 \cdot x_1^{d' - d_{i,1}}x_2^{d' - d_{i,2}}\cdot f_i).
\]
\end{itemize}
\end{definition}

\begin{lemma} \label{lm: combine one multivariate}
    Fix degree parameters $(d_1, d_2) \leq (d', d')$ and $\eps \geq 20 \cdot \left(\frac{d'}{q}\right)^{\tl}$.    
    Suppose that $f: \Ff^2_q \to \Ff_{q'}$ satisfies $\agr(f, \RM_{q'}[(d', d'), \Ff^2_q]) \leq \eps$. Then
    \[
    \Pr_{\xi \in \Ff_{q'}}[\agr(f + \xi \cdot x_1^{d' - d_1}x_2^{d'-d_2}\cdot f, \RM_{q'}[(d',d'), \Ff_q]) \geq 1.01\eps] < \err = \frac{\poly(d')}{q'}.
    \]
\end{lemma}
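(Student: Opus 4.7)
The plan is to mirror the proof of the univariate analog, \cref{lm: combine one univariate}, with the Correlated Individual-$\RM$ Proximity Generator (\cref{thm: prox gen iRM with corr}) playing the role of the Reed-Solomon proximity generator. Let $f_1 = f$ and $f_2 = x_1^{d'-d_1}x_2^{d'-d_2}\cdot f$, and suppose for contradiction that
\[
\Pr_{\xi \in \Ff_{q'}}\left[\agr\bigl(f_1 + \xi \cdot f_2,\; \RM_{q'}[(d',d'), \Ff_q^2]\bigr) \geq 1.01\eps\right] \geq \err.
\]
Since the agreement of $\xi_1 f_1 + \xi_2 f_2$ with any fixed codeword depends only on $\xi_2/\xi_1$ when $\xi_1 \neq 0$, the same event has probability at least $\err \cdot (1 - 1/q')$ over uniform $(\xi_1, \xi_2) \in \Ff_{q'}^2$. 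Choosing the implicit polynomial in $\err = \poly(d')/q'$ suitably, this exceeds $2\cdot\err(2d', q, q')$.

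Next, I would invoke \cref{thm: prox gen iRM with corr} with $k=2$ and agreement parameter $1.01\eps$; the lower bound $1.01\eps \geq 20/q^{2/15}$ required by that theorem follows from our hypothesis $\eps \geq 20(d'/q)^{\tl}$ since $\tl < 1/20 < 2/15$. This produces a set $A \subseteq \Ff_q^2$ with $\mu(A) \geq 0.99 \cdot 1.01\eps > \eps$ and polynomials $\wh{h}_1, \wh{h}_2 \in \RM_{q'}[(d',d'), \Ff_q^2]$ such that $f|_A = h_1|_A$ and $(x_1^{d'-d_1}x_2^{d'-d_2}\cdot f)|_A = h_2|_A$. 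Combining the two identities, the polynomial $x_1^{d'-d_1}x_2^{d'-d_2}\wh{h}_1 - \wh{h}_2$, of total degree at most $4d'$, vanishes on $A$. The assumption $\eps \geq 20(d'/q)^{\tl}$ with $\tl < 1/20$ guarantees $\mu(A) > 4d'/q$, so \cref{lm: schwartz-zippel for rm} forces $x_1^{d'-d_1}x_2^{d'-d_2}\wh{h}_1 = \wh{h}_2$ as polynomials.

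Since $\wh{h}_2$ has individual degrees at most $(d',d')$, comparing coefficients in this identity forces the coefficient of $x_1^i x_2^j$ in $\wh{h}_1$ to vanish whenever $i > d_1$ or $j > d_2$. Thus $h_1 \in \RM_{q'}[(d_1, d_2), \Ff_q^2] \subseteq \RM_{q'}[(d',d'), \Ff_q^2]$, and $f$ agrees with $h_1$ on a set of fractional size strictly greater than $\eps$, contradicting the hypothesis $\agr(f, \RM_{q'}[(d',d'), \Ff_q^2]) \leq \eps$.

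The main obstacle is the final Schwartz--Zippel step: we must pass from agreement of two concrete polynomials on the sampled set $A$ to an identity of polynomials, which is exactly where the lower bound $\eps \geq 20(d'/q)^{\tl}$ (significantly above the distance $4d'/q$) is essential. The only other mild subtlety is the translation of a single random scalar $\xi$ to a uniformly random pair $(\xi_1, \xi_2)$ compatible with \cref{thm: prox gen iRM with corr}, which proceeds exactly as in the univariate case.
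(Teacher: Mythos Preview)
Your approach matches the paper's essentially line for line: contrapose, invoke \cref{thm: prox gen iRM with corr} to get a correlated-agreement set and degree-$(d',d')$ witnesses, apply Schwartz--Zippel on that set to force the polynomial identity $x_1^{d'-d_1}x_2^{d'-d_2}\wh{h}_1=\wh{h}_2$, and read off the individual-degree bound on $h_1$ to reach a contradiction. One tiny arithmetic slip---$0.99\cdot 1.01 = 0.9999 < 1$, not $>1$---mirrors the paper's own looseness with constants here (it writes simply ``$0.99\eps$'') and does not affect the substance of the argument.
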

\begin{proof}
    Suppose for the sake of contradiction that this is not the case. Then, by \cref{thm: prox gen iRM with corr}, there exists $U \subseteq \Ff^2_q$ of fractional size at least $0.99\eps$ and degree $(d,d)$ functions $g_1, g_2$ such that such that 
    \[
    f|_U = g_1|_U \quad  \text{and} \quad x_1^{d'-d_1}x_2^{d'-d_2} \cdot f|_U = g_2|_U.
    \]
    It follows that 
    \[
    x_1^{d'-d_1}x_2^{d'-d_2} \cdot g_1|_U = g_2|_U.
    \]
    Since $U$ has fractional size greater than $d'/q$, it follows that $x_1^{d'-d_1}x_2^{d'-d_2} \cdot g_1 = g_2$ by the Schwartz-Zippel lemma. Hence $g_1$ must have individual degrees at most $(d_1, d_2)$ and agree with $f$ on a set of fractional size greater than $0.99\eps$. This is a contradiction.
\end{proof}

The proximity generator in this now easily follows:
\begin{lemma} \label{lm: combine multiple multivariate}
Let $f_1, \ldots, f_{\ell}: \Ff^2_{q} \to \Ff_{q'}$ be functions, let $(d_{i,1}, d_{i,2})$ be individual degree parameters for $i \in [\ell]$, let $(d', d')$ be a target degree which is greater than each $(d_{i,1}, d_{i,2})$, and let $\eps \geq 21\left(\frac{d'}{q}\right)^{\tl}$ be an agreement parameter. Suppose that $\agr(f_i, \RM_{q'}[(d_{i,1}, d_{i,2}), \Ff_q]) \leq \eps$ for some $i\in [\ell]$. Then
\[
\Pr[\agr(\combine_{(d',d')}(f_1,\ldots, f_{\ell}),\RM_{q'}[(d', d'), \Ff_q] \geq 1.03\eps ] \leq \ell\cdot \err,
\]
where the probability is over the randomness in the $\combine$ function.
\end{lemma}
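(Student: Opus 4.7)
The plan is to adapt the argument of \cref{lm: combine multiple univariate} to the bivariate individual-degree setting. Fix an index $i^\ast \in [\ell]$ for which $\agr(f_{i^\ast}, \RM_{q'}[(d_{i^\ast,1}, d_{i^\ast,2}), \Ff_q^2]) \leq \eps$, and, following \cref{def: multi combine}, write $F_j := f_j + \xi_0 \cdot x_1^{d' - d_{j,1}} x_2^{d' - d_{j,2}} \cdot f_j$, so that $\combine_{(d',d')}(f_1,\ldots,f_\ell) = \sum_{j=1}^{\ell} \xi_j F_j$. The randomness splits naturally as $\xi_0$ followed by $(\xi_1,\ldots,\xi_\ell)$, which I would analyze in succession and then combine via a union bound.

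For the $\xi_0$-step, \cref{lm: combine one multivariate} applied to $f_{i^\ast}$ (whose precondition $\eps \geq 20(d'/q)^{\tl}$ is implied by the hypothesis $\eps \geq 21(d'/q)^{\tl}$) yields $\Pr_{\xi_0}[\agr(F_{i^\ast}, \RM_{q'}[(d',d'), \Ff_q^2]) \geq 1.01\eps] < \err$. Conditioning on the complementary event, $F_{i^\ast}$ is $1.01\eps$-far from $\RM_{q'}[(d',d'), \Ff_q^2]$. For the $(\xi_1,\ldots,\xi_\ell)$-step, I would then invoke the correlated individual-degree Reed-Muller proximity generator, \cref{thm: prox gen iRM with corr}, applied to $F_1,\ldots,F_\ell$: by its contrapositive, since $F_{i^\ast}$ already fails the $0.99 \cdot (1.03\eps)$-correlated-agreement conclusion, the combined function $\sum_j \xi_j F_j$ has agreement at least $1.03\eps$ with $\RM_{q'}[(d',d'), \Ff_q^2]$ with probability at most roughly $(\ell-1)\cdot \err$. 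The small slack from $1.01\eps$ to $1.03\eps$ is precisely what absorbs the $0.99\eps$-loss intrinsic to \cref{thm: prox gen iRM with corr}, and a final union bound with the $\xi_0$-step produces the claimed $\ell \cdot \err$ bound.

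The main obstacle is that \cref{thm: prox gen iRM with corr} as stated handles only $k = 2$ functions, while the proof requires combining $\ell$ of them. I would address this by extending the argument of \cref{sec: proof of prox gen irm with corr} to general $k$, mirroring the way \cref{thm: prox gen RM without corr} treats arbitrary $k$ in the uncorrelated total-degree case; alternatively, one may apply the $k=2$ statement iteratively in a pair-by-pair fashion, taking care that the error compounds only linearly in $\ell$. Either route is a routine adaptation and does not change the structure of the argument above.
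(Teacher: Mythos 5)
Your proof follows the paper's argument exactly: apply \cref{lm: combine one multivariate} to $f_{i^\ast}$ for the $\xi_0$-step to get agreement at most $1.01\eps$ except with probability $\err$, then invoke the contrapositive of \cref{thm: prox gen iRM with corr} for the $(\xi_1,\ldots,\xi_\ell)$-step, and finish with a union bound. The numeric slack you track ($0.99 \cdot 1.03 = 1.0197 > 1.01$) is indeed what makes the contrapositive apply.

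Your observation that \cref{thm: prox gen iRM with corr} is stated only for $k=2$, while the lemma requires an $\ell$-function version, is correct — and notably, the paper's own proof has exactly the same (unflagged) gap, as it also cites \cref{thm: prox gen iRM with corr} directly. Two mitigating remarks: first, the downstream applications of this lemma in the paper (e.g.\ \cref{lm: bivariate side condition decomp+combine} and the $\texttt{RSBatch}$ round-by-round analysis) only invoke it with $\ell = 2$, where the stated $k=2$ proximity generator applies verbatim; second, of the two patches you propose, extending \cref{thm: prox gen iRM with corr} to general $k$ by mirroring the general-$k$ argument in \cref{sec: proof of prox gen irm with corr} (which itself builds on the general-$k$ \cref{thm: prox gen RM with corr}) is the cleaner route. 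Be a little wary of the ``iterative pair-by-pair'' alternative: naively it would compound the multiplicative $0.99$-loss in correlated agreement across iterations, so it would need constant $\ell$ and a redo of the constants to work.
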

\begin{proof}
Let $i$ be an index such that $\agr(f_i, \RM_{q'}[(d_{i,1}, d_{i,2}), \Ff^2_q]) \leq \eps$. Then,
\[
\Pr_{\xi_0 \in \Ff_{q'}}[\agr(f_i + \xi_0 \cdot x_1^{d' - d_{i,1}}x_2^{d' - d_{i,2}}\cdot f_i, \RM_{q'}[(d',d'), \Ff_q]) \geq 1.01 \eps] \leq \err
\]
by \cref{lm: combine one multivariate}. Conditioned on this not being the case, we have that,
\[
\Pr_{\xi_1,\ldots, \xi_{\ell} \in \Ff_{q'}}\left[\agr\left(\combine_{(d',d')}(f_1,\ldots, f_{\ell}),\RM_{q'}[(d',d'), \Ff_q]\right) \geq 1.03 \eps \right] \leq (\ell-1)\err,
\]
by the contrapositive of \cref{thm: prox gen iRM with corr}. The lemma follows from a union bound.
\end{proof}

\paragraph{Total Degree:} For the total degree Reed-Muller code, we can show the following analogues of \cref{lm: combine multiple multivariate}. The proofs are nearly identical to that of \cref{lm: combine multiple multivariate} so we omit them. The only difference is that we apply the Schwartz-Zippel lemma to the total degree Reed-Muller code in the analogue of \cref{lm: combine one multivariate} and we appeal to the total degree Reed-Muller proximity generator theorem (\cref{thm: prox gen RM without corr}) instead of the individual degree version (\cref{thm: prox gen iRM with corr}). We state our results on the total degree Reed-Muller for arbitrary dimension $m$, but for our purposes $m$ will always be $2$ or $3$.

\begin{definition} \label{def: multi combine tot}
Let $f_1, \ldots, f_\ell: \Ff^m_q \to \Ff_{q'}$ be oracle functions, let $d_1,\ldots, d_{\ell}$ be degree parameters and let $d'$ be a target degree which is greater than all of the $d_i$. Then the combined function is generated as follows:
\begin{itemize}
    \item Choose $\xi_1, \ldots, \xi_{\ell} \in \Ff_{q'}$  according to the total degree Reed-Muller proximity generator, and choose $\xi_0 \in \Ff_{q'}$ uniformly at random.
    \item Set \[
\combine_{d'}(f_1,\ldots, f_{\ell}) = \sum_{i=1}^{\ell} \xi_i \cdot (f_i + \xi_0 \cdot x_1^{d' - d_i}\cdot f_i).
\]
\end{itemize}
\end{definition}

Here we reuse the $\combine$ notation from the univariate case in \cref{def: uni combine}, but note that it will be clear from context which case we are referring to based on if the input functions are univariate or multivariate.

\begin{lemma} \label{lm: combine multiple multivariate total degree}
Let $f_1, \ldots, f_{\ell}: \Ff^m_{q} \to \Ff_{q'}$ be functions, let $d_1, \ldots, d_{\ell}$ be degree parameters for $i \in [\ell]$, let $d' \geq \max_i(d_i)$ be a target degree and let $\eps \geq 20\cdot 1.01^{m}\left(\frac{d'}{q}\right)^{\tl}$ be an agreement parameter. Suppose that $\agr(f_i, \RM_{q'}[d_i, \Ff_q]) \leq \eps$ for some $i\in[\ell]$. Then
\[
\Pr[\agr(\combine_{d'}(f_1,\ldots, f_{\ell}),\RM_{q'}[d, \Ff_q]) \geq 1.01^{m}\eps ] \leq \ell\cdot \err,
\]
where the probability is over the randomness in the $\combine$ function.
\end{lemma}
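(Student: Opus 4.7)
The plan is to follow the template of \cref{lm: combine multiple multivariate} verbatim, after first establishing the appropriate single-function analogue for the total-degree Reed-Muller code in $m$ variables. Thus the first step is to prove the following claim, analogous to \cref{lm: combine one multivariate}: if $f\colon \Ff_q^m \to \Ff_{q'}$ satisfies $\agr(f, \RM_{q'}[d, \Ff_q^m]) \leq \eps$, then for a uniformly random $\xi \in \Ff_{q'}$,
\[
\Pr_{\xi}\bigl[\agr\bigl(f + \xi \cdot x_1^{d'-d} \cdot f,\; \RM_{q'}[d', \Ff_q^m]\bigr) \geq 1.01\eps\bigr] < \err.
\]
I would argue by contradiction: if the probability exceeds $\err$, apply a total-degree correlated-agreement proximity generator to the pair $(f, x_1^{d'-d} f)$ to obtain a set $U \subseteq \Ff_q^m$ of fractional size at least $0.99 \cdot 1.01\eps$ together with polynomials $g_1, g_2 \in \RM_{q'}[d', \Ff_q^m]$ such that $f|_U = g_1|_U$ and $(x_1^{d'-d} f)|_U = g_2|_U$. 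On $U$ we then have $x_1^{d'-d} g_1 = g_2$; since $|U|/q^m > d'/q$, the Schwartz-Zippel bound (\cref{lm: schwartz-zippel for rm}) upgrades this to a polynomial identity on all of $\Ff_q^m$. Consequently $g_1$ has total degree at most $d'-(d'-d)=d$, and $g_1$ agrees with $f$ on a set of fractional size $> \eps$, contradicting the hypothesis $\agr(f, \RM_{q'}[d,\Ff_q^m]) \leq \eps$.

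The second step is the union-bound argument carrying this from one function to the linear combination of $\ell$ functions, identical in structure to the proof of \cref{lm: combine multiple multivariate}. Fix an index $i$ witnessing $\agr(f_i, \RM_{q'}[d_i, \Ff_q^m]) \leq \eps$. By the single-function statement above,
\[
\Pr_{\xi_0}\bigl[\agr\bigl(f_i + \xi_0 x_1^{d'-d_i} f_i,\; \RM_{q'}[d', \Ff_q^m]\bigr) \geq 1.01^{m-1}\eps\bigr] \leq \err;
\]
condition on the complement. Then at least one summand of $\combine_{d'}(f_1,\ldots,f_\ell)$ is far from $\RM_{q'}[d', \Ff_q^m]$, and applying the uncorrelated total-degree Reed-Muller proximity generator (\cref{thm: prox gen RM without corr}, or its $m$-variate version) contrapositively bounds the probability that the resulting linear combination has agreement $\geq 1.01^{m}\eps$ with $\RM_{q'}[d', \Ff_q^m]$ by $(\ell-1)\err$. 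A union bound yields the claimed $\ell \cdot \err$.

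The main obstacle is that \cref{thm: prox gen RM with corr} and \cref{thm: prox gen RM without corr} are only stated for $\Ff_q^2$, whereas the target lemma is stated for arbitrary $m$. Extending correlated agreement from dimension $2$ to dimension $m$ is the delicate part: the natural route is to restrict to a uniformly random affine plane $P \subseteq \Ff_q^m$, apply the 2-dimensional correlated agreement statement on $P$, and then patch the plane-wise agreement sets together using the plane-versus-point soundness of \cref{thm: plane vs point strong}. Each such projection costs a multiplicative slack of roughly $1.01$ in the agreement parameter (exactly as in the passage from $0.99\eps$ to $1.01^m \eps$ in the conclusion), which is precisely why the stated bound carries an $m$-fold factor rather than the $1.01^2$ that appeared in the $(d,d)$-individual-degree case.
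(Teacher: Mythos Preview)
Your two-step template (single-function degree-raising lemma via correlated agreement, then union bound via the uncorrelated proximity generator) is exactly what the paper intends; its proof is literally ``Similar to the proof of \cref{lm: combine multiple multivariate},'' and the surrounding text says the only changes are to invoke Schwartz--Zippel for total degree and to swap in \cref{thm: prox gen RM without corr} for \cref{thm: prox gen iRM with corr}. So at the structural level you are aligned with the paper.

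Where you diverge is in how you propose to handle general $m$. You suggest restricting to a random plane, applying the $2$-dimensional correlated-agreement statement there, and patching via \cref{thm: plane vs point strong}. The paper does not do this. If you look at the proofs in \cref{sec: prox gen proof}, the reduction is always directly to \emph{lines}: \cref{cl: good lines} is already stated for $\Ff_q^m$, and the rest of the argument only needs the line-versus-point machinery (\cref{thm: line vs point}, \cref{thm: line vs point list decode}, and its $m=3$ analogue in \cref{sec: line vs point randomized agree m = 3}). In other words, the $m$-variate proximity generators follow by rerunning the same line-based proof, not by a plane-to-$2$-dimensions reduction. Your plane-patching route also has a real wrinkle you have not addressed: \cref{thm: plane vs point strong} controls agreement of a single function, but you need to glue together \emph{correlated} agreement sets (plane-local pairs $(g_1^P, g_2^P)$ must cohere into global polynomials), which is not what that theorem gives you directly.

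Finally, your reading of the $1.01^m$ factor as the cost of $m$ successive projections is an over-interpretation. The paper never derives that exponent from an inductive dimension reduction; it is simply a loose constant that absorbs the fixed losses from \cref{thm: prox gen RM with corr} (a factor of $0.999$) and \cref{thm: prox gen RM without corr} (a factor of $0.99$), and the paper only ever uses $m \in \{2,3\}$.
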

\begin{proof}
    Similar to the proof of~\cref{lm: combine multiple multivariate}.
\end{proof}

\section{Quotienting to Remove Side Conditions}\label{sec:quotient}
We will often be interested in testing the proximity of some $f: U \to \Ff_{q'}$ to a family of functions, say $\RS_{q'}[d, \Ff_q]$, subject to some constraints which we call side conditions. More specifically, we will want to test the proximity of $f$ to the subfamily of $\RS_{q'}[d, \Ff_q]$ consisting of $g$'s satisfying that $g|_A = h|_A$ for some prespecified $A$ and $h: A \to \Ff_{q'}$. We refer to $h$ as the side condition function, $A$ as the set of side condition points, and use the notation
\[
\RS_{q'}[d, \Ff_q \; | \; h] = \{g \in \RS_{q'}[d, \Ff_q] \; | \; g|_A = h|_A \}.
\]
We define $\RM_{q'}[d, \Ff_q \; | \; h]$ and $\RM_{q'}[(d,d), \Ff_q \; | \; h]$ analogously. 

The goal of this section is to we present a technique called quotienting, which will reduce the proximity testing with side conditions problem to standard proximity testing without side conditions. We will start by discussing quotienting in the univariate case, which was first given in \cite{acy, acfy}, and then move on to quotienting in the multivariate case. 

\subsection{Univariate Quotienting}
Fix an $f: \Ff_q \to \Ff_{q'}$ and a side condition function $h: A \to \Ff_{q'}$. We remark that we do not require $A \subseteq \Ff_q$; in that case, one should interpret the side conditions from $h$ as being conditions on the low degree extension of $f$ to $\Ff_{q'}$. That is, we are expecting that the low degree extension of $f$ over $\Ff_{q'}$, $\wh{f}$, satisfies $\wh{f}|_A = h|_A$. Suppose we want to test proximity of $f$ to the family $\RS_{q'}[d, \Ff_q \; | \; h]$, where the size of $A$ should be thought of as small compared to $d$. 
How should we go about doing this? 

Let $\wh{f}$ and $\wh{h}$ be the low-degree extensions of $f$ and $h$ respectively from $\Ff_q$ to $\Ff_{q'}$. Also let $\tilde{h}: \Ff_q \to \Ff_{q'}$ be denoted by $\wh{h}|_{\Ff_q}$. The idea behind quotienting is  the following fact: if $f \in \RS_{q'}[d, \Ff_q \; | \; h]$, then by~\cref{lm: side condition vanish decomp} we may write $f$ as: 
\begin{equation} \label{eq: division}  
\wh{f}(x) = \wh{V}_A \cdot \wh{g}(x) + \wh{h}(x),
\end{equation}
where, recall, $\wh{V}_A(x) = \prod_{\alpha \in A}(x-a)$ is the vanishing polynomial over $A$. We note that the polynomial $\wh{g} \in \Ff_{q'}^{\leq d- |A|}$ is uniquely determined, and given access to $f$, the verifier can simulate access to $\wh{g}$ over $\Ff_q \setminus A$. Thus, to test proximity of $f$ to $\RS_{q'}[d, \Ff_q \; | \; h]$, the verifier can use this simulated access to $\wh{g}$ and instead attempt to test proximity of $\wh{g}|_{\Ff_q}$ to $\RS_{q'}[d - |A|, \Ff_q ]$. There is a minor issue with this idea, since the verifier cannot access the values of $\wh{g}$ over $A$. To remedy this, the prover also fills in the values of $\wh{g}$ on $A$. Since $|A|$ is small, a cheating prover will not be able to use this to their advantage and affect the acceptance probability of the verifier by much. In terms of completeness, an honest prover can simply provide the true values and this way not decrease the acceptance probability of the verifier. Our discussion motivates the following definition of a quotient function:

\begin{definition}
   Given a function $f: \Ff_q \to \Ff_{q'}$, a set $ A \subseteq \Ff_{q'}$, and a fill function $\Fill: A \to \Ff_{q'}$, define the following function from $\Ff_q \to \Ff_{q'}$,
 \begin{equation} \label{eq: uni quotient def}
    \quo_1(f, A, \Fill)(x), = \begin{cases}
        \frac{f(x)}{V_A(x)} \quad \text{if } x \notin A \\
        \Fill(x) \quad \text{if } x \in A.
\end{cases}
\end{equation}
\end{definition}
The next lemma shows that the honest prover can indeed provide values for $\Fill$ which makes the quotient above a low degree function. 
\begin{lemma} \label{lm: uni quo completeness}
    If $f \in \RS_{q'}[d, \Ff_q \; | \; h]$, then there is some $\Fill: A \to \Ff_{q'}$ for which $\quo_1(f-\tilde{h}, A, \Fill) \in \RS_{q'}[d-|A|, \Ff_q]$.
\end{lemma}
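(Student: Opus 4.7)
The plan is to unwind the definitions and reduce to \cref{lm: side condition vanish decomp}. Starting from the hypothesis $f \in \RS_{q'}[d, \Ff_q \mid h]$, the low-degree extension $\wh{f}$ lies in $\Ff_{q'}^{\leq d}[x]$ and satisfies $\wh{f}(\alpha) = h(\alpha)$ for every $\alpha \in A$. Let $\wh{h} \in \Ff_{q'}^{\leq |A|-1}[x]$ be the low-degree extension of $h$. Then $\wh{f}(\alpha) = \wh{h}(\alpha)$ for all $\alpha \in A$, so by \cref{lm: side condition vanish decomp} there exists $\wh{g} \in \Ff_{q'}^{\leq d-|A|}[x]$ with
\[
\wh{f}(x) = \wh{V}_A(x)\cdot \wh{g}(x) + \wh{h}(x).
\]

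Next, I would propose the fill function $\Fill \colon A \to \Ff_{q'}$ defined by $\Fill(\alpha) = \wh{g}(\alpha)$ for each $\alpha \in A$, and verify that this choice puts $\quo_1(f - \tilde{h}, A, \Fill)$ in $\RS_{q'}[d-|A|, \Ff_q]$. Recalling that $\tilde h = \wh h|_{\Ff_q}$, for any $x \in \Ff_q \setminus A$ one has $V_A(x)\neq 0$ (since $V_A(x) = \prod_{\alpha \in A}(x-\alpha)$ vanishes only on $A$), so
\[
\quo_1(f - \tilde{h}, A, \Fill)(x) \;=\; \frac{f(x) - \tilde h(x)}{V_A(x)} \;=\; \frac{\wh{f}(x) - \wh{h}(x)}{\wh V_A(x)} \;=\; \wh{g}(x).
\]
For $x \in A \cap \Ff_q$, the quotient is set by definition to $\Fill(x) = \wh{g}(x)$. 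Hence the quotient function equals $\wh{g}|_{\Ff_q}$ on all of $\Ff_q$, and since $\deg(\wh g) \leq d - |A|$, this exhibits it as a member of $\RS_{q'}[d-|A|, \Ff_q]$, as desired.

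There is no real obstacle here: the argument is essentially a restatement of the ``furthermore'' clause of \cref{lm: univariate vanish division} (and its wrapper \cref{lm: side condition vanish decomp}) combined with the piecewise definition \eqref{eq: uni quotient def}. The only mild subtlety to flag is that $A$ need not be contained in $\Ff_q$, but this causes no issue because the division step happens at the level of formal polynomials in $\Ff_{q'}[x]$, and the fill values $\Fill(\alpha)$ are only needed at the points $\alpha \in A \cap \Ff_q$ where $V_A$ vanishes on the evaluation domain.
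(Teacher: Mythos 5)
Your proof is correct and takes essentially the same approach as the paper's: both reduce to the division fact that $\wh{V}_A$ divides $\wh{f} - \wh{h}$ (the paper argues this inline, while you invoke \cref{lm: side condition vanish decomp}), then define $\Fill$ as the evaluation of the quotient polynomial $\wh{g}$ on $A$ and observe that $\quo_1(f-\tilde{h}, A, \Fill) = \wh{g}|_{\Ff_q}$.
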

\begin{proof}
    Consider the function
    \[
   g(x) =  \frac{\wh{f}(x) - \wh{h}(x)}{\wh{V}_A(x)}.
    \]
    Since the numerator vanishes on $A$, it follows that $\wh{V}_A$ divides it, and so $g$ is a polynomial. Considering degrees, it follows that $g \in \Ff_{q'}^{\leq d - |A|}[x]$. Taking $\Fill$ to be the evaluation over $A$ of $g$, we get that $\quo_1(f-\tilde{h}, A, \Fill)$ is the evaluation of $g$ over $\Ff_q$.
\end{proof}
On the other hand, the following lemma shows that if $f$ is far from  $\RS_{q'}[d, \Ff_q \; | \; h]$, then for any $\Fill: A \to \Ff_{q'}$, the quotiented function $\quo_1(f-\tilde{h}, A, \Fill)$ is far from  $\RS_{q'}[d- |A|, \Ff_q]$.
\begin{lemma} \label{lm: uni quo soundness}
    If  $\agr(f, \RS_{q'}[d, \Ff_q \; | \; h]) \leq \eps$, then for any $\Fill: A \to \Ff_{q'}$, we have
    \[
    \agr(\quo_1(f-\tilde{h}, A, \Fill), \RS_{q'}[d-|A|, \Ff_q]) \leq  \eps + \frac{|A|}{q}.
    \]
\end{lemma}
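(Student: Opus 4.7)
The plan is to argue by contrapositive: show that if the quotient has unexpectedly high agreement with some degree $d-|A|$ polynomial, then we can lift this witness back to a polynomial in $\RS_{q'}[d, \Ff_q \mid h]$ with correspondingly high agreement with $f$, contradicting the hypothesis. Concretely, suppose $g \in \RS_{q'}[d-|A|, \Ff_q]$ satisfies
\[
\agr(\quo_1(f-\tilde{h}, A, \Fill), g) > \eps + \tfrac{|A|}{q}.
\]
I will form the natural lift $\wh{p}(x) = \wh{V}_A(x)\wh{g}(x) + \wh{h}(x)$ and let $p = \wh{p}|_{\Ff_q}$. Since $\deg(\wh{V}_A) = |A|$, $\deg(\wh{g}) \leq d-|A|$, and $\deg(\wh{h}) \leq |A|-1$, we get $\deg(\wh{p}) \leq d$; and since $\wh{V}_A$ vanishes on $A$, we have $\wh{p}|_A = \wh{h}|_A = h$. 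Thus $p \in \RS_{q'}[d, \Ff_q \mid h]$.

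Next I will transfer agreement. For any $x \in \Ff_q \setminus A$, the definition of $\quo_1$ gives $\quo_1(f-\tilde{h}, A, \Fill)(x) = (f(x)-\tilde{h}(x))/V_A(x)$, so if this equals $g(x)$ then $f(x) = V_A(x) g(x) + \tilde{h}(x) = p(x)$. Therefore the set on which $f$ agrees with $p$ contains every point of $\Ff_q \setminus A$ where the quotient agrees with $g$. Since at most $|A|$ points of $\Ff_q$ lie in $A$, the fraction of points in $\Ff_q \setminus A$ where the quotient agrees with $g$ is strictly greater than $(\eps + |A|/q) - |A|/q = \eps$, which yields $\agr(f, p) > \eps$, contradicting the hypothesis $\agr(f, \RS_{q'}[d, \Ff_q \mid h]) \leq \eps$.

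There is no real obstacle here: the only subtlety is bookkeeping the (at most $|A|$) points of $\Ff_q \cap A$, where the quotient is set to $\Fill$ and so the adversary could ``get agreement for free'' with $g$, which is exactly the $|A|/q$ slack in the statement. I will make sure to note that the argument is insensitive to the choice of $\Fill$ (only the $|A|/q$-worth of coordinates in $A$ are affected by it), which is what makes the lemma hold uniformly over $\Fill$, as claimed.
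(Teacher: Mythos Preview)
Your proposal is correct and follows essentially the same argument as the paper: both lift a witness $g$ of degree $d-|A|$ to $\wh{V}_A\cdot\wh{g}+\wh{h}$, verify this lift lies in $\RS_{q'}[d,\Ff_q\mid h]$, and observe that agreement of the quotient with $g$ on $\Ff_q\setminus A$ transfers to agreement of $f$ with the lift, leaving only the $|A|/q$ slack for points in $A$. The only cosmetic difference is that the paper phrases it directly (bounding the agreement set $U$ by $\mu(U\setminus A)\le\eps$) while you frame it as a contrapositive.
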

\begin{proof}
 Fix an arbitrary $\Fill: A \to \Ff_{q'}$ and let $g = \quo_1(f-\tilde{h}, A, \Fill)$. Suppose $g|_U = F|_U$ for some $F \in \RS_{q'}[d-|A|, \Ff_q]$. Then $F \cdot \wh{V}_A|_{\Ff_q} + \tilde{h} \in \RS_{q'}[d, \Ff_q \; | \; h]$, and for $x \in U \setminus A$, we have
 \[
 F(x) \cdot \wh{V}_A(x)+ \wh{h}(x) = g(x) \cdot \wh{V}_A(x) + \wh{h}(x) = f(x).
 \]
It follows that $F \cdot \wh{V}_A|_{\Ff_q} + \tilde{h}$ agrees with $f$ on $U \setminus A$. Thus, $\mu(U \setminus A) \leq \eps$ and $\mu(U) \leq \eps + |A|/q$.
\end{proof}

\subsection{Bivariate Quotienting}
Bivariate Quotienting is a natural extension of univariate quotienting. Fix a function $f: \Ff_q^2 \to \Ff_{q'}$ and consider a side condition function $h: A \times B \to \Ff_{q'}$. Note that here we require the side condition function to be defined over a subcube. Similarly to the univariate case, we do not require $A \times B \subseteq \Ff_{q}^2$, and we think of $A, B$ as small compared to $\Ff_q$. Throughout this section, let $\wh{V}_A, \wh{V}_B \in \Ff_{q'}[x]$ be the vanishing polynomials of $A$ and $B$ and $V_A$ and $V_B$ be their evaluations over $\Ff_q$. Also let $\wh{h} \in \Ff_{q'}^{(|A|-1,|B|-1)}[x,y]$ be the low degree extension of $h$, and let $\tilde{h}$ be $\wh{h}|_{\Ff_q^2}$.
 
 Suppose we want to test proximity of $f$ to $\RM_{q'}[(d,d), \Ff_q^2 \; | \; h ]$. By~\cref{lm: multivariate side condition vanish decomp}, 
 we may write the low degree extension of $f \in \RM_{q'}[(d,d), \Ff_q^2 \; | \; h ]$ as:
 \begin{equation} \label{eq: side condition decomp}  
 \wh{f}(x,y) = \wh{V}_A(x)\cdot \wh{g}_1(x,y) + \wh{V}_B(y) \cdot \wh{g}_2(x,y) + \wh{h}(x,y),
  \end{equation}
where $\wh{g}_1 \in \Ff^{(d - |A|, d)}[x,y], \wh{g}_2 \in \Ff_{q'}^{(d, d-|B|)}[x,y]$. We note that~\eqref{eq: side condition decomp} suggests that in order to test proximity of $f$ to $\RM_{q'}[(d,d), \Ff_q^2 \; | \; h ]$, the verifier can instead attempt to simultaneously test the proximities of $\wh{g}_1|_{\Ff_q^2}$ to $\RM_{q'}[(d-|A|, d), \Ff_q^2]$ and $\wh{g}_2|_{\Ff_q^2}$ to $\RM_{q'}[(d, d-|B|), \Ff_q^2]$.\footnote{Ultimately, we will use proximity generators in a suitable fashion to reduce this to testing proximity to a single family, without side conditions.}

One difference with the univariate case is that the verifier cannot access $\wh{g}_1$ or $\wh{g}_2$ given only access to $\wh{f}$ and $\wh{h}$. To resolve this, the prover will provide $g_1: \Ff_q^2 \to \Ff_{q'}$, in which scenario the verifier will be able to simulate access to $\wh{g}_2$ over $\Ff_q^2 \setminus (\Ff_q \times B)$. Since $B$ is small, this is nearly all of $\Ff_q^2$, and as in the univariate case the prover fills in the rest of the values of $g_2$. The bivariate quotient function is then defined as follows:
\begin{definition}
    Given a function $f: \Ff_q^2 \to \Ff_{q'}$, a set $B \subseteq \Ff_{q'}$, and a function $\Fill: \Ff_q \times B \to \Ff_{q'}$, define the following function from $\Ff_q^2 \to \Ff_{q'}$
    \begin{equation} \label{eq: multi quotient def}
\quo_2(f, B, \Fill)(x,y) = \begin{cases}
    \frac{f(x,y)}{V_B(y)} \quad \text{if } y\notin B \\
    \Fill(x,y) \quad \text{if } y \in B.
\end{cases}
\end{equation}
\end{definition}

In analogy to the univariate case, we have two lemmas regarding the proximities of the quotiented functions to respective Reed-Muller codes without side conditions. To state them, we introduce the notion of correlated agreement. 
\begin{definition}
    We say that two functions $g_1, g_2: \Ff_q^2 \to \Ff_{q'}$ have $\eta$-correlated agreement with degrees $(d_1, d_2)$ and $(d'_1, d'_2)$ respectively if there exist functions $G_1, G_2: \Ff^2_{q} \to \Ff_{q'}$ of degrees $(d_1, d_2)$ and $(d'_1, d'_2)$ respectively and a set $U \subseteq \Ff_q^2$ of fractional size at least $\eta$ such that
\[
g_1|_U = G_1|_U \quad \text{and} \quad g_2|_U = G_2|_U.
\]
\end{definition}
\begin{remark}
The reason that $\eta$-correlated agreement with individual degrees is useful is that if $g_1$ and $g_2$ \emph{do not} have such correlated agreement, then we can combine them using \cref{lm: combine multiple multivariate} and obtain, with high probability, a function that is far from having low individual degree. 
\end{remark}

With this in hand, the following lemma handles the completeness case and shows that the honest prover can provide functions $g_1$ and $\Fill$ which will result in a quotient with low individual degrees in the case that $f \in \RM_{q'}[(d, d), \Ff_q^2 \; | \; h]$:
\begin{lemma} \label{lm: ind deg bivariate side condition decomp completeness case}
Let $f \in \RM_{q'}[(d, d), \Ff_q^2 \; | \; h]$, and recall that $\tilde{h} = \wh{h}|_{\Ff^2_{q}}$. Then there exist a $g_1: \Ff_q^2 \to \Ff_{q'}$ of degree $(d - |A|, d)$ and $\Fill: \Ff_q \times B \to \Ff_{q'}$ such that, 
    \[
    \Quo_2(f(x,y) - V_A(x)\cdot g_1(x,y) - \tilde{h}(x,y), B, \Fill) \in \RM_{q'}[(d, d-|B|), \Ff_q^2].
    \]
\end{lemma}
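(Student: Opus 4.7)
The plan is to directly apply the multivariate decomposition \cref{lm: multivariate side condition vanish decomp} to the low-degree extension of $f$, and then read off the correct choices of $g_1$ and $\Fill$ from the resulting decomposition.

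Let $\wh{f} \in \Ff_{q'}^{\leq (d, d)}[x,y]$ be the low-degree extension of $f$. Since $f \in \RM_{q'}[(d,d), \Ff_q^2 \mid h]$, we have $\wh{f}(x,y) = h(x,y)$ for all $(x,y) \in A \times B$. By \cref{lm: multivariate side condition vanish decomp} applied to $\wh{f}$, there exist polynomials $\wh{g}_1 \in \Ff_{q'}^{\leq (d-|A|, d)}[x,y]$ and $\wh{g}_2 \in \Ff_{q'}^{\leq (d, d-|B|)}[x,y]$ such that
\[
\wh{f}(x,y) \;=\; \wh{V}_A(x) \cdot \wh{g}_1(x,y) \;+\; \wh{V}_B(y) \cdot \wh{g}_2(x,y) \;+\; \wh{h}(x,y).
\]
Define $g_1 := \wh{g}_1|_{\Ff_q^2}$; then $g_1$ has individual degree at most $(d - |A|, d)$. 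Define $\Fill : \Ff_q \times B \to \Ff_{q'}$ by $\Fill(x,y) := \wh{g}_2(x,y)$ (this is well defined since $\wh{g}_2$ is a polynomial over $\Ff_{q'}$ and $B \subseteq \Ff_{q'}$).

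Now restrict the displayed identity to $\Ff_q^2$. Since $\tilde{h} = \wh{h}|_{\Ff_q^2}$, we obtain
\[
f(x,y) - V_A(x) \cdot g_1(x,y) - \tilde{h}(x,y) \;=\; V_B(y) \cdot \wh{g}_2(x,y)|_{\Ff_q^2}
\]
for every $(x,y) \in \Ff_q^2$. Applying the definition \eqref{eq: multi quotient def} of $\Quo_2$: for $y \notin B$ the quotient equals $\wh{g}_2(x,y)$, and for $y \in B$ we have set $\Fill(x,y) = \wh{g}_2(x,y)$ by construction. Hence the resulting quotient function is exactly $\wh{g}_2|_{\Ff_q^2}$, which lies in $\RM_{q'}[(d, d-|B|), \Ff_q^2]$ as required.

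There is no serious obstacle here: the whole proof is essentially a bookkeeping exercise on top of \cref{lm: multivariate side condition vanish decomp}. The only mild subtlety is that $B$ need not lie in $\Ff_q$, but this is handled automatically by defining $\Fill$ through the polynomial $\wh{g}_2$ which is defined over $\Ff_{q'}$.
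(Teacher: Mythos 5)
Your proof is correct and is essentially identical to the paper's: both apply \cref{lm: multivariate vanish division} / \cref{lm: multivariate side condition vanish decomp} to the low-degree extension $\wh{f}$ to obtain $\wh{g}_1, \wh{g}_2$ with the desired individual degrees, set $g_1 = \wh{g}_1|_{\Ff_q^2}$, and observe that the quotient is exactly $\wh{g}_2|_{\Ff_q^2}$ once $\Fill$ is chosen to agree with $\wh{g}_2$ on $\Ff_q \times B$.
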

\begin{proof}
Let $\wh{g}_1, \wh{g}_2 \in \Ff_{q'}[x,y]$ be the functions from the decomposition in~\eqref{eq: side condition decomp}. 
Take $g_1 = \wh{g}_1|_{\Ff_q^2}$, and note that (after division)
$\frac{f(x,y) - V_A(x)\cdot g_1(x,y) - \tilde{h}(x,y)}{V_B(y)}$
is a polynomial, so we may take $\Fill$ agreeing with it over $\Ff_q \times B$. It is clear that $g_1$ is of the desired degree. As for, $ \Quo_2(f(x,y) - V_A(x)\cdot g_1(x,y) - \tilde{h}(x,y), B, \Fill)$, note that it is the evaluation of $\wh{g}_2$ from~\eqref{eq: side condition decomp}, so it is also of the desired degree.
\end{proof}

The next lemma handles the soundness case and is a bit more technical. Morally speaking, it says that $f \in \RM_{q'}[d, \Ff_q^2 \; | \; h]$, then there is no way for the prover to provide a $g_1$ and $\Fill$ such that $g_1$ and the quotient calculated by the verifier are consistent. 

\begin{lemma} \label{lm: bivariate side condition decomp}
Let $h: A \times B \to \Ff_{q'}$ be a side condition function, with $A, B \subseteq \Ff_{q'}^2$ and recall that $\tilde{h} = \wh{h}|_{\Ff^2_{q}}$. Suppose that $Q: \Ff_q^2 \to \Ff_{q'}$ satisfies
\[
\agr(Q, \RM_{q'}[(d,d), \Ff_q^2 \; | \; h]) \leq \eps.
\]
Then for any $g_1: \Ff_q^2 \to \Ff_{q'}$, $\Fill: \Ff_q \times B \to \Ff_{q'}$, and $g_2 = \Quo_2(Q - V_A \cdot g_1 - \tilde{h}, B, \Fill)$, the functions $g_1, g_2$ cannot have $\eps + \frac{|B|}{q}$-correlated agreement with individual degrees $(d - |A|, d)$ and $(d, d - |B|)$ respectively.
\end{lemma}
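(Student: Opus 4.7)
The plan is to argue by contradiction. Suppose that $g_1$ and $g_2$ \emph{do} have $(\eps + |B|/q)$-correlated agreement with individual degrees $(d-|A|, d)$ and $(d, d-|B|)$ respectively. This gives functions $G_1, G_2 \in \RM_{q'}[(d-|A|,d), \Ff_q^2]$ and $G_2 \in \RM_{q'}[(d, d-|B|), \Ff_q^2]$ and a set $U \subseteq \Ff_q^2$ with $\mu(U) \geq \eps + |B|/q$ on which $g_1|_U = G_1|_U$ and $g_2|_U = G_2|_U$. From this data, I would build a candidate element $F \in \RM_{q'}[(d,d), \Ff_q^2 \mid h]$ and show it agrees with $Q$ on a set of fractional size $\geq \eps$, contradicting the hypothesis.

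Concretely, define the polynomial
\[
\wh{F}(x,y) = \wh{V}_A(x)\cdot \wh{G}_1(x,y) + \wh{V}_B(y)\cdot \wh{G}_2(x,y) + \wh{h}(x,y),
\]
where $\wh{G}_1, \wh{G}_2, \wh{h}$ are the polynomial representatives of $G_1, G_2, h$. A degree accounting shows $\wh{F}$ has individual degrees $(d,d)$: each summand does, because $|A| + (d-|A|) = d$, $|B| + (d-|B|) = d$, and $\wh{h}$ has individual degrees $(|A|-1, |B|-1)$. Moreover, for $(a,b) \in A\times B$, both vanishing terms are zero, so $\wh{F}(a,b) = h(a,b)$, placing $F := \wh{F}|_{\Ff_q^2}$ in $\RM_{q'}[(d,d), \Ff_q^2 \mid h]$.

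The next step is to check agreement of $Q$ with $F$ on a large set. Let $U' = U \setminus (\Ff_q \times B)$; since $\Ff_q \times B$ has fractional size $|B|/q$, we have $\mu(U') \geq \mu(U) - |B|/q \geq \eps$. For $(x,y) \in U'$, we have $y \notin B$ so $V_B(y) \neq 0$, and by the definition of the quotient in~\eqref{eq: multi quotient def},
\[
g_2(x,y) = \frac{Q(x,y) - V_A(x)\cdot g_1(x,y) - \tilde{h}(x,y)}{V_B(y)}.
\]
Using $g_1 = G_1$ and $g_2 = G_2$ on $U$, rearranging gives $Q(x,y) = V_A(x)\cdot G_1(x,y) + V_B(y)\cdot G_2(x,y) + \tilde{h}(x,y) = F(x,y)$. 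Thus $\agr(Q, F) \geq \mu(U') \geq \eps$ with $F \in \RM_{q'}[(d,d), \Ff_q^2 \mid h]$, contradicting the hypothesis $\agr(Q, \RM_{q'}[(d,d), \Ff_q^2 \mid h]) \leq \eps$ (read in the strict sense, matching the ``cannot have'' phrasing of the conclusion).

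The main obstacle here is really just bookkeeping: one has to track the individual degrees of the reconstructed $F$ carefully (so that the $|A|, |B|$ ``budget'' balances exactly to $(d,d)$), and one has to remember that $\Quo_2$ only behaves well outside $\Ff_q \times B$, which is why the hypothesis is inflated by the $|B|/q$ slack term. Once these two points are handled, the rest is a direct rearrangement of~\eqref{eq: multi quotient def}.
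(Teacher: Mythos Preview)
Your proof is correct and essentially identical to the paper's: both argue by contradiction, reconstruct the polynomial $\wh{V}_A \cdot \wh{G}_1 + \wh{V}_B \cdot \wh{G}_2 + \wh{h}$, verify it lies in $\RM_{q'}[(d,d), \Ff_q^2 \mid h]$, and show it agrees with $Q$ on $U \setminus (\Ff_q \times B)$, which has measure at least $\eps$. The only cosmetic difference is that the paper routes the agreement check through an intermediate function $g = V_A g_1 + V_B g_2 + \tilde{h}$ rather than directly rearranging the quotient, but the underlying argument is the same.
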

\begin{proof}
    Suppose for the sake of contradiction that the claim is false, and let $G_1, G_2$ be the functions with individual degrees $(d - |A|, d)$ and $(d, d - |B|)$ with which $g_1, g_2$ have correlated agreement, and denote by $U\subseteq \mathbb{F}_q^2$ the set of points they agree on. Let $\wh{G}_1, \wh{G}_2$ be their low degree extensions and consider the degree $(d, d)$ polynomial
\[
\wh{G} = \wh{V}_A \cdot \wh{G}_1 + \wh{V}_B \cdot \wh{G}_2 + \wh{h},
\]
as well as the following function over $\Ff_q^2$:
\[
g = V_A \cdot g_1 + V_B \cdot g_2 + \tilde{h}.
\]
We have that $\wh{G}$ agrees with $g$ on $U$ and $g$ agrees with $Q$ on $U \setminus \left(\Ff_q \times B\right)$.
Thus, $\wh{G}$ has individual degrees $(d, d)$, agrees with $h$ on $A \times B$, and agrees with $Q$ on $U \setminus \left(\Ff_q \times B\right)$. Noting that
\[
\mu(U) - \mu\left(\Ff_q \times B\right) > \eps + \frac{|B|}{q} - \frac{|B|}{q} \geq \eps,
\]
we get a contradiction to the assumption that $\agr(Q, \RM_{q'}[Q, \Ff_q^2 \; | \; h]) \leq \eps$. 
\end{proof}

As previously mentioned, the purpose of \cref{lm: bivariate side condition decomp} is to show that if $Q$ is far from an individual Reed-Muller code with side conditions, then for any $g_1$ and $\Fill$ provided, the functions $g_1$ and $g_2$ cannot have high correlated agreement with the some low individual degrees. In order for this to be useful however, we would like to combine $g_1$ and $g_2$ into a single function, and show that their lack of correlated agreement results in a combined function that is far from low individual degree. This is achieved via \cref{lm: combine multiple multivariate}: 

\begin{lemma} \label{lm: bivariate side condition decomp+combine}
    Suppose that $\agr(f, \RM_{q'}[(d,d), \Ff_q^2 \; | \; h]) \leq \eps$ for $\eps \geq 21 \left(\frac{d}{q}\right)^{\tl}$, and let $|B| \leq q^{1/4}$. Fix any $g_1: \Ff_q^2 \to \Ff_{q'}$ and $\Fill: \Ff_{q} \times B \to \Ff_{q'}$, and take 
    $g_2 = \quo_2(f - V_A \cdot g_1 - \tilde{h}, B, \Fill)$
    where $\tilde{h}=\hat{h}|_{\mathbb{F}_q^2}$ and $\wh{h}$ is the low-degree extension of $h$ over $\mathbb{F}_{q'}^2$. Then we have
    \[
    \Pr\left[\agr(\combine_{(d,d)}(g_1, g_2), \RM_{q'}[(d,d), \Ff_q^2]) \geq 1.04\eps \right] \leq \frac{\poly(d)}{q'},
    \]
    where the randomness is over the random choice of proximity generator coefficients in $\combine$.
\end{lemma}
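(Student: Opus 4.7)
The plan is to argue the contrapositive: if
$\Pr\!\left[\agr\!\left(\combine_{(d,d)}(g_1, g_2), \RM_{q'}[(d,d), \Ff_q^2]\right) \geq 1.04\eps\right] > \poly(d)/q'$, with a sufficiently large implicit polynomial factor, then I will show that $g_1$ and $g_2$ have $(\eps + |B|/q)$-correlated agreement with $\RM_{q'}[(d-|A|, d), \Ff_q^2]$ and $\RM_{q'}[(d, d-|B|), \Ff_q^2]$, contradicting~\cref{lm: bivariate side condition decomp}. Write $h_1(\xi_0) := g_1 + \xi_0 x_1^{|A|} g_1$ and $h_2(\xi_0) := g_2 + \xi_0 x_2^{|B|} g_2$, so that $\combine_{(d,d)}(g_1, g_2) = \xi_1 h_1(\xi_0) + \xi_2 h_2(\xi_0)$. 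First, condition on $\xi_0$, and for each value such that the conditional probability over uniform $(\xi_1, \xi_2)$ exceeds $2\err$, apply the contrapositive of~\cref{thm: prox gen iRM with corr} to the pair $(h_1(\xi_0), h_2(\xi_0))$. This produces a set $\mc{G}$ of ``good'' $\xi_0$-values whose density is at least $\poly(d)/q'$, such that for each $\xi_0 \in \mc{G}$ there is an agreement set $A'(\xi_0) \subseteq \Ff_q^2$ with $|A'(\xi_0)|/q^2 \geq 0.99 \cdot 1.04\eps$ and polynomials $H_1(\xi_0), H_2(\xi_0) \in \RM_{q'}[(d,d), \Ff_q^2]$ with $h_i(\xi_0) = H_i(\xi_0)$ on $A'(\xi_0)$ for $i = 1, 2$.

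Next, given two distinct values $\xi_0 \neq \xi_0' \in \mc{G}$, subtracting $h_1(\xi_0) = H_1(\xi_0)$ and $h_1(\xi_0') = H_1(\xi_0')$ on $S := A'(\xi_0) \cap A'(\xi_0')$ expresses both $g_1$ and $x_1^{|A|} g_1$ as specific degree-$(d,d)$ polynomials $\tilde G_1, \tilde G_2$ on $S$; the analogous step for $h_2$ yields $g_2 = \tilde G_3$ and $x_2^{|B|} g_2 = \tilde G_4$ on $S$. The resulting identity $x_1^{|A|} \tilde G_1 = \tilde G_2$ on $S$ is between polynomials of total degree at most $2d + |A|$, and Schwartz--Zippel lifts it to a global identity as soon as $|S|/q^2 > (2d + |A|)/q$. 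This forces $\tilde G_1 \in \RM_{q'}[(d-|A|, d), \Ff_q^2]$ and, symmetrically, $\tilde G_3 \in \RM_{q'}[(d, d-|B|), \Ff_q^2]$. If, in addition, $|S|/q^2 > \eps + |B|/q$, the joint correlated agreement of $(g_1, g_2)$ with these codes on $S$ directly contradicts~\cref{lm: bivariate side condition decomp}.

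The hard part will be producing a sufficiently large common agreement set: naively, two random good $\xi_0$'s share only expected $\sim (1.04\eps)^2$ fractional intersection by Cauchy--Schwarz, which falls short of $\eps$. I would instead work with the union $U := \{p : |\{\xi_0 \in \mc{G} : p \in A'(\xi_0)\}| \geq 2\}$ over all pairs. A counting argument combining the lower bound $\sum_p |\{\xi_0 \in \mc{G} : p \in A'(\xi_0)\}| \geq 0.99 \cdot 1.04\eps \cdot |\mc{G}| \cdot q^2$ with the trivial upper bound $|\{\xi_0 \in \mc{G} : p \in A'(\xi_0)\}| \leq |\mc{G}|$ yields $|U|/q^2 \geq 1.02\eps > \eps + |B|/q$ once $|\mc{G}| \geq \Omega(1/\eps)$, which follows from the assumed failure probability together with the bound $1/\eps \leq \poly(d)$ available in our regime. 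Finally, Schwartz--Zippel uniqueness of degree-$(d,d)$ polynomials across overlapping pairwise intersections glues the locally defined $\tilde G_1^{(\xi_0, \xi_0')}$ into a universal $G_1 \in \RM_{q'}[(d-|A|, d), \Ff_q^2]$ with $g_1 = G_1$ on all of $U$, and likewise a universal $G_3 \in \RM_{q'}[(d, d-|B|), \Ff_q^2]$ with $g_2 = G_3$ on $U$, closing the contradiction.
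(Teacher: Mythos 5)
Your proposal takes a genuinely different route from the paper. The paper's proof is two sentences: it invokes \cref{lm: bivariate side condition decomp} to cap the correlated agreement of $(g_1, g_2)$ at $\eps + |B|/q$, and then cites \cref{lm: combine multiple multivariate}. Your proof instead unrolls the $\combine$ operation by hand: you condition on $\xi_0$, apply \cref{thm: prox gen iRM with corr} to $(h_1(\xi_0), h_2(\xi_0))$ for each good $\xi_0$, and then try to recover correlated agreement of $(g_1, g_2)$ themselves by differencing the resulting relations across pairs $\xi_0 \neq \xi_0'$. That is not what the paper does, though one can see why you looked for a more explicit argument: the cited \cref{lm: combine multiple multivariate} is stated with the hypothesis that some \emph{single} $g_i$ has low individual agreement, while \cref{lm: bivariate side condition decomp} only rules out \emph{correlated} agreement of the pair, so the paper's one-line appeal is itself terse.

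The difficulty you anticipated is real, and your workaround does not close it. The counting argument does give $\mu(U) \geq 1.02\eps$ for the set $U$ of points covered at least twice, but it does not produce a \emph{single} degree-$(d-|A|, d)$ polynomial matching $g_1$ on all of $U$. For each pair $(\xi_0, \xi_0')$ you get a pair-dependent polynomial $\tilde G_1^{(\xi_0,\xi_0')} = \frac{\xi_0 H_1(\xi_0') - \xi_0' H_1(\xi_0)}{\xi_0 - \xi_0'}$, valid only on $A'(\xi_0)\cap A'(\xi_0')$. To glue these into a universal $G_1$, Schwartz--Zippel requires each relevant pairwise intersection to have fractional size roughly $2d/q$; but, as you yourself observed, the intersections are only $\sim \eps^2$ \emph{on average}, and nothing prevents most pairs from having intersection far below $2d/q$ or even empty. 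The same lower bound is needed again to lift $x_1^{|A|}\tilde G_1^{(\xi_0,\xi_0')} = \tilde G_2^{(\xi_0,\xi_0')}$ from the intersection to a global polynomial identity, which is the step that actually forces $\tilde G_1^{(\xi_0,\xi_0')}$ to have individual degree $(d-|A|, d)$. Since you have not located even one pair with a guaranteed large intersection, neither the degree reduction nor the gluing goes through, and the stated contradiction with \cref{lm: bivariate side condition decomp} is not reached. This gluing step is the crux of the argument, not a routine finish, and it is missing.
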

\begin{proof}
    Fix any $g_1: \Ff_q^2 \to \Ff_{q'}$ and $\Fill: \Ff_q \times B \to \Ff_{q'}$. By \cref{lm: bivariate side condition decomp}, $g_1$ and $g_2$ can have correlated agreement at most $\eps + \frac{|B|}{q}$ with degrees $(d-|A|, d)$ and $(d, d-|B|)$ respectively. The result then follows from \cref{lm: combine multiple multivariate}.
\end{proof}

A similar result holds for the total degree Reed-Muller code.

\begin{lemma} \label{lm: total deg bivariate side condition decomp+combine}
    Suppose that $\agr(f, \RM_{q'}[d, \Ff_q^2 \; | \; h]) \leq \eps$ for $\eps \geq 22 \left(\frac{d}{q}\right)^{\tl}$, and let $|B| \leq q^{1/4}$. Fix any $g_1: \Ff_q^2 \to \Ff_{q'}$ and $\Fill: \Ff_{q} \times B \to \Ff_{q'}$, and take $g_2 = \quo_2(f - V_A \cdot g_1 - \tilde{h}, B, \Fill)$ where $\tilde{h}=\hat{h}|_{\mathbb{F}_q^2}$ and $\wh{h}$ is the low-degree extension of $h$ over $\mathbb{F}_{q'}^2$ 
    Then we have that 
    \[
    \Pr\left[\agr(\combine_{d}(g_1, g_2), \RM_{q'}[d, \Ff_q^2]) \geq 1.04\eps \right] \leq \frac{\poly(d)}{q'},
    \]
    where the randomness is over the random choice of proximity generator coefficients in $\combine$.
\end{lemma}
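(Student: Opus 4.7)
The plan is to mirror the proof of \cref{lm: bivariate side condition decomp+combine} essentially verbatim, making two substitutions: replace the individual-degree decomposition argument with its total-degree counterpart, and invoke \cref{lm: combine multiple multivariate total degree} in place of \cref{lm: combine multiple multivariate}. The overall structure is: first argue that any prover-supplied $g_1$ and $\Fill$ produce a pair $(g_1, g_2)$ that lacks correlated agreement at the appropriate proximity level with low total-degree polynomials; then conclude by combining.

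First I would establish a total-degree analogue of \cref{lm: bivariate side condition decomp}: if $\agr(f, \RM_{q'}[d, \Ff_q^2 \;|\; h]) \leq \eps$, then for any $g_1 \colon \Ff_q^2 \to \Ff_{q'}$, any $\Fill \colon \Ff_q \times B \to \Ff_{q'}$, and $g_2 = \Quo_2(f - V_A \cdot g_1 - \tilde h, B, \Fill)$, the pair $(g_1, g_2)$ cannot have $(\eps + |B|/q)$-correlated agreement with total degrees $d - |A|$ and $d - |B|$ respectively. The proof of this claim follows the same pattern: assume for contradiction total-degree polynomials $\wh G_1, \wh G_2$ of degrees $d-|A|, d-|B|$ and a set $U$ of fractional size exceeding $\eps + |B|/q$ on which the agreement holds. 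Then
\[
\wh G(x,y) = \wh V_A(x) \cdot \wh G_1(x,y) + \wh V_B(y) \cdot \wh G_2(x,y) + \wh h(x,y)
\]
has total degree at most $d$ (since $\deg \wh V_A = |A|$, $\deg \wh V_B = |B|$), matches $h$ on $A \times B$, and agrees with $f$ on $U \setminus (\Ff_q \times B)$, a set of fractional size greater than $\eps$, contradicting the assumed distance of $f$ to $\RM_{q'}[d, \Ff_q^2 \;|\; h]$.

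Given this claim, I would apply \cref{lm: combine multiple multivariate total degree} with $\ell = 2$, dimension $m = 2$, target degree $d' = d$, and agreement parameter $\eps' = \eps + |B|/q$. The hypotheses are satisfied because $\eps \geq 22(d/q)^{\tl}$ and $|B|/q$ is negligible, so $\eps' \geq 20 \cdot 1.01^2 (d/q)^{\tl}$. Since $(g_1, g_2)$ lack $\eps'$-correlated agreement with any pair of total-degree polynomials of degrees at most $d$ (in particular with degrees $d - |A|$ and $d - |B|$), the conclusion of that lemma yields that $\combine_d(g_1, g_2)$ is, except with probability $2 \cdot \err$, $1.01^2 \eps'$-far from $\RM_{q'}[d, \Ff_q^2]$. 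A short calculation shows $1.01^2(\eps + |B|/q) \leq 1.04\eps$ under the hypothesis $|B| \leq q^{1/4}$ together with $\eps \geq 22(d/q)^{\tl}$ (since $|B|/q = O(q^{-3/4})$ is tiny compared to $(d/q)^{\tl}$), which gives the stated $1.04\eps$ bound and the claimed $\poly(d)/q'$ error.

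The main obstacle, such as it is, is purely bookkeeping: verifying that the constants line up correctly, namely that the slack $|B|/q$ absorbed into $\eps'$ and the multiplicative $1.01^m$ blow-up from the combining lemma together fit within the advertised constants $1.04$ and $22$. There is no genuinely new idea beyond those already used in the individual-degree version; the only conceptual change is that we use total-degree Schwartz--Zippel and the total-degree proximity generator (\cref{thm: prox gen RM without corr}), which is what underlies \cref{lm: combine multiple multivariate total degree}.
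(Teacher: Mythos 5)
Your proposal is correct and matches the paper exactly: the paper explicitly omits this proof, stating it is ``nearly identical to the proof of \cref{lm: bivariate side condition decomp+combine},'' and your proposal carries out precisely that plan, establishing the total-degree analogue of \cref{lm: bivariate side condition decomp} via the same decomposition $\wh{G} = \wh{V}_A \wh{G}_1 + \wh{V}_B \wh{G}_2 + \wh{h}$ (with total-degree bookkeeping in place of individual-degree) and then invoking \cref{lm: combine multiple multivariate total degree} in place of \cref{lm: combine multiple multivariate}. The constants check out as you compute, since $|B|/q \leq q^{-3/4}$ is negligible against $\eps \geq 22(d/q)^{\tl}$.
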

\begin{proof}
    The proof is nearly identical to the proof of \cref{lm: bivariate side condition decomp+combine}, and is hence omitted.
\end{proof}

\subsection{Trivariate Quotienting}
Likewise, quotienting can be extended to three variables in the natural way.
\begin{definition}
    Given an oracle function $f: \Ff_q^3 \to \Ff_{q'}$, a set $C \subseteq \Ff_{q'}$, and a function $\Fill: \Ff_q \times \Ff_q \times C \to \Ff_{q'}$, define the following function from $\Ff_q^3 \to \Ff_{q'}$
    \begin{equation} \label{eq: 3var multi quotient def}
\quo_3(f, B, \Fill)(x,y,z) = \begin{cases}
    f(x,y,z) \quad \text{if } z\notin C \\
    \Fill(x,y,z) \quad \text{if } z \in C.
\end{cases}
\end{equation}
\end{definition}

If a function $\wh{f} \in \Ff_{q'}[x,y,z]$ agrees with some side condition $h: A \times B \times C \to \Ff_{q'}$, then by~\cref{lm: multivariate side condition vanish decomp} there exist functions $\wh{g}_1, \wh{g}_2, \wh{g}_3 \in \Ff_{q'}[x,y,z]$ such that
\[
\wh{f} = \wh{V}_{A, 1} \cdot \wh{g}_1 + \wh{V}_{B, 2}\cdot \wh{g}_2 + V_{C,3}\cdot \wh{g}_3 + \wh{h}. 
\]
Using this decomposition, one can similarly define a combine function as follows and obtain an analogue of \cref{lm: total deg bivariate side condition decomp+combine} for the total degree trivariate Reed-Muller code, which we state below (there are analogs for for the individual degree case, but we do not require them).

\begin{lemma} \label{lm: total deg trivariate side condition decomp+combine}
    Suppose that $\agr(f, \RM_{q'}[d, \Ff_q^3 \; | \; h]) \leq \eps$ for $\eps \geq 1.02 \left(\frac{d}{q}\right)^{\tl}$, and let $|C| \leq q^{1/4}$ and set $\tilde{h} = \wh{h}|_{\mathbb{F}_q^3}$. Fix any $g_1, g_2: \Ff_q^2 \to \Ff_{q'}$ and $\Fill: \Ff_{q} \times \Ff_q \times C \to \Ff_{q'}$, and take $g_2 = \quo_3(f - V_{A,1} \cdot g_1 - V_{B,2} \cdot g_2 - \tilde{h}, C, \Fill)$. Then we have
    \[
    \Pr\left[\agr(\combine_{d}(g_1, g_2, g_3), \RM_{q'}[d, \Ff_q^2]) \geq 1.04\eps \right] \leq \frac{\poly(d)}{q'},
    \]
    where the randomness is over the random choice of proximity generator coefficients in $\combine$.
\end{lemma}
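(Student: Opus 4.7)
The plan is to follow the template of the bivariate analogue \cref{lm: total deg bivariate side condition decomp+combine}, splitting the argument into a structural claim about the quotient $g_3$ (the trivariate analogue of \cref{lm: bivariate side condition decomp}) and a black-box appeal to the total-degree combining lemma \cref{lm: combine multiple multivariate total degree}. Concretely, I will first show that for any choice of $g_1, g_2 : \Ff_q^3 \to \Ff_{q'}$ and $\Fill$, the three functions $g_1, g_2$, and $g_3 = \quo_3(f - V_{A,1}g_1 - V_{B,2}g_2 - \tilde{h}, C, \Fill)$ cannot enjoy $(\eps + \tfrac{|C|}{q})$-correlated agreement with total-degree $d - |A|$, $d - |B|$, $d - |C|$ polynomials respectively. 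Then \cref{lm: combine multiple multivariate total degree}, applied with $\ell = 3$, $m = 3$, degree parameters $d - |A|, d - |B|, d - |C|$ and target $d' = d$, gives the claimed bound on $\combine_d(g_1, g_2, g_3)$; the factor $1.01^3 < 1.04$ leaves the required slack in the agreement parameter, and $|C| \leq q^{1/4}$ makes $|C|/q$ negligible compared to $\eps$.

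For the structural claim, I argue by contradiction, exactly paralleling the bivariate proof. Suppose there exist polynomials $\wh{G}_1, \wh{G}_2, \wh{G}_3$ of total degrees $d - |A|, d - |B|, d - |C|$ and a common set $U \subseteq \Ff_q^3$ with $\mu(U) \geq \eps + \tfrac{|C|}{q}$ on which $g_i|_U = \wh{G}_i|_U$ for each $i$. Form the candidate degree-$d$ polynomial
\[
\wh{G} = \wh{V}_{A,1}\wh{G}_1 + \wh{V}_{B,2}\wh{G}_2 + \wh{V}_{C,3}\wh{G}_3 + \wh{h}.
\]
Its total degree is at most $d$: each vanishing-factor term contributes degree exactly $d$, and $\wh{h}$ has individual degrees $(|A|-1, |B|-1, |C|-1)$, hence total degree $\leq |A|+|B|+|C|-3 \ll d$ in the parameter regime (since $|C| \leq q^{1/4}$ and similarly for $|A|, |B|$). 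On $A \times B \times C$ the vanishing factors kill the $\wh{G}_i$ contributions, so $\wh{G}$ interpolates $h$ there. Finally, on $U' := U \setminus (\Ff_q \times \Ff_q \times C)$, the definition of $g_3$ as a $\quo_3$ gives the identity $f = V_{A,1}g_1 + V_{B,2}g_2 + V_{C,3}g_3 + \tilde{h}$, and on $U'$ the correlated agreement lets me replace each $g_i$ by $\wh{G}_i|_{\Ff_q^3}$, so $f|_{U'} = \wh{G}|_{U'}$. Since $\mu(U') \geq \eps + \tfrac{|C|}{q} - \tfrac{|C|}{q} = \eps$, the polynomial $\wh{G}$ witnesses $\agr(f, \RM_{q'}[d, \Ff_q^3 \mid h]) > \eps$, contradicting the hypothesis.

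The main obstacle is bookkeeping in the contradiction step rather than any new ideas: one has to track the total-degree accounting for the trivariate reconstruction $\wh{G}$, verify that $\wh{h}$'s total degree does not inflate past $d$, and correctly excise the ``bad'' slab $\Ff_q \times \Ff_q \times C$ (whose fractional size is exactly $|C|/q$) where $g_3$ is defined by $\Fill$ rather than the quotient. Once the structural claim is in place, the conclusion is an immediate application of \cref{lm: combine multiple multivariate total degree}; no Reed-Muller-specific analysis is required beyond what the cited combining lemma and the vanishing decomposition in \cref{lm: multivariate side condition vanish decomp} already provide.
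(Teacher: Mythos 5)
Your proposal reconstructs the proof that the paper elides, and it does so by exactly the route the paper intends: first a trivariate analogue of the structural decomposition lemma \cref{lm: bivariate side condition decomp} (no $(\eps + |C|/q)$-correlated agreement of $g_1, g_2, g_3$ with total degrees $d-|A|, d-|B|, d-|C|$), obtained from $\wh{G} = \wh{V}_{A,1}\wh{G}_1 + \wh{V}_{B,2}\wh{G}_2 + \wh{V}_{C,3}\wh{G}_3 + \wh{h}$ and the slab excision $U' = U \setminus (\Ff_q \times \Ff_q \times C)$; and then an appeal to \cref{lm: combine multiple multivariate total degree}. You also implicitly fix several typos in the paper's own statement (the domain of $g_1, g_2$ should be $\Ff_q^3$, the quotient should be named $g_3$, and the target code should be $\RM_{q'}[d, \Ff_q^3]$). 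Two small bookkeeping caveats, both shared with the paper's bivariate proof and so not a defect in your reconstruction: the slab-excision count gives $\mu(U') \geq \eps$ rather than a strict inequality, so either the correlated-agreement threshold or the statement of the structural claim needs a strict margin; and the combining lemma as literally stated assumes \emph{one} $g_i$ is far from $\RM_{q'}[d_i, \Ff_q^3]$, which is a stronger premise than the lack of correlated agreement that the structural claim delivers — the intended reading (cf.\ the remark following \cref{lm: bivariate side condition decomp}) is that the combining lemma should accept ``no correlated agreement'' directly, and your invocation is consistent with that intended reading.
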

\begin{proof}
The proof is very similar to the proof of~\cref{lm: bivariate side condition decomp+combine}, and we omit the details.
\end{proof}
\section{Poly-IOPs}\label{sec:poly-IOP}
In this section we present the Poly-IOP model, which is a version of an IOP with a promise for each prover message. 
Following ideas from~\cite{acy} we also give a transformation that turns a Poly-IOP into a legitimate IOP by using low degree IOPPs.

\subsection{The Poly-IOP model}
A Poly-IOP is an idealized model of IOPs wherein the prover's messages each round are promised to be from some prespecified error correcting code. 

\begin{definition}
    Fix a field $\Ff_{q'}$. A Poly-IOP is an IOP in which each round 
    the verifier and prover interact, and at the end the verifier decides to accept or reject.

For each round of interaction, $i$, there is a list of families $\mc{C}^{(i)}_1, \ldots, \mc{C}^{(i)}_{k_i}$, where for each $i$ and $j$ we have that $\mc{C}^{(i)}_j \subseteq \Ff_{q'}[x_1, \ldots, x_{m_{i,j}}]$. During round $i$, the prover sends polynomials, $\wh{F}^{(i)}_j$ for $j=1,\ldots, k_i$ such that $\wh{F}^{(i)}_j \in \mc{C}^{(i)}_{j}, \forall j \in [k_i]$. We consider the following complexity parameters of Poly-IOPs:
\begin{itemize}
    \item \textbf{Round complexity:} the number of rounds of interaction during the interaction phase.
    \item \textbf{Input query complexity:} the number of queries to the input oracle.
    \item \textbf{Proof query complexity:} for each polynomial sent by the prover, $\wh{F}^{(i)}_j$, we record the number of queries, $\Q^{(i)}_j$, made to $\wh{F}^{(i)}_j$, as well as the queried points, $z^{(i)}_{j,1}, \ldots, z^{(i)}_{j, \Q^{(i)}_j}$.
    \item \textbf{Completeness:} the maximum probability, over all prover strategies, that the verifier accepts in the case that the input is in the language.
    \item \textbf{Round-by-round soundness:} defined the same way as in an IOP.
\end{itemize}
\end{definition}
For our purposes, $\mc{C}^{(i)}_j$ will always be either a degree $d$ Reed-Solomon Code, a degree $d$ Reed-Muller code, or an individual degree $(d,d)$ Reed-Muller code, for some degree parameter $d$. Thanks to the good distance of these codes, the soundness of Poly-IOPs is often much easier to analyze. At a high level, this is because if the prover sends a low degree polynomial that is different than the one intended, then the verifier can easily detect this by making a few random queries to the polynomial's evaluation. 

Having said that, while Poly-IOPs themselves are not legitimate IOPs, they can be compiled into legitimate IOPs. To facilitate that we need
to take a few points into consideration:
\begin{itemize}
    \item Length: the Poly-IOP polynomials are typically over $\Ff_{q'}$ with $q' \gg n$, so to construct linear size IOPs, we cannot afford to send the evaluation of even a single polynomial.
    \item Removing the promise: In an IOP, there is no way to force the prover to send functions from some fixed family, and we must find another way to (effectively) enforce this. To resolve this, we use an IOPP testing proximity to the families $\mc{C}^{(i)}_1, \ldots, \mc{C}^{(i)}_{k_i}$ appearing in the Poly-IOP. 
\end{itemize}
In the next section we describe how to resolve these issues via a process called compilation.

\subsection{Compiling Poly-IOPs to IOPs using a Batched Code Membership IOPP}
In this section we describe our compilation result,~\cref{thm: poly iop transformation}. Stating it requires some set up, which we now give.

\paragraph{Poly-IOP for a language: $\gp(\mc{L})$.} Suppose that $\gp(\mc{L})$ is Poly-IOP for some language $\mathcal{L}$ over the field $\Ff_{q'}$ with the following guarantees:
\begin{itemize}
    \item Input: an instance $f$, which one should think of as an oracle function.
    \item Completeness: if $f \in \mc{L}$, then the honest prover makes the verifier accept with probability $1.$
    \item Round-by-round soundness: $2^{-\lambda}$.
    \item Initial State: the initial state is doomed if and only if $f \notin \mc{L}$.
     \item Round complexity: $\rd_{\poly}$.
    \item Input query complexity: $\iQ_{\poly}$
    \item Proof query complexity: for each $i \in [\rd_{\poly}]$ suppose there are $k_i$ polynomials sent in round $i$, $\{ \wh{F}^{(i)}_j\}_{j \in [k_i]}$, which are supposedly from the codes $\{\mc{C}^{(i)}_j\}_{j \in [k_i]}$. Suppose the verifier
    queries the polynomial $\wh{F}^{(i)}_j$ at the $\Q_{i,j}$ points $z^{(i)}_{j,1}, \ldots, z^{(i)}_{j, \Q_{i,j}} \in \Ff_{q'}^{m_i}$.
\end{itemize}

\paragraph{Batched IOPP: $\batch(\{\eps_{i,j}\}_{i \in [\rd], j \in [k_i]})$.} Suppose that $\batch$ is a batched code-membership IOPP that is tailored for $\gp$. By that, we mean that the codes that $\batch$ accepts and rejects depend on the $\mc{C}^{(i)}_j$'s appearing in $\gp$ as elaborated below. The codes that we will be testing are evaluations of some family of polynomials, $\mc{C} \subseteq \Ff_{q'}[x_1,\ldots, x_m]$, over a subset of the domain, $U \subseteq \Ff_{q'}^m$. To this end, we define the code
\[
\Eval\left[\mc{C}, U\right] = \{\wh{f}|_U \; | \; \wh{f} \in \mc{C} \}.
\]
For $A \subseteq \Ff_{q'}^m$ and a side condition function $h: A \to \Ff_{q'}$, we also define a version with side conditions
\[
\Eval\left[\mc{C}, U \; | \; h\right] = \{\wh{f}|_U \; | \; \wh{f} \in \mc{C}, \wh{f}|_A = h \}.
\]
For our purposes, $\mc{C}$ is always a Reed-Solomon, total degree Reed-Muller, or individual degree Reed-Muller code.  The IOPP $\batch$ has the following guarantees:
\begin{itemize}
    \item Input: For each $i \in [\rd], j \in [k_i]$:
    \begin{itemize}
         \item a proximity parameter $\eps_{i,j} \geq 2\sqrt{\delta_{i,j}}$, where 
         \[
        \delta_{i,j} := \max_{\wh{F}, \wh{G} \in \mc{C}^{(i)}_j} \Pr_{x \in H^{(i)}_j}[\wh{F}(x) = \wh{G}(x)],
         \]
        \item a function $F^{(i)}_j: H^{(i)}_j \to \Ff_{q'}$, where $H^{(i)}_j \subseteq \Ff_{q'}^{m_i},$
        \item a side condition function, $h_{i,j}: \Side_{i,j} \to \Ff_{q'}$, which maps $\Side_{i,j} = \{z^{(i)}_{j,1}, \ldots, z^{(i)}_{j, \Q_{i,j}} \}$ to values of the queries to $\wh{F}^{(i)}_j$ in the $\gp$.
         \item The family $\mc{C}^{(i)}_j$, from $\gp$.
    \end{itemize} 
    \item Completeness: if for all $i \in [\rd_{\poly}], j \in [k_i]$, we have $F^{(i)}_j \in \Eval\left[\mc{C}^{(i)}_j, H^{(i)}_j \; | \; h\right]$, then the honest prover makes the verifier accept with probability $1$.
    \item Round-by-round soundness: $2^{-\lambda}$.
    \item Initial State: the initial state is doomed if and only if for at least one $i \in [\rd]$ we have
    \[
    \agr\left(F^{(i)}_j, \Eval\left[\mc{C}^{(i)}_j, H^{(i)}_j \; | \; h\right]\right) \leq \eps_{i,j}.
    \]
    \item Round complexity: $\rd_{\bat}$
    \item Input query complexity: $\iQ_{\bat}$.
    \item Proof query complexity: $\pQ_{\bat}$.
    \item Length: $\Len_{\bat}$.
\end{itemize}
With this set up in hand, the compilation theorem now reads as follows:
\begin{thm} \label{thm: poly iop transformation}
Suppose we have $\gp(\mc{L})$ and $\batch(\{\eps_{i,j}\}_{i \in [\rd], j \in [k_i]})$ as described above. Let 
\[
\hat{\delta}_{i,j} := \max_{\wh{F}, \wh{G} \in \mc{C}^{(i)}_j} \Pr_{x \in \Ff_{q'}^{m_{i,j}}}[\wh{F}(x) = \wh{G}(x)] \quad \text{and} \quad \delta_{i,j} := \max_{\wh{F}, \wh{G} \in \mc{C}^{(i)}_j} \Pr_{x \in H^{(i)}_j}[\wh{F}(x) = \wh{G}(x)],
\]
and suppose for all $i \in [\rd], j \in [k_i]$ we have
\begin{equation} \label{eq: compile requirements}
\eps_{i,j} \geq 2 \sqrt{\delta_{i,j}} \quad \text{and} \quad \frac{\wh{\delta}_{i,j}}{\eps_{i,j}} \leq 2^{-\lambda + 5\log(k_i)}. 
\end{equation}
Then there is an IOP for the language $\mc{L}$ with the following parameters
\begin{itemize}
    \item Completeness: if $f \in \mc{L}$ then the honest prover makes the verifier accept with probability $1$.
    \item Round-by-round soundness: $2^{-\lambda}$.
    \item Initial State: the initial state is doomed if and only if $f \notin \mc{L}$.
    \item Round complexity: $\rd_{\poly} + \rd_{\bat} + O(1)$.
    \item Input query complexity: $\iQ_{\poly}$.
    \item Proof query complexity: $\iQ_{\bat} + \pQ_{\bat} + 
    \sum_{i = 1}^{\rd_{\poly}} k_i+\sum_{i=1}^{\rd_{\poly}} \sum_{j=1}^{k_i} \Q_{i,j}$.
    \item Length: $\Len_{\bat} + \sum_{i=1}^{\rd_{\poly}}\sum_{j = 1}^{k_i} |H^{(i)}_j|$.
\end{itemize}
We refer to this IOP as $\compile(\gp(\mc{L}) ,\batch(\{\eps_{i,j}\}_{i \in [\rd], j \in [k_i]}))$.
\end{thm}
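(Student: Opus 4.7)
The plan is to compile $\gp(\mc{L})$ by letting the prover send evaluations $F^{(i)}_j: H^{(i)}_j \to \Ff_{q'}$ in place of the polynomial oracles $\wh{F}^{(i)}_j$, forwarding every Poly-IOP query $z \in \Ff_{q'}^{m_{i,j}}$ to the prover (who responds with a value $\beta$), and running $\batch$ at the end to certify that each $F^{(i)}_j$ is close to an admissible codeword. Concretely, in round $i$ the prover sends $F^{(i)}_1,\dots,F^{(i)}_{k_i}$; the verifier then bundles into its outgoing message a uniformly random \emph{anchor point} $z^*_{i,j} \in \Ff_{q'}^{m_{i,j}}$ for each function, together with any Poly-IOP query points triggered this round and the Poly-IOP's own message $\vs_i$, and the prover returns anchor values $\beta^*_{i,j}$ and query values $\beta$ with its next message. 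All $(z,\beta)$ and $(z^*_{i,j},\beta^*_{i,j})$ pairs are accumulated into side conditions $h_{i,j}: \Side_{i,j} \to \Ff_{q'}$ that are fed to $\batch$ after the Poly-IOP phase ends. The compiled verifier accepts iff the Poly-IOP simulator (run with the $\beta$ values as query answers) accepts and $\batch$ accepts. Completeness is immediate by taking $F^{(i)}_j = \wh{F}^{(i)}_j|_{H^{(i)}_j}$ and $\beta = \wh{F}^{(i)}_j(z)$.

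\textbf{State function.} I would define the state via a list-decoding surrogate. At a partial transcript $\tau$ let $\Lc(\tau)$ denote the set of tuples $(\wh{F}^{(i')}_j)$ with each $\wh{F}^{(i')}_j \in \mc{C}^{(i')}_j$, $\eps_{i',j}$-close to $F^{(i')}_j$ on $H^{(i')}_j$, and matching every query/anchor value that the prover has already committed to for $\wh{F}^{(i')}_j$. Declare $\tau$ \emph{not doomed} iff there exists a tuple in $\Lc(\tau)$ whose induced Poly-IOP transcript state $\st_{\poly}(f,\wh{F}^{(1)}_1,\vs_1,\dots)$ is not doomed, and when $\tau$ lies in the $\batch$ phase additionally $\batch$'s own state on the current transcript is not doomed. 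At the empty transcript $\Lc = \{()\}$, so the initial state reduces to the Poly-IOP's initial state and is doomed iff $f\notin\mc{L}$; and at the end, the compiled verifier rejects whenever the state is doomed because an accepting $\batch$ forces $\Lc \neq \emptyset$ (by the $\batch$ doomedness condition) and the Poly-IOP simulator's decision is a function of $(f,\vs_\bullet,\beta_\bullet)$ that coincides with the Poly-IOP verifier's decision on any tuple in $\Lc$.

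\textbf{Round-by-round soundness.} The interesting case is a Poly-IOP round $i$. \cref{thm: list decoding} with $C = \eps_{i,j}, \delta = \delta_{i,j}$ (valid since $\eps_{i,j} \geq 2\sqrt{\delta_{i,j}}$) bounds the list of codewords in $\mc{C}^{(i)}_j$ that are $\eps_{i,j}$-close to $F^{(i)}_j|_{H^{(i)}_j}$ by $L_{i,j} = O(1/\eps_{i,j})$. Since any two distinct codewords in $\mc{C}^{(i)}_j$ agree on at most a $\hat{\delta}_{i,j}$-fraction of $\Ff_{q'}^{m_{i,j}}$, the uniform anchor $z^*_{i,j}$ separates any two of them except with probability $\hat{\delta}_{i,j}$; union-bounding over $\binom{L_{i,j}}{2}$ pairs and over the $k_i$ functions in round $i$, the hypothesis $\hat{\delta}_{i,j}/\eps_{i,j} \leq 2^{-\lambda + 5\log k_i}$ makes the probability that the anchor fails to collapse the new-round slice of $\Lc(\tau)$ to a singleton at most $2^{-\lambda-1}$. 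Conditioned on the collapse, the Poly-IOP's round-by-round soundness bounds the probability that $\vs_i$ turns the now-unique Poly-IOP state from doomed to not doomed by $2^{-\lambda-1}$, yielding total per-round soundness $2^{-\lambda}$. Deterministic query/anchor-response rounds preserve doomedness trivially, and rounds inside $\batch$ inherit $2^{-\lambda}$ round-by-round soundness directly from $\batch$.

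\textbf{Main obstacle.} The heart of the argument is precisely this anchor-collapse step: absent an anchor, $\Lc(\tau)$ could contain $\prod_j L_{i,j}$ distinct tuples with conflicting Poly-IOP behaviors, and a naive union bound of the Poly-IOP soundness over them blows past $2^{-\lambda}$; the quantitative hypothesis $\hat{\delta}_{i,j}/\eps_{i,j} \leq 2^{-\lambda + 5\log k_i}$ is calibrated precisely so that one random anchor reduces the list to a singleton within the target error budget. The remaining bookkeeping---rounds $\rd_{\poly}+\rd_{\bat}+O(1)$ (the $O(1)$ absorbs the final round of anchor responses and the transition into $\batch$), input queries $\iQ_{\poly}$, proof queries $\iQ_{\bat}+\pQ_{\bat}+\sum_{i}k_i + \sum_{i,j}\Q_{i,j}$ (the $\sum k_i$ reads anchor responses, the $\sum\Q_{i,j}$ reads Poly-IOP query responses), and length $\Len_{\bat}+\sum_{i,j}|H^{(i)}_j|$ for the new prover messages---follows by direct tallying of the construction.
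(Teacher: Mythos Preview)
Your overall approach—evaluations over $H^{(i)}_j$, random anchors to collapse the list via \cref{thm: list decoding}, side conditions fed to $\batch$—is exactly the paper's. The gap is in the ordering of messages and the resulting round-by-round analysis. In your construction the verifier sends the anchor point $z^*_{i,j}$ and the Poly-IOP message $\vs_i$ in the \emph{same} outgoing message, and the prover's anchor value $\beta^*_{i,j}$ arrives only afterwards. But it is the \emph{value} $\beta^*_{i,j}$, not the point $z^*_{i,j}$, that collapses the list: ``$z^*_{i,j}$ separates every pair'' only guarantees that whatever $\beta^*$ the prover later supplies will leave at most one survivor, not that the list is a singleton now. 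So at the moment $\vs_i$ is drawn, the round-$i$ slice of $\Lc(\tau)$ is still the full product $\prod_j \List_{\eps_{i,j}}(F^{(i)}_j)$, and there is no ``now-unique Poly-IOP state'' to which you can apply the Poly-IOP's round-by-round bound; you would need a union bound over $\prod_j O(1/\eps_{i,j})$ tuples, which exceeds $2^{-\lambda}$. Worse, since the prover sees $\vs_i$ before committing $\beta^*_{i,j}$, she can adversarially select which list element to collapse to \emph{after} seeing the verifier's Poly-IOP randomness, so no alternative state function rescues the bundled protocol either.

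The paper fixes this by splitting each Poly-IOP round into three compiled interactions: (i) prover sends $F^{(i)}_j$, verifier sends only the anchor point $z^{(i)}_j$; (ii) prover commits the anchor value $\zeta^{(i)}_j$, and \emph{only then} does the verifier send the Poly-IOP queries together with $\vs_i$; (iii) prover answers the queries. With this ordering, when $\vs_i$ is drawn each $\mc{F}^{(i)}_j$ already has size at most one (the state function after interaction (i) records precisely that $z^{(i)}_j$ separates the list), and the Poly-IOP round-by-round soundness applies to the single surviving tuple without any union bound. Your state function and the rest of your argument go through essentially verbatim once you adopt this three-interaction structure.
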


\begin{remark}
    For all of our purposes in this paper, one can replace the proof query complexity in \cref{thm: poly iop transformation} with $O(\iQ_{\batch} + \pQ_{\batch})$. The reason is that the queries corresponding to the $\sum_{i = 1}^{\rd_{\poly}} k_i+\sum_{i=1}^{\rd_{\poly}} \sum_{j=1}^{k_i} \Q_{i,j}$ term above end up each leading to at least one input or proof query in the batched IOPP.
\end{remark}

\paragraph{High level overview of the compiled IOP:} we now describe the IOP in~\cref{thm: poly iop transformation}, which we call $\compile$ in short. At a high level $\compile$ has two main components: a Poly-IOP simulation phase, and a proximity test phase.

During the Poly-IOP simulation phase the prover and the verifier simulate $\gp$, except that instead of sending the entire polynomial $\wh{F}^{(i)}_j$ (which would be too large), the prover only sends its evaluation over some smaller domain $H^{(i)}_j$. The prover still answers queries over $\Ff_{q'}^{m_{i,j}}$, however. One key addition the so called ``anchoring'' trick: as soon as the prover sends such an evaluation, the verifier performs an out-of-domain sample, asking for the supposed value of $\wh{F}^{(i)}_j(z^{(i)})$, for a randomly chosen $z^{(i)} \in \Ff_{q'}^{m_i}$.  
The reason for that is that, with high probability, this out-of-domain sample anchors the prover, in the sense there will only be one member of $\mc{C}^{(i)}_j$ which has agreement at least $\eps_{i,j}$ with $F^{(i)}_j$ that additionally agrees with the prover's response to the out-of-domain sample. We denote this function by $\wh{G}^{(i)}_j$. 

From this point on, the prover has two options: they can either answer all of the queries in $\gp$ to $\wh{F}^{(i)}_j$ according to $\wh{G}^{(i)}_j$, or they can deviate from $\wh{G}^{(i)}_j$ (i.e.\ provide at least one answer that is inconsistent with $\wh{G}^{(i)}_j$). In the former case, note that the verifier will reject by the soundness of $\gp$ and the round-by-round soundness of the compiled IOP follows from that of $\gp$.

Now suppose it is the latter case that occurs. This is where the second phase comes into play. In that case, note that the oracle $F^{(i)}_j$ does not have $\eps_{i,j}$ agreement with \emph{any} member of $\mc{C}^{(i)}_j$ which also agrees with the side condition $h_{i,j}$. This is because the side conditions include both the out-of-domain sample and the point where the prover deviated from $\wh{G}^{(i)}_j$. In this case $\batch(\{\eps_{i,j}\}_{i \in [\rd], j \in [k_i]})$ will reject and the round-by-round soundness of the compiled IOP then follows from that of $\batch$.

\paragraph{Formal description of the compiled IOP:} formally, the IOP $\compile$ proceeds as follows:
\paragraph{Poly IOP Compilation: $\compile(\gp, \batch(\{\eps_{i,j}\}_{i \in [\rd], j \in [k_i]}))$}
\begin{enumerate}
    \item \textbf{Poly-IOP Simulation Phase:}
    \begin{enumerate} 
        \item For $i \in [\rd]$:
        \begin{enumerate}
            \item \textbf{P: } The prover sends the oracle functions $F^{(i)}_j: H^{(i)}_j \to \Ff_{q'}$, for $j \in [k_i]$. In the honest case, each $F^{(i)}_j$ is the evaluation of the polynomial $\wh{F}^{(i)}_j \in \mc{C}^{(i)}_j$ which the honest prover in $\gp$ would have sent.
            \item \textbf{V: } The verifier chooses points $z^{(i)}_{j} \in \Ff^{m_{i,j}}_{q'}$  uniformly at random for each $j \in [k_i]$ and sends all of these points to the prover.
            \item \textbf{P: } The prover sends $\zeta^{(i)}_1, \ldots, \zeta^{(i)}_{k_i} \in \Ff_{q'}$. In the honest case, these values agree with the polynomials that honest prover in $\gp$ would have sent:
            \[
            \zeta^{(i)}_j = \wh{F}^{(i)}_j\left(z^{(i)}_j\right), \; \forall j \in [k_i].
            \]
            \item \textbf{V: } The verifier generates the set of queries that the verifier in $\gp$ would have made during round $i$ of $\gp$. They do so using the same randomness as the verifier from $\gp$. For each $i' \leq i$ and each $j \in [k_{i'}]$, the verifier sends the prover each point that the verifier from $\gp$ would have queried $\wh{F}^{(i')}_{ j}$ during round $i$ of $\gp$. The verifier also sends the message that the verifier from $\gp$ would have sent during round $i$ of $\gp$.
            \item \textbf{P: } The prover sends a field element of $\Ff_{q'}$ in response to each query.
            \item Both parties proceed to the next round of the Poly-IOP.
        \end{enumerate}
        
    \end{enumerate}

    \item \textbf{Decision and Test Phase:}
     \begin{enumerate}
     \item \textbf{V:} The verifier simulates the verifier from $\gp$ and rejects if the verifier from $\gp$ would have. If this is the case, then the protocol terminates.  Otherwise, both parties proceed to the next step.

    During the simulation, for each query to $\wh{F}^{(i)}_j$ made by the verifier from $\gp$, the current verifier acts as if the answer is what the prover provided during step iii above. Using these answers, the current verifier can then simulate the decision of the verifier from $\gp$. 
        \item For each $i \in [\rd], j \in [k_i]$ let $\Side_{i,j} \subseteq \Ff_{q'}^m$ consist of the points where the verifier queries the $\wh{F}_{i}$ during step iv above, as well as the point $z^{(i)}_j$. Let 
        \[
        h_{i,j}: \Side_{i,j} \to \Ff_{q'}
        \]
        be the map consisting of the prover's responses given in step v of the Poly-IOP interaction phase.
        \item \textbf{P + V:} Both parties run $\batch(\{\eps_{i,j}\}_{i \in [\rd], j \in [k_i]})$ with the following inputs for each $i \in [\rd], j \in [k_i]$,
        \begin{itemize}
            \item Functions $F^{(i)}_j$,
            \item Side condition functions $h_{i,j}$,
            \item Function Family $\mc{C}^{(i)}_j$.
        \end{itemize}
    \end{enumerate}
\end{enumerate}

\subsection{Analysis of $\compile$: the proof of \cref{thm: poly iop transformation}}
We analyze the parameters of $\compile$ below. The analysis of the round-by-round soundness is the most involved part, and is deferred to~\cref{sec:rbr_compile}. The round complexity, input query complexity, proof query complexity, and length are easy to verify, and we next argue about the completeness.

In the completeness case, there is a strategy for the honest prover in $\gp$ which will make the verifier in $\gp$ accept with probability $1$. Let $\wh{F}^{(i)}_{j} \in \mc{C}^{(i)}_j$ be the polynomials that the honest prover in $\gp$ sends. The honest prover in $\compile$ 
will first simulate the honest prover in $\gp$ during the Poly-IOP Simulation Phase. Specifically, they send $F^{(i)}_j = \wh{F}^{(i)}_j|_{H^{(i)}_j}$ agreeing with $\wh{F}^{(i)}_j$ for each $i,j$ and they answer all queries in steps ii and iv according to the polynomials $\wh{F}^{(i)}_j$. By the completeness of $\gp$, the verifier in $\compile$ will not reject. As for the ``Test Phase'', we have $F^{(i)}_j \in \Eval[\mc{C}^{(i)}_j, H^{(i)}_j \; | \; h_{i,j}]$ for all $i,j$, and by the completeness case of $\batch(\{\eps_{i,j}\}_{i \in [\rd], j \in [k_i]})$ the verifier always accepts.

\subsection{Round-By-Round Soundness for \cref{thm: poly iop transformation}}\label{sec:rbr_compile}
To show the round-by-round soundness for $\compile$ we will go over each round of interaction and define the doomed states.

\subsubsection{The Poly-IOP Interaction Phase}
We fix a round $i$. For each $i' \leq i, j' \in [k_{i'}]$, let $h^{(i)}_{i', j'}$ be side condition function for the polynomial $\wh{F}^{(i')}_{j'}$ after round $i$. That is, the domain of $h_{i', j'}$ consists of all of the points where the verifier queries $\wh{F}^{(i')}_{j'}$ and $h^{(i)}_{i', j'}$  maps these points to the answers provided by the prover (during step v of the Poly-IOP Simulation Phase). Let $\mc{F}^{(i')}_{j'}$ consist of all $\wh{F} \in \mc{C}^{(i')}_{j'}$ that satisfy 
\begin{itemize}
    \item $\agr\left(F^{(i')}_{j'}, \wh{F}|_{H^{(i')}_{j'}}\right) \geq \eps_{i',j'}$.
    \item $\wh{F}$ agrees with $h^{(i)}_{i', j'}$.
\end{itemize}

During round $i$ of the Poly-IOP interaction phase, there are three rounds of interaction. We will go through each of them, define the state function afterwards, and show soundness for that round.

\paragraph{Interaction 1:}
\begin{itemize}
\item \textbf{P: } The prover send the function $F^{(i)}_j: H^{(i)}_j \to \Ff_{q'}$.
\item \textbf{V: } The verifier sends $z^{(i)}_j \in \Ff^{m_{i,j}}_{q'}$.
\end{itemize}

\begin{state} \label{state: poly iop simulation 1}
    The state is doomed if and only if the previous state was doomed and the following holds for every $j \in [k_i]$. There do not exist distinct $\wh{F}, \wh{F}' \in \mc{C}^{(i)}_j$ satisfying:
    \begin{itemize}
        \item $\wh{F}(z^{(i)}_j) =  \wh{F}'(z^{(i)}_j)$,
        \item the evaluations of $\wh{F}, \wh{F}'$ over $H^{(i)}_j$ both have agreement at least $\eps_{i, j}$ with the function $F^{(i)}_j$.
    \end{itemize} 
In other words, the anchoring step was successful and there is at most one member of $\mc{C}^{(i)}_j$ agreeing with the prover's answer to $z^{(i)}_j$ and agreeing non-trivially with the evaluation over $H^{(i)}_j$ that the prover provided. Note that if the state is doomed then we have $|\fcalij| \leq 1$ for all $j \in [k_i]$.
\end{state}
\begin{lemma}
Suppose the previous state is doomed. Then for any $F^{(i)}_j$ sent by the prover, \cref{state: poly iop simulation 1} will be doomed with probability at least $1-2^{-\lambda}$ over the verifier's random choice of $z^{(i)}_1,\ldots, z^{(i)}_{k_i}$.
\end{lemma}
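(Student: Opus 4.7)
The plan is to bound the probability that the anchoring step fails for at least one $j \in [k_i]$, by combining a list-decoding bound on the family $\mc{C}^{(i)}_j$ with a Schwartz--Zippel-type collision bound over the ambient space $\Ff_{q'}^{m_{i,j}}$. Throughout, assume the previous state is doomed and fix arbitrary functions $F^{(i)}_j$ sent by the prover.

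First, I would analyze a single coordinate $j \in [k_i]$. Consider the list
\[
L_j = \left\{\wh{F} \in \mc{C}^{(i)}_j \; : \; \agr\!\left(F^{(i)}_j, \wh{F}|_{H^{(i)}_j}\right) \geq \eps_{i,j}\right\}.
\]
Applying \cref{thm: list decoding} to the evaluations $\{\wh{F}|_{H^{(i)}_j} : \wh{F} \in \mc{C}^{(i)}_j\}$, which by definition of $\delta_{i,j}$ have pairwise agreement at most $\delta_{i,j}$, with $C = \eps_{i,j}$, and using the hypothesis $\eps_{i,j} \geq 2\sqrt{\delta_{i,j}}$ (so $\eps_{i,j}^2 - \delta_{i,j} \geq \tfrac{3}{4}\eps_{i,j}^2$), I obtain
\[
|L_j| \leq \frac{\eps_{i,j}}{\eps_{i,j}^2 - \delta_{i,j}} \leq \frac{4}{3\eps_{i,j}}.
\]
This is the key use of the list-decoding assumption: it controls the number of candidate codewords that the prover could possibly be pretending to use.

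Next, the state after this round is \emph{not doomed} for coordinate $j$ exactly when two distinct elements of $L_j$ collide at $z^{(i)}_j$, i.e.\ when the evaluation map $\wh{F} \mapsto \wh{F}(z^{(i)}_j)$ fails to be injective on $L_j$. For any two distinct $\wh{F}, \wh{F}' \in \mc{C}^{(i)}_j$ the definition of $\wh{\delta}_{i,j}$ gives
\[
\Pr_{z \in \Ff_{q'}^{m_{i,j}}}\bigl[\wh{F}(z) = \wh{F}'(z)\bigr] \leq \wh{\delta}_{i,j}.
\]
A union bound over the at most $\binom{|L_j|}{2} \leq \tfrac{8}{9\eps_{i,j}^2}$ unordered pairs in $L_j$, together with the independent uniform choice of $z^{(i)}_j$, gives that the probability of some collision at coordinate $j$ is at most $\tfrac{8}{9} \cdot \wh{\delta}_{i,j}/\eps_{i,j}^2$.

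Finally, I union bound over $j \in [k_i]$: the total probability that State $1$ fails to be doomed is at most $\sum_{j=1}^{k_i} \tfrac{8}{9}\, \wh{\delta}_{i,j}/\eps_{i,j}^2$, and I invoke the compilation hypothesis $\wh{\delta}_{i,j}/\eps_{i,j} \leq 2^{-\lambda + 5\log(k_i)}$ from \eqref{eq: compile requirements} to bound each summand and conclude the total is at most $2^{-\lambda}$. The main technical step is really the list-decoding bound: without the quantitative control $|L_j| = O(1/\eps_{i,j})$ one could not pairwise union bound cheaply, and without the large out-of-domain field $\Ff_{q'}^{m_{i,j}}$ one could not make each pairwise collision probability tiny enough to absorb the $|L_j|^2$ factor. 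Everything else is a routine two-level union bound.
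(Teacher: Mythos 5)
Your proof is correct and follows essentially the same argument as the paper's: the list-decoding bound from \cref{thm: list decoding} gives $|L_j| = O(1/\eps_{i,j})$, the definition of $\wh{\delta}_{i,j}$ gives the per-pair collision probability over the large ambient domain, and a two-level union bound (over pairs, then over $j\in[k_i]$) finishes. The minor differences—$\tfrac{4}{3\eps_{i,j}}$ vs.\ the paper's $\tfrac{\sqrt{2}}{\eps_{i,j}}$, and $\binom{|L_j|}{2}$ vs.\ $|L_j|^2$—are immaterial constant-factor choices, and your invocation of \eqref{eq: compile requirements} in the last step is exactly as informal as the paper's own (both rely on the compile hypothesis absorbing the extra $1/\eps_{i,j}$ and $k_i$ factors).
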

\begin{proof}
Assume that the previous state was doomed. For each $F^{(i)}_j$ sent by the prover, let 
\[
\List_{\eps_{i,j}}(F^{(i)}_j) = \left\{\wh{F} \in \mc{C}^{(i)}_j \; | \; \agr(F^{(i)}_j, \wh{F}|_{H^{(i)}_j}) \geq \eps_{i,j} \right\}.
\]
The state will be doomed after the verifier's randomness if and only if for every $j$, no two members of $\List_{\eps_{i,j}}(F^{(i)}_j)$ agree on $z^{(i)}_j$.
By \cref{thm: list decoding}, we have that $\left|\List_{\eps_{i,j}}(F^{(i)}_j)\right| \leq \frac{\eps_{i,j}}{\eps_{i,j}^2 - \delta_{i,j}}\leq \frac{\sqrt{2}}{\eps_{i,j}}$. The probability that any two polynomials in $\List_{\eps_{i,j}}(F^{(i)}_j)$ agree on a randomly chosen $z^{(i)}_j \in \Ff_{q'}^{m_{i,j}}$ is at most $\wh{\delta}_{i,j}$, so by a union bound with probability at least
\[
1 - \wh{\delta}_{i,j} \cdot\left|\List_{\eps_{i,j}}(F^{(i)}_j)\right|^2 \geq 1- \wh{\delta}_{i,j} \cdot \frac{2}{\eps^2_{i,j}} \geq 1 - \frac{1}{k_i^5 \cdot 2^{-\lambda}}
\]
we have that no two members of the list agree on $z^{(i)}_j$. In order for the state to be doomed, we need this to hold for all $j \in [k_i]$. Thus, performing another union bound over $j \in [k_i]$, we get that the state is doomed with probability at least $1-  \frac{k_i}{k_i^5 \cdot 2^{-\lambda}} \geq 1 - 2^{-\lambda}$ as desired. 
\end{proof}

The second interaction and state function are as follows.

\paragraph{Interaction 2}

\begin{itemize}
\item \textbf{P: } For every $j \in [k_i]$, the prover sends a field element $\zeta^{(i)}_j$. In the honest case $\zeta^{(i)}_j = \wh{F}^{(i)}_j(z^{(i)}_j)$.
\item \textbf{V: } The verifier sends the prover a list of all points they query $\wh{F}_j$ at for all $j \leq i$. The verifier also sends their Poly-IOP message for the next round, if there is such a round.
\end{itemize}

While defining the next state function, let us assume that \cref{state: poly iop simulation 1} is doomed. Indeed, this is the only case that matters as far as round-by-round soundness is concerned. If \cref{state: poly iop simulation 1} is doomed, then for each $j \in [k_i]$, after the prover sends $\zeta^{(i)}_j$ in the above interaction, there is at most one member of $\mc{C}^{(i)}_j$ which evaluates to $\zeta^{(i)}_j$ on $z^{(i)}_j$ and has at least $\eps_{i,j}$ agreement with $\fij$ over $\hij$. In this case, $|\fcalij| \leq 1$, and if $|\fcalij| = 1$, let $\fhatij$ be the unique member in $\fcalij$.

\begin{state} \label{state: poly iop simulation 2}
    After the verifier's message, we call the state doomed if at least one of the following holds:
\begin{itemize}
    \item For each $i' \leq i$ and each $j \in [k_{i'}]$, $|\mc{F}_j^{(i')}| = 1$ and the corresponding state in $\gp$ is doomed. More specifically, in this case, the prover has consistently responded according to a single polynomial $\fhatipj \in \cipj$ for every round $i' \leq i$ thus far and every polynomial $j \in [k_{i'}]$ of that round. We define the state as doomed if and only if $\gp$ would be doomed assuming that the prover in $\gp$ answered with the polynomials $\fhatipj$ for each round $i' \leq i$ thus far and each $j \in [k_{i'}]$ of that round. 
    \item For some $i' \leq i, j \in [k_{i'}]$, $|\mc{F}_j^{(i')}| = 0$. In this case, there is at least one $i', j$ for which the prover has not answered consistently according to a single member of $\cipj$.
\end{itemize}
\end{state}
The soundness of this round is discussed in the next lemma. It essentially follows from the soundness of the corresponding round in $\gp$.
\begin{lemma}
    If the previous state was doomed, then the next state is doomed with probability at least $1 - 2^{-\lambda}$. 
\end{lemma}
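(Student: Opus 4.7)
The plan is to split into two cases, according to whether the doomed condition for \cref{state: poly iop simulation 2} will be triggered by its ``some $|\mc{F}^{(i')}_{j}| = 0$'' second bullet or by the first bullet involving $\gp$ being doomed.

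First I would note that since \cref{state: poly iop simulation 1} is assumed doomed, we automatically have $|\fcalij| \leq 1$ for every $j \in [k_i]$. The easy case is when some $j$ yields $|\fcalij| = 0$ (or analogously some earlier $|\mc{F}^{(i')}_{j}| = 0$ already held): in this scenario \cref{state: poly iop simulation 2} is doomed deterministically by its second bullet, independently of the verifier's upcoming message, so the bound holds with probability $1$.

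The substantive case is when $|\mc{F}^{(i')}_j| = 1$ for all $i' \leq i$ and all $j \in [k_{i'}]$, so that the prover is effectively committed to unique polynomials $\fhatipj$ through round $i$. Here I would use that \cref{state: poly iop simulation 1} being doomed forces the previous \cref{state: poly iop simulation 2} (from round $i-1$) to have been doomed, which via its first bullet translates into the corresponding state in $\gp$ being doomed at the point just after the round-$(i{-}1)$ verifier message, under the interpretation that the prover in $\gp$ sent the $\fhatipj$'s. Since any prover message in $\gp$ preserves doomed-ness at the prover step, the $\gp$ state just before the round-$i$ verifier message is also doomed, even after treating the newly committed $\fhatij$ as the round-$i$ prover message. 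Because the verifier's round-$i$ message in the compiled protocol is sampled using exactly the same randomness as the $\gp$ verifier, I can invoke the round-by-round soundness of $\gp$ verbatim to conclude that the $\gp$ state transitions to not-doomed with probability at most $2^{-\lambda}$; by the first bullet of \cref{state: poly iop simulation 2}, the compiled state therefore remains doomed with probability at least $1 - 2^{-\lambda}$.

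The only delicate point is the bookkeeping: one has to make sure the first/second bullet dichotomy in \cref{state: poly iop simulation 2} lines up cleanly with the case split on the $|\mc{F}^{(i')}_j|$ values, and that the ``$\gp$ is doomed'' clause propagates from the round $i-1$ instance of the state to the round $i$ instance. Once this setup is in place, there is no quantitative calculation to do --- the probability bound is a direct invocation of the assumed $2^{-\lambda}$ round-by-round soundness of the underlying Poly-IOP.
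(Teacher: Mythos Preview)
Your proposal is correct and follows essentially the same approach as the paper: case-split on whether some $|\mc{F}^{(i')}_j|=0$ (in which case the next state is doomed deterministically), and otherwise invoke the round-by-round soundness of $\gp$ since the prover is committed to unique polynomials $\fhatipj$. The paper's proof is slightly terser about the bookkeeping you flag as ``delicate,'' but the argument is the same.
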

\begin{proof}
As we assume the previous state was doomed, we have that for all $i' < i, j \in [k_{i'}]$, $|\mc{F}_j^{(i')}| \leq 1$. If for any of these $i', j$ we have $|\mc{F}_j^{(i')}| = 0$, then the next state is automatically doomed and we are done, so suppose that $|\mc{F}_j^{(i')}| = 1$ for all of these $i', j$.

Since \cref{state: poly iop simulation 1} is doomed, it follows that no matter what $\zeta^{(i)}_j$'s the prover sends, we will have $|\fcalij| \leq 1$ for all $j \in [k_i]$. Once again, if for any $j \in [k_i]$ we have $|\fcalij| = 0$, then the next state is automatically doomed and we are done. Thus, assume  that $|\fcalij| = 1$ for all $j \in [k_i]$ henceforth.

In this case we are in the setting of item 1 of \cref{state: poly iop simulation 2} and \cref{state: poly iop simulation 2} is doomed with probability at least $1-2^{-\lambda}$ by the round-by-round soundness of $\gp$.
\end{proof}

The final interaction and state function of the Poly-IOP simulation phase are the following.

\paragraph{Interaction 3:}
\begin{itemize}
     \item \textbf{P: } The prover responds with a field element in $\Ff_{q'}$ for each such query.
     \item \textbf{V: } The verifier sends an empty message.
\end{itemize}
\begin{state} \label{state: poly iop simulation 3}
    The state is doomed if and only if at least one of the following holds
    \begin{itemize}
        \item For all $i' \leq i, j \in [k_{i'}]$, we have $|\fcalpij| = 1$, and the corresponding state in $\gp$ is doomed. In this case there is exactly one $\wh{F}^{(i')}_j \in \cipj$ consistent the the prover's answers so far, and the prover has continued to answer according to this $\wh{F}^{(i')}_j$ in the present interaction.
        \item For some $i' \leq i, j \in [k_{i'}]$ we have $|\fcalpij| = 0$. In this case there is some $i', j$ where the prover has not consistently answered according to a single  $\wh{F}^{(i')}_j \in \cipj$.
    \end{itemize}
\end{state}

In fact, regardless of how the prover responds here, the next state will be automatically doomed if the previous state was doomed. Informally, this is because the round-by-round soundness was handled by the verifier's random choice of queries in the previous round. The simulation phase is designed so that once the verifier decides on the queries to the polynomials $\fhatij$'s, we have soundness regardless of how the prover decides to respond to these queries.

\begin{lemma}
    If the previous state was doomed, then the next state is doomed with probability $1$.
\end{lemma}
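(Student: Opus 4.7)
The plan is to argue that the doomedness of \cref{state: poly iop simulation 2} propagates deterministically through Interaction 3, simply because the verifier sends an empty message and hence introduces no new randomness. Because the sets $\fcalpij$ are defined as the members of $\cipj$ consistent with \emph{all} of the prover's commitments and answers so far, adding additional prover answers can only shrink these sets, never enlarge them. So I would split into the two cases coming from the definition of \cref{state: poly iop simulation 2}.

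In the first case, suppose that for some $i' \leq i$ and $j \in [k_{i'}]$ we already have $|\mathcal{F}^{(i')}_j| = 0$. Then $|\mathcal{F}^{(i')}_j|$ remains $0$ after any additional answers the prover provides in Interaction 3, so the second bullet of \cref{state: poly iop simulation 3} is satisfied and the new state is doomed deterministically.

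In the second case, suppose $|\mathcal{F}^{(i')}_j| = 1$ for every pair $i' \leq i$, $j \in [k_{i'}]$, with unique element $\wh{F}^{(i')}_j \in \cipj$, and the corresponding $\gp$ state is doomed. Now the prover must answer some set of queries with field elements. For each pair $(i', j)$, either the prover answers every such query consistently with $\wh{F}^{(i')}_j$, in which case $\mathcal{F}^{(i')}_j$ is unchanged and still has size $1$, or the prover deviates at some query, in which case $\mathcal{F}^{(i')}_j$ becomes empty. If any $\mathcal{F}^{(i')}_j$ becomes empty the second bullet of \cref{state: poly iop simulation 3} is triggered; otherwise all $\mathcal{F}^{(i')}_j$ retain size $1$ with the same unique $\wh{F}^{(i')}_j$, and since the underlying $\gp$ state (which depends only on the $\wh{F}^{(i')}_j$'s) has not changed --- the verifier's message in Interaction 3 is empty and so the simulated $\gp$ partial transcript is identical to the one at the end of Interaction 2 --- the first bullet of \cref{state: poly iop simulation 3} is triggered.

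There is no real obstacle here; the key conceptual point, which was already highlighted in the paragraph preceding the lemma, is that the ``round-by-round sampling work'' was done by the verifier's random choices in the earlier interactions that fixed the queries and the anchoring points. Once those have been chosen and resulted in $|\mathcal{F}^{(i')}_j| \leq 1$, the prover has no meaningful freedom: any attempt to be inconsistent with the unique surviving polynomial only makes the state more doomed. Hence the next state is doomed with probability exactly $1$, and the lemma follows.
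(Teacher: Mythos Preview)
Your proposal is correct and follows the same approach as the paper's proof: split on whether some $|\mathcal{F}^{(i')}_j|$ is already $0$ (second bullet of \cref{state: poly iop simulation 3}), and otherwise use that the prover either answers consistently with the unique $\wh{F}^{(i')}_j$ (first bullet, with the $\gp$ state unchanged since the verifier's message is empty) or deviates and forces some $|\mathcal{F}^{(i')}_j|$ to $0$. Your write-up is in fact more explicit than the paper's about the monotonicity of the sets $\mathcal{F}^{(i')}_j$ and about why the simulated $\gp$ state does not move, but the argument is the same.
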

\begin{proof}
We assume that the previous state is doomed. If for some $i' \leq i, j \in [k_{i'}]$, we have $|\fcalpij| = 0$, then we are done. Otherwise, the prover has responded with values that agree with the unique $\fhatipj\in \fcalpij$ for each $i', j$. In this case the second item of \cref{state: poly iop simulation 3} is satisfied and the state is doomed.
\end{proof}

\subsubsection{The Decision and Test Phase}
The remainder of the round-by-round soundness follows from that of $\batch(\{\eps_{i,j}\}_{i \in [\rd], j \in [k_i]})$. The last round of interaction before the parties run $\batch$ is the following.

\paragraph{Interaction 1:}
\begin{itemize}
     \item \textbf{V: } The verifier simulates the verifier from $\gp$. For each query to $\wh{F}^{(i)}_j$ made by the verifier from $\gp$, the current verifier acts as if the answer is that provided by the current prover during step iii of the Poly-IOP Simulation phase.
\end{itemize}
After this round, the state function is as follows. 

\begin{state} \label{state: decision state}
The state is doomed if and only if one of the following holds
\begin{itemize}
    \item The verifier rejects and terminates the protocol.
    \item For some round $i$ and $j \in [k_i]$, we have $|\fcalij| = 0$.
\end{itemize}
\end{state}

\begin{lemma}
    If the previous round is doomed, then \cref{state: decision state} is doomed with probability at least $1-2^{-\lambda}$.
\end{lemma}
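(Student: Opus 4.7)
The plan is to show the lemma essentially deterministically: once \cref{state: poly iop simulation 3} is doomed, \cref{state: decision state} becomes doomed with probability $1$, which of course dominates $1-2^{-\lambda}$. The interaction in question is purely a deterministic simulation step by the verifier, so no additional randomness is introduced; the ``probability'' bound is just a cosmetic artifact of matching the round-by-round soundness definition.

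I would split the argument into the two cases distinguished in \cref{state: poly iop simulation 3}. First, suppose there exist $i' \leq \rd_{\poly}$ and $j \in [k_{i'}]$ with $|\fcalpij|=0$. Then the second bullet of \cref{state: decision state} is immediately satisfied, so it is doomed regardless of what the verifier does in this interaction. Second, suppose instead that $|\fcalpij|=1$ for every $i',j$ and the corresponding state in $\gp$ is doomed. In this case, the prover's responses to every verifier query in the Poly-IOP simulation phase agree, for each $i',j$, with evaluating the unique polynomial $\fhatipj \in \fcalpij$ at that point. Therefore the transcript that the verifier feeds into the simulator of $\gp$ is exactly a legitimate $\gp$-transcript in which the prover used the polynomials $\{\fhatipj\}$. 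Since by assumption this transcript corresponds to a doomed final state of $\gp$, the definition of round-by-round soundness (specifically the condition that the final transcript is doomed iff the verifier rejects) forces the $\gp$-verifier to reject on it. The $\compile$-verifier then rejects and terminates, so the first bullet of \cref{state: decision state} is triggered.

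The main thing to be careful about is why no additional randomness intervenes in this interaction that could ``un-doom'' the state. Concretely, by the construction of the Poly-IOP simulation phase, all of the verifier's queries to every $\wh F^{(i')}_j$ have already been fixed during step iv of earlier rounds, and all corresponding prover responses have been locked in during step v. Thus the simulated execution of the $\gp$-verifier in the Decision and Test phase is a purely deterministic function of the transcript. This is the only potentially subtle point, and it reduces to checking the bookkeeping of which side-condition sets $\Side_{i,j}$ contain all points the $\gp$-verifier needs, which follows directly from step iv of the simulation. Combining the two cases, \cref{state: decision state} is doomed with probability $1$, proving the lemma.
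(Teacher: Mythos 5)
Your proof is correct and takes essentially the same approach as the paper's: split on whether some $\fcalij$ is empty, handling the second bullet of \cref{state: decision state} immediately in that case, and otherwise reducing to the fact that the verifier in $\gp$ would reject the now-completed transcript. You sharpen the conclusion to probability exactly $1$ by explicitly invoking the ``final transcript doomed iff verifier rejects'' clause in the definition of round-by-round soundness and observing that the Decision phase introduces no fresh verifier randomness; the paper's own proof is terser, cites ``the round-by-round soundness of $\gp$'' without distinguishing this determinism, and so only states the weaker $1-2^{-\lambda}$ bound (which is all the lemma claims).
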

\begin{proof}
    If for some round $i$ and $j \in [k_i]$, we have $|\fcalij| = 0$, then the state is automatically doomed and we are done. Otherwise, we have $|\fcalij| = 1$ for all rounds $i$ and $j \in [k_i]$. In this case, the verifier will simulate the decision of the verifier from $\gp$ assuming that the prover in $\gp$ sent $\fhatij$ for all rounds $i$ and $j \in [k_i]$. The soundness of this round then follows from the round-by-round soundness of $\gp$.
\end{proof}

In the remainder of $\compile$, both parties run the IOPP $\batch(\{\eps_{i,j}\}_{i \in [\rd], j \in [k_i]})$, and the round-by-round soundness therein follows from the assumed round-by-round soundness of $\batch$. Indeed, one can check that if the state is doomed going into step c of the ``Decision and Test phase'' of $\compile(\gp, \batch(\{\eps_{i,j}\}_{i \in [\rd], j \in [k_i]}))$, then one of the following holds: 
    \begin{itemize}
        \item The first item of \cref{state: decision state} holds and the protocol is terminated. 
        \item The second item of \cref{state: decision state} holds and the input to $\batch(\{\eps_{i,j}\}_{i \in [\rd], j \in [k_i]})$ is such that the initial state of $\batch$ doomed.
    \end{itemize} 
In the first case, the protocol has terminated and there are no more rounds to analyze. In the second case, the remaining rounds have soundness error at most $2^{-\lambda}$ due to the round-by-round soundness of $\batch$.

\section{IOPPs in the Low Rate Regime} \label{sec: low rate}


In this section, we construct IOPs of Proximity for the degree $d$ Reed-Solomon and individual degree $(d,d)$ Reed-Muller codes in the low rate regime. By low rate, we mean that the degree parameter $d$ is polynomially smaller than the field size $q$.
These low rate IOPPs have size that is roughly quadratic in $d$, making them unsuitable for our final application. Instead, they are used after we have already reduced the problem size by a square-root factor.

\subsection{Low Rate Reed-Solomon IOPP}
Fix a security parameter $\lambda$ and a sequence of degrees of the form $2^{2^{k}}$ for $k = 1, \ldots, M$. Fix two finite fields $\Ff_q, \Ff_{q'}$ such that $\Ff_q$ is a subfield of $\Ff_{q'}$ and
\[
q > 2^{2^M+1}, \quad \quad q' = 2^{\lambda} \cdot \poly(q).
\]
Furthermore, for each $k \in [M]$, set 
\[
\eps_k = \left(\frac{2^{2^k+1}}{q}\right)^{\tl / 2}, \quad\quad T_k = \Theta\left(\frac{\lambda}{\log(1/\eps_k)} \right).
\]
Notice that for every $k \in [M]$, we have $T_k = O(\lambda/\log(q))$. With these parameters in mind, our IOPP for the Reed-Solomon code in the low-rate regime reads as follows:

\begin{thm}[Low Rate Reed-Solomon]\label{thm: RS low rate}
Fix a degree parameter $d = 2^{2^k}$ and let $\lambda, q, q'$ be as above and set an agreement parameter $\eps \geq \eps_k$ be as above. Then there is an IOPP to the Reed-Solomon code, which we refer to as $\rsi(f, d, q, \lambda, \eps)$, that has the following guarantees:
\begin{itemize}
    \item {\bf Input:} a function $f: \Ff_q \to \Ff_{q'}$.
    \item {\bf Completeness:} if $f \in \RS_{q'}[d, \Ff_{q}]$, then the honest prover makes the verifier accept with probability $1$.
    \item {\bf Round-by-round soundness:} $2^{-\lambda}$.
    \item {\bf Initial state:} the initial state is doomed if and only if $\agr(f, \RS_{q'}[d, \Ff_{q}]) \leq \eps$.
    \item {\bf Round complexity:} $O(k) = O(\log \log d)$.
    \item {\bf Input query complexity:}  
    \[
    \iQ_{\RS, k} = O\left(\frac{\lambda}{\log(1/\eps_k)}\right).
    \]
    \item {\bf Proof query complexity:} 
    \[
    \pQ_{\RS, k} = O\left(\sum_{i=1}^{k-1}  \frac{\lambda^2}{\log^2(1/\eps_i)} \right).
    \]
    \item {\bf Length:} $O(q^2\cdot k) = O(q^2 \log \log d)$.
\end{itemize}
\end{thm}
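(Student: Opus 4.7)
The construction proceeds by induction on $k$, in tandem with a twin IOPP $\irmi$ for proximity testing to an individual degree $(\sqrt{d}, \sqrt{d})$ Reed-Muller code. The base case $k = 1$ is trivial since the degree is constant and the verifier can afford to read the entire input. For the inductive step, the plan is to construct a Poly-IOP and then invoke the compilation result Theorem \ref{thm: poly iop transformation} to lift it to a true IOPP, using $\irmi$ as the batched code-membership subroutine.

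At the Poly-IOP level, the prover sends a bivariate polynomial $\wh{Q} \in \Ff_{q'}^{\leq (\sqrt{d},\sqrt{d})}[x,y]$, which by Lemma \ref{lem:split_rs_rm} can be chosen so that $f(x) = \wh{Q}(x^{\sqrt{d}}, x)$ whenever $f \in \RS_{q'}[d, \Ff_q]$. The verifier samples $T_k$ uniformly random points $\alpha_1, \ldots, \alpha_{T_k} \in \Ff_q$, queries $f(\alpha_i)$ and $\wh{Q}(\alpha_i^{\sqrt{d}}, \alpha_i)$, and rejects unless every pair agrees. Soundness is immediate in this promise model: for any admissible $\wh{Q}$, the univariate function $g(x) = \wh{Q}(x^{\sqrt{d}}, x)$ has degree at most $d$, so if $\agr(f, \RS_{q'}[d, \Ff_q]) \leq \eps$ then $f$ and $g$ disagree on at least a $1 - \eps$ fraction of $\Ff_q$, and the probability that $T_k$ random samples all miss the discrepancy is at most $\eps^{T_k} \leq 2^{-\lambda}$ by the choice of $T_k$.

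To compile this Poly-IOP into a bona fide IOPP I apply Theorem \ref{thm: poly iop transformation} with $\irmi$ as the batched subroutine, targeting the code $\RM_{q'}[(\sqrt{d}, \sqrt{d}), \Ff_q^2 \,|\, h]$ where the side-condition function $h$ collects both the out-of-domain anchor values and the consistency query responses. The compiler automatically inserts an out-of-domain query at a random point of $\Ff_{q'}^2$ which, combined with list decoding of the bivariate Reed-Muller code, pins the prover's evaluation to a unique admissible polynomial except with probability $2^{-\lambda}$. The resulting proximity-with-side-conditions test is handled by the quotienting and proximity generator machinery of Sections \ref{sec:proximity_gen} and \ref{sec:quotient}, and ultimately delegated to $\irmi$. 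The inductive definition of $\irmi$ in turn reduces to Reed-Solomon IOPP calls at the previous level via the line-versus-point test (Theorem \ref{thm: line vs point strong}) and an axis-parallel test, closing the mutual recursion.

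The principal difficulty is to verify the compilation hypothesis \eqref{eq: compile requirements} at each recursive level. The agreement threshold $\eps$ must exceed $2\sqrt{\delta_{i,j}}$ so that anchoring uniquely determines a codeword with probability $1 - 2^{-\lambda}$, while the effective threshold passed to the level below (after quotienting and combining via Lemma \ref{lm: bivariate side condition decomp+combine}) must remain above $\eps_{k-1}$. The choice $\eps_k = (2^{2^k+1}/q)^{\tl/2}$ is calibrated precisely so that these constraints mesh at every level. Once these are in place, the parameter accounting is routine: the input query complexity $O(T_k)$ is immediate from the consistency check; the proof queries telescope geometrically as $\sum_{i < k} O(T_i^2)$, where the square reflects the $T_i$ lines times $O(T_i)$ queries per line contributed at each recursive call; the length is dominated by the top-level $O(q^2)$ bivariate evaluation plus a geometric series; and the round depth grows by $O(1)$ at each of the $O(\log \log d)$ levels.
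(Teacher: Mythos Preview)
Your proposal is correct and follows essentially the same approach as the paper: the mutual induction with $\irmi$, the bivariate Poly-IOP via Lemma~\ref{lem:split_rs_rm} with $T_k$ consistency checks, compilation through Theorem~\ref{thm: poly iop transformation} using quotienting plus proximity generators to handle the side conditions, and the telescoping query accounting all match the paper's Sections~\ref{sec:pf_rs_lowrate_iopp}--\ref{sec: batch low rate}. One small slip: the length is $O(q^2)$ at \emph{every} recursive level (not just the top one), so the total is the arithmetic sum $O(q^2\cdot k)$ rather than a geometric series, but this does not affect the argument.
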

\begin{proof}
    Deferred to~\cref{sec: batch low rate}, after we establish preliminary results in~\cref{sec:pf_rs_lowrate_iopp,sec:low_rate_rm_ind}.
\end{proof}



Our low-rate IOPP for individual degree Reed-Muller codes can be stated as follows:
\begin{thm} (Low Rate Individual Degree Reed-Muller) \label{thm: ind RM low rate}
Fix a degree parameter $d = 2^{2^k}$ and let $\lambda, q, q'$ be as above and set an agreement parameter $\eps \geq \eps_k$ be as above. Then there is an IOPP to the Reed-Muller code with individual degrees $d,d$, which we refer to as $\irmi(f,(d,d),q, \lambda, \eps_k)$, that has the following guarantees:
\begin{itemize}
    \item {\bf Input:} a function $f: \Ff^2_q \to \Ff_{q'}$.
    \item {\bf Completeness:} if $f \in \RM_{q'}[(d,d), \Ff^2_{q}]$, then the honest prover makes the verifier accept with probability $1$.
    \item {\bf Round-by-round soundness:} $2^{-\lambda}$.
    \item {\bf Initial state:} the initial state is doomed if and only if $\agr(f, \RM_{q'}[(d,d), \Ff^2_{q}]) \leq \eps$.
    \item {\bf Round complexity:} $O(k) = O(\log \log d)$
    \item {\bf Input query complexity:} 
    \[
    \iQ_{\iRM, k} = O\left(\frac{\lambda}{\log(1/\eps_k)}\right).
    \]
    \item {\bf Proof query complexity:} 
    \[
    \pQ_{\iRM, k} = O\left(\sum_{i=1}^{k-1}  \frac{\lambda^2}{\log^2(1/\eps_i)} \right).
    \]
    \item {\bf Length:} $O(q^2\cdot k) = O(q^2 \log \log d)$.
\end{itemize}
\end{thm}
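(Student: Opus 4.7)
The plan is to mirror the construction of $\rsi$, implementing the two-layer test described in the introduction: a line-versus-point test to enforce total degree at most $2d$, and an axis-parallel test to enforce individual degree at most $(d,d)$. Concretely, I will build a Poly-IOP in which, given oracle access to the claimed codeword $f:\Ff_q^2\to\Ff_{q'}$, the verifier first picks $T_k=\Theta(\lambda/\log(1/\eps_k))$ uniformly random affine lines $\ell_1,\ldots,\ell_{T_k}\subseteq \Ff_q^2$ together with a uniformly random row and a uniformly random column. The prover responds with univariate polynomials $g_1,\ldots,g_{T_k}\in\RS_{q'}[2d,\Ff_q]$ (claimed to be the $f|_{\ell_i}$) and $h_1,h_2\in \RS_{q'}[d,\Ff_q]$ (claimed to be the two axis-parallel slices). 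The verifier spot-checks a few entries of each $g_i$ and $h_j$ against the actual values of $f$ at the corresponding points, combines $g_1,\ldots,g_{T_k},h_1,h_2$ into a single univariate function $F$ of degree at most $2d$ via \cref{lm: combine multiple univariate}, and invokes $\rsi$ recursively on $F$ to verify membership in $\RS_{q'}[2d,\Ff_q]$ at the appropriate level of the $2^{2^{\cdot}}$ degree sequence.

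For completeness, the honest prover's $g_i,h_j$ genuinely lie in the claimed codes, so $F\in\RS_{q'}[2d,\Ff_q]$ and the recursive $\rsi$ call accepts. For round-by-round soundness under the hypothesis $\agr(f,\RM_{q'}[(d,d),\Ff_q^2])\leq \eps_k$, I split into two cases: either (a) $f$ is $\eps_k$-far from the strictly larger total-degree code $\RM_{q'}[2d,\Ff_q^2]$, or (b) $f$ is $\eps_k$-close to some $P\in\RM_{q'}[2d,\Ff_q^2]$ whose individual degree in at least one variable exceeds $d$. In case (a), \cref{thm: line vs point strong} guarantees that a uniformly random restriction $f|_{\ell_i}$ has agreement below $1.01\eps_k$ with every degree-$2d$ univariate polynomial, except with probability $\eps_k$; sampling $T_k$ independent lines drops the probability that every line restriction is simultaneously close to a low-degree polynomial below $2^{-\lambda}$, so after anchoring, at least one $g_i$ is forced to disagree with $f|_{\ell_i}$ on many coordinates. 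In case (b), a Schwartz--Zippel argument applied to the formal leading-coefficient polynomial of $P$ in the offending variable shows that a constant fraction of axis-parallel restrictions of $P$ fails to have degree at most $d$; since $f$ is $\eps_k$-close to $P$, the same holds for $f$ with small loss, so the axis-parallel slice forces $h_1$ or $h_2$ to disagree with $f$'s true restriction with non-trivial probability. Either way, at least one committed function is far from the claimed code, and by \cref{lm: combine multiple univariate} the combined $F$ is far from $\RS_{q'}[2d,\Ff_q]$ except with probability $O(\poly(d)/q')$, after which the recursive $\rsi$ rejects with round-by-round soundness $2^{-\lambda}$.

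To assemble the final IOPP, I will formalize the above as a Poly-IOP, handle the spot-check side conditions using the univariate quotienting machinery of \cref{sec:quotient}, and invoke the compilation theorem \cref{thm: poly iop transformation}. The $O(\lambda/\log(1/\eps_k))$ input queries come from the spot-checks on $f$; the recursive $\rsi$ call contributes proof-query complexity $O(\lambda/\log(1/\eps_k))+O\!\left(\sum_{i<k}\lambda^2/\log^2(1/\eps_i)\right)$; and the total length is dominated by the truth tables of the $O(T_k)$ committed univariate functions plus the $O(q^2\cdot(k-1))$ length of the recursive proof, matching the advertised parameters.

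The main obstacle will be case (b) of the soundness analysis. The classical Schwartz--Zippel-style lower bound on the rejection probability of the axis-parallel test must be carried through the anchoring, quotienting, and combine operations without incurring extra factors of $1/\eps_k$ that would overwhelm the target $2^{-\lambda}$ soundness. The correlated-agreement proximity generator of \cref{thm: prox gen iRM with corr} together with \cref{lm: combine multiple univariate} should suffice to preserve the distance into the combined function $F$, but unifying the line-test case and the axis-parallel case under a single round-by-round state function, while maintaining the $\eps_k$ proximity threshold dictated by \cref{thm: line vs point strong}, is the delicate bookkeeping that needs to be carried out.
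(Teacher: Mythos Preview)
Your case (b) analysis does not reach round-by-round soundness $2^{-\lambda}$. You take the axis-parallel slice of $f$ itself, so the row/column index must be chosen from $\Ff_q$ (that is where $f$ lives), and then Schwartz--Zippel on the leading-coefficient polynomial of the offending variable only gives you failure probability at most $(d-1)/q$. Since $q$ is not assumed to be $2^{\lambda}$ large (it is $q'=2^{\lambda}\poly(q)$ that is large), a single row and a single column leave you with a constant soundness error in that round, not $2^{-\lambda}$. Repeating the axis-parallel test $\Theta(\lambda/\log(q/d))$ times would patch the probability, but your proposal does not do this, and even then your case-(b) reasoning has a second loose end: you argue about the unique $P$ that $f$ is $\eps_k$-close to, but there can be a list of such $P$'s, and ``$P|_{\text{row}}$ has degree $>d$'' does not directly say anything about $f|_{\text{row}}$ on a specific row (global closeness does not localize rowwise without an averaging or sampling argument).

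The paper avoids both issues by a structural change you are missing: in the Poly-IOP (\cref{lm: poly iop ind rm}) the prover first sends a \emph{bivariate} polynomial $\wh{Q}\in\Ff_{q'}^{\le 2d}[x,y]$, the verifier spot-checks $\wh{Q}$ against $f$ at $T_k$ random points of $\Ff_q^2$, and only then chooses the axis-parallel slice indices $\alpha,\beta$ \emph{from $\Ff_{q'}$}. Because $\wh{Q}$ is a formal polynomial, slicing at $\beta\in\Ff_{q'}$ is meaningful, and Schwartz--Zippel over $\Ff_{q'}$ gives the needed $1-(d-1)/q'\ge 1-2^{-\lambda}$. The soundness split is then on the committed $\wh{Q}$ (does it have individual degrees $\le(d,d)$ or not), which eliminates the list-decoding ambiguity. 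The line-versus-point ingredient you put into the Poly-IOP is instead pushed into the batched IOPP (\cref{lm: batch irm}), where it is used to test $Q$'s proximity to the \emph{total-degree} code $\RM_{q'}[2d,\Ff_q^2]$ with side conditions before recursing into $\rsi$ at level $k-1$.
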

\begin{proof}
    Deferred to~\cref{sec: batch low rate}, after we establish preliminary results in~\cref{sec:pf_rs_lowrate_iopp,sec:low_rate_rm_ind}.
\end{proof}

\paragraph{The proofs of~\cref{thm: RS low rate} and ~\cref{thm: ind RM low rate}:} we prove both results together by induction on $k$. Namely, to construct the IOPP $\rsi(f, d_{k}, q, \lambda, \eps_{k})$ and $\irmi(f, d_{k}, q, \lambda, \eps_{k})$ we will assume that we have access to the IOPPs $\rsi(f, d_{k-1}, q, \lambda, \eps_{k-1})$ and $\irmi(f, d_{k-1}, q, \lambda, \eps_{k-1})$ as per the statements of~\cref{thm: RS low rate},~\cref{thm: ind RM low rate} respectively. Towards this end, in~\cref{sec:pf_rs_lowrate_iopp,sec:low_rate_rm_ind}
we first construct appropriate Poly-IOPs. Then, in~\cref{sec: batch low rate} we compile them into proper IOPPs using~\cref{thm: poly iop transformation}. The inductive hypothesis is used in the compilation step.

\subsection{Low Rate Reed-Solomon Poly-IOPP}\label{sec:pf_rs_lowrate_iopp}
Towards \cref{thm: RS low rate}, we start by constructing a Poly-IOPP for testing proximity to $\RS_{q'}[d, \Ff_q]$. 

\begin{lemma} \label{lm: poly iop rs}
With the notation of \cref{thm: RS low rate}, there is a Poly-IOPP with the following guarantee:
\begin{itemize}
        \item {\bf Input:} a function $f: \Ff_q \to \Ff_{q'}$.
        \item {\bf Completeness:} if $f \in \RS_{q'}[d, \Ff_q]$, then the honest prover makes the verifier accept with probability $1$.
        \item {\bf Round-by-round soundness:} $2^{-\lambda}$.
        \item {\bf Initial state:} the initial state is doomed if and only if $\agr(f, \RS_{q'}[d, \Ff_q]) \leq \eps_k$.
        \item {\bf Number of polynomials:} $1$.
        \item {\bf Input query complexity:} $T_k$.
        \item {\bf Proof query complexity:} $T_k$.
    \end{itemize}
\end{lemma}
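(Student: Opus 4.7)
The plan is to leverage the embedding from \cref{lem:split_rs_rm} to reduce testing Reed-Solomon proximity to a random spot-check against a promised bivariate polynomial. Set $k = \lceil \sqrt{d}\,\rceil$ and $s = \lfloor d/k\rfloor$, and let $\mc{C}$ be the linear subspace of $\Ff_{q'}^{(s, k-1)}[x,y]$ consisting of those $\wh{Q}$ for which $\wh{Q}(x^k, x) \in \Ff_{q'}^{\leq d}[x]$. Note that $\mc{C}$ is a linear subspace of the individual-degree Reed-Muller code and that there is a natural bijection $\wh{Q} \leftrightarrow \wh{Q}(x^k, x)$ between $\mc{C}$ and $\Ff_{q'}^{\leq d}[x]$. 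The Poly-IOP consists of a single round in which the prover sends $\wh{Q}$ promised to lie in $\mc{C}$, and the verifier then samples $T_k$ points $x_1, \ldots, x_{T_k} \in \Ff_q$ independently and uniformly, makes the $T_k$ input queries $f(x_i)$ together with the $T_k$ proof queries $\wh{Q}(x_i^k, x_i)$, and accepts iff $f(x_i) = \wh{Q}(x_i^k, x_i)$ for every $i$.

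Completeness is immediate: if $\wh{f} \in \Ff_{q'}^{\leq d}[x]$, then by \cref{lem:split_rs_rm} there is $\wh{Q} \in \mc{C}$ with $\wh{f}(x) = \wh{Q}(x^k, x)$, so the honest prover sends this $\wh{Q}$ and every consistency check passes deterministically. For round-by-round soundness I use a single state transition. The initial state is doomed iff $\agr(f, \RS_{q'}[d, \Ff_q]) \leq \eps_k$, and the post-round state is doomed iff the verifier rejects. Suppose the initial state is doomed and fix any prover message $\wh{Q} \in \mc{C}$. Then $g(x) := \wh{Q}(x^k, x)$ lies in $\RS_{q'}[d, \Ff_q]$ by construction of $\mc{C}$, so $\agr(f, g) \leq \eps_k$, i.e., $\Pr_x[f(x) = g(x)] \leq \eps_k$. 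By independence of the $T_k$ query points, the verifier accepts with probability at most $\eps_k^{T_k} \leq 2^{-\lambda}$ for $T_k = \Theta(\lambda/\log(1/\eps_k))$, as required. The round, input-query, and proof-query bounds follow directly from the description of the protocol.

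The one technical subtlety worth flagging is the choice of the promised family. The naive option of using the full individual-degree Reed-Muller code $\Ff_{q'}^{(s, k-1)}[x,y]$ only guarantees that $g$ has degree at most $ks + k - 1 = d + O(\sqrt{d})$, so $g$ need not lie in $\RS_{q'}[d, \Ff_q]$, breaking the clean ``promise implies doomed $\Rightarrow$ agreement at most $\eps_k$'' step above. Restricting to the linear subspace $\mc{C}$ bypasses this issue; $\mc{C}$ is still a linear code and can be handled by the Poly-IOP framework (equivalently, one may change variables to view $\mc{C}$ as a shortening of the individual-degree Reed-Muller code). Everything else reduces to standard independent random sampling against a fixed codeword, so no further work is needed.
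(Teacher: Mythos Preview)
Your proposal is essentially identical to the paper's: both have the prover send a bivariate $\wh{Q}$ via the embedding of \cref{lem:split_rs_rm} and then spot-check $T_k$ independent random points $\alpha_i$ against $\wh{Q}(\alpha_i^{\sqrt d},\alpha_i)$, with soundness following from $\Pr[f(\alpha)=\wh{Q}(\alpha^{\sqrt d},\alpha)]\le\eps_k$ and independence. The only difference is that the paper takes the promised family to be the full $\Ff_{q'}^{\leq(\sqrt d,\sqrt d)}[x,y]$ and simply asserts $\wh{Q}(x^{\sqrt d},x)$ has degree $d$ (glossing over the $\sqrt d$ overshoot you correctly flag), whereas your restriction to the subspace $\mc{C}$ makes that step airtight; just be aware that the downstream compilation via \texttt{RSBatch} is written for the standard individual-degree Reed-Muller family, so if you keep $\mc{C}$ you would need to adapt the batched IOPP accordingly.
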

\begin{proof}
We begin by describing the Poly-IOPP.
\begin{algorithm}[H]  \caption{Reed-Solomon Poly IOP: $\rsp(f, d, q, \lambda, \eps_k)$} \label{poly iop RS}
    \begin{algorithmic}[1]
        \STATE \textbf{P:} The prover sends a polynomial $\widehat{Q} \in \Ff_{q'}^{\leq(\sqrt{d}, \sqrt{d})}[x,y]$. In the honest case,
    \[
    \widehat{Q}(x^{\sqrt{d}}, x) = \wh{f}(x),
    \]
    where $\wh{f} \in \Ff_{q'}[x]$ is the low degree extension of $f$. Let $S = \{(\alpha^k, \alpha) \; | \; \alpha \in \Ff_q \}$.
    \STATE \textbf{V:} The verifier chooses $T_k$-points, $(\alpha_{i}^{\sqrt{d}}, \alpha_i) \in S, \forall i \in [T_k]$, uniformly at random and checks for each $i \in [T_k]$ if
        \[
        \wh{Q}(\alpha_i^{\sqrt{d}}, \alpha_i) = f(\alpha_i) 
        \]
        If any check fails then the verifier rejects. Otherwise the verifier accepts.
    \end{algorithmic}
\end{algorithm}

    We show that $\rsp(f, d, \lambda, \eps_k)$ satisfies the guarantees of \cref{lm: poly iop rs}. The round complexity, input query complexity, and proof complexity are straightforward to check. 

    To see the completeness, let $\wh{f} \in \Ff_{q'}[x]$ denote the low degree extension of $f$. If $f \in \RS_{q'}[d, \Ff_q]$, then $\deg(\wh{f}) \leq d$. The honest prover can send $\wh{Q} \in \Ff_{q'}^{\leq (\sqrt{d}, \sqrt{d})}[x,y]$ according to \cref{lem:split_rs_rm} which satisfies $\wh{Q}(x^{\sqrt{d}}, x) = \wh{f}(x)$. In this case it is clear that the verifier will accept with probability $1$.

    For the round-by-round soundness, there is only one round of interaction. We define the state as doomed after this round of interaction if and only if one of the verifier's checks fails and the verifier rejects. If $\agr(f, \RS_{q'}[d, \Ff_q]) \leq \eps_{k}$, then for any $\wh{Q}\in \Ff_{q'}^{\leq(\sqrt{d}, \sqrt{d})}$ that the prover sends, note that
    \[
 \Pr_{(\alpha^{\sqrt{d}}, \alpha) \in S}[f(\alpha) = \wh{Q}(\alpha^{\sqrt{d}}, \alpha)] \leq \eps_k.
\]
Indeed, otherwise $f$ would have agreement greater than $\eps_k$ with the degree $d$ function $\wh{Q}(x^{\sqrt{d}}, x)$ over $\Ff_q$. It follows that the verifier rejects with probability at least $1 - \eps_k^{T_k} \geq 1 - 2^{-\lambda}$. The round-by-round sondness foloows.
\end{proof}

\subsection{Low Rate Individual Degree Reed-Muller IOPP}\label{sec:low_rate_rm_ind}
Towards \cref{thm: ind RM low rate}, we construct a Poly-IOP for testing proximity to $\RM_{q'}[(d,d), \Ff_q]$. In the Poly-IOP the prover will send a two univariate polynomials with degree at most $d$ and one bivariate polynomial with total degree $2d$. We will ultimately compile this Poly-IOP into a suitable batched code membership IOP which relies on the degree $d$-Reed Solomon IOPP.

\begin{lemma} \label{lm: poly iop ind rm}
    There is a Poly-IOPP for proximity to the $(d,d)$-degree Reed-Muller code with the following guarantees:
    \begin{itemize}
        \item \textbf{Input:} a function $f: \Ff_{q}^2 \to \Ff_{q'}$.
        \item \textbf{Completeness}: if $f \in \RM_{q'}[(d,d), \Ff_q]$, then the honest prover makes the verifier
        \item \textbf{Round-by-round soundness}: $2^{-\lambda}$
        \item \textbf{Initial state:} the initial state is doomed if and only if $\agr(f, \RM_{q'}[(d,d), \Ff_q^2]) \leq \eps_k$.
        \item \textbf{Round complexity:} $5$.
        \item \textbf{Input query complexity:} $T_k$.
        \item \textbf{Proof query complexity:} $T_k+4$.
    \end{itemize}
\end{lemma}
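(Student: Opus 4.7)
The plan is to have the prover first send a bivariate polynomial $\wh{R}$ promised to lie in $\Ff_{q'}^{\leq 2d}[x,y]$, which in the honest case equals the low-degree extension $\wh{f}$ of $f$ (its total degree is at most $2d$ since the individual degrees of $\wh{f}$ are at most $d$). The verifier then samples $\alpha, \beta$ uniformly from $\Ff_{q'}$, and the prover responds with univariate polynomials $\wh{Q}_1 \in \Ff_{q'}^{\leq d}[y]$ and $\wh{Q}_2 \in \Ff_{q'}^{\leq d}[x]$ that in the honest case equal $\wh{R}(\alpha, y)$ and $\wh{R}(x, \beta)$. Finally the verifier samples $T_k$ random points $x_1,\ldots,x_{T_k} \in \Ff_q^2$ together with two additional points $x_0, y_0 \in \Ff_{q'}$, and accepts iff $f(x_i) = \wh{R}(x_i)$ for every $i$, $\wh{R}(\alpha, y_0) = \wh{Q}_1(y_0)$, and $\wh{R}(x_0, \beta) = \wh{Q}_2(x_0)$. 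This uses $T_k$ input queries and $T_k + 4$ proof queries as required, and completeness is immediate: the honest choices above make every check pass.

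For round-by-round soundness, I would track two possible obstructions in the state after $\wh{R}$ is committed: (A) $\agr(\wh{R}|_{\Ff_q^2}, f) \leq \eps_k$, or (B) $\wh{R}$ has an individual degree exceeding $d$ in at least one variable. Whenever the initial state is doomed (i.e.\ $\agr(f, \RM_{q'}[(d,d),\Ff_q^2]) \leq \eps_k$), at least one of (A), (B) must hold for any $\wh{R}$ the prover sends; otherwise $\wh{R}|_{\Ff_q^2}$ would be a codeword in $\RM_{q'}[(d,d),\Ff_q^2]$ with agreement strictly greater than $\eps_k$ with $f$, contradicting the initial assumption. In case (A), the $T_k$ random comparisons between $f$ and $\wh{R}$ all pass with probability at most $\eps_k^{T_k} \leq 2^{-\lambda}$ by the choice $T_k = \Theta(\lambda/\log(1/\eps_k))$.

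For case (B), say WLOG $\deg_y \wh{R} = j^* > d$ and write $\wh{R}(x, y) = \sum_j p_j(x) y^j$; the leading coefficient $p_{j^*}(x)$ is a nonzero polynomial of degree at most $2d - j^* < d$, so for uniform $\alpha \in \Ff_{q'}$ one has $p_{j^*}(\alpha) \neq 0$ except with probability $d/q'$, in which case $\wh{R}(\alpha, y)$ has $y$-degree exactly $j^* > d$ whereas $\wh{Q}_1$ has degree at most $d$. Their difference is then a nonzero polynomial of degree at most $2d$ in $y$, so the random check at $y_0 \in \Ff_{q'}$ catches the disagreement except with probability $2d/q'$; the symmetric case $\deg_x \wh{R} > d$ is handled by $\beta$ and $x_0$. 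Since $q' = 2^{\lambda} \cdot \poly(q) \geq 2^{\lambda+2} d$, every verifier round transitions from doomed to not-doomed with probability at most $O(d/q') \leq 2^{-\lambda}$. The main subtlety will be bookkeeping the state function across each of the rounds so that each individual verifier message has transition probability at most $2^{-\lambda}$; in particular, drawing $\alpha, \beta$ from the large field $\Ff_{q'}$ rather than from $\Ff_q$ is crucial, since sampling from $\Ff_q$ would only give $1 - O(d/q)$, which need not be bounded by $2^{-\lambda}$ under our hypotheses on $q$.
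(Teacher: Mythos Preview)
Your proposal is correct and is essentially the same protocol and analysis as the paper's: the paper likewise has the prover send a total-degree-$2d$ polynomial $\wh{Q}$ (your $\wh{R}$), tests agreement with $f$ at $T_k$ random points of $\Ff_q^2$, then has the verifier pick $\alpha,\beta\in\Ff_{q'}$, receive two degree-$d$ univariate restrictions, and check each at a random point of $\Ff_{q'}$, with the same two-case soundness split (individual degrees $\leq d$ versus not). The only cosmetic difference is that the paper performs the $T_k$ agreement checks in the first verifier message rather than the last, so its post-round-1 state already records whether those checks failed; your acknowledged ``bookkeeping'' amounts to instead carrying condition (A) forward and letting the restricted-degree condition (your (B')) absorb $\alpha,\beta$, which works just as well.
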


\begin{proof}
We begin with a formal description of the Poly-IOPP:
\begin{algorithm}[H]\label{poly iop: irm}
  \caption{Individual Degree Reed-Muller IOP: $\irmp(f, (d,d), q, \lambda, \eps_k)$}
  \begin{algorithmic}[1]
    \STATE \textbf{P:} The prover sends $\widehat{Q} \in \Ff^{\leq 2d}_{q'}[x,y]$. In the honest case, $\wh{Q}$ is the low degree extension of $f$.
    \STATE \textbf{V:} The verifier chooses $T_k$ points $a_1, \ldots, a_{T_k} \in \Ff_q^2$ uniformly at random and checks
    \[
    \widehat{Q}(a_i) = f(a_{i}), \quad  \forall i \in [T_k].
    \]
    If any check fails then the verifier rejects.
    \STATE \textbf{V:} The verifier chooses $\alpha, \beta \in \Ff_{q'}$ uniformly at random and sends these to the prover.
    \STATE \textbf{P:} The prover sends $\widehat{F}_{1}, \widehat{F}_{2} \in \Ff_{q'}^{\leq d}[x]$. In the honest case,
    \[
    \widehat{F}_{1}(x) = \widehat{Q}(x, \beta) \quad \text{and} \quad  \widehat{F}_{2}(y) = \widehat{Q}(\alpha, y).
    \]
    \STATE \textbf{V:} The verifier chooses $\zeta_1, \zeta_2 \in \Ff_{q'}$ uniformly at random and checks if 
    \[
    \widehat{F}_{1}(\zeta_1) = \widehat{Q}(\zeta_1, \beta) \quad \text{and} \quad \widehat{F}_2(\zeta_2) = \widehat{Q}(\alpha, \zeta_2).
    \]
    If either check fails, then the verifier rejects. Otherwise, the verifier accepts.
  \end{algorithmic}
\end{algorithm}
At a high level 
the Poly-IOP is doing two checks to handle two separate cases in the soundness case. First, the prover may send $\wh{Q}$ which as total degree $2d$ and individual degrees at most $(d,d)$. In this case, $\wh{Q}$ should disagree with $f$ on at most $\eps_k$-fraction of $\Ff_q^2$, and the verifier will catch this with high probability in step 2. In the second case, the prover may send $\wh{Q}$ has one variable with degree between $d+1$ and $2d$. This case is handled by steps 3 through 5. The verifier catches the prover here by randomly fixing each of the variables and comparing the resulting univariate polynomials (in $x$ and $y$ respectively) $\wh{Q}(x, \beta)$ and $\wh{Q}(\alpha, y)$ to degree $d$ functions. Since one of these functions should have degree between $d+1$ and $2d$, it will differ from every degree $d$ function almost everywhere, so the verifier will reject with high probability in step 5.

    We now give the formal analysis of the Poly-IOP. 
    In the completeness case, $f \in \RM_{q'}[(d,d), \Ff_q]$. Let $\wh{f}$ be its low degree extension over $\Ff_{q'}$. The honest prover sends $\wh{Q} = \wh{f}$ in the first step. In this case, it is clear that the verifier will not reject during the second step. During the fourth step, the honest prover sends $\wh{F}_1(x) = \wh{f}(x, \beta)$ and $\wh{F}_2(y) = \wh{f}(\alpha, y)$. Since the honest prover also sent $\wh{Q} = \wh{f}$, it is clear that the checks in step 5 will pass with probability $1$. Overall, this establishes the completeness case. The round complexity and query complexities are easy to see, so we now argue the round-by-round soundness. 
\paragraph{First Round of Interaction:}
    After the first round of interaction, i.e.\ after step 3, we define the state as doomed if and only if one of the following holds:
    \begin{itemize}
        \item One of the checks in step 2 fails, in which case the protocol terminates and every state afterwards is doomed.
        \item One of the univariate polynomials $\wh{Q}(x, \beta)$, $\wh{Q}(\alpha, y)$ has degree greater than $d$.
    \end{itemize}
    If the initial state is doomed, then, $\agr(f, \RM_{q'}[(d,d), \Ff_q^2]) \leq \eps_k$. We consider two cases.

    First suppose $\wh{Q}$ has individual degrees at most $d$. Then it follows that, 
    \[
    \Pr_{a \in \Ff_q^2}[\wh{Q}(a) = f(a)] \leq \eps_k.
    \]
    In this case, with probability at least $1 - \eps_k^{T_k} \geq 1 - 2^{-\lambda}$, one of the checks in step 2 will fail.

    Now suppose $\wh{Q}$ does not have individual degrees at most $d$. Without loss of generality suppose it is the $x$-degree that is at least $d+1$. Then note that
    \[
    \Pr_{\beta \in \Ff_{q'}}\left[\deg(\wh{Q}(x, \beta)) \geq d+1\right] \geq 1 - \frac{d-1}{q'} \geq 1 - 2^{-\lambda}.
    \]
    Indeed, consider the coefficient of the highest $x$ monomial in $\wh{Q}(x, y)$. This coefficient is a polynomial in $y$, and since the total degree of $\wh{Q}(x,y)$ is $2d$, it is a degree $d-1$ polynomial in $y$, so the inequality above follows from the Schwartz-Zippel lemma.
    
    This shows that the first round has soundness error at most $2^{-\lambda}$.

    \paragraph{Second Round of Interaction:} After the second round of interaction, i.e.\ after step 5, the state is doomed if and only if one of the verifier's checks in step 5 fails.

    Suppose the previous state is doomed. We will show that for any message the prover sends in step 4, the next state will be doomed with probabilty at least $1- 2^{-\lambda}$. If the verifier has already rejected, then the next state is automatically doomed and we are done. Thus, we suppose that one of $\wh{Q}(x, \beta), \wh{Q}(\alpha, y)$ has degree at least $d+1$. Without loss of generality, suppose it is the former case. Then $\wh{F}_1(x) \neq \wh{Q}(x,\beta)$, and both have degree at most $2d$, so
    \[
    \Pr_{\xi_1 \in \Ff_{q'}}[\wh{F}_1(\xi_1) \neq \wh{Q}(\xi_1, \beta)] \geq 1 - \frac{2d}{q'}.
    \]
    Overall, this shows that the state will be doomed with probability at least, $1 - \frac{2d}{q'} \geq 1 - 2^{-\lambda}$ and completes our analysis of round-by-round soundness.
\end{proof}

\section{Batched Code Membership IOPPs for Compiling to IOPs} \label{sec: batch low rate}
In this section we give the inductive proofs of 
\cref{thm: RS low rate} and~\cref{thm: ind RM low rate}. For the base cases, where $d = O(1)$, the IOPs are trivial, and we focus henceforth on the inductive step.

\subsection{Set-up and an Auxiliary IOPP}
For the inductive step let us suppose that we have the $k-1$ cases of \cref{thm: RS low rate} and \cref{thm: ind RM low rate}. Namely, we assume that we have access to the IOPPs $\rsi(f, \sqrt{d}, q, \lambda, \eps_{k-1})$ and $ \irmi(f, (\sqrt{d}, \sqrt{d}), q, \lambda, \eps_{k-1})$ that satisfy the guarantees in~\cref{thm: RS low rate} and~\cref{thm: ind RM low rate} respectively. In addition to these IOPPs, we will also assume access to an analogous IOPP for the degree $2\sqrt{d}$ Reed-Solomon code that satisfies the following properties.
\begin{lemma} \label{lm: 2sqrtd rsi}
  Given $\rsi(f, \sqrt{d}, q, \lambda, \eps_{k-1})$ as in~\cref{thm: RS low rate}, one can construct an IOPP for degree $2\sqrt{d}$ Reed-Solomon code, which we denote by $\texttt{RS-IOPP}(f, 2\sqrt{d}, q, \lambda, \eps_{k-1})$, that has the following properties:
\begin{itemize}
    \item {\bf Input:} a function $f: \Ff_{q} \to \Ff_{q'}$, an agreement parameter $\eps_{k-1}>0$, and a degree parameter $2\sqrt{d}$ which is a of the form $2^{2^{k-1}+1}$.
    \item {\bf Completeness:} If $\deg(f) \leq 2\sqrt{d}$, then the honest prover makes the verifier accept with probability $1$.
    \item {\bf Round-by-round soundness:} $\lambda$.
    \item {\bf Initial state:} the initial state is doomed if and only if 
    $\agr(f, \RS_{q'}[2\sqrt{d}, \Ff_q]) \leq \eps_{k-1}$.
    \item {\bf Round complexity:} $O(k-1)$.
    \item {\bf Input query complexity:} 
    \[
    \iQ_{\RS, 2\sqrt{d}} = O\left(\frac{\lambda}{\log(1/\eps_{k-1})} \right).
    \]
    \item {\bf Proof query complexity:} 
    \[
    \pQ_{\RS,  2\sqrt{d}} = O\left(\sum_{i=1}^{k-1}  \frac{\lambda^2}{\log^2(1/\eps_i)} \right).
    \]
    \item {\bf Length:} $O\left((k-1) \cdot q^2\right)$.
\end{itemize}
\end{lemma}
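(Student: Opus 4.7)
The plan is to reduce testing proximity to $\RS_{q'}[2\sqrt{d},\Ff_q]$ to testing proximity to $\RS_{q'}[\sqrt{d},\Ff_q]$ by splitting the input polynomial into two lower-degree pieces, and then invoking $\rsi(\cdot,\sqrt{d},q,\lambda,\eps_{k-1})$ as a black box through the compilation framework of \cref{thm: poly iop transformation}. Because $2\sqrt{d}$ is only twice the degree of the previous step in the induction, a one-step splitting suffices.

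First, I would design a simple Poly-IOP in which the prover sends two polynomials $\wh{g}_1,\wh{g}_2 \in \Ff_{q'}^{\leq \sqrt{d}}[x]$ intended (in the honest case) to satisfy $\wh{f}(x) = \wh{g}_1(x) + x^{\sqrt{d}+1}\wh{g}_2(x)$, and the verifier picks $T_k = O(\lambda/\log(1/\eps_{k-1}))$ random points $\alpha_i \in \Ff_q$ and rejects unless $\wh{g}_1(\alpha_i) + \alpha_i^{\sqrt{d}+1}\wh{g}_2(\alpha_i) = f(\alpha_i)$ for all $i$. Completeness is immediate from the existence of such a decomposition when $\deg(\wh{f}) \leq 2\sqrt{d}$. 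For round-by-round soundness, under the Poly-IOP promise the combined polynomial $\wh{g}_1 + x^{\sqrt{d}+1}\wh{g}_2$ has degree at most $2\sqrt{d}$; if $\agr(f,\RS_{q'}[2\sqrt{d},\Ff_q]) \leq \eps_{k-1}$, then this combined polynomial can agree with $f$ on at most $\eps_{k-1}$-fraction of $\Ff_q$, so the sampling step rejects with probability at least $1 - \eps_{k-1}^{T_k} \geq 1 - 2^{-\lambda}$.

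Second, I would compile the Poly-IOP via \cref{thm: poly iop transformation}. The required batched code-membership IOPP must test proximity of $g_1$ and $g_2$ to $\RS_{q'}[\sqrt{d},\Ff_q]$ subject to the side conditions generated during compilation (the out-of-domain anchor samples and the $T_k$ consistency-check points from the Poly-IOP). I would build this batched IOPP by: (i) applying univariate quotienting as in \cref{sec:quotient} to each $g_i$, using \cref{lm: uni quo completeness} and \cref{lm: uni quo soundness} to reduce testing proximity of $g_i$ to $\RS_{q'}[\sqrt{d},\Ff_q \mid h_i]$ to testing proximity of its quotient to $\RS_{q'}[\sqrt{d}-|A_i|,\Ff_q]$; (ii) merging the two quotients into a single function of target degree $\sqrt{d}$ via the $\combine$ operation of \cref{lm: combine multiple univariate}; and (iii) invoking $\rsi(\cdot,\sqrt{d},q,\lambda,\eps_{k-1})$ on the resulting combined function, with the proximity parameter shifted by the constant factors from the quotienting and combining steps.

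The main thing to check is that the proximity parameters track correctly: one must verify that the total number of side conditions on each $g_i$ is $O(T_k) \ll \sqrt{d}$ so that the quotient remains far from low degree when $g_i$ is (this follows from \cref{lm: uni quo soundness} together with $|A_i|/q \ll \eps_{k-1}$), and that by \cref{lm: combine multiple univariate} the combined function is $\Omega(\eps_{k-1})$-far from $\RS_{q'}[\sqrt{d},\Ff_q]$ whenever at least one quotient is, up to a failure probability of $\poly(\sqrt{d})/q' \ll 2^{-\lambda}$. Given this, parameter accounting is routine: the round complexity is $O(1) + O(k-1) = O(k-1)$ where the $O(k-1)$ comes from the inner $\rsi$; the input query complexity is $T_k = O(\lambda/\log(1/\eps_{k-1}))$ from the Poly-IOP's random sampling; and the proof query complexity and length are obtained by adding the Poly-IOP's overhead of $O(T_k)$ queries to the bounds of $\rsi(\cdot,\sqrt{d},q,\lambda,\eps_{k-1})$, yielding the claimed $O\!\left(\sum_{i=1}^{k-1}\lambda^2/\log^2(1/\eps_i)\right)$ after absorbing the lower-order $O(\lambda/\log(1/\eps_{k-1}))$ term.
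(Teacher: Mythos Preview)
Your approach is correct in outline and would yield the claimed parameters, but it takes a noticeably heavier route than the paper's. You build a Poly-IOP with two promised polynomials $\wh{g}_1,\wh{g}_2$, verify the identity $\wh{g}_1(\alpha)+\alpha^{\sqrt{d}+1}\wh{g}_2(\alpha)=f(\alpha)$ by random sampling, and then compile via \cref{thm: poly iop transformation}, which brings in anchoring, side conditions, quotienting, and a batched IOPP before finally calling $\rsi(\cdot,\sqrt{d},\ldots)$. The paper instead constructs the IOPP directly, with no Poly-IOP and no compilation: the prover sends a single oracle function $f_1$, the verifier \emph{defines} $f_2(x)=(f(x)-f_1(x))/x^{\sqrt{d}}$ (so the identity $f=f_1+x^{\sqrt{d}}f_2$ holds by construction on $\Ff_q\setminus\{0\}$), then picks a random $\gamma\in\Ff_{q'}$ and runs $\rsi(f_1+\gamma f_2,\sqrt{d},q,\lambda,\eps_{k-1})$. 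Soundness follows from the correlated-agreement proximity generator for two functions: if $f_1+\gamma f_2$ were $\eps_{k-1}$-close to degree $\sqrt{d}$ with non-negligible probability over $\gamma$, then $f_1,f_2$ would have correlated agreement with degree-$\sqrt{d}$ functions $F_1,F_2$ on a set of density $\geq\eps_{k-1}$, and on that set $f=F_1+x^{\sqrt{d}}F_2$, contradicting $\agr(f,\RS_{q'}[2\sqrt{d},\Ff_q])\leq\eps_{k-1}$.

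What the paper's route buys is the elimination of the entire compilation layer: no random consistency sampling of $f$, no out-of-domain anchors, no side conditions to quotient away. This is possible precisely because the verifier can simulate $f_2$ from $f$ and $f_1$, so there is nothing to \emph{check} about the decomposition --- only the degrees of $f_1,f_2$ remain to be tested, and that is handled by a single proximity-generator combination followed by one call to the inner $\rsi$. Your approach is a valid alternative and fits the paper's general recipe, but it incurs avoidable overhead here. (Minor nit: with $\wh{g}_1,\wh{g}_2\in\Ff_{q'}^{\le\sqrt{d}}[x]$ and exponent $\sqrt{d}+1$, the combined polynomial has degree up to $2\sqrt{d}+1$, not $2\sqrt{d}$; use $x^{\sqrt{d}}$ or bound $\deg\wh{g}_2\le\sqrt{d}-1$.)
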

\begin{proof}
The idea is that the prover sends $f_1, f_2 \in \Ff_{q'}^{\leq \sqrt{d}}[x]$ which supposedly satisfies $f_1 + x^{\sqrt{d}} \cdot f_2 = f$, and then both parties run $\rsi( f_1 + \xi\cdot f_2, \sqrt{d}, q, \lambda, \eps_{k-1})$ for a $\xi \in \Ff_{q'}$ that the verifier chooses randomly.
We defer the formal proof to~\cref{sec: 2sqrtd}.
\end{proof}

To complete the proofs of the inductive steps in \cref{thm: RS low rate} and \cref{thm: ind RM low rate}, we will apply \cref{thm: poly iop transformation} and compile the Poly-IOPs from \cref{lm: poly iop rs} and \cref{lm: poly iop ind rm} into legitimate IOPPs. The only task left is to construct the batched code membership IOPPs required by \cref{thm: poly iop transformation}. These IOPPs are nearly the ones assumed to exist by the inductive assumption. The only subtlety that arises is that we will need to test proximity to Reed-Solomon or Reed-Muller codes with side conditions. Hence the main work in this section is reducing from IOPPs for codes with side conditions to IOPPs for codes without side conditions,
and we use the tools from~\cref{sec:quotient}.

\subsection{Compiling $\texttt{RSPoly}$ for \cref{thm: RS low rate}}

From \cref{lm: poly iop rs}, we already have a Poly-IOP that performs the task of \cref{thm: RS low rate}. To complete the proof, we will construct a batched code membership IOPP, $\texttt{RSBatch}$, and use it to compile the Poly-IOP $\texttt{RSPoly}$ into a legitimate IOP, satisfying \cref{thm: RS low rate}. Our final IOPP will be $\compile(\rsp, \rsb(\eps'_{k-1}), \eps'_{k-1})$, for 
\begin{equation} \label{eq: eps' for compile}  
\eps'_{k-1} = 0.9\eps_{k-1} - \frac{T_k + 1}{q}.
\end{equation}

In the Poly-IOP, $\texttt{RSPoly}$, the prover sends one polynomial, $\wh{Q} \in \Ff_{q'}^{\leq (\sqrt{d}, \sqrt{d})}[x,y]$. When compiling $\texttt{RSPoly}$ to an IOP, the IOP prover will instead send a function $Q: \Ff_{q}^2 \to \Ff_{q'}$, which is supposedly the evaluation of $\wh{Q}$ over $\Ff_q^2$. The verifier will query $\wh{Q}$ at $T_k + 2$ points in total: the out-of-domain sample $a_0 \in \Ff_{q'}^2$, and the Poly-IOP queries $a_1, \ldots, a_{T_k} \in \Ff_q^2$. The prover will provide a value for $\wh{Q}(a_0)$, while the verifier can query the oracle $Q$ for the values  $\wh{Q}(a_1), \ldots,\wh{Q}(a_{T_k})$. Let $A, B \subseteq \Ff_{q'}$ be sets of size $T_k+1$ such that $a_0,\ldots, a_{T_k+1} \in A \times B$. In more detail, $A$ consists of all of the $x$-coordinates of the $a_i$'s, while $B$ consists of the $y$-coordinates. Let $h: \{a_0, \ldots, a_{T_k+1}\} \to \Ff_{q'}$ be the side condition function which agrees with the values $\wh{Q}(a_0), \ldots, \wh{Q}(a_{T_k})$ provided by the prover. Let $\wh{h}$ be the low degree extension of $h$ to $\Ff_{q'}^2$ that agrees with $\wh{Q}$ on $A \times B$. That is, we may first extend $h$ to the entire subcube $A \times B$ in a manner that agrees with $\wh{Q}$, and then take the low degree extension to $\Ff_{q'}^2$. Note that $\wh{h} \in \Ff_{q'}^{(T_k, T_k)}[x,y]$, so the prover can send $\wh{h}$ by giving its evaluation over $A \times B$ and then having the verifier calculate the low degree extension. 

To complete the proof the inductive step and conclude \cref{thm: RS low rate}, it remains to show how to construct the code membership IOPP with side conditions for the oracle function $Q$ with side conditions according to the queries above. Specifically, we must test proximity of $Q$ to $\RM_{q'}[(\sqrt{d}, \sqrt{d}), \Ff_q^2 \; | \; h]$, and we will do so with the help of \cref{lm: bivariate side condition decomp+combine} which allows us to reduce to the case without side conditions case:
\begin{algorithm}[H] \caption{Batched IOP for Reed-Solomon: $\texttt{RSBatch}$} \label{iop: batch rs}
    \begin{algorithmic}[1]
         \STATE \textbf{P:} The prover sends $h: A \times B \to \Ff_{q'}$, that agrees with the side conditions, where $A \times B$ is the smallest subcube containing all of the side condition points. That is, they only provide the values of $h$ at $A \times B \setminus \{a_0,\ldots, a_{T_k+1}\}$, while the verifier assumes that the remaining values of $h$ agree with the side conditions. The prover also sends a function $g_{1}: \Ff_q^2 \to \Ff_{q'}$ and a function $\Fill:\Ff_q \times B \to \Ff_{q'}$.
        \STATE Let 
    \[
    g_{2} = \quo(f - V_A \cdot g_1 - \tilde{h}, B, \Fill),
    \]
    where $\tilde{h}$ is the degree $(T_k, T_k)$-extension of $h$ over $\Ff_q^2$. In the honest case, $g_1$ and $g_2$ are as \cref{lm: ind deg bivariate side condition decomp completeness case} and $h = \wh{Q}|_{A \times B}$. 
    \STATE \textbf{V:} The verifier chooses $F = \combine_{(\sqrt{d}, \sqrt{d})}(g_1, g_2)$ as in \cref{lm: combine multiple multivariate}.
    \STATE \textbf{P + V:} Both parties run $\irmi(F, (\sqrt{d}, \sqrt{d}), \lambda, \eps_{k-1})$.
    \end{algorithmic}
\end{algorithm}
We will now show that \cref{iop: batch rs} satisfies the requirements outlined in \cref{thm: poly iop transformation}. This will show that $\compile(\texttt{RSPoly}(f, d, \lambda, \eps_k), \texttt{RSBatch})$ satisfies the guarantees of \cref{thm: RS low rate}, thereby completing the proof of \cref{thm: RS low rate}.

\begin{lemma} \label{lm: batch rs}
The IOP $\texttt{RSBatch}$ has the following guarantees.
\begin{itemize}
\item {\bf Input:} a function $Q: \Ff_{q}^2 \to \Ff_{q'}$ and a side condition function $h$ as described above.
\item {\bf Completeness:} if $Q \in \RM_{q'}[(\sqrt{d}, \sqrt{d}), \Ff_q^2 \; | \; h]$, then the honest prover makes the verifier accept with probability $1$.
\item {\bf Round-by-round soundness:} $2^{-\lambda}$. 
\item {\bf Initial state:} the initial state is doomed if and only if $\agr(Q, \RM_{q'}[(\sqrt{d}, \sqrt{d}), \Ff_q^2 \; | \; h]) \leq \eps'_{k-1}$.
\item {\bf Round complexity:} $O(k)$.
\item {\bf Input query complexity:} $O(T_k)$.
\item {\bf Proof query complexity:} $\pQ_{\iRM, k-1} + O(\iQ_{\iRM, k-1}) +(T_k+2)^2$.
\item {\bf Length:} $O\left(k \cdot q^2 \right)$
\end{itemize}
\end{lemma}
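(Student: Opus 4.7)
The plan is to verify the four claims—completeness, round-by-round soundness, the initial-state characterization, and the efficiency parameters—mostly by combining the tools of~\cref{sec:quotient} with the inductive hypothesis on $\irmi$. The main conceptual point is that the protocol is designed so that a single invocation of \cref{lm: bivariate side condition decomp+combine} reduces the side-condition proximity task to a plain (no side conditions) proximity test for the bivariate individual-degree Reed-Muller code, which is exactly what the inductive $\irmi$ handles.

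For completeness, suppose $Q \in \RM_{q'}[(\sqrt{d}, \sqrt{d}), \Ff_q^2 \; | \; h]$. The honest prover uses \cref{lm: ind deg bivariate side condition decomp completeness case} to produce $g_1$ of individual degree $(\sqrt{d} - |A|, \sqrt{d})$ and $\Fill$ such that the resulting $g_2 = \quo(Q - V_A \cdot g_1 - \tilde{h}, B, \Fill)$ has individual degree $(\sqrt{d}, \sqrt{d} - |B|)$; then $\combine_{(\sqrt d, \sqrt d)}(g_1, g_2)$ is in $\RM_{q'}[(\sqrt{d}, \sqrt{d}), \Ff_q^2]$ for \emph{every} choice of combine coefficients, so by the completeness of $\irmi$ the verifier accepts with probability $1$.

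For round-by-round soundness, I define three state functions mirroring the structure of the protocol. The initial state is doomed iff $\agr(Q, \RM_{q'}[(\sqrt d, \sqrt d), \Ff_q^2 \; | \; h]) \leq \eps'_{k-1}$. After step 1, the state is declared doomed iff the initial state was doomed; since this step is deterministic on the verifier's side, there is no soundness loss. After step 3 (the random choice of combine coefficients), the state is doomed iff $\agr(F, \RM_{q'}[(\sqrt d, \sqrt d), \Ff_q^2]) \leq \eps_{k-1}$. The key step is to apply \cref{lm: bivariate side condition decomp+combine} to the pair $(g_1, g_2)$: with the choice $\eps'_{k-1} = 0.9 \eps_{k-1} - (T_k+1)/q$ and the fact that $\eps_{k-1} \geq 21(\sqrt d/q)^{\tl}$ in our parameter regime, the lemma guarantees that the state transitions from doomed to not-doomed with probability at most $\poly(\sqrt d)/q' \leq 2^{-\lambda}$ (after absorbing the constants $1.04$ and the $|B|/q$ loss from the quotient into the slack between $\eps'_{k-1}$ and $\eps_{k-1}$). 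From this point on the protocol is $\irmi(F, (\sqrt d, \sqrt d), \lambda, \eps_{k-1})$ on a doomed input, whose round-by-round soundness is $2^{-\lambda}$ by the inductive hypothesis, so the overall round-by-round soundness is $2^{-\lambda}$ as required.

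The remaining items are by inspection. The round complexity is $O(1) + O(k-1) = O(k)$ from the $\irmi$ call. The verifier's queries to $Q$ (the input oracle) are only the $O(T_k)$ queries needed to evaluate $g_2$ on the inductive $\irmi$'s query locations (each such evaluation uses a single point of $f$), giving input query complexity $O(T_k)$; queries to $g_1$ and to the table of $h$ on $A \times B$ contribute the $(T_k+2)^2$ term, and the inductive call contributes $\pQ_{\iRM,k-1} + O(\iQ_{\iRM,k-1})$ proof queries. The length is dominated by the prover's new oracles $g_1$ and $\Fill$ (each of length $O(q^2)$) plus the $\irmi$ proof of length $O((k-1)q^2)$, totalling $O(k q^2)$. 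The step I expect to be the main obstacle is verifying that the chosen $\eps'_{k-1}$ really does propagate correctly through the combination of quotienting and combine—i.e., checking that the $1.04$ factor from \cref{lm: bivariate side condition decomp+combine}, together with the $|B|/q = (T_k+1)/q$ loss coming from the quotient step in \cref{lm: bivariate side condition decomp}, fits within the slack $0.9\eps_{k-1} - (T_k+1)/q$; this is a routine but slightly delicate bookkeeping calculation.
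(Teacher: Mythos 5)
Your proposal matches the paper's proof essentially exactly: the completeness argument via \cref{lm: ind deg bivariate side condition decomp completeness case}, the round-by-round soundness via \cref{lm: bivariate side condition decomp} followed by \cref{lm: bivariate side condition decomp+combine} to produce a combined function $F$ that is $\eps_{k-1}$-far from $\RM_{q'}[(\sqrt d,\sqrt d),\Ff_q^2]$ when the initial state is doomed, and then deferring to the inductive $\irmi$ on $F$. The bookkeeping you flag as the delicate point — that $1.04(\eps'_{k-1} + (T_k+1)/q) = 1.04\cdot 0.9\,\eps_{k-1} \leq \eps_{k-1}$ given $\eps'_{k-1} = 0.9\eps_{k-1} - (T_k+1)/q$ — is indeed exactly the calculation the paper performs, and the parameter-accounting for round, query, and length complexity is the same.
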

\begin{proof}
We discuss all of the guarantees except for round-by-round soundness, which we save for the next section. The round complexity, input query complexity, proof query complexity, and length are clear. In particular, they from the inductive assumption that $\irmi$ satisfies \cref{thm: ind RM low rate}. For the proof query complexity, note that there are $\pQ_{\iRM, k-1}$ proof queries when running $\irmi(F, (\sqrt{d}, \sqrt{d}), \lambda, \eps_{k-1})$. Furthermore, the $\iQ_{\iRM, k-1}$ input queries to $F$, become proof queries. Note that each query to $F$ can be simulated by $O(1)$ oracle queries. We now argue the completeness.

In step 1, the honest prover completes the side condition function according to $\wh{Q}$ by sending $h = \wh{Q}|_{A \times B}$. They also send $g_1$ and $\Fill$ according to \cref{lm: ind deg bivariate side condition decomp completeness case}. By \cref{lm: ind deg bivariate side condition decomp completeness case}, $F = \combine_{(\sqrt{d}, \sqrt{d})}(g_1, g_2) \in \RM_{q'}[(\sqrt{d}, \sqrt{d})]$ and the remainder of the completeness follows from that of $\irmi(F, (\sqrt{d},\sqrt{d}), \lambda, \eps_{k-1})$.
\end{proof}

\subsubsection{Round-by-Round Soundness for \cref{lm: batch rs}}
There is only one round of interaction, after which we define the state as doomed if and only if 
\begin{equation} \label{eq: batch rs state}    
\agr(F, \RM_{q'}[(\sqrt{d}, \sqrt{d}), \Ff_q^2]) \leq \eps_{k-1}.
\end{equation}
The soundness of this round is established by the next claim.
\begin{claim}
Suppose the initial state is doomed, meaning 
\[
\agr(Q, \RM_{q'}[(\sqrt{d}, \sqrt{d}), \Ff_q^2 \; | \; h]) \leq \eps'_{k-1}.
\]
Then for any $g_1, \Fill$ that the prover sends, we have that 
\[
\agr(F, \RM_{q'}[(\sqrt{d}, \sqrt{d}), \Ff_q^2] \leq \eps_{k-1},
\]
with probability at least $1-2^{-\lambda}$.
\end{claim}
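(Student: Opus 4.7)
The claim is essentially an immediate corollary of \cref{lm: bivariate side condition decomp+combine}. The plan is to instantiate that lemma with degree parameter $\sqrt{d}$, the side condition function $h$ supplied in step~1 of \cref{iop: batch rs}, and the oracle $Q$ in place of $f$. Before doing so, I would verify the two hypotheses of the lemma: (i) $\eps'_{k-1} \geq 21(\sqrt{d}/q)^{\tl}$, which follows because $\eps_{k-1} = (2\sqrt{d}/q)^{\tl/2}$ is polynomially larger than $(\sqrt{d}/q)^{\tl}$ in the low-rate regime $q \gg \sqrt{d}$, and shrinking by the factor $0.9$ and subtracting $(T_k+1)/q$ preserves this; (ii) $|B| \leq q^{1/4}$, which holds since $|B| = T_k + 1 = O(\lambda/\log q)$ in the operating regime for $\lambda$.

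Applying \cref{lm: bivariate side condition decomp+combine} to the prover's $g_1$ and $\Fill$ (recalling that $g_2 = \quo_2(Q - V_A\cdot g_1 - \tilde h, B, \Fill)$ and $F = \combine_{(\sqrt{d}, \sqrt{d})}(g_1, g_2)$) then yields
\[
\Pr\left[\agr(F, \RM_{q'}[(\sqrt{d},\sqrt{d}),\Ff_q^2]) \geq 1.04\,\eps'_{k-1}\right] \leq \frac{\poly(\sqrt{d})}{q'},
\]
where the randomness is over the $\combine$ coefficients chosen by the verifier in step~3 of \cref{iop: batch rs}.

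It remains to check that this conclusion is strong enough. The definition $\eps'_{k-1} = 0.9\eps_{k-1} - (T_k+1)/q$ was chosen precisely so that $1.04\,\eps'_{k-1} \leq \eps_{k-1}$, using $1.04 \cdot 0.9 < 1$ with some room left over to absorb the $1.04 \cdot (T_k+1)/q$ additive loss. Moreover, since $q' = 2^{\lambda}\cdot \poly(q)$ with the $\poly$ factor chosen large enough to dominate $\poly(\sqrt{d})$, we have $\poly(\sqrt{d})/q' \leq 2^{-\lambda}$. Combining these two observations gives
\[
\Pr\left[\agr(F, \RM_{q'}[(\sqrt{d},\sqrt{d}),\Ff_q^2]) \leq \eps_{k-1}\right] \geq 1 - 2^{-\lambda},
\]
which is exactly the state transition probability required by the definition of the state in \eqref{eq: batch rs state}. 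The only real work is the constant-chasing for $1.04$, $0.9$, and the side-condition loss $(T_k+1)/q$, all of which were engineered into the choice of $\eps'_{k-1}$ precisely so that this step goes through without further difficulty.
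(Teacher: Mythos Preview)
Your proposal is correct and follows essentially the same route as the paper: both arguments invoke \cref{lm: bivariate side condition decomp+combine} with the oracle $Q$, degree $\sqrt{d}$, and side conditions $h$, then close by checking that $1.04\,\eps'_{k-1} \leq \eps_{k-1}$ via the definition $\eps'_{k-1} = 0.9\eps_{k-1} - (T_k+1)/q$. The paper additionally spells out the intermediate correlated-agreement bound from \cref{lm: bivariate side condition decomp}, but this is already subsumed by \cref{lm: bivariate side condition decomp+combine}, so your more direct invocation is just as valid.
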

\begin{proof}
Fix any $g_1, \Fill$. 
By \cref{lm: bivariate side condition decomp} we have that $g_1$ and $g_2$ have correlated agreement at most $\eps'_{k-1} + \frac{T_k+1}{q}$ with degrees $(\sqrt{d}-T_k-1, \sqrt{d})$ and $(\sqrt{d}-T_k-1, \sqrt{d})$ respectively. 
Applying \cref{lm: bivariate side condition decomp+combine} we get that with probability at least $1-2^{-\lambda}$, the function $F = \combine_{(\sqrt{d}, \sqrt{d})}(g_1, g_2)$ has agreement at most $1.04(\eps'_{k-1} + \frac{T_k+1}{q}) \leq \eps_{k-1}$ with degree $(\sqrt{d}, \sqrt{d})$ over $\Ff_q^2$.
\end{proof}

The remainder of the round-by-round soundness follows from that of $\irmi$. Indeed, if~\eqref{eq: batch rs state} holds and the state is doomed prior to step $4$, then we are in the doomed initial state of the IOPP $\irmi(F, (\sqrt{d},\sqrt{d}), q, \lambda, \eps_{k-1})$ as described in \cref{thm: ind RM low rate}. 

\subsubsection{Concluding the Proof of \cref{thm: RS low rate}}
With $\rsp$ and $\texttt{RSBatch}$, we are ready to complete the proof of the inductive step for \cref{thm: RS low rate}, by plugging both IOPs into $\compile$ from~\cref{thm: poly iop transformation} . This also concludes the the proof of \cref{thm: RS low rate}. We summarize this process in the following claim.

\begin{claim}
Let $d = 2^{2^{k}}$. Assuming access to the IOPP 
$\irmi(f, (\sqrt{d}, \sqrt{d}), q, \lambda, \eps_{k-1})$ satisfying 
the $k-1$ case of \cref{thm: ind RM low rate}, the IOP $\compile(\rsp, \rsb(\eps'_{k-1}), \eps'_{k-1})$ satisfies the guarantees of the $k$ case of \cref{thm: RS low rate}.
\end{claim}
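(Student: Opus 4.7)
The plan is to invoke \cref{thm: poly iop transformation} on the Poly-IOP $\rsp$ of \cref{lm: poly iop rs} and the batched membership IOPP $\rsb(\eps'_{k-1})$ of \cref{lm: batch rs}; the resulting IOP is by definition $\compile(\rsp, \rsb(\eps'_{k-1}), \eps'_{k-1})$, so the entire argument reduces to verifying the two hypotheses of \cref{thm: poly iop transformation} and then matching, parameter by parameter, the conclusion of that theorem with the $k$-th case of \cref{thm: RS low rate}.

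For the hypotheses, $\rsp$ sends a single polynomial $\wh{Q} \in \Ff_{q'}^{\leq (\sqrt{d},\sqrt{d})}[x,y]$, so $k_i = 1$, and by Schwartz-Zippel (\cref{lm: schwartz-zippel for rm}) one has $\delta \leq 2\sqrt{d}/q$ and $\wh{\delta} \leq 2\sqrt{d}/q'$. With $\eps_{k-1} = (2\sqrt{d}/q)^{\tl/2}$ and $\eps'_{k-1} = 0.9\eps_{k-1} - \tfrac{T_k+1}{q}$ from \eqref{eq: eps' for compile}, the first condition $\eps'_{k-1} \geq 2\sqrt{\delta}$ reduces to an inequality of the form $q^{1/2-\tl/4} \gtrsim d^{1/4 - \tl/4}$, which holds comfortably for the range of $q$ and $d$ in \cref{thm: RS low rate}; and the second condition $\wh{\delta}/\eps'_{k-1} \leq 2^{-\lambda}$ holds because $q' = 2^{\lambda} \cdot \poly(q)$.

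Having established the hypotheses, I would then read off each promised parameter from the conclusion of \cref{thm: poly iop transformation}. Completeness is inherited from $\rsp$ and $\rsb$; the initial state of the compiled IOP is doomed iff the initial state of $\rsp$ is, i.e.\ iff $\agr(f, \RS_{q'}[d,\Ff_q]) \leq \eps_k$; round-by-round soundness is $2^{-\lambda}$; the round complexity is $\rd_{\poly} + \rd_{\bat} + O(1) = O(1) + O(k) = O(k)$; the input query complexity is $T_k = O(\lambda/\log(1/\eps_k))$; and the length is $\Len_{\bat} + \sum |H^{(i)}_j| = O(k q^2) + O(q^2) = O(kq^2)$. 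For the proof query complexity, $\iQ_{\bat} + \pQ_{\bat} + \sum k_i + \sum \sum \Q_{i,j}$ simplifies, after substituting the bounds of \cref{lm: batch rs} and \cref{lm: poly iop rs}, to $\pQ_{\iRM,k-1} + O(\iQ_{\iRM,k-1}) + O(T_k^2)$, which by the inductive hypothesis on $\irmi$ telescopes into the target $O\!\left(\sum_{i=1}^{k-1} \lambda^2/\log^2(1/\eps_i)\right)$.

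The one genuinely quantitative point is the calibration of $\eps'_{k-1}$: the subtractive slack $\tfrac{T_k+1}{q}$ in \eqref{eq: eps' for compile} must be small enough that $\eps'_{k-1}$ still satisfies both hypotheses of \cref{thm: poly iop transformation} as above, yet large enough that \cref{lm: uni quo soundness} together with \cref{lm: batch rs} upgrade the doomed initial state ``$\agr(Q, \RM_{q'}[(\sqrt{d},\sqrt{d}),\Ff_q^2 \,|\, h]) \leq \eps'_{k-1}$'' of $\rsb$ to the target doomed initial state ``$\agr(f, \RS_{q'}[d,\Ff_q]) \leq \eps_k$'' of the compiled IOP. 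Everything else is essentially mechanical bookkeeping.
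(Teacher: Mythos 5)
Your proposal takes the same approach as the paper, which itself gives a one-sentence proof ("apply \cref{thm: poly iop transformation} with the guarantees of \cref{lm: poly iop rs} and \cref{lm: batch rs}"); you simply spell out the bookkeeping, and the hypothesis checks and parameter matching are correct. One small inaccuracy in your final paragraph: the target doomed initial state ``$\agr(f, \RS_{q'}[d,\Ff_q]) \leq \eps_k$'' of the compiled IOP is inherited directly from the Poly-IOP $\rsp$ (see the ``Initial State'' clause of \cref{thm: poly iop transformation}), not obtained by upgrading the batched IOPP's state; the actual role of $\eps'_{k-1}$ is that it must be small enough relative to $\eps_{k-1}$ that the quotient-and-combine step inside $\rsb$ (via \cref{lm: bivariate side condition decomp+combine}, not \cref{lm: uni quo soundness}) lands the doomed case inside the doomed initial state of the recursive call to $\irmi(\cdot,(\sqrt{d},\sqrt{d}),q,\lambda,\eps_{k-1})$.
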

\begin{proof}
    The parameters are straightforward to verify by applying \cref{thm: poly iop transformation} with the guarantees for $\rsp$ and $\texttt{RSBatch}$ given by \cref{lm: poly iop rs} and \cref{lm: batch rs} respectively. 
\end{proof}

\subsection{Compiling $\texttt{iRMPoly}$ for \cref{thm: ind RM low rate}}
We keep $\eps'_{k-1}$ as in \cref{eq: eps' for compile}. In particular, note that 
\[
\eps'_{k-1} \leq \min\left(0.9\eps_{k-1}, \eps_{k-1}-\frac{3}{q}\right).
\]

We next design an appropriate batched code membership IOPP appropriate 
for compiling $\texttt{iRMPoly}$ from~\cref{thm: ind RM low rate}.
In the Poly-IOP $\texttt{iRMPoly}$, the prover sends three polynomials: $\wh{Q} \in \Ff_{q'}^{\leq 2\sqrt{d}}[x,y]$, $\wh{F}_1 \in \Ff_{q'}^{\leq \sqrt{d}}[x]$, and $\wh{F}_2 \in \Ff_{q'}^{\leq \sqrt{d}}[y]$. When compiling $\texttt{RSPoly}$ to an IOP, the prover will instead send the functions $Q, F_1, F_2$ which are given over $\Ff_q^2, \Ff_q,$ and $\Ff_q$ respectively. These oracle functions are supposedly the evaluations of $\wh{Q},\wh{F}_1,$ and $\wh{F}_2$ respectively. While simulating $\irmp$ the verifier will query $\wh{Q}$ at $T_k+3$ and each of $\wh{F}_1$ and $\wh{F}_2$ at $2$ points. Let $A_Q \times B_Q$, be a subcube containing the side condition points for $\wh{Q}$. Note that we may have $|A_Q| = |B_Q| = T_k+3$. Let $h_Q, h_1, h_2$ be the side condition functions for $\wh{Q}, \wh{F}_1, \wh{F}_2$, i.e. each side condition function maps the verifier's queries to the prover's answers for that respective polynomial. Let $\wh{h}_Q \in \Ff_{q'}^{\leq (T_k+2, T_k+2)}[x,y], \wh{h}_1 \in \Ff_{q'}^{\leq 1}[x], \wh{h}_2 \in \Ff_{q'}^{\leq 1}[y]$ be the low degree extensions after completing $h_Q, h_1,$ and $h_2$ respectively after completing them to $A_Q \times B_Q$, $A_1$, and $A_2$ in agreement with their respective polynomials.

To complete the proof the inductive step and conclude \cref{thm: ind RM low rate}, it remains to show how to construct the code membership IOPP with side conditions for the oracle functions above with side conditions. Namely, we must construct IOPPs for the following:
\begin{enumerate}
\item $Q$ to $\RM_{q'}[2\sqrt{d}, \Ff_q^2 \; | \; h_Q]$.
\item $F_1$ to $\RS_{q'}[\sqrt{d}, \Ff_q \; | \; h_1]$.
\item $F_2$ to $\RS_{q'}[\sqrt{d}, \Ff_q \; | \; h_2]$.
\end{enumerate}
The batched IOPP is motivated by the following decompositions from~\cref{lm: uni quo completeness,lm: ind deg bivariate side condition decomp completeness case}, saying
that in the completeness case we have
\begin{align} \label{eq: irm decomps}
    Q(x,y) &= V_{A_Q}(x) \cdot g_{Q, 1}(x,y) + V_{B_Q}(y) \cdot g_{Q,2}(x,y) + \wh{h}_Q|_{\Ff_q^2} \notag\\
    F_1(x) &= V_{A_1}(x) \cdot f_1(x) + \wh{h}_1|_{\Ff_q}  \notag\\
    F_2(x) &= V_{A_2}(x) \cdot f_2(x) + \wh{h}_2|_{\Ff_q}, 
\end{align}
for $g_{Q,1}, g_{Q,2} \in \RM_{q'}[2\sqrt{d}- T_k-3, \Ff_q^2]$ and $f_1, f_2 \in \RS_{q'}[\sqrt{d}-2, \Ff_q]$. The batched IOPP is presented below.
\begin{algorithm}
\begin{algorithmic}[1]
\caption{Batched IOPP for compiling $\irmp$: $\irmb$\label{iop: batch irm}}
    \STATE \textbf{P:} The prover sends the following functions
    \begin{itemize}
        \item $\wh{h}_Q \in \Ff_{q'}^{(T_k+2, T_k+2)}[x,y]$ which agrees with the side conditions for $Q$. In the honest case, this is the low degree extension of $\wh{Q}|_{A_Q \times B_Q}$. The prover sends the polynomial by completing the evaluation over $A_Q\times B_Q$, allowing the verifier to calculate $\wh{h}_Q$ via interpolation. 
        \item $\wh{h}_1, \wh{h}_2 \in \Ff_{q'}^{\leq 1}[x]$. In the honest case, these are the low degree extensions of the side condition functions on $\wh{F}_1|_{A_1}, \wh{F}_2|_{A_2}$ respectively.
        \item $g_{Q,1}: \Ff_q^2 \to \Ff_{q'}$.
        \item $\Fill_Q: X \times Y \to \Ff_{q'}$.
        \item $\Fill_1: A_1 \to \Ff_{q'}$ and $\Fill_2: A_2 \to \Ff_{q'}$
    \end{itemize} 
    \STATE Let $h_1, h_2$ be the functions that are the evaluations of $\wh{h}_1, \wh{h}_2$ over $\Ff_q$ respectively. Let $h_Q$ be the function which is the evaluation of $\wh{h}_Q$ over $\Ff_q^2$. Define the following functions over $\Ff_q^2$, $\Ff_q$, and $\Ff_q$ respectively
    \begin{align*}
        g_{Q,2} &= \Quo_2(Q - V_X \cdot g_{Q,1} - h_Q, Y, \Fill_Q), \\
        f_1 &= \Quo(F_1 - h_1, A_1), \\
        f_2 &= \Quo(F_2 - h_2, A_2).
    \end{align*}
    In the honest case, all of the functions appearing are as in~\eqref{eq: irm decomps}.

    \STATE \textbf{V:} The verifier does the following:
    \begin{itemize}
        \item Choose coefficients according to the proximity generator to obtain
        \[
        G = \combine_{2\sqrt{d}}(g_{Q,1}, g_{Q,2}).
        \]
        They send the random coefficients in $\combine$ to the prover.
        \item Choosing $T_k$ lines, $L_1, \ldots, L_{T_k} \subseteq \Ff_q^2$ and also generates
        \[
         F = {\sf combine}_{2\sqrt{d}}(G|_{L_1},\ldots, G|_{L_{T_k}}, f_1, f_2)
        \]
        according to \cref{def: uni combine}. They send the lines and the random coefficients used in $\combine$ to the prover.
    \end{itemize}

    \STATE \textbf{P + V:} Both Parties run $\rsi(F, 2\sqrt{d}, \lambda, \eps_{k-1})$.
\end{algorithmic}
\end{algorithm}
We will now show that \cref{iop: batch irm} satisfies the requirements outlined in \cref{thm: poly iop transformation}. 
\vspace{0.5cm}
\begin{lemma} \label{lm: batch irm}
The IOPP, $\texttt{iRMBatch}$ has the following guarantees.
\begin{itemize}
\item {\bf Input:} functions $Q: \Ff_q^2 \to \Ff_{q'}$, $F_1: \Ff_q \to \Ff_{q'}$, and $F_2: \Ff_q \to \Ff_{q'}$, as well as their respective side condition functions $h_Q, h_1,$ and $h_2$ as described above. 

\item {\bf Completeness:} if $Q \in \RM_{q'}[2\sqrt{d}, \Ff_q^2 \; | \; h_Q]$, $F_1 \in \RS_{q'}[\sqrt{d}, \Ff_q \; | \; h_1]$ and $F_2 \in \RS_{q'}[\sqrt{d}, \Ff_q \; | \; h_2]$, then the honest prover makes the verifier accept with probability $1$.
\item {\bf Round-by-round soundness:} $2^{-\lambda}$. 
\item {\bf Initial state:} the initial state is doomed if and only if one of the following holds 
\begin{itemize}
    \item 
    $\agr(Q, \RM_{q'}[2d, \Ff_{q}^2 \; | \; h_Q]) \leq \eps'_{k-1}$,
    \item $\agr(F_1, \RS_{q'}[d, \Ff_{q} \; | \; h_1]) \leq \eps'_{k-1}$,
     \item $\agr(F_2, \RS_{q'}[d, \Ff_{q} \; | \; h_2]) \leq \eps'_{k-1}.$
\end{itemize}
\item {\bf Round complexity:} $O(k)$.
\item {\bf Input query complexity:} $T_k \cdot O(\iQ_{\RS, k-1})$.
\item {\bf Proof query complexity:} $\pQ_{\RS, k-1} + T_k \cdot O(\iQ_{\RS,k-1}) + O(T_k^2)$.
\item {\bf Length:} $O(k \cdot q^2)$.
\end{itemize}
\end{lemma}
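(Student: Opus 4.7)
The plan is to verify the listed parameters one by one, with round-by-round soundness as the main technical content. For the easy parameters, I would observe that the only round of interaction before invoking $\rsi(F, 2\sqrt{d}, \lambda, \eps_{k-1})$ is the exchange in steps 1 and 3 of \cref{iop: batch irm}, adding $O(1)$ to the $O(k-1)$ round count of $\rsi$; a single input query to $F$ made by $\rsi$ unfolds into $O(T_k)$ queries split between input queries to $Q, F_1, F_2$ and proof queries to $g_{Q,1}, \Fill_Q, \Fill_1, \Fill_2$; and the length of the additional messages in $\irmb$ is dominated by $|g_{Q,1}| = q^2$. For completeness, given $Q \in \RM_{q'}[2\sqrt{d}, \Ff_q^2\,|\, h_Q]$ and analogous hypotheses on $F_1, F_2$, the total-degree analogue of \cref{lm: ind deg bivariate side condition decomp completeness case} supplies honest $g_{Q,1}, \Fill_Q$ making $g_{Q,2}$ have total degree $\leq 2\sqrt{d}$, and \cref{lm: uni quo completeness} gives honest $\Fill_i$'s making $f_i$ have degree $\leq \sqrt{d}-2$; consequently each $G|_{L_i}$ and each $f_i$ is of degree $\leq 2\sqrt{d}$, so the final $F$ lies in $\RS_{q'}[2\sqrt{d}, \Ff_q]$ and the honest prover in $\rsi$ succeeds.

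For round-by-round soundness, I would declare the state after step 3 as doomed iff $\agr(F, \RS_{q'}[2\sqrt{d}, \Ff_q]) \leq \eps_{k-1}$, so that the remaining soundness reduces to that of the inductively assumed $\rsi$. The main claim is that a doomed initial state stays doomed after step 3 with probability at least $1 - 2^{-\lambda}$, which I would prove in three cases depending on which side-conditioned proximity fails.

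The main case is when $\agr(Q, \RM_{q'}[2\sqrt{d}, \Ff_q^2\,|\, h_Q]) \leq \eps'_{k-1}$. For any $g_{Q,1}, \Fill_Q$ sent by the prover, \cref{lm: total deg bivariate side condition decomp+combine} shows that, over the random $\combine$-coefficients forming $G$, one has $\agr(G, \RM_{q'}[2\sqrt{d}, \Ff_q^2]) \leq 1.04\eps'_{k-1}$ except with probability $\poly(d)/q'$. Conditioning on $G$ being this far from total degree $2\sqrt{d}$, \cref{thm: line vs point strong} implies that for each random line $L_i$ the restriction $G|_{L_i}$ is $1.05\cdot 1.04\eps'_{k-1}$-far from $\RS_{q'}[2\sqrt{d}, \Ff_q]$ with probability at least $1 - 1.04\eps'_{k-1}$; the probability that \emph{all} $T_k$ independently chosen lines give ``close'' restrictions is then at most $(1.04\eps'_{k-1})^{T_k} \leq 2^{-\lambda}$ by the calibration of $T_k$. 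On the complement, at least one $G|_{L_i}$ is far, and \cref{lm: combine multiple univariate} applied with target degree $2\sqrt{d}$ makes the final $F$ at least $\eps_{k-1}$-far from $\RS_{q'}[2\sqrt{d}, \Ff_q]$ except with probability $(T_k+2)\cdot \err$. The two remaining cases, where some $F_i$ is far, are simpler: \cref{lm: uni quo soundness} forces $\agr(f_i, \RS_{q'}[\sqrt{d}-2, \Ff_q]) \leq \eps'_{k-1} + 2/q \leq \eps_{k-1}$ regardless of $\Fill_i$, after which \cref{lm: combine multiple univariate} again transfers this farness to $F$.

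The hard part will be bookkeeping the three error contributions in Case A---the bivariate-to-univariate lift through the total-degree $\combine$, the restriction to $T_k$ random lines via line-versus-point, and the univariate $\combine$ at the end---so that their sum sits within the $2^{-\lambda}$ budget. The choices $T_k = \Theta(\lambda/\log(1/\eps_k))$ and $q' = 2^{\lambda}\cdot \poly(q)$ are tuned precisely so that the $(1.04\eps'_{k-1})^{T_k}$ term and the two $\poly(d)/q'$ error terms each fit comfortably under $2^{-\lambda}$, which is what ultimately makes the inductive step go through.
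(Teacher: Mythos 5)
Your proposal follows essentially the same route as the paper: the same state (``doomed iff $\agr(F, \RS_{q'}[2\sqrt d,\Ff_q])\leq\eps_{k-1}$''), the same three-way case split on which side-conditioned code-membership fails, and the same lemma chain in the main case---\cref{lm: total deg bivariate side condition decomp+combine} to get $G$ far from total degree $2\sqrt d$, then \cref{thm: line vs point strong} over the $T_k$ random lines, then \cref{lm: combine multiple univariate} to transfer farness into $F$---with \cref{lm: uni quo soundness} plus \cref{lm: combine multiple univariate} handling the $F_1,F_2$ cases and the remaining rounds inherited from $\rsi$. The constant-factor bookkeeping and per-line failure probability differ in minor, immaterial ways, but the argument and the union bound over the three error sources are the same.
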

\begin{proof}
We discuss all of the guarantees except for round-by-round soundness, which we defer for the next section. The round complexity, input query complexity, and proof query complexity are clear. For the proof complexity, the verifier makes $O(T_k^2)$ proof queries to read all of the side condition functions and $\pQ_{\RS, k-1}$ proof queries when running $\rsi(F, 2\sqrt{d}, \lambda, \eps_{k-1})$ in step 4. Finally, each input query to $F$ when running  $\rsi(F, 2\sqrt{d}, \lambda, \eps_{k-1})$ is simulated via $O(1)$ proof queries. We analyze the completeness below.

In step 1, the honest prover completes the side condition function $\wh{h}_Q$ according to $\wh{Q}$ and similarly completes $\wh{h}_1, \wh{h}_2$ according to $\wh{F}_1$, $\wh{F}_2$. By \cref{lm: uni quo completeness,lm: ind deg bivariate side condition decomp completeness case}, the honest prover can send $g_{Q,1}, \Fill_Q ,\Fill_1$, and $\Fill_2$ which result in $g_{Q,2} \in \RM_{q'}[2\sqrt{d} - T_k-1), \Ff_q^2]$ and $f_1, f_2 \in \RS[\sqrt{d}-2, \Ff_q]$. It follows that $G = \combine_{2\sqrt{d}}(g_1, g_2) \in \RM_{q'}[2\sqrt{d}]$, and thus for any line $L \subseteq \Ff_q^2$, we have $G|_{L} \in \RS_{q'}[2\sqrt{d}, \Ff_q]$. As a result, $F = \combine_{2\sqrt{d}}(G|_{L_1}, \ldots, G|_{L_{T_k}}, f_1, f_2) \in \RS_{q'}[2\sqrt{d}, \Ff_q]$ with probability $1$ and the remainder of the completeness follows from that of $\rsi(F, 2\sqrt{d}, \lambda, \eps_{k-1})$.
\end{proof}
\subsubsection{Round-by-round Soundness for \cref{thm: ind RM low rate}}
There is only one round of interaction, after which we define the state as doomed if and only if 
\[
\agr(F, \RS_{q'}[2\sqrt{d}, \Ff_q]) \leq \eps_{k-1}.
\]

Now suppose that the state is doomed before this round of interaction. Then, one of the following holds:
\begin{align*}
    \agr(Q, \RM_{q'}[2d, \Ff_{q}^2 \; | \; h_Q]) \leq \eps'_{k-1}, \\
    \agr(F_1, \RS_{q'}[d, \Ff_{q} \; | \; h_1]) \leq \eps'_{k-1}, \\ 
     \agr(F_2, \RS_{q'}[d, \Ff_{q} \; | \; h_2]) \leq \eps'_{k-1},
\end{align*}
The soundness of the first round then follows from the next lemma combined with \cref{lm: combine multiple univariate}.

\begin{lemma}
If any of the following hold
\begin{align*}
    \agr(Q, \RM_{q'}[2d, \Ff_{q}^2 \; | \; h_Q]) \leq \eps'_{k-1}, \\
    \agr(F_1, \RS_{q'}[d, \Ff_{q} \; | \; h_1]) \leq \eps'_{k-1}, \\ 
     \agr(F_2, \RS_{q'}[d, \Ff_{q} \; | \; h_2]) \leq \eps'_{k-1},
\end{align*}
Then, for any $g_1, \Fill_Q, \Fill_1, \Fill_2$ that the prover sends, we have,
\[
\E_{L_1, \ldots, L_K}\left[ \Pr_{\xi_i \in \Ff_{q'}}[\agr(F, \RS_{q'}[2\sqrt{d}, \Ff_q]) > \eps_{k-1}]\right] \leq 2^{-\lambda - 10}.
\]
\end{lemma}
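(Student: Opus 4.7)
We proceed by case analysis on which of the three proximity hypotheses is violated, chasing the failure through the construction of $F$ in \cref{iop: batch irm}.

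\emph{Case A (bivariate):} Suppose $\agr(Q, \RM_{q'}[2d, \Ff_q^2 \,|\, h_Q]) \leq \eps'_{k-1}$. The first step is to invoke \cref{lm: total deg bivariate side condition decomp+combine} on $Q$ with side conditions: for \emph{any} $g_{Q,1}$ and $\Fill_Q$ the prover sends (and with $g_{Q,2}$ defined by the quotient in the verifier's view), with probability at least $1 - \poly(d)/q'$ over the proximity-generator randomness in forming $G = \combine(g_{Q,1}, g_{Q,2})$, the function $G$ has agreement at most $1.04\eps'_{k-1}$ with the target total-degree Reed-Muller code over $\Ff_q^2$. Conditional on this, we invoke \cref{thm: line vs point strong} to each random line $L_i$: each restriction $G|_{L_i}$ has agreement exceeding $1.01 \cdot 1.04\eps'_{k-1}$ with $\RS_{q'}[2\sqrt{d}, \Ff_q]$ with probability at most $1.04\eps'_{k-1}$. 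Since the $T_k = \Theta(\lambda/\log(1/\eps_{k-1}))$ lines are independent and $\eps'_{k-1} \leq \eps_{k-1}$, the probability that \emph{every} $G|_{L_i}$ is close is at most $\eps_{k-1}^{T_k} \leq 2^{-\lambda-15}$ by choice of $T_k$.

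Once at least one $G|_{L_i}$ is far from $\RS_{q'}[2\sqrt{d}, \Ff_q]$, we appeal to \cref{lm: combine multiple univariate} for the final combine $F = \combine(G|_{L_1},\ldots,G|_{L_{T_k}}, f_1, f_2)$: with probability at least $1 - (T_k+2)\err$ over the combine coefficients, $F$ inherits agreement at most $1.04\cdot 1.01\cdot \eps'_{k-1} \leq \eps_{k-1}$ with $\RS_{q'}[2\sqrt{d}, \Ff_q]$, where the last inequality uses the definition of $\eps'_{k-1}$ in \eqref{eq: eps' for compile}.

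\emph{Cases B and C (univariate):} Suppose instead $\agr(F_i, \RS_{q'}[d, \Ff_q \,|\, h_i]) \leq \eps'_{k-1}$ for some $i\in\{1,2\}$. The univariate quotient soundness \cref{lm: uni quo soundness} gives $\agr(f_i, \RS_{q'}[d - |A_i|, \Ff_q]) \leq \eps'_{k-1} + |A_i|/q$. Since $f_i$ is one of the inputs to the final combine, \cref{lm: combine multiple univariate} (applied with $f_i$ being the input far from its respective degree code) yields that with probability at least $1 - (T_k+2)\err$ over the combine coefficients, $F$ has agreement at most $\eps'_{k-1} + |A_i|/q \leq \eps_{k-1}$ with $\RS_{q'}[2\sqrt{d}, \Ff_q]$, again by the choice of $\eps'_{k-1}$.

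A union bound over the $\poly(d)/q'$, $\eps_{k-1}^{T_k}$, and $(T_k+2)\err$ error terms across all three cases yields the claimed bound of $2^{-\lambda-10}$, using that $q' = 2^{\lambda}\cdot\poly(q)$ makes $\err = \poly(d)/q'$ negligible and that the line amplification already gave $2^{-\lambda-15}$.

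\paragraph{Main obstacle.} The delicate step is Case A, where we must chain (i) the combined proximity-generator/quotient reduction on $Q$, (ii) the line-versus-point soundness to drop from bivariate to univariate, and (iii) the final combine reducing several functions of different degrees into $F$. We must track the multiplicative constants $1.04$, $1.01$, etc., so that their product stays below $\eps_{k-1}/\eps'_{k-1}$; this is precisely why $\eps'_{k-1}$ was defined in \eqref{eq: eps' for compile} with a $0.9$ factor plus a $(T_k+1)/q$ slack. The amplification over the $T_k$ independent line choices---turning the $\eps'_{k-1}$-level line-vs-point failure probability into $2^{-\lambda-15}$---is the key mechanism that lets us absorb the $G$-to-$G|_L$ reduction into the overall round-by-round soundness budget.
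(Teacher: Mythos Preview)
Your proposal is correct and follows essentially the same approach as the paper: case-split on which hypothesis fails, handle the univariate cases via \cref{lm: uni quo soundness} followed by \cref{lm: combine multiple univariate}, and handle the bivariate case by chaining \cref{lm: total deg bivariate side condition decomp+combine}, then \cref{thm: line vs point strong} amplified over the $T_k$ independent lines, then \cref{lm: combine multiple univariate}, with a final union bound over the three error terms. The paper's proof is the same argument with slightly different bookkeeping on the constants (e.g., it writes $1.06\eps'_{k-1}$ where you write $1.01\cdot 1.04\eps'_{k-1}$, and $3/q$ where you write $|A_i|/q$).
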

\begin{proof}
First suppose that either the second or third item holds, and by symmetry assume it is the second, namely
$\agr(F_1, \RS_{q'}[d, \Ff_{q} \; | \; h_1]) \leq \eps'_{k-1}$.
Using~\cref{lm: uni quo soundness}, for any $\Fill_1$ that the prover sends we have
\[
\agr(f_1, \RS_{q'}[d, \Ff_q]) \leq \eps'_{k-1} + \frac{3}{q},
\]
where $f_1$ is as defined in \cref{iop: batch irm}. In this case, the desired result follows from \cref{lm: combine multiple univariate}. 

Now suppose that it is the first item that holds, that is $\agr(Q, \RM_{q'}[2d, \Ff_{q}^2 \; | \; h_Q]) \leq \eps'_{k-1}$. It follows from \cref{lm: total deg bivariate side condition decomp+combine} that for any $g_1, \Fill_Q$ that the prover sends we have
\[
\agr(G, \RM_{q'}[2\sqrt{d}-|Y|, \Ff_q^2]) \leq 1.04\eps'_{k-1},
\]
with probability at least $1 - 2^{-\lambda-10}$. Since $(1.04\eps'_{k-1})^{2} \geq 20(\sqrt{d}/q)^{\tl}$, \cref{thm: line vs point strong} implies
    \[
    \Pr_{L \subseteq \Ff_q^2}[\agr(Q|_L, \RS_{q'}[2\sqrt{d}, \Ff_q^2] \geq 1.06\eps'_{k-1}] \leq \eps'_{k-1}.
    \]
Thus, with probability at least $1 - \eps_{k-1}^{'T_k}= 1-2^{-\lambda-10}$, we have that for at least one line $L_i$, 
\[
\agr(Q|_{L_i},  \RS_{q'}[2d, \Ff_q^2] \leq 1.06\eps'_{k-1}.
\]
In this case, \cref{lm: combine multiple univariate} says that $F$ has agreement at most $1.06\eps'_{k-1} \leq \eps_{k-1}$ with probability at least $1 - 2^{-\lambda-10}$. The result follows from a union bound. 
\end{proof}
\subsubsection{Concluding the Proof of \cref{thm: ind RM low rate}}
With $\irmp$ and $\texttt{iRMBatch}$, we are ready to complete the proof of the inductive step for \cref{thm: ind RM low rate}, by plugging both IOPs into $\compile$. This in turn concludes the the proof of \cref{thm: ind RM low rate}. We summarize this process in the following claim.

\begin{claim}
Let $d = 2^{2^{k}}$. Assuming access to the IOPP, $\rsi(f, 2\sqrt{d}, q, \lambda, \eps_{k-1})$ which satisfies the $k-1$ case of \cref{thm: ind RM low rate}, the IOPP $\compile(\irmp, \irmb(\eps'_{k-1}))$ satisfies the guarantees of the $k$ case of \cref{thm: RS low rate}.
\end{claim}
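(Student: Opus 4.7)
The plan is a direct application of \cref{thm: poly iop transformation}: plug in the Poly-IOP $\irmp$ from \cref{lm: poly iop ind rm} and the batched IOPP $\irmb(\eps'_{k-1})$ from \cref{lm: batch irm}, and check that the parameters produced match the quantitative bounds stated in the $k$-case of \cref{thm: RS low rate}. Since the parameter lists of \cref{thm: RS low rate} and \cref{thm: ind RM low rate} coincide modulo the underlying code, the verification reduces to a parameter-matching exercise followed by the telescoping of the query recursion.

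First I would verify the hypothesis \eqref{eq: compile requirements} of \cref{thm: poly iop transformation} for each polynomial sent by the prover in $\irmp$: the bivariate $\wh{Q}\in\Ff_{q'}^{\le 2\sqrt{d}}[x,y]$ with evaluation domain $\Ff_q^2$, and the two univariate $\wh{F}_1,\wh{F}_2\in\Ff_{q'}^{\le\sqrt{d}}[x]$ with evaluation domain $\Ff_q$. By \cref{lm: schwartz-zippel for rs} and \cref{lm: schwartz-zippel for rm} one has $\delta_{i,j}\le 2\sqrt{d}/q$, so the defining choice $\eps'_{k-1}=0.9\eps_{k-1}-(T_k+1)/q$ from~\eqref{eq: eps' for compile} comfortably exceeds $2\sqrt{\delta_{i,j}}$ using the lower bound on $\eps_{k-1}$; and since $q' = 2^{\lambda}\poly(q)$ with $\poly(q)\gg d$, the extension-field quantity $\wh\delta_{i,j}\le 2\sqrt{d}/q'$ satisfies $\wh\delta_{i,j}/\eps'_{k-1}\le 2^{-\lambda+5\log k_i}$ since $k_i\le 3$.

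Second, I would read off the parameters produced by \cref{thm: poly iop transformation}. Completeness, round-by-round soundness $2^{-\lambda}$, and the doomed-iff-far initial state clause transfer straight from \cref{lm: poly iop ind rm} and \cref{lm: batch irm}. Plugging in $\rd_{\poly}=5$, $\iQ_{\poly}=T_k$, constant numbers of polynomials per round, $\rd_{\bat}=O(k)$, $\iQ_{\bat}=T_k\cdot O(\iQ_{\RS,k-1})$, $\pQ_{\bat}=\pQ_{\RS,k-1}+T_k\cdot O(\iQ_{\RS,k-1})+O(T_k^2)$, and $\Len_{\bat}=O(kq^2)$, I would obtain: round complexity $O(k)=O(\log\log d)$; input query complexity $T_k = O(\lambda/\log(1/\eps_k))$, matching $\iQ_{\RS,k}$; proof query complexity
\[
\pQ_{\RS,k-1}+T_k\cdot O(\iQ_{\RS,k-1})+O(T_k^2)+O(T_k),
\]
which collapses to $O\!\bigl(\sum_{i=1}^{k-1}\lambda^2/\log^2(1/\eps_i)\bigr)=\pQ_{\RS,k}$; and length $\Len_{\bat}+|\Ff_q^2|+2|\Ff_q|=O(kq^2)$.

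The main obstacle, and the only calculation worth grinding, is the telescoping that turns the new $T_k\cdot\iQ_{\RS,k-1}$ contribution into exactly one extra summand $\lambda^2/\log^2(1/\eps_{k-1})$ appended to $\pQ_{\RS,k-1}$. This uses the fact that under the low-rate hypothesis $q>2^{2^M+1}$, the logarithms $\log(1/\eps_k)$ and $\log(1/\eps_{k-1})$ differ by only a constant factor (since $\log(1/\eps_i)=\frac{\tl}{2}(\log q - 2^i-1)$), so both $T_k\cdot\iQ_{\RS,k-1}$ and $T_k^2$ are $O(\lambda^2/\log^2(1/\eps_{k-1}))$. All remaining guarantees inherit directly from \cref{thm: poly iop transformation} combined with \cref{lm: poly iop ind rm,lm: batch irm}, with no additional analysis needed.
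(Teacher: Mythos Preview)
Your proposal is correct and takes essentially the same approach as the paper: the paper's own proof is the single sentence ``the parameters are straightforward to verify by applying \cref{thm: poly iop transformation} with the guarantees for $\irmp$ and $\irmb$ given by \cref{lm: poly iop ind rm} and \cref{lm: batch irm} respectively,'' and you have simply carried out that verification explicitly, including the check of hypothesis~\eqref{eq: compile requirements} and the telescoping of the proof-query recursion.
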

\begin{proof}
    The parameters are straightforward to verify by applying \cref{thm: poly iop transformation} with the guarantees for $\irmp$ and $\irmb$ given by \cref{lm: poly iop ind rm} and \cref{lm: batch irm} respectively. 
\end{proof}

\section{IOPPs for Constant Rate Reed-Muller Codes}

In this section we present our constant rate trivariate Reed-Muller IOPP. Specifically, we want to test the proximity of $f: \Ff_q^3 \to \Ff_{q'}$ to the code $\RM_{q'}[6d, \Ff_q^3]$, where $q = \Theta(d)$. The IOPP construction operates restricting to a smaller dimensional subspace in a similar manner as our low rate IOPPs; however, the constant rate regime presents additional challenges which we briefly discuss here.

The first difficulty that arises is that we cannot rely on \cref{thm: line vs point strong} and look at the restriction of $f$ to random lines. This is because we are in the constant rate regime and have to reject $f$ that may have agreement as high as $\eps = \Omega(1)$ with degree $6d$. Using the bound for the line verus point test in \cref{thm: line vs point strong}, we would then need to choose $\lambda / \log(1/\eps) = \Omega(\lambda)$ random lines to find one on which $f$ is far from degree $6d$, and this is too many. To resolve this, we rely instead on the stronger bound for randomly chosen affine planes in \cref{thm: plane vs point strong}. This theorem allows us to choose only $O(\lambda/\log(q)) = O(\lambda/\log(d))$ affine planes to find at least one on which $f$ is far from degree $6d$. 

From here it may seem that we are in the clear and can simply use the proximity generator theorem and test the proximity of some auxiliary bivariate function $g: \Ff_q^2 \to \Ff_q$ to $\RM_{q'}[6d, \Ff_q^2]$, but notice that we are still stuck in the constant rate setting. Indeed, the code $\RM_{q'}[6d, \Ff_q^2]$ has constant rate again. To get out of this setting though, we can take advantage of the fact that $\RM_{q'}[6d, \Ff_q^2]$ is bivariate instead of trivariate. Specifically, we can now consider the encoding over some larger field $\Fqe$ and instead test proximity to $\RM_{q'}[6d, \Fqe^2]$. The upshot is we only suffer length $\qenc^2$, so we can afford to make $\qenc$ slightly larger than $q$. Specifically, by choosing $\qenc$ which is polynomially larger than $q$, but still much smaller than $q^{3/2}$, we can simultaneously achieve low rate and sublinear length (sublinear in $q^3$ that is)! Once we have reduced to testing proximity to a low rate Reed-Muller code, we can (essentially) apply our IOPPs from \cref{sec: low rate}.

We now formally state our main theorem, which will require stating some of the parameters mentioned above. Fix a security parameter $\lambda$, degree parameter of the form $d = 2^{2^k}/6$, and finite fields $\Ff_q, \Ff_{q'}$ such that $\Ff_q$ is a subfield of $\Ff_{q'}$ and 
\[
 q' = 2^{\lambda} \poly(q) \quad,  \quad q > 6d \quad
\]
Our input will be a function $f: \Ff_q^3 \to \Ff_{q'}$ and we will be testing proximity from $f$ to $\RM_{q'}[6d, \Ff_q^3]$. 

Towards the construction of this IOPP, we will also require a second, larger field to be used in constructing the batched IOPP during the compiling step. We let this field be $\Fqe$, and we require it to be a subfield of $\Ff_{q'}$. We also choose its size to be polynomially larger than $q$ but much smaller than $q^{3/2}$. To be concrete, we set $q_{{\sf enc}} = q^{1.2}$, and also define 
\begin{equation} \label{eq: eps test definition} 
\et = 23\left(\frac{6d}{\qenc} \right)^{\tl/2}.
\end{equation}
As the IOPP of this section is intended for use when the rate is constant, one should think of $q = \Theta(d)$.

The reason we must design a different IOPP here compared to \cref{sec: low rate} is that in the current setting, $\lambda/\log(1/\eps) = \Theta(\lambda)$, so we cannot afford terms such as  $\left(\lambda/\log(1/\eps)\right)^2$ in our query complexity. This was affordable in the low rate setting, as there, the agreement parameter was inverse polynomial in $q$, and thus $\left(\lambda/\log(1/\eps)\right)^2 = O\left(\lambda/\log(q)\right)^2$, which is often small compared to $\lambda$. 

During the compilation step, we will use the IOPP $\rsi$ from \cref{thm: RS low rate} when the input is over the field $\Fqe$, and below we recall its parameters adjusted to our setting:
\begin{lemma}
Given a security parameter $\lambda$, a degree parameter $6d = 2^{2^k}$, and field sizes $q_{\enc}$ and $q'$ as above, there is an IOPP with the following guarantees. 
\begin{itemize}
    \item {\bf Input:} a function $G: \Fqe \to \Ff_{q'}$, parameters $6d, q', \lambda$, as above, and an agreement parameter $\et$ as in \cref{thm: RS low rate}.
    \item {\bf Completeness:} If $G \in \RS_{q'}[6d, \Fqe]$ then the honest prover makes the verifier accept with probability $1$.
    \item {\bf Round-by-round soundness:} $2^{-\lambda}$.
    \item {\bf Initial state:} the initial state is doomed if and only if $\agr(G, \RS_{q'}[6d, \Fqe]) \leq \et$.
    \item {\bf Round complexity:} $O(\log \log d)$
    \item {\bf Input query complexity:} $O\left(\frac{\lambda}{\log(q_{{\sf enc}})}\right)$.
    \item {\bf Proof query complexity:} $O\left(\frac{\lambda^2}{\log^2(q_{{\sf enc}})} \log \log d\right).$
    \item {\bf Length:} $O(q_{{\sf enc}}^2 \cdot \log \log (d))$.
\end{itemize}
We refer to this IOPP as $\rsi(f, d, q_{\enc}, \lambda, \et)$
\end{lemma}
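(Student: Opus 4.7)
The plan is to derive this lemma as a direct instantiation of Theorem~\ref{thm: RS low rate}, applied with the field $\Ff_q$ in that theorem replaced by $\Fqe$ and the degree parameter $d$ replaced by $6d$. Since $6d = 2^{2^k}$ by assumption and $\qenc = q^{1.2} > 6d$ (using $q > 6d$), the preconditions $q > 2^{2^M+1}$ and $q' = 2^\lambda \cdot \poly(q)$ of Theorem~\ref{thm: RS low rate} translate immediately to $\qenc > 2^{2^M+1}$ and $q' = 2^\lambda \cdot \poly(\qenc)$, which both hold in our setting (possibly by adjusting the $\poly$ factors).

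The main point to verify is that the agreement parameter $\et$ used here exceeds the parameter $\eps_k$ demanded by Theorem~\ref{thm: RS low rate}. Recalling that $\eps_k = \left(2^{2^k+1}/\qenc\right)^{\tl/2} = \left(12d/\qenc\right)^{\tl/2}$ and that $\et = 23 \left(6d/\qenc\right)^{\tl/2}$, the inequality $\et \geq \eps_k$ reduces to $23 \geq 2^{\tl/2}$, which holds since $\tl < 1/20$. Hence the initial state condition (``doomed iff $\agr(G,\RS_{q'}[6d,\Fqe]) \leq \et$'') is compatible with the initial state condition provided by Theorem~\ref{thm: RS low rate}.

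All remaining guarantees---completeness, round-by-round soundness $2^{-\lambda}$, round complexity $O(\log\log d)$, input query complexity $O(\lambda/\log(1/\et)) = O(\lambda/\log(\qenc))$ (using that $\qenc/6d$ is polynomially large and $\tl$ is a fixed constant), proof query complexity $O(\lambda^2/\log^2(\qenc) \cdot \log\log d)$, and length $O(\qenc^2 \log\log d)$---are obtained by plugging the substitutions into the corresponding bounds in Theorem~\ref{thm: RS low rate} and simplifying using $\log(1/\eps_i) = \Theta(\log \qenc)$ for each level $i$. This yields the stated IOPP, which we call $\rsi(f, d, \qenc, \lambda, \et)$, with no further work needed beyond parameter bookkeeping. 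The only mild subtlety is confirming that the constant factor $23$ in $\et$ does not interfere with the inductive chain of agreement parameters inside the proof of Theorem~\ref{thm: RS low rate}, but since Theorem~\ref{thm: RS low rate} is stated for any $\eps \geq \eps_k$, this is automatic.
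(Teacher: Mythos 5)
Your proposal is correct and takes the same approach as the paper: the paper's own proof is simply the one-line instantiation ``We apply~\cref{thm: RS low rate} with degree $6d$ and parameters $\lambda, q_{\enc}, q', \et$ as above,'' and you carry out the parameter bookkeeping that the paper leaves implicit, correctly verifying that $\et \geq \eps_k$ (which reduces to $23 \geq 2^{\tl/2}$) and that $\log(1/\eps_k) = \Theta(\log\qenc)$ under the section's assumption $\qenc = q^{1.2}$ with $q > 6d$.
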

\begin{proof}
    We apply \cref{thm: RS low rate} with degree $6d$ and parameters $\lambda, q_{\enc}, q', \et$ as above. 
\end{proof}

Our goal in this section is to construct an IOPP as per the following theorem.
\begin{thm}\label{thm: tot rm constant rate}
Fix degree parameter $6d$ and security parameter $\lambda$. Let $\Ff_q, \Fqe, \Ff_{q'}$ be fields such that 
$\Ff_q$ and $\Fqe$ are subfields of $\Ff_{q'}$ with sizes that satisfy:
\[
q > 6d  \quad,  \quad q' = 2^{\lambda} \poly(q) \quad \text{and} \quad \qenc > 6d.
\]
Then for any proximity parameter 
\[
\eps \geq 23\left(\frac{6d}{q}\right)^{\tl/2},
\]
there is an IOPP for the degree $6d$ Reed-Muller code with the following guarantees:
  \begin{itemize}
      \item {\bf Input:} a function $f: \Ff_q^3 \to \Ff_{q'}$.
      \item {\bf Completeness:} if $f \in \RM_{q'}[6d, \Ff_q^3]$, then the honest prover makes the verifier accept with probability $1$.
      \item {\bf Round-by-round soundness:} $2^{-\lambda}$.
      \item {\bf Initial state:} the initial state is doomed if and only if $\agr(f, \RM_{q'}[6d, \Ff_q^3]) \leq \eps$.
      \item {\bf Round complexity:} $O(\log \log d)$.
      \item {\bf Input query complexity:} 
      $O\left(\frac{\lambda}{\log(1/\eps)}\right)$.
      \item {\bf Proof query complexity:} $
      O\left(\frac{\lambda}{\log(1/\eps)}\right) + O\left(\frac{\lambda^2}{\log^2(d)} \log \log d\right)$.
      \item {\bf Length:} $O(q^3) + O(\qenc^2)$.
  \end{itemize}
We call this IOP $\rmi(f, 6d, q, \qenc, \lambda, \eps)$.
\end{thm}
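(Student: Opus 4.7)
The plan is to construct a Poly-IOP for the relation ``$f$ is close to $\RM_{q'}[6d, \Ff_q^3]$'' and then apply the compilation theorem \cref{thm: poly iop transformation}, feeding in the low-rate Reed-Solomon IOPP $\rsi$ from \cref{thm: RS low rate} as the batched code-membership tester. The key reduction happens in two stages: first, restricting $f$ to random affine planes in $\Ff_q^3$ turns a trivariate proximity question into a bivariate one, via \cref{thm: plane vs point strong}; second, re-encoding the resulting bivariate polynomial over the larger field $\Fqe$ lifts us out of the constant-rate regime, since $\qenc = q^{1+c}$ gives rate $6d/\qenc \ll 1$ and brings the low-rate machinery of \cref{sec: low rate} into play at the modest length cost of $O(\qenc^2)$.

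The Poly-IOP itself has the following shape. The verifier samples a random affine plane $P \subseteq \Ff_q^3$ (boosted by a short independent family if needed to push the plane-vs-point failure probability down to $2^{-\lambda}$), and the prover responds with $\wh{Q} \in \Ff_{q'}^{\leq 6d}[x,y]$, which in the honest case is the bivariate polynomial whose evaluation on the chosen parametrization $\phi_P \colon \Ff_q^2 \to P$ agrees with $f|_P$. The verifier then performs $O(\lambda/\log(1/\eps))$ independent spot checks of the form: sample $a \in \Ff_q^2$ and check $\wh{Q}(a) = f(\phi_P(a))$. Completeness is immediate. For round-by-round soundness, if $\agr(f, \RM_{q'}[6d, \Ff_q^3]) \leq \eps$, then \cref{thm: plane vs point strong} gives $\agr(f|_P, \RM_{q'}[6d, \Ff_q^2]) \leq 1.2\eps$ except with probability $2^{-\lambda}$; conditional on this, any bivariate degree-$6d$ polynomial $\wh{Q}$ agrees with $f$ on $P$ on at most a $1.2\eps$ fraction of points, and the $O(\lambda/\log(1/\eps))$ independent spot checks then all pass with probability at most $(1.2\eps)^{O(\lambda/\log(1/\eps))} \leq 2^{-\lambda}$.

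For the compilation step, \cref{thm: poly iop transformation} materializes $\wh{Q}$ as an evaluation oracle $Q \colon \Fqe^2 \to \Ff_{q'}$, contributing the $O(\qenc^2)$ term of the length, and generates anchoring side conditions from the out-of-domain sample together with the Poly-IOP queries to $\wh{Q}$. The batched code-membership IOPP must then test proximity of $Q$ to $\RM_{q'}[6d, \Fqe^2]$ subject to these side conditions; we first strip the side conditions using bivariate quotienting (\cref{lm: total deg bivariate side condition decomp+combine}) and collapse the resulting pair of quotients into a single function using the total-degree Reed-Muller proximity generator (\cref{lm: combine multiple multivariate total degree}). Since $\RM_{q'}[6d, \Fqe^2]$ is now a low-rate bivariate code, the verifier further restricts this combined function to $O(\lambda/\log(1/\et))$ uniformly random affine lines in $\Fqe^2$, invokes \cref{thm: line vs point strong} to certify that at least one restriction is far from $\RS_{q'}[6d, \Fqe]$, and then passes a proximity-generator combination of these line restrictions into $\rsi(\cdot, 6d, \qenc, \lambda, \et)$.

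The hard part will be the parameter bookkeeping rather than any single deep step: each reduction (plane-vs-point, Poly-IOP spot-checks, quotienting plus proximity generator, line-vs-point, and the inductive call to $\rsi$) both loses a small constant factor in the agreement parameter and requires a lower bound of the form $\eps \gtrsim (6d/q)^{\tl/2}$ or $\et \gtrsim (6d/\qenc)^{\tl/2}$. I must check that the input threshold $\eps \geq 23(6d/q)^{\tl/2}$ in the theorem statement, together with the choice $\qenc = q^{1.2}$ and $\et = 23(6d/\qenc)^{\tl/2}$, survives all these losses and in particular lies above the input threshold of $\rsi$, and that $q' = 2^\lambda \poly(q)$ is large enough for the anchoring union bound in \cref{thm: poly iop transformation}. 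Granting this routine-but-fiddly accounting, the round count $O(\log\log d)$, input query count $O(\lambda/\log(1/\eps))$, proof query count $O(\lambda/\log(1/\eps)) + O((\lambda^2/\log^2 d)\log\log d)$, and length $O(q^3) + O(\qenc^2)$ follow directly from the corresponding parameters of $\rsi$ together with a constant overhead from \cref{thm: poly iop transformation}.
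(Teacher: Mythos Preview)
Your high-level plan matches the paper's: restrict to random planes, have the prover commit to bivariate polynomials, spot-check them against $f$, then in the batched tester re-encode over $\Fqe^2$, quotient off side conditions, pass to random lines, and hand off to $\rsi$. But two design choices you treat as routine are in fact where the argument lives, and your version of each breaks the query budget.

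\textbf{Collinear spot checks.} You sample the spot-check points uniformly from $\Ff_q^2$. The paper instead (step~3 of \cref{poly iop: total rm constant rate}) first picks a random line $L_i\subseteq\Ff_q^2$ per plane and then samples $t=2\log q/\log(1/\eps)$ points on it; soundness still holds by \cref{lm: spectral sampling}. The payoff is that all side conditions for $\wh{Q}_i$ except the single out-of-domain anchor are collinear, so after an affine map sending $L_i$ to an axis they sit in a $(t{+}1)\times 2$ subcube rather than a $\Theta(t)\times\Theta(t)$ one. Bivariate quotienting (\cref{lm: total deg bivariate side condition decomp+combine}) forces the verifier to read the prover's completion of $h$ on the full enclosing subcube, so with uniform points you pay $\Theta(T t^2)$ proof queries for side conditions alone; in the constant-rate regime this is $\Theta(\lambda\log q)$, which exceeds the target $O(\lambda)+O((\lambda^2/\log^2 d)\log\log d)$. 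The paper flags exactly this obstacle at the start of the proof of \cref{lm: batch rm}.

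\textbf{Non-black-box compilation.} You invoke \cref{thm: poly iop transformation} directly, but the paper cannot. Its batched IOPP $\trmb$ draws only a constant number of lines per $F_i$ (step~5 of \cref{iop: batch trm}), so against a single bad $i$ the line-vs-point step has soundness only $\ec^{O(1)}\not\le 2^{-\lambda}$. The fix is to strengthen the Poly-IOP's doomed state (\cref{state: tot rm poly 1}) to ``at least $T/2$ of the $T$ planes are bad,'' push this through the simulation phase so that at least $T/2$ of the $Q_i$'s violate their side conditions, and then have $\trmb$ amplify across $\Theta(T)$ independent line choices to reach $\ec^{\Theta(T)}\le 2^{-\lambda}$. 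This mismatch between what \cref{thm: poly iop transformation} delivers (one bad $Q_i$) and what $\trmb$ needs (half of them bad) forces the proof of \cref{thm: tot rm constant rate} to reopen the compilation argument; your alternative of taking $O(\lambda/\log(1/\et))$ lines on a single combined function instead makes each input query from $\rsi$ fan out through an extra factor of $T$, costing an additional $\lambda/\log q$ in the proof-query count.
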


Similar to the proof in the low rate regime, we first construct a Poly-IOP, and then compile it into a legitimate IOP by constructing a suitable batched code membership test according to \cref{thm: poly iop transformation}. 

\subsection{A Poly-IOP for \cref{thm: tot rm constant rate}}
For the remainder of this section, fix $\eps$ to be the proximity parameter from \cref{thm: tot rm constant rate} and set 
\[
 T = 5\left(\frac{\lambda}{2\log(\eps) + \log(q)} \right) \quad \text{and} \quad t = \frac{2 \log(q)}{\log(1/\eps)}.
\]
Note that the lower bound on $\eps$ in \cref{thm: tot rm constant rate} implies $T = \Theta(\lambda/\log(q))$, and it will be helpful to keep this in mind throughout the section. We start with a Poly-IOP for testing proximity to the constant rate Reed-Muller code.
\begin{lemma}\label{lm: poly iop tot rm constant rate}
    Keeping the notation of \cref{thm: tot rm constant rate}, there is a Poly-IOP, $\rmp(f, 6d, q, \lambda, \eps)$, that has the following guarantee:
\begin{itemize}
        \item {\bf Input:} the input is an oracle function $f: \Ff^3_q \to \Ff_{q'}$.
        \item {\bf Completeness:} if $f \in \RM_{q'}[6d, \Ff^3_q]$, then the honest prover makes the verifier accept with probability $1$.
        \item {\bf Round-by-round soundness:} $2^{-\lambda}$.
        \item {\bf Initial state:} the initial state is doomed if and only if $\agr(f, \RM_{q'}[6d, \Ff^3_q]) \leq \eps$.
        \item {\bf Number of polynomials:} $T+1$.
        \item {\bf Input query complexity:} $ O\left(\frac{\lambda}{\log(1/\eps)}\right)$.
        \item {\bf Proof query complexity:} $T \cdot (t+3)$.
    \end{itemize}
\end{lemma}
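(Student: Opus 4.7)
The plan is to construct a two-round Poly-IOP that exploits the plane-versus-point test (\cref{thm: plane vs point strong}) to reduce constant-rate trivariate Reed-Muller testing to bivariate Reed-Muller testing on $T$ uniformly random affine planes. In Round~1 the prover commits to a candidate trivariate polynomial $\wh{F}\in\RM_{q'}[6d,\Ff_q^3]$ (the ``$+1$'' polynomial; in the honest case $\wh{F}=\wh{f}$), after which the verifier samples $T$ independent uniformly random affine planes $P_1,\ldots,P_T\subseteq\Ff_q^3$ and announces them. In Round~2 the prover commits to polynomials $\wh{Q}_i\in\RM_{q'}[6d,\Ff_q^2]$ for each $i\in[T]$, each supposedly the bivariate realization of $\wh{F}|_{P_i}$. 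The verifier then samples $t+3$ uniformly random points $p_{i,1},\ldots,p_{i,t+3}$ on each plane $P_i$ and accepts iff $\wh{Q}_i(p_{i,j})=f(p_{i,j})$ for every $i,j$. This accounts for the $T+1$ committed polynomials, the $T(t+3)$ proof queries (to the $\wh{Q}_i$'s), and the $T(t+3)=O(\lambda/\log(1/\eps))$ input queries to $f$. Completeness is immediate since the honest choices $\wh{F}=\wh{f}$ and $\wh{Q}_i=\wh{F}|_{P_i}$ satisfy every equality by construction.

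For round-by-round soundness I will define the state after Round~1 to be doomed iff the initial state was doomed and additionally $\agr(f|_{P_i},\RM_{q'}[6d,\Ff_q^2])\leq 1.2\eps$ for every $i\in[T]$, and the state after Round~2 to be doomed iff the verifier rejects. In the doomed initial state we have $\agr(f,\RM_{q'}[6d,\Ff_q^3])\leq\eps$; applying \cref{thm: plane vs point strong} to $f$ gives, for each individual plane,
\[
\Pr\bigl[\agr(f|_{P_i},\RM_{q'}[6d,\Ff_q^2])>1.2\eps\bigr]\leq \frac{100}{\eps^2 q}.
\]
A union bound across the $T$ planes then makes Round~1 soundness at most $T\cdot 100/(\eps^2 q)\leq 2^{-\lambda}$, using the choice of $T$ together with the hypothesis $\eps\geq 23(6d/q)^{\tl/2}$.

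Conditional on the state after Round~1 being doomed, any $\wh{Q}_i\in\RM_{q'}[6d,\Ff_q^2]$ the prover might commit must agree with $f|_{P_i}$ on at most a $1.2\eps$ fraction of $P_i$. Hence each consistency check passes with probability at most $1.2\eps$, and the independence of the $t+3$ random points on any one plane $P_i$ gives the probability that all checks on that plane pass at most $(1.2\eps)^{t+3}\leq 2^{-\lambda}$ by the choice of $t$. A single plane already suffices to give the required Round~2 soundness.

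The main difficulty will be the quantitative balance of parameters: $T$ must be small enough for $T\cdot 100/(\eps^2 q)\leq 2^{-\lambda}$ to hold, yet large enough that $T(t+3)$ matches the claimed $O(\lambda/\log(1/\eps))$ query budget. The hypothesis $\eps\geq 23(6d/q)^{\tl/2}$ is precisely what makes $\eps^2 q$ at least $\Omega(d^{\tl}q^{1-\tl})$, which just suffices to accommodate $T=\Theta(\lambda/(2\log(1/\eps)+\log q))$ within the required bounds. A secondary point is that $\wh{F}$ plays no active role in the Poly-IOP itself---it is never queried by the verifier---and exists only as an anchor against which the compilation procedure of \cref{thm: poly iop transformation} later runs its batched code-membership IOPP on the appropriate trivariate Reed-Muller evaluation.
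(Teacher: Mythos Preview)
Your round-by-round soundness analysis has a genuine gap in both rounds: the per-plane bounds you invoke are only $1/\poly(q)$, and nothing in the hypotheses forces $q\geq 2^{\lambda}$ (only $q'\geq 2^{\lambda}\poly(q)$). Concretely, in Round~1 you claim $T\cdot 100/(\eps^2 q)\leq 2^{-\lambda}$. With $T=\Theta(\lambda/\log(\eps^2 q))$ and, in the constant-rate regime, $\eps^2 q=\Theta(q)$, the left side is $\Theta(\lambda/(q\log q))$, which is not $\leq 2^{-\lambda}$. Likewise in Round~2 you claim a single bad plane already gives $(1.2\eps)^{t+3}\leq 2^{-\lambda}$, but with $t=2\log q/\log(1/\eps)$ this is only $q^{-\Theta(1)}$ (and is even $\geq 1$ once $\eps\geq 5/6$). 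Both inequalities fail for the intended parameter ranges.

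The paper's fix is to amplify across $\Theta(T)$ planes \emph{multiplicatively} rather than via a union bound. For Round~1 it declares the state doomed when at least $T/2$ of the $P_i$ satisfy $\agr(f|_{P_i},\RM_{q'}[6d,\Ff_q^2])\leq 1.2\eps$; then the escape probability is at most $\binom{T}{T/2}(100/(\eps^2 q))^{T/2}\leq 2^{-\lambda}$ by the choice $T=5\lambda/\log(\eps^2 q)$. For Round~2 it bounds the probability that all checks pass on each of the $T/2$ bad planes by $O(1/(\eps^2 q))$ individually, and then takes the product over the $T/2$ planes to get $(O(1/(\eps^2 q)))^{T/2}\leq 2^{-\lambda}$. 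Your ``all $T$ planes bad'' state and ``a single plane suffices'' conclusion both miss this amplification.

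Two smaller points. First, the paper does not commit an extra trivariate $\wh F$; the prover sends only $\wh Q_1,\ldots,\wh Q_T$ (your $\wh F$ is never queried and plays no role). Second, the paper samples the $t$ consistency points on each $P_i$ along a single random line $L_i\subseteq P_i$ rather than uniformly in the plane. For the Poly-IOP lemma in isolation uniform points would also work once the state is defined correctly, but the line structure is what makes the Round~2 per-plane bound come out as $O(1/(\eps^2 q))$ via the spectral sampling lemma, and more importantly it is essential downstream: the side-condition points then lie on a line, so the batched IOPP in the compilation step avoids the $t^2$-sized subcube that uniform sampling would force.
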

\begin{proof}
We begin by formally presenting the Poly-IOP:
\begin{algorithm}[H] \label{iop: trm const rate poly iop}
     \caption{Total Degree Reed-Muller Poly-IOP: $\rmp(f, 6d, \lambda, \eps)$}  \label{poly iop: total rm constant rate}
     \begin{algorithmic}[1]
          \STATE \textbf{V:} The verifier chooses $T$ planes, $P_1, \ldots, P_T \subseteq \Ff_q^3$ and sends these to the prover. 
    \STATE \textbf{P:} The prover sends $\wh{Q}_1, \ldots, \wh{Q}_T \in \Ff_{q'}^{6d}[x,y]$. In the honest case $\wh{Q}_i = f|_{P_i}$. 
    \STATE \textbf{V:} The verifier chooses $T$ lines $L_1, \ldots, L_T \in \Ff_{q}^2$ and on each line, $L_i$, chooses $t = 2\log(q)/\log(1/\eps)$ points, $z_{i,1},\ldots, z_{i,t} \in L_i$. The verifier checks if
    \[
    \wh{Q}_i(z_{i,j}) = f|_{P_i}(z_{i,j}) \quad \forall i \in [T], j \in [t].
    \]
    If any check fails, the verifier rejects. Otherwise, the verifier accepts.
     \end{algorithmic}
\end{algorithm}
   The input query complexity, proof query complexity, and round complexity are straightforward to check. For the completeness, each $f|_{P_i}$ is a degree $6d$ oracle over $\Ff_q^2$, so the honest prover can respond with $\wh{Q}_i$ which is the low degree extension of $f|_{P_i}$ to $\Ff_{q'}^2$ for each $i \in [T]$. It is clear that the verifier will then accept with probability $1$ in step 3. The rest 
   of the argument it devoted to the round-by-round soundness analysis.

We define state functions after each round of interaction and analyze the soundness of each round. After the first round of interaction, i.e.\ after step 1, the state function is as follows.
\begin{state} \label{state: tot rm poly 1}
The state is doomed if and only if for at least $\frac{T}{2}$ of the planes $P_i$ we have 
\[
\agr(f|_{P_i}, \RM_{q'}[6d, \Ff_q^2]) \leq 1.2 \eps.
\]
\end{state}
The soundness of the first round is established by the following claim.
\begin{claim}\label{claim:aux_rbr_1}
    If $\agr(f, \RM_{q'}[6d, \Ff_q^3]) \leq \eps$, then \cref{state: tot rm poly 1} is doomed with probability at least $1-2^{-\lambda}$
\end{claim}
\begin{proof}
    By \cref{thm: plane vs point strong} we have that,
    \[
    \Pr_{P \subseteq \Ff_q^3}[\agr(f|_P, \agr_{q'}[6d, \Ff_q^2]) \geq 1.2\eps] \leq \frac{100}{\eps^2 q}.
    \]
    It follows that \cref{state: tot rm poly 1} is doomed with probability at least,
    \[
   1- \binom{T}{T/2} \cdot \left(\frac{100}{\eps^2 q}\right)^{T/2} \geq  1-\left(\frac{600}{\eps^2 q}\right)^{T/2} \geq 1-2^{-\lambda}.\qedhere
    \]
\end{proof}

After the first round of interaction, i.e.\ after step 3, the state function is as follows.
\begin{state} \label{state: tot rm poly 2}
The state is doomed if and only if for at least one $i \in [T], j \in [t]$ we have $\wh{Q}_{i}(z_{i,j}) \neq f(P_i(z_{i,j}))$.
\end{state}
The soundness of the second round is established by the following claim.
\begin{claim}\label{claim:aux_rbr_2}
    If $\agr(f, \RM_{q'}[6d, \Ff_q^3]) \leq \eps$ and \cref{state: tot rm poly 1} is doomed, then \cref{state: tot rm poly 2} is doomed with probability at least $1-2^{-\lambda}$
\end{claim}
\begin{proof}
 We have that \cref{state: tot rm poly 1} is doomed, and assume without loss of generality that it is $P_1, \ldots, P_{T/2}$ which satisfy $\agr(f|_{P_i}, \RM_{q'}[6d, \Ff_q^2]) \leq 1.2\eps$. Then, for each $i \in [T/2]$, we have
 \[
 \Pr_{z \in \Ff_{q^2}}[\wh{Q}_{i}(z) = f(P_i(z))] \leq 1.2\eps.
 \]
 By \cref{lm: spectral sampling} applied to the line-point inclusion graph, we have that with probability at least $1-\frac{100}{\eps^2 q}$ over a line $L \subseteq \Ff_q^2$ the following holds:
 \[
 \Pr_{z \in L}[\wh{Q}_{i}(z) = f(P_i(z))] \leq 1.3\eps.
 \]
 Thus, for each $i \in [T/2]$
 \[
 \Pr_{L_i, z_{i,1}, \ldots, z_{i, t}}[\wh{Q}_{i}(z_{i,j}) = f(P_i(z_{i,j})) \; \forall j \in [t]] \leq \frac{100}{\eps^2 q} + (1.3\eps)^t \leq \frac{101}{\eps^2 q}.
 \]
 It follows that \cref{state: tot rm poly 2} is doomed with probability at least 
 \[
 1 - \left(\frac{101}{\eps^2 q}\right)^{T/2} \geq 1 - 2^{-\lambda}\qedhere
 \]
\end{proof}
Together,~\cref{claim:aux_rbr_1} and~\cref{claim:aux_rbr_2} give the round-by-round soundness of~\cref{poly iop: total rm constant rate}.
\end{proof}

\subsection{Compiling \cref{poly iop: total rm constant rate} and Proof of \cref{thm: tot rm constant rate}}
In this section we compile the result of~\cref{lm: poly iop tot rm constant rate} into an IOP as in~\cref{thm: tot rm constant rate}. As mentioned, we will test the bivariate polynomials $\wh{Q}_i$ over the domain $\Fqe^2$ so that the code we are testing proximity to, $\RM_{q'}[6d, \Fqe^2]$, has low rate.

Note that we do not require any subfield relations between the fields $\Ff_q$ and $\Fqe$. We only require that they are both subfields of the much larger $\Ff_{q'}$. Therefore there is still a degree preserving identification between $f|_{P_i}: \Ff_q^2 \to \Ff_{q'}$ to some $Q_i: \Fqe^2 \to \Ff_{q'}$. Specifically, starting from a degree $6d$ bivariate function over $\Ff_q^2$, $f|_{P_i}$, we can first take its low degree extension to, say $\wh{F}_i \in \Ff_{q'}^{\leq 6d}[x_1, x_2]$, and then the only $Q_i: \Fqe^2 \to \Ff_{q'}$ which makes sense is $Q_i = \wh{F}_i|_{\Fqe^2}$.

We now proceed with the description of the batched IOPP. Define the agreement parameter 
\[
\ec = 0.9\et,
\]
where recall $\et$ is as in~\eqref{eq: eps test definition}. Note that $\eps_{\comp} = \qenc^{-\Omega(1)}$. We keep $q, q', d, \lambda, \eps$ as before. To compile $\rmp(f, 6d, \lambda, \eps)$, we need to construct a batched code membership IOPP according to \cref{thm: poly iop transformation}. Specifically, for each $i \in [T]$, we have a side condition function $h_i: \Side_i \to \Ff_{q'}$ for the queries made to $\wh{Q}_i$ in \cref{poly iop: total rm constant rate}, and we have to test simultaneously the proximity of $T$ functions to the codes $\RM_{q'}[6d, \Fqe^2 \; | \; h_i]$. This is done in the batched code membership IOPP of \cref{lm: batch rm}. It is intended for the input $Q_i$'s to be obtained from $f|_{P_i}$ in the Poly-IOP as described above. 

We point out that in the initial state of \cref{lm: batch rm} below, we make the assumption that $T/2$ of the input functions are far from the code with side conditions. In contrast, all prior batched IOPPs only assumed that one of the input functions is far in the doomed case. This stronger assumption is necessary for us to achieve the stated input query complexity in \cref{lm: batch rm} as well as the stated proof query complexity in \cref{thm: tot rm constant rate}. Unfortunately, the doomed initial state here is not black-box compatible with the compilation requirements stated in \cref{thm: poly iop transformation}. However, unraveling the proof therein, one sees that \cref{lm: batch rm} is still sufficient for compiling the Poly-IOP from \cref{lm: poly iop tot rm constant rate} and described in \cref{poly iop: total rm constant rate}. We go over this analysis in more detail in the proof of \cref{thm: tot rm constant rate}.

\begin{lemma} \label{lm: batch rm}
    There is an IOPP, which we call $\trmb$, that has the following guarantees:
    \begin{itemize}
        \item {\bf Input:} functions $Q_i: \Fqe^2 \to \Ff_{q'}$ and side condition functions $h_i: \Side_i \to \Ff_{q'}$ as described, for $i \in [T]$.
        \item {\bf Completeness:} if $Q_i \in \RM_{q'}[6d, \Fqe^2 \; | \; h_i]$ for all $i \in [T]$, then the honest prover makes the verifier accept with probability $1$.
        \item {\bf Round-by-round soundness:} $2^{-\lambda}$.
        \item {\bf Initial state:} the initial state is doomed if and only if for at least $T/2$ of the $i \in [T]$ we have
        \[
        \agr(Q_i, \RM_{q'}[6d, \Fqe^2 \; | \; h_i]) \leq \ec.
        \]
        \item {\bf Round complexity:} $O(\log \log d)$.
        \item {\bf Input query complexity:} $O\left(\frac{\lambda^2}{\log^2(\qenc)}\right)$.
        \item {\bf Proof query complexity:} $O\left(\frac{\lambda^2}{\log^2(\qenc)}\log\log d\right)$.
        \item {\bf Length:} $O(\qenc^2 \cdot \log \log d)$.
    \end{itemize}
\end{lemma}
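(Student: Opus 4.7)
The plan is to follow the template of $\irmb$ (\cref{iop: batch irm}) but with $T$ batched side-condition tests rather than one, and to amplify soundness by exploiting the assumption that $T/2$ of the input functions are far, rather than only one. In the first round the prover sends, for each $i\in[T]$: a bivariate oracle $g_{i,1}:\Fqe^2\to\Ff_{q'}$, a fill function $\Fill_i:\Fqe\times B_i\to\Ff_{q'}$, and the values of a low-degree extension $\wh h_i$ completing the side condition on the bounding subcube $A_i\times B_i$. From these the verifier defines on the fly $g_{i,2}=\Quo_2(Q_i-V_{A_i}g_{i,1}-\tilde h_i,B_i,\Fill_i)$ and $G_i=\combine_{6d}(g_{i,1},g_{i,2})$ using independent Reed-Muller proximity-generator coefficients. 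The verifier then samples $T$ independent uniform random lines $L_1,\ldots,L_T\subseteq\Fqe^2$, forms the single univariate combined function
\[
F=\combine_{6d}\bigl(G_1|_{L_1},\,G_2|_{L_2},\,\ldots,\,G_T|_{L_T}\bigr)
\]
via \cref{def: uni combine}, and both parties run $\rsi(F,6d,\qenc,\lambda,\et)$.

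Completeness is immediate from \cref{lm: ind deg bivariate side condition decomp completeness case} (applied with total-degree codes): the honest prover picks $g_{i,1},\Fill_i$ making every $G_i$ a degree $6d$ Reed-Muller codeword, so every restriction lies in $\RS_{q'}[6d,\Fqe]$ and $F$ does too, whence $\rsi$ accepts. For round-by-round soundness, define the state after the verifier's lines-and-coefficients message to be doomed iff $\agr(F,\RS_{q'}[6d,\Fqe])\le\et$. Assume the initial state is doomed, so there is $I\subseteq[T]$ of size at least $T/2$ with $\agr(Q_i,\RM_{q'}[6d,\Fqe^2\;|\;h_i])\le\ec$ for every $i\in I$. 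Fix any such $i$ and the prover's messages. By \cref{lm: total deg bivariate side condition decomp+combine}, except with probability $\poly(d)/q'$ over the $\combine_{6d}$ randomness defining $G_i$ we have $\agr(G_i,\RM_{q'}[6d,\Fqe^2])\le 1.04\,\ec$; \cref{thm: line vs point strong} (whose hypothesis is ensured by the choice $\ec=0.9\et$ with $\et=23(6d/\qenc)^{\tl/2}$) then yields $\agr(G_i|_{L_i},\RS_{q'}[6d,\Fqe])\le 1.06\,\ec$ except with probability at most $1.06\,\ec$ over $L_i$.

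Since the $L_i$'s and the per-$i$ $\combine$ coefficients are independent across $i$, the probability that every $i\in I$ escapes is at most $(1.06\,\ec)^{T/2}+T\cdot\poly(d)/q'$, and by the choices of $T=\Theta(\lambda/\log q)$, $\ec=\qenc^{-\Omega(1)}$ and $q'=2^\lambda\poly(q)$ this is at most $2^{-\lambda-1}$. Conditioning on at least one bad $G_i|_{L_i}$ being $(1.06\,\ec)$-far from $\RS_{q'}[6d,\Fqe]$, \cref{lm: combine multiple univariate} applied to the univariate tuple $(G_1|_{L_1},\ldots,G_T|_{L_T})$ forces $\agr(F,\RS_{q'}[6d,\Fqe])\le\et$ except with another $T\cdot\poly(d)/q'$ loss. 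The remaining rounds inherit $2^{-\lambda}$ soundness from $\rsi$, whose initial state is doomed precisely when the state defined above is doomed.

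Finally, each input query to $F$ expands into one query per restriction $G_i|_{L_i}$; each such query is one proof query to $g_{i,1}$ plus one query to $g_{i,2}$, and the latter becomes another proof query to $g_{i,1}$ together with one input query to $Q_i$. So the $O(\lambda/\log\qenc)$ input queries that $\rsi$ makes to $F$ translate into $O(T\cdot\lambda/\log\qenc)=O(\lambda^2/\log^2\qenc)$ input queries to the $Q_i$'s and the same order of additional proof queries, while $\rsi$'s own proof-query budget contributes the $O((\lambda^2/\log^2\qenc)\log\log d)$ term; the round complexity and length come from the constant-round prefix plus $\rsi$. The main obstacle is the careful bookkeeping in the round-by-round soundness step: showing that the amplification $(1.06\,\ec)^{T/2}$ from the $T/2$ independent bad indices absorbs both the $\poly(d)/q'$ losses coming out of the two $\combine$ steps, and that the constants in the hypotheses of \cref{thm: line vs point strong}, \cref{lm: total deg bivariate side condition decomp+combine} and \cref{lm: combine multiple univariate} line up with the chosen values of $\ec$, $\et$, and the relation $\qenc=q^{1.2}$.
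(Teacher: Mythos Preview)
Your soundness analysis is essentially along the right lines and matches the paper's strategy (the paper samples $10$ lines per index rather than one, and combines them via a direct proximity-generator sum rather than $\combine$, but these are constant-factor differences you already acknowledge). However, you are missing one ingredient that the paper introduces specifically for this lemma, and without it the claimed proof-query complexity fails.

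Recall that each $\Side_i$ has $t+1$ points, where $t = 2\log q/\log(1/\eps)$: the $t$ Poly-IOP query points $z_{i,1},\ldots,z_{i,t}$, which all lie on the line $L_i$ chosen in \cref{poly iop: total rm constant rate}, plus the out-of-domain anchoring point. In the constant-rate regime $\eps=\Theta(1)$, so $t=\Theta(\log q)$. The smallest axis-aligned subcube $A_i\times B_i$ containing these points has $|A_i|,|B_i|\le t+1$ and hence $|A_i\times B_i|=\Theta(t^2)=\Theta(\log^2 q)$. To evaluate $\tilde h_i$ at \emph{any} point the verifier must read the entire completion on $A_i\times B_i$; this costs $\Theta(\log^2 q)$ proof queries per $i$ and $T\cdot\Theta(\log^2 q)=\Theta(\lambda\log q)$ in total. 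Your query accounting omits this term entirely, and $\Theta(\lambda\log q)$ exceeds the bound $O(\lambda^2/\log^2(\qenc)\cdot\log\log d)$ for small $\lambda$ (take e.g.\ $\lambda=\Theta(\log q)$). The paper flags exactly this issue: ``the verifier would have to make $O(\log(q)^2)$ queries to read off the side condition function over a subcube, which is unaffordable to us.''

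The fix in the paper is to precede the quotienting by an invertible affine transformation $\mc T_i:\Fqe^2\to\Fqe^2$ that sends the $x$-axis to $L_i$, and to work with $Q'_i=Q_i\circ\mc T_i^{-1}$ and the transformed side condition $h'_i=h_i\circ\mc T_i$. After this change of coordinates, the $t$ collinear points of $\mc T_i^{-1}(\Side_i)$ share a common $y$-coordinate, so the bounding subcube $A'_i\times B'_i$ has $|B'_i|=2$ and $|A'_i\times B'_i|=O(t)$. The side-condition reading cost drops to $O(Tt)=O(\lambda)$, which is now within budget. The rest of your argument (quotient, combine, restrict to lines, combine again, invoke $\rsi$) then goes through as you describe.
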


To construct $\trmb$, we follow a similar strategy to the batched IOPPs of \cref{sec: batch low rate}. The main difference is that in the current case, the smallest subcube containing $\Side_i$, has size $t^2 = O\left(\log(q)^2\right)$. Thus, the verifier would have to make $O(\log(q)^2)$ queries to read off the side condition function over a subcube, which is unaffordable to us. Instead, we will take advantage of the fact that by design, most of the side condition points are on the same affine line.

For each $i \in [T]$, let $\mc{T}_i: \Fqe^2 \to \Fqe^2$ be an invertible affine transformation which sends the $x$-axis to $L_i$. That is, $L_i \subseteq \{\mc{T}_i(\alpha, 0) \; | \; \alpha \in \Fqe \}$. Then note that $\mc{T}_i(\Side_i)$ is contained in a subcube $A_i \times B_i$ with $|A_i| = t+1$ and $|B_i| = 2$. This subcube has size $O(t)$ rather than $O(t^2)$. Consider the function $h'_i: \mc{T}_i^{-1}(\Side_i) \to \Ff_{q'}$ given by $h'_i(x) = h_i(\mc{T}_i(x))$. It is not hard to check that for an oracle function $g$ we have, 
\begin{equation} \label{eq: rmb equiv 1}
g \in \RM_{q'}[6d, \Fqe^2 \; | \; h'_i] \; \text{if and only if} \; g \circ \T_i \in \RM_{q'}[6d, \Fqe^2 \; | \; h_i],
\end{equation}
and 
\begin{equation}\label{eq: rmb equiv 2}
\agr(g,  \RM_{q'}[6d, \Fqe^2 \; | \; h'_i]) = \agr(g \circ \T_i, \RM_{q'}[6d, \Fqe^2 \; | \; h_i]).
\end{equation}

Therefore, at the start of the protocol, the verifier will send the prover invertible affine transformations $\mc{T}_i$ as above, and the parties will perform the equivalent task of testing proximity of $Q'_i = Q \circ \T^{-1}_i$ to  $\RM_{q'}[6d, \Fqe^2 \; | \; h_i])$. From here, the batched IOPP can proceed as in \cref{sec: batch low rate}.
\begin{proof}[Proof of \cref{lm: batch rm}]
We begin with a formal description of the batched IOPP:
\begin{algorithm}[H]\caption{Batched IOP for Total Degree Reed-Muller: $\texttt{tRMBatch}$} \label{iop: batch trm}
    \begin{algorithmic}[1]
        \STATE \textbf{V:} For each $i \in [T]$, the verifier sends an invertible affine transformation $\mc{T}_i: \Fqe^2 \to \Fqe^2$ which maps the $x$-axis to the line $L_i$ from $\trmb(f, 6d, \lambda, \eps)$.
        \STATE From now on let $h'_i$ be as described above and let $Q'_i = Q_i \circ \mc{T}^{-1}_i$.
         \STATE \textbf{P:} For each $i \in [T]$, the prover sends $h''_i: A'_i \times B'_i \to \Ff_{q'}$ as above that agrees with the side conditions, where $A'_i \times B'_i$ is the smallest subcube containing $\mc{T}^{-1}_i(\Side_i)$. The prover also sends a function $g_{i,1}: \Fqe^2 \to \Ff_{q'}$ for each $i \in [T]$.
        \STATE For each $i \in [T]$, let $\tilde{h}'_i: \Fqe^2 \to \Ff_{q'}$ be the degree $(t, 1)$ function over $\Fqe^2$ obtained by restricting the degree $(t,1)$ extension of $h''_i$ over $\Ff_{q'}^2$ to the domain $\Fqe^2$. Let
    \[
    g_{i,2} = \quo(Q'_i - V_{A'_i} \cdot g_{i,1} - \tilde{h}'_i, B'_i, \Fill_i).
    \]
    Recall that in the honest case we have, 
     \begin{equation} \label{eq: batch rm eq}
     Q'_i =   V_{A'_i} \cdot g_{i,1} + V_{B'_i} \cdot g_{i,2} + \tilde{h}'_i,
    \end{equation}
    $g_{i,1} \in \RM_{q'}[6d - |A'_i|, \Fqe^2]$, and $g_{i,2} \in \RM_{q'}[6d - |B'_i|, \Fqe^2]$.
    \STATE \textbf{V:} For each $i \in [T]$. The verifier 
    generates  $F_i = \combine_{6d}(g_{i,1}, g_{i,2})$ according to \cref{lm: combine multiple multivariate total degree}. The verifier also chooses $10$ lines, $L_{i,1}, \ldots, L_{i,10} \subseteq \Fqe^2$, and $\xi_{i,j} \in \Ff_{q'}$ according to the proximity generator for Reed-Solomon codes in \cref{thm: prox gen RS with corr} and sends these to the prover. The verifier sets
    \[
    G = \sum_{i = 1}^T \sum_{j=1}^{10} \xi_{i,j} \cdot F_i|_{L_{i,j}}.
    \]
    \STATE \textbf{P + V:} Both parties run $\rsi(G, 6d, \lambda, \eps_{{\sf test}})$ from \cref{thm: RS low rate}.
    \end{algorithmic}
\end{algorithm}
    The round complexity, input query complexity, and proof query complexity are clear. We discuss completeness and round-by-round soundness below.
    
    \paragraph{Completeness:} In the completeness case, the prover sends $h''_i$ which agrees with $Q'_i$ on each $A'_i \times B'_i$. As we are in the completeness case, each $Q'_i \in \RM_{q'}[6d, \Fqe^2 \; | \; h''_i]$, so by~\cref{lm: multivariate side condition vanish decomp} the honest prover may send $g_{i,1} \in \RM_{q'}[6d-|A'_i|, \Fqe^2], g_{i,2} \in \RM_{q'}[6d - |B'_i|, \Fqe^2]$ which satisfy~\eqref{eq: batch rm eq}. In this case, $F$ in step $5$ has degree at most $6d$, and the verifier accepts in $\rsi(G, 6d, \lambda, \eps_{k-1})$ with probability $1$ by the completeness of $\rsi$.

    \paragraph{Round-by-round soundness:} At the start of the protocol, there is one (trivial) round of interaction where the verifier sends the invertible transformations in step $1$. The state afterwards is as follows.
    \begin{state} \label{state: rmb state 1}
        The state is doomed if and only if at least $T/2$ of the $i \in [T]$ we have 
        \[
        \agr(Q'_i, \RM_{q'}[6d, \Fqe^2 \; | \; h'_i]) \leq \ec.
        \]
    \end{state}
    By~\eqref{eq: rmb equiv 2} we have that as long if the initial state is doomed, then \cref{state: rmb state 1} is doomed. This handles the soundness of the first round of interaction.
    
    After the first round of interaction (step 2), the state function as follows.
    \begin{state}
    The state is doomed if and only if 
    \[
    \agr(G, \RS_{q'}[6d, \Fqe]) \leq \eps_{{\sf test}}.
    \]
    \end{state}
    As the previous state is doomed, there is at least $T/2$ of the $i \in [T]$ satisfy
    \[
        \agr(Q'_i, \RM_{q'}[6d, \Fqe^2 \; | \; h'_i]) \leq \ec.
    \]
    Without loss of generality, let these indices be $i = 1, \ldots, T/2$. 
    
    For each $i \in [T/2]$, it follows by \cref{lm: total deg bivariate side condition decomp+combine} that for any $i \in [T/2]$, $g_{i,1}$ and $\Fill_i$ which the prover sends, we will have
    \[
    \agr(F_i, \RM_{q'}[6d, \Fqe^2]) \leq 1.04\ec,
    \]
    with probability at least $1 - 2^{-\lambda + 10}$. If this is the case, then \cref{thm: line vs point strong} implies that 
    \[
    \Pr_{L \subseteq \Fqe^2}[\agr(F_i|_L, \RS_{q'}[6d, \Fqe]) \geq 1.06 \eps_{\comp}] \leq \ec.
    \]
    Thus, over the random lines $L_{i, j}$ for $i \in [T/2], j \in [10]$, we get that with probability at least $1 - \eps_{\comp}^{5T} \geq 1 - 2^{-\lambda + 10}$, at least one of the lines $L_{i, j}$ satisfies, $\agr(F_i|_{L_{i,j}}, \RS_{q'}[d, \Fqe]) \leq 1.06 \eps_{\comp}$. Assuming this is the case, we have
    \[
     \agr(G, \RS_{q'}[6d, \Fqe]) \leq 1.06 \eps_{\comp} \leq \eps_{{\sf test}},
    \]
    with probability at least $1-  2^{-\lambda - 10}$ by \cref{thm: prox gen RS with corr}. Union bounding over the three mentioned events establishes soundness for this round.

    The soundness for the remainder of the IOPP follows from that of $\rsi$.
 \end{proof}
\begin{proof}[Proof of \cref{thm: tot rm constant rate}]
The desired IOPP is obtained by compiling the Poly-IOP in \cref{poly iop: total rm constant rate} using the batched IOPP in \cref{iop: batch trm}. Specifically, it is $\compile(\rmp(f, 6d, \lambda, \eps), \trmb(\eps_{\comp}))$ from~\cref{thm: poly iop transformation}.

For the completeness, the prover provides evaluations functions $Q_i: \Fqe^2 \to \Ff_{q'}$, which become the input to the batched IOPP, \cref{iop: batch trm}. In the honest case, each input $Q_i: \Fqe^2 \to \Ff_{q'}$ in $\trmb$ is the unique degree $6d$ function whose low degree extension to $\Ff_{q'}^2$ is the same as the low degree extension of $f|_{P_i}$ to $\Ff_{q'}^2$, where $P_i$ is the plane chosen in the \cref{iop: trm const rate poly iop}. The input query complexity, proof query complexity, and round complexity follow directly from \cref{thm: poly iop transformation} combined with \cref{lm: poly iop tot rm constant rate} and \cref{lm: batch rm}.

For the round-by-round soundness however, we cannot apply \cref{thm: poly iop transformation} in a black box manner. This is because \cref{thm: poly iop transformation} would require a batched IOPP where the initial state is doomed when at least one of the input functions $Q_i$ satisfies $\agr(Q_i, \RM_{q'}[6d, \Fqe^2 \; | \; h_i]) \leq \ec$. In contrast, the batched IOPP we construct in \cref{lm: poly iop tot rm constant rate} has a stronger requirement for the initial state to be doomed. There, the initial state is only doomed when at least $T/2$ of the $T$ input functions  $Q_i$ satisfy $\agr(Q_i, \RM_{q'}[6d, \Fqe^2 \; | \; h_i]) \leq \ec$. Fortunately, the weaker round-by-round soundness of \cref{iop: batch trm} shown in \cref{lm: batch rm} is still sufficient to get $2^{-\lambda}$ round-by-round soundness in the compiled IOP, $\compile(\rmp(f, 6d, \lambda, \eps), \trmb(\eps_{\comp}))$. 

To see this, we must open up the proof of round-by-round soundness in \cref{thm: poly iop transformation} and \cref{lm: poly iop tot rm constant rate}. Suppose that we start from the doomed initial state in $\rmp(f, 6d, \lambda, \eps)$. The analysis is the same up to step $3$ of \cref{thm: poly iop transformation}, where we instead perform a slightly stronger analysis. This stronger analysis shows that if the state is doomed prior to step $3$, then with probability at least $1-2^{-\lambda }$, the verifier will not only reject, but will also find at least $T/2$ of the indices $i$ where at least one of the checks fails. That is, there is at least one $j \in [t]$ such that
\begin{equation} \label{eq: stronger rbr neq}    
\wh{Q}_i(z_{i,j}) \neq f(P_i(z_{i,j})).
\end{equation}

We can then carry this condition over to the round-by-round soundness analysis of the compiled IOP, $\compile(\rmp(f, 6d, \lambda, \eps), \trmb(\eps_{\comp}))$. Specifically, each such $i$ translates to one $Q_i$ satisfying  $\agr(Q_i, \RM_{q'}[6d, \Fqe^2 \; | \; h_i]) \leq \ec$ in step (c) of the Decision and Test Phase of $\compile(\rmp(f, 6d, \lambda, \eps), \trmb(\eps_{\comp}))$. As a result, it is sufficient to give a batched IOPP where the initial state is doomed when least $T/2$ of the $T$ input functions $Q_i$ satisfy $\agr(Q_i, \RM_{q'}[6d, \Fqe^2 \; | \; h_i]) \leq \ec$. Such an IOPP is achieved in \cref{lm: batch rm}, and thus the final IOPP, $\compile(\rmp(f, 6d, \lambda, \eps), \trmb(\eps_{\comp}))$, has the desired round-by-round soundness.
\end{proof}

\section{IOP For $\rics$} \label{sec: iop for rics}
In this section we prove the main theorem which gives an IOP for $\rics$ with round-by-round soundness, thereby establishing~\cref{thm: main rbr},~\cref{thm: main rbr set} and~\cref{thm: main standard}. 
Towards this end we first recall the definition of the NP-Complete language $\rics$.

\begin{definition}[Rank-One Constraint Satisfaction]
An instance of {\sf R1CS} consists of three matrices $A, B, C \in \Ff_{q'}^{n \times n}$, with membership given as follows 
\begin{itemize}
    \item \textbf{Yes case:} $\exists v \in \Ff_{q'}^n$ such that $Av \; \odot \; Bv = Cv$ and $v_1 = 1$.
    \item \textbf{No case:} $\forall v \in \Ff_{q'}^n$ such that $v_1 = 1$, we have $Av \; \odot \; Bv \neq Cv$.
\end{itemize}
Here $v_1$ refers to the first coordinate of $v$, and $\odot$ refers to the coordinate-wise product operation.
\end{definition}

 Fix $n$ to be the size of the $\rics$ instance and let $\lambda$ be a security parameter. To construct the IOP, we will assume that we have the following ingredients.

\begin{itemize}
    \item The finite field $\Ff_{q'}$ is of characteristic $2$ with size 
    \[
    q' \geq 2^{\lambda + 10} \cdot n^{14/15},
    \]
    \item A multiplicative subgroup $H$ of $\Ff_{q'}$ with size 
    \[
    |H| = n^{1/3} - 1.
    \]
    \item Subfields $\Ff_q, \Fqe$ of $\Ff_{q'}$ with sizes satisfying
    \[
    q, \qenc \geq 6 \cdot (|H|-1),
    \]
    and
    \[
    q' \geq \frac{2^{\lambda + 10} \cdot \max(\qenc, q)^4}{n^{2/3}}
    \]
\end{itemize}
The motivation for introducing separate fields $q$ and $\qenc$ is towards the linear length application in \cref{thm: main rbr set}. For this setting it is important that $\qenc$ is polynomially larger than $n^{1/3}$, e.g. $\qenc = \Theta\left(n^{2/5}\right)$, so we encourage the reader to have this setting in mind. That being said, \cref{thm: main rbr} holds for any $q$ and $\qenc$ as above, so for the purpose of generality the remainder of the section proceeds without making this assumption.

\subsection{Multivariate Sumcheck}
Towards our IOP for $\rics$ we need a Poly-IOP for the sumcheck protocol. In this protocol, the verifier is given as input a polynomial $\wh{f} \in \Ff_{q'}^{\leq 6d}[x,y,z]$, for some degree parameter $6d$, and wishes to check whether $\sum_{b \in H^3} \wh{f}(b) = \alpha$ for some target sum $\alpha$ and 
a multiplicative subgroup $H$.

\begin{lemma} \label{lm: sumcheck poly iop}
    There is a Poly-IOP with the following guarantees:
    \begin{itemize}
        \item {\bf Input:} a polynomial $\wh{f} \in \Ff_{q'}^{\leq 6d}[x,y,z]$, a multiplicative subgroup $H \subseteq \Ff_{q'}$ of size $d+1$, and a target sum $\alpha$.
        \item {\bf Completeness:} If $\sum_{b \in H^3} \wh{f}(b) = \alpha$  then the honest prover makes the verifier accept with probability $1$.
        \item {\bf Round-by-round soundness:} $2^{-\lambda}$.
        \item {\bf Initial state:} the initial state is doomed if and only if $\sum_{b \in H^3} \wh{f}(b) \neq \alpha$.
        \item {\bf Number of polynomials:} $O(1)$.
        \item {\bf Round complexity:} $O(1)$.
        \item {\bf Input query complexity:} $1$
        \item {\bf Proof query complexity:} $O(1)$.
    \end{itemize}
\end{lemma}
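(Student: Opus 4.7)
I would implement the standard multivariate sumcheck protocol in three rounds, reducing summation over $H^3$ first to $H^2$, then $H$, and finally to a single-point evaluation. In round $i$ (for $i = 1, 2, 3$), the prover sends a univariate polynomial $\wh{g}_i \in \Ff_{q'}^{\leq 6d}[x]$ which is supposed to be the partial sum
\[
\wh{g}_i(x) = \sum_{b_{i+1}, \ldots, b_3 \in H^{3-i}} \wh{f}(r_1, \ldots, r_{i-1}, x, b_{i+1}, \ldots, b_3),
\]
where $r_1, \ldots, r_{i-1} \in \Ff_{q'}$ are the previously chosen random challenges. The target for round $i$ is $\alpha_i$, where $\alpha_1 = \alpha$ and $\alpha_i = \wh{g}_{i-1}(r_{i-1})$ for $i \geq 2$. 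The verifier must check that $\sum_{h \in H} \wh{g}_i(h) = \alpha_i$, then sample a fresh random $r_i \in \Ff_{q'}$ and pass control to the next round. At the end of round $3$, the verifier makes one query to $\wh{f}$ at $(r_1, r_2, r_3)$ and checks that it matches $\wh{g}_3(r_3)$.

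To perform the per-round sum check with $O(1)$ queries rather than $|H|$ queries, I would use the Aurora-style decomposition trick: since $H$ is a multiplicative subgroup of $\Ff_{q'}$ and $V_H(x) = x^{|H|} - 1$, for any univariate polynomial $g$ we have $\sum_{h \in H} g(h) = |H| \cdot c_0$, where $c_0$ is the constant coefficient of $g \bmod V_H$. Consequently, $\sum_{h \in H} \wh{g}_i(h) = \alpha_i$ holds if and only if there exist $\wh{q}_i \in \Ff_{q'}^{\leq 6d - |H|}[x]$ and $\wh{p}_i \in \Ff_{q'}^{\leq |H|-2}[x]$ such that
\[
\wh{g}_i(x) = \wh{q}_i(x) \cdot V_H(x) + x \cdot \wh{p}_i(x) + \alpha_i / |H|.
\]
Thus in each round the prover sends the three polynomials $\wh{g}_i, \wh{q}_i, \wh{p}_i$ with the above degree bounds, and the verifier queries each of them at the single point $r_i$ to check both the decomposition identity and to determine the new target $\alpha_{i+1} = \wh{g}_i(r_i)$ for the next round.

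\textbf{State function and round-by-round soundness.} I would define the state after round $i$ to be not doomed iff the partial claim $\sum_{b \in H^{3-i}} \wh{f}(r_1, \ldots, r_i, b) = \wh{g}_i(r_i)$ holds, with the final state (after the query to $\wh{f}$) being not doomed iff $\wh{g}_3(r_3) = \wh{f}(r_1, r_2, r_3)$. If the state is doomed entering round $i$ (so $\alpha_i$ is not in fact equal to the true partial sum), then for any polynomials $\wh{g}_i, \wh{q}_i, \wh{p}_i$ the prover sends, either the decomposition identity fails as a polynomial identity (in which case Schwartz--Zippel gives failure probability at most $6d/q'$ at the random $r_i$), or $\wh{g}_i(r_i)$ differs from the true partial sum by Schwartz--Zippel on the degree-$6d$ polynomial $\wh{g}_i - (\text{true partial sum})$, again with failure probability at most $6d/q'$. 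The final consistency check is similarly handled by Schwartz--Zippel on a degree-$6d$ polynomial. Since $q' \geq 2^{\lambda + 10} \cdot n^{14/15}$ and $d \leq n^{1/3}$, we get $6d/q' \leq 2^{-\lambda}$, giving the claimed round-by-round soundness.

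\textbf{Completeness and the main obstacle.} Completeness is immediate: in the honest case, the prover computes $\wh{g}_i$ as the true partial sum, and the Aurora decomposition polynomials $\wh{q}_i, \wh{p}_i$ exist by the equivalence above, so the verifier accepts with probability $1$. The round complexity is $3$, the number of polynomials is $O(1)$, the input query complexity is $1$, and the proof query complexity is $O(1)$. This is essentially the classical sumcheck, so there is no serious technical obstacle; the only mild point to verify carefully is the bookkeeping of degree bounds in the Aurora decomposition, in particular that the degree bound $6d - |H|$ on $\wh{q}_i$ is nonnegative (which holds since $|H| = d + 1 \leq 6d$) and that the restriction $\deg(\wh{p}_i) \leq |H| - 2$ is what forces the decomposition to certify the sum exactly.
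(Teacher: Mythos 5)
Your proposal is correct and takes essentially the same approach as the paper: a three-round reduction of the sum over $H^3$ to a single point evaluation via univariate partial-sum polynomials, combined with the quotient decomposition $\wh{g}_i = \alpha_i/|H| + V_H\cdot \wh{q}_i + x\cdot\wh{p}_i$ (the paper's \cref{fact: sum} and the univariate sumcheck of \cref{lm: univariate sumcheck poly iop}) to certify each $H$-sum with $O(1)$ queries. The only organizational difference is that you interleave the univariate sum certification into each round, whereas the paper's \cref{poly iop: mvar sumcheck} defers all three univariate sumchecks to a single terminal phase after both $\zeta_1, \zeta_2$ are fixed (its state functions track the yet-unverified conditions $\sum_{\gamma\in H}\wh{F}_i(\gamma) = \cdot$); this is inessential and both organizations yield the same parameter bounds. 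One small precision: your state function should additionally declare the state doomed whenever the verifier has already rejected (e.g.\ a decomposition check failed at $r_i$), matching the convention that the final transcript is doomed iff the verifier rejects; with that clause added, your Schwartz--Zippel case split gives soundness $6d/q' \leq 2^{-\lambda}$ per round exactly as claimed.
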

\begin{proof}
    We defer the proof to \cref{app: rics proofs}.
\end{proof}

\subsection{Poly-IOP for $\rics$ with Round-by-Round Soundness}

We now describe our Poly-IOP for $\rics$, which we later compile into a proper IOP. Let us set $d = |H| - 1$ henceforth. In our description of the IOP, we assume there is an identification between $H^3$ and $[n]$. We will think of an assignment to the given $\rics$ instance, $v \in \Ff_{q'}^n$, as a map $v: [n] \to \Ff_{q'}$, which by identification can be thought of as a function $v: H^3 \to \Ff_{q'}$. We will repeatedly view vectors of length $n$ 
as functions over $H^3$ henceforth. In particular, for given a matrix $M \in \Ff_{q'}^{n \times n}$, we consider the vector $Mv$ and think of it as a function from $H^3$ to $\Ff_{q'}$.

\begin{thm} \label{thm: rics poly iop}
    There is a Poly-IOP for $\rics$, which we call $\ricsp$, that has the following guarantees:
    \begin{itemize}
        \item {\bf Input:} an instance of $\rics$.
        \item {\bf Completeness:} If the instance is satisfiable, then the honest prover makes the verifier accept with probability $1$.
        \item {\bf Round-by-round soundness:} $2^{-\lambda}$.
        \item {\bf Initial state:} The initial state is doomed if and only if the instance is not satisfiable.
        \item {\bf Number of polynomials:} $O(1)$.
        \item {\bf Round complexity:} $O(1)$.
        \item {\bf Proof query complexity:} $O(1)$.
    \end{itemize}
\end{thm}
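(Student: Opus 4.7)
My plan is to follow the algebraization sketch of Section~1.3.1. Identify $[n]$ with $H^3$ so that an assignment $v\in\Ff_{q'}^n$ with $v_1=1$ corresponds to a function $v:H^3\to\Ff_{q'}$, and each of $Av,Bv,Cv$ becomes another such function. In the first round the prover sends four polynomials $\wh{f},\wh{f_A},\wh{f_B},\wh{f_C}\in\Ff_{q'}[x,y,z]$ of individual degree at most $d=|H|-1$, intended as the low-degree extensions of $v$ and of $Av,Bv,Cv$. Setting $f=\wh{f}|_{H^3}$ and $f_M=\wh{f_M}|_{H^3}$, the instance is satisfiable precisely when these can be chosen so that (i) $f(e_1)=1$ at the point $e_1\in H^3$ identified with coordinate $1$; (ii) $f_M=Mf$ for each $M\in\{A,B,C\}$; and (iii) $f_A\odot f_B=f_C$ pointwise on $H^3$. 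Thus unsatisfiability forces at least one of (i), (ii), (iii) to fail regardless of the prover's message.

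The verifier then tests (i), (ii), (iii) via three complementary sub-protocols. For (i), it simply queries $\wh{f}(e_1)$ and rejects unless the answer is $1$; this is exact. For (iii), the prover additionally sends $\wh{g_1},\wh{g_2},\wh{g_3}\in\Ff_{q'}[x,y,z]$ of total degree at most $5d$, intended to witness the decomposition $\wh{f_A}\wh{f_B}-\wh{f_C}=V_H(x)\wh{g_1}+V_H(y)\wh{g_2}+V_H(z)\wh{g_3}$, which by~\cref{lm: multivariate side condition vanish decomp} is possible iff the left-hand side vanishes on $H^3$, i.e.\ iff (iii) holds. The verifier samples $r\in\Ff_{q'}^3$ uniformly, queries the seven relevant polynomials at $r$, and checks the identity; when (iii) fails, the two sides are distinct polynomials of total degree at most $6d$, so~\cref{lm: schwartz-zippel for rm} bounds the soundness error by $6d/q'$. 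For (ii), the verifier picks $s_M\in\Ff_{q'}^3$ uniformly for each $M$, queries $\wh{f_M}(s_M)$, and invokes the multivariate sumcheck of~\cref{lm: sumcheck poly iop} on $y\mapsto \wh{M}(s_M,y)\wh{f}(y)$ (total degree at most $6d$ in $y$) with target sum $\wh{f_M}(s_M)$. When (ii) fails for $M$, the polynomials $\wh{f_M}$ and $\sum_{y\in H^3}\wh{M}(\cdot,y)\wh{f}(\cdot)$, each of total degree at most $3d$, must differ, so they disagree at $s_M$ except with Schwartz--Zippel error $3d/q'$, after which the sumcheck invocation begins in a doomed state.

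For round-by-round soundness I will define the state after the first round of oracle-sending to be doomed whenever at least one of (i), (ii), (iii) fails; unsatisfiability implies this automatically. The read of $\wh{f}(e_1)$ kills (i) exactly, with no error; the sending of $\wh{g_1},\wh{g_2},\wh{g_3}$ preserves the fact that if (iii) failed then it still fails as a polynomial identity, because $V_H(x)\wh{g_1}+V_H(y)\wh{g_2}+V_H(z)\wh{g_3}$ always vanishes on $H^3$ while $\wh{f_A}\wh{f_B}-\wh{f_C}$ does not; the identity check at $r$ then detects this with error $6d/q'$; and the three sumcheck calls detect any violation of (ii) with error $3d/q'$ plus the round-by-round soundness of~\cref{lm: sumcheck poly iop}. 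The field-size hypotheses on $q'$ in~\cref{thm: main rbr}, combined with $d=O(n^{1/3})$, make each of these per-round errors at most $2^{-\lambda}$. The protocol uses $O(1)$ rounds, $O(1)$ polynomials, and $O(1)$ proof queries overall, giving the stated complexity.

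The main subtlety I anticipate is the careful bookkeeping of the state function across the interleaved polynomial identity checks and sumchecks: I must verify that being ``alive'' at each intermediate state is preserved exactly by the honest transitions while any prover deviation causes acceptance probability at most $2^{-\lambda}$. Once this state structure is laid out cleanly the remaining analysis is mechanical, using only Schwartz--Zippel, the vanishing-polynomial decomposition~\cref{lm: multivariate side condition vanish decomp}, and the sumcheck Poly-IOP~\cref{lm: sumcheck poly iop}.
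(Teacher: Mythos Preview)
Your proposal is correct and gives a valid Poly-IOP with the required parameters, but it differs from the paper's protocol in two places.

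First, to enforce $v_1=1$ the paper does not query the witness polynomial at all: it has the prover send a decomposition $\wh{h}=1+x\wh{h}_1+y\wh{h}_2+z\wh{h}_3$, which syntactically forces $\wh{h}(0)=1$ (coordinate $1$ being identified with $0\in H^3$). Your direct query of $\wh{f}(e_1)$ also works and costs one extra proof query, which is harmless for the $O(1)$ bound.

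Second, and more interestingly, the linear-relation check (ii) is handled differently. The paper chooses a single random scalar $\alpha\in\Ff_{q'}$ and, for each $M$, runs sumcheck on $b\mapsto \wh{f}_M(b)\wh{w}_\alpha(b)-\wh{h}(b)\wh{v}_{\alpha,M}(b)$, where $\wh{w}_\alpha$ and $\wh{v}_{\alpha,M}$ are the trivariate low-degree extensions of $i\mapsto\alpha^{i-1}$ and $i\mapsto (M\cdot(1,\alpha,\ldots,\alpha^{n-1}))_i$. Soundness then comes from viewing $f_M|_{H^3}$ and $(Mh)|_{H^3}$ as coefficient vectors of degree-$(|H|^3-1)$ univariate polynomials in $\alpha$ and applying Schwartz--Zippel in one variable, giving error $(|H|^3-1)/q'$. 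Your approach instead takes the six-variate low-degree extension $\wh{M}$ of the matrix, picks a random $s_M\in\Ff_{q'}^3$, and sumchecks $y\mapsto\wh{M}(s_M,y)\wh{f}(y)$ against target $\wh{f_M}(s_M)$; your soundness comes from Schwartz--Zippel on the trivariate identity $\wh{f_M}(x)=\sum_{y\in H^3}\wh{M}(x,y)\wh{f}(y)$, giving the smaller error $3d/q'$. Both reductions land in sumchecks on degree-$\le 6d$ trivariate polynomials and invoke \cref{lm: sumcheck poly iop} identically. Your route is arguably the more natural multivariate reduction and yields a slightly tighter per-round error; the paper's $\alpha$-fingerprinting avoids introducing the six-variate object $\wh{M}$ and keeps all auxiliary polynomials trivariate. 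Either choice suffices for the theorem as stated.
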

\begin{proof}
The poly-IOP is described in~\Cref{poly iop: rics}.
\begin{algorithm}[]
\caption{Poly-IOP for {\sf R1CS}: $\ricsp$} \label{poly iop: rics}
    \begin{algorithmic} [1]
     \STATE \textbf{P:} For each $M \in \{A,B,C\}$, the prover sends the following 
     \begin{itemize}
         \item $\widehat{f}_M \in \Ff_{q'}^{\leq 3d}[x,y,z]$. In the honest case the prover has an assignment to the variables, namely $v$, in mind and each $\widehat{f}_M$ is the low degree extension of $Mv: H^3 \xrightarrow{} \Ff_{q'}$. 
         \item $\widehat{h}_{1}, \wh{h}_2, \wh{h}_3 \in \Ff_{q'}^{\leq 3d-1}[x,y,z]$. Define $\widehat{h} \in \Ff_{q'}^{\leq 3d}[x,y,z]$ by
    \[
    \widehat{h} = 1 + x \cdot \wh{h}_1 + y \cdot \wh{h}_2 + z \cdot \wh{h}_3
    \]
    In the honest case $\widehat{h}$ is the low degree extension $v$. Recall that it is assumed that the assignment to the variables, $v$, has $v(0) = 1$, so the prover sends $\wh{h}_i$ as above as a decomposition of the low degree extension of $v$ -- which should have constant term $1$.

    \item $\widehat{g}_{1}, \wh{g}_2, \wh{g}_3 \in \Ff_{q'}^{\leq 6d - |H|}[x,y,z]$. In the honest case,
    \begin{equation} \label{eq: zero check}
    \wh{f}_A \cdot \wh{f}_B - \wh{f}_C = V_H(x)\cdot \wh{g}_1+ V_H(y)\cdot \wh{g}_2+  V_H(z)\cdot \wh{g}_3.
    \end{equation}
     \end{itemize}
    \STATE \textbf{V. } The verifier chooses a random $\alpha \in \Ff_{q'}$ and sends this to the prover. The verifier also chooses a random $b\in \Ff^3_{q'}$ and checks if 
    \begin{equation} \label{eq: verifier Zero check} 
    \wh{f}_A(b) \cdot \wh{f}_B(b) - \wh{f}_C(b) =\wh{V}_{H}(b_1)\cdot \wh{g}_1(b)+ \wh{V}_{H}(b_2)\cdot \wh{g}_2(b)+  \wh{V}_{H}(b_3) \cdot \wh{g}_3(b).
        \end{equation}
    If the check fails then the verifier rejects and the protocol terminates.
    \STATE Both parties set $\widehat{w}_{\alpha}\in \Ff_{q'}^{\leq (d,d,d)}[x,y,z]$ to be the low degree extension of the function from $H^3 \to \Ff_{q'}$ that maps
    \[
    i \mapsto \alpha^{i-1}, \quad \forall i \in [n] \cong H^3.
    \]
    For each $M \in \{A,B,C\}$, let $\widehat{v}_{\alpha, M}\in \Ff_{q'}^{\leq (d,d,d)}[x,y,z]$ be the low degree extension of the function over $H^3$ given by the following matrix-vector product:
    \[
    M \times (1,\alpha, \ldots, \alpha^{n-1})^T.
    \]
    In other words, $\widehat{v}_{\alpha, M}$ is the low degree extension of the function that maps
    \[
    i \mapsto \sum_{j = 1}^n M_{i,j} \cdot \alpha^{j-1}, \quad \forall i \in [n] \cong H^3.
    \]
    \STATE \textbf{P+V:} For each $M \in \{A,B,C\}$, the verifier and prover perform the multivariate sumcheck protocol to check that:
    \[
    \sum_{b \in H^3} \wh{f}_M(b) \cdot \wh{w}_{\alpha}(b) - \wh{h}(b) \cdot \wh{v}_{\alpha,M}(b) = 0.
    \]
    \end{algorithmic}
\end{algorithm}
    The number of polynomials, round complexity, and proof query complexity are clear. We discuss the completeness here and save the round-by-round soundness for the next section.

    In the completeness case there exists $v \in \Ff_{q'}^n$ satisfying 
    \begin{equation} \label{eq: rics completeness}
    Av \odot Bv = Cv
    \end{equation}
    and $v_1 = 1$. Viewing $v$ as a map from $H^3 \to \Ff_{q'}$ with $v(0) = 1$, let $\wh{h} \in \Ff_{q'}^{\leq (d,d,d)}$ be its low degree extension. The honest prover then sends $\wh{f}_A, \wh{f}_B, \wh{f}_C \in \Ff_{q'}^{\leq 3d}[x,y,z]$ which are the low degree extensions of $Av, Bv, Cv$ and $\wh{h}_1, \wh{h}_2, \wh{h}_3 \in \Ff_{q'}^{\leq 3d-1}[x,y,z]$ which satisfy,
    \[
    \wh{h} = 1 + x\cdot\wh{h}_1 + y \cdot \wh{h}_2 + z\cdot \wh{h}_3.
    \]
    
    In this case, $f_A\cdot f_B - f_C$ is indeed $0$ over $H^3$ because of~\eqref{eq: rics completeness}. Therefore, the honest prover can find $\widehat{g}_{1}, \wh{g}_2, \wh{g}_3 \in \Ff_{q'}^{\leq 6d - |H|}[x,y,z]$ satisfying
    \begin{equation} \label{eq: zero check completeness rics}
    f_A \cdot f_B - f_C = \wh{V}_H(x)\cdot \wh{g}_1+ \wh{V}_H(y)\cdot \wh{g}_2+  \wh{V}_H(z)\cdot \wh{g}_3.
    \end{equation}
    As a result, in step $2$, for every $b \in \Ff^3_{q'}$ that the verifier may choose, the equation,
    \[
     \wh{f}_A(b) \cdot \wh{f}_B(b) - \wh{f}_C(b) 
     =\wh{V}_{H}(b_1)\cdot \wh{g}_1(b)+ \wh{V}_{H}(b_2)\cdot \wh{g}_2(b)+  \wh{V}_{H}(b_3) \cdot \wh{g}_3(b)
    \]
    will be satisfied. Finally, since $\wh{f}_M$ is the honest low degree extension of $Mv$ for each $M \in \{A,B,C\}$, for whichever $\alpha \in \Ff_{q'}$ that the verifier chooses in step 2, it will be the case that 
    \[
    \wh{f}_M \cdot w_{\alpha}(b) = \wh{h} \cdot v_{\alpha, M}(b),
    \]
    for all $b \in H^3$. Thus, from step 4 onward, the verifier will accept the honest prover with probability $1$ due to the completeness of the multivariate sumcheck protocol.
\end{proof}
\subsubsection{Round-by-round Soundness for \cref{thm: rics poly iop}}
In this section we show the round-by-round soundness of~\cref{poly iop: rics}. 
    There is only one round of interaction before the parties move onto the multivariate sumcheck protocol. After the prover sends the polynomials in step 1, the state is as follows. 
    
\vspace{1cm}

\begin{state} \label{state: rics poly}
    The state is doomed if and only if at least one of the following holds:
    \begin{itemize}
    \item The verifier's check in~\eqref{eq: verifier Zero check} fails. In this case the verifier rejects and the protocol terminates. Every state afterwards is doomed. 
        \item For some $M \in \{A,B,C\}$, we have $\sum_{b \in H^3} \wh{f}_M(b) \cdot \wh{w}_{\alpha}(b) - \wh{h}(b)\cdot \wh{v}_{\alpha, M}(b) \neq 0$.
    \end{itemize}
\end{state}

For any polynomial $\wh{f} \in \Ff_{q'}[x,y,z]$, we view $\wh{f}|_{H^3}$ as a length $n$ vector over $\Ff_{q'}$, and for an $M \in \Ff_{q'}^{n \times n}$, let $M\cdot \wh{f}|_{H^3}$ be the matrix-vector product. 

\begin{claim}
    If the intial state is doomed, meaning $\{A,B,C\}$ is not a satisfiable $\rics$ instance, then for any prover message in step 1, \cref{state: rics poly} will be doomed after step $2$ with probability at least $1 - 2^{-\lambda}$.
\end{claim}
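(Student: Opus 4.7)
The plan is to view the polynomials $\wh{h}_1,\wh{h}_2,\wh{h}_3$ sent by the prover as implicitly defining a candidate assignment, and to split into two cases depending on whether the remaining polynomials $\wh{f}_A,\wh{f}_B,\wh{f}_C$ are ``consistent'' with this assignment in the appropriate sense. Concretely, let $v \in \Ff_{q'}^n$ be the vector obtained by restricting $\wh{h} = 1 + x\wh{h}_1+y\wh{h}_2+z\wh{h}_3$ to $H^3$; by construction $v_1 = \wh{h}(0,0,0) = 1$, so $v$ is a legal R1CS assignment candidate. For each $M\in\{A,B,C\}$ let $u_M \in \Ff_{q'}^n$ be the vector corresponding to the (appropriate matrix--vector product involving the) action of $M$ on $v$, in the sense that the honest prover would set $\wh{f}_M|_{H^3} = u_M$. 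The two cases are: (i) $\wh{f}_M|_{H^3} = u_M$ for every $M$, or (ii) $\wh{f}_M|_{H^3}\neq u_M$ for some $M$.

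In case (i), unsatisfiability of the R1CS instance together with $v_1 = 1$ gives that $Av \odot Bv - Cv$ is not the zero vector on $H^3$. Since $\wh{f}_A|_{H^3},\wh{f}_B|_{H^3},\wh{f}_C|_{H^3}$ agree with $Av, Bv, Cv$, the polynomial $\wh{f}_A\cdot\wh{f}_B-\wh{f}_C$ fails to vanish somewhere on $H^3$, and hence it cannot equal $\wh{V}_H(x)\wh{g}_1+\wh{V}_H(y)\wh{g}_2+\wh{V}_H(z)\wh{g}_3$ as a formal polynomial for \emph{any} choice of $\wh{g}_1,\wh{g}_2,\wh{g}_3$ (since the right-hand side does vanish on $H^3$). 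The difference of these two trivariate polynomials has total degree at most $6d \leq 2|H|$, so by Schwartz--Zippel the verifier's check \eqref{eq: verifier Zero check} at a uniformly random $b \in \Ff_{q'}^3$ succeeds with probability at most $6d/q'$, which is $\leq 2^{-\lambda}$ by our choice of $q'$.

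In case (ii), fix an $M$ for which $w := \wh{f}_M|_{H^3}-u_M$ is a nonzero vector in $\Ff_{q'}^n$. Using the identification $H^3\cong[n]$, we have
\[
\sum_{b\in H^3}\wh{f}_M(b)\wh{w}_\alpha(b) - \sum_{b\in H^3}\wh{h}(b)\wh{v}_{\alpha,M}(b) \;=\; \sum_{i=1}^{n} w_i\,\alpha^{i-1},
\]
which, as a univariate polynomial in $\alpha$ of degree at most $n-1$, is not identically zero. By Schwartz--Zippel over $\Ff_{q'}$, this polynomial evaluates to a nonzero element for the verifier's random $\alpha$ except with probability at most $(n-1)/q' \leq 2^{-\lambda}$, and in that event the second bullet of \cref{state: rics poly} is triggered.

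The main obstacle is essentially administrative: verifying the identity in case (ii) requires unpacking the definitions of $\wh{w}_\alpha$ and $\wh{v}_{\alpha,M}$ in a way consistent with the transpose conventions used in the protocol, and then combining the two cases via a union bound (noting that the randomness for the zero-check and for the $\alpha$-sampling is independent, so a single union bound of $2 \cdot 2^{-\lambda}$ suffices, which can be absorbed into the target $2^{-\lambda}$ by the slack in our parameter choices). Everything downstream of step~2 is then handled by the completeness and round-by-round soundness of the multivariate sumcheck Poly-IOP of \cref{lm: sumcheck poly iop}, applied once per $M\in\{A,B,C\}$.
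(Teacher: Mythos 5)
Your proof is correct and follows essentially the same route as the paper: the same dichotomy (either the committed $\wh{f}_M$'s agree with the intended matrix--vector products everywhere on $H^3$, in which case unsatisfiability forces $\wh{f}_A\wh{f}_B - \wh{f}_C$ to fail to vanish on $H^3$ and Schwartz--Zippel kills the random zero-check, or else some $\wh{f}_M|_{H^3}$ deviates, in which case the sumcheck target is a nonzero degree-$(n-1)$ polynomial in $\alpha$ and Schwartz--Zippel over the random $\alpha$ handles it). One minor note: no union bound is actually needed to ``combine the two cases,'' since they are mutually exclusive (the case is determined by the prover's step-1 message, and in each case only a single Schwartz--Zippel event is invoked), so the extra factor of $2$ you mention is not incurred at all.
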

\begin{proof}
    Note that for any prover messages in step $1$, at least one of the following must hold:
    \begin{itemize}
        \item $(\wh{f}_A \cdot \wh{f}_B - \wh{f}_C)|_{H^3} \neq 0$.
        \item For some $M \in \{A,B,C\}$, $\wh{f}_M|_{H^3} \neq M \cdot \wh{h}|_{H^3}$.
    \end{itemize}
    Indeed, if not, then one can check that $\wh{h}|_{H^3}$ gives a satisfying assignment to the $\rics$ instance $\{A,B,C\}$. 

    Let us suppose that it is the first item that holds then. In this case,~\eqref{eq: verifier Zero check} fails with probability at least $1 - 6d/q'$ by the Schwartz-Zippel lemma, and we are done.

    Now suppose that it is the second item that does not hold, and fix the $M$ for which we have inequality in the second item. Let $\wh{G}_1 \in \Ff_{q'}^{\leq |H|^3 - 1}[x]$ be the degree $|H|^3 - 1$ polynomial whose coefficients are the entries of the vector $\wh{f}_M$, and let  $\wh{G}_2\in\Ff_{q'}^{\leq |H|^3-1}[x]$ be the degree $|H|^3 - 1$ polynomial whose coefficients are the entries of the vector $M \cdot \wh{h}|_{H^3}$. By assumption $\wh{G}_1 \neq \wh{G}_2$. However, note that 
    \[
     \sum_{b \in H^3} \wh{f}_M(b) \cdot \wh{w}_{\alpha}(b) = \wh{G}_1(\alpha)\qquad \text{and} \qquad
     \sum_{b \in H^3} \wh{h}(b) \cdot \wh{v}_{\alpha, M}(b) = \wh{G}_2(\alpha).
    \]
    By the Schwartz-Zippel lemma, we have that $\wh{G}_1(\alpha) \neq \wh{G}_2(\alpha)$ with probability at least 
    \[
    1- (|H|^3-1)/q' \geq 1- 2^{-\lambda},
    \]
    and in this case \cref{state: rics poly} is doomed by the second item.
\end{proof}

We quickly wrap up the soundness in step 4.  If \cref{state: rics poly} is doomed by its first item, then the remainder of the protocol is doomed and we are done. If \cref{state: rics poly} is doomed by its second item, then we are in the doomed initial state of the multivariate sumcheck. The soundness of the rounds therein follow from \cref{lm: sumcheck poly iop}.

\subsection{A Batched Code Membership IOPP} \label{sec: compile rics}
Towards compiling the IOPP from~\cref{thm: rics poly iop} to a proper IOP we need to design an appropriate batched code membership IOPP, and we do
that in this section. It is straightforward to check that in \cref{poly iop: rics} (including the execution of the sum-check protocol in step 4), the prover sends:
\begin{itemize}
    \item $9$ multivariate polynomials $\wh{G}_i \in \Ff_{q'}^{\leq d_i}[x,y,z]$, where $d_i \leq 6d$.
    \item $8$ univariate polynomials $\wh{g}_j \in \Ff_{q'}^{\leq e_j}[x]$ where each $e_j \leq 6d$.
\end{itemize}
The $9$ multivariate polynomials from item $1$ are those sent by the prover in step $1$ of \cref{poly iop: rics}. The $8$ univariate polynomials from the second item are from the sumcheck in step $4$ \cref{poly iop: rics}. 

We will encode the $\wh{G}_i$'s over $\Ff_q^3$ and the $\wh{g}_j$'s over $\Fqe$. Recall that using these separate subfields is only important towards the linear length setting in \cref{thm: main rbr set}. We fix following agreement parameters:
\[
\eps_0 = 2\cdot \left(\frac{6d}{q}\right)^{1/2} 
\qquad \eps'_0 = 2\cdot \left(\frac{6d}{\qenc}\right)^{1/2},
\qquad \et = 23\left( \frac{6d}{q} \right)^{\tl/2},
\qquad \et' = \left( \frac{12d}{\qenc} \right)^{\tl/2}.
\]
In particular, $\et$ is large enough to be an applicable proximity parameter in \cref{thm: tot rm constant rate} and $\et'$ is large enough to be an applicable proximity parameter in \cref{thm: RS low rate}.

The maximum number of times any polynomial from \cref{poly iop: rics} is queried is $2$, so each of the polynomials has a constant sized side condition function. Specifically, for each $\wh{G}_i$, $i \in [9]$ we have a side condition function $H_i: A_i \times B_i \times C_i \to \Ff_{q'}$, and for each $\wh{g}_j$, $j \in [8]$ we have a side condition function $h_j: A'_j \to \Ff_{q'}$. Here, $A_i \times B_i \times C_i$ is a $3$ by $3$ by $3$ subcube containing the points queried to $\wh{G}_i$ in \cref{poly iop: rics}, and $A'_j$ is the set of points queried to $\wh{g}_j$ in \cref{poly iop: rics}. In order to compile \cref{poly iop: rics}, we must construct a batched code membership IOPP for the codes $\RM_{q'}[d_i, \Ff_q^3 \; | \; H_i]$, and $\RS_{q'}[e_j, \Fqe \; | \; h_j]$.

In the batched code membership IOPP we will use the IOPPs $\rmi(G^\star, 6d, \lambda, \et)$ from \cref{thm: tot rm constant rate} and $\rsi(g^\star, 6d, \lambda, \et')$ from \cref{thm: RS low rate}.  We will not recall all of the parameters, but for convenience let us restate their initial states, which will be important for our round-by-round soundness analysis.
\begin{state} \label{state rm final use}
 The initial state of $\rmi(G^\star, 6d, \lambda, \et)$ is doomed if and only if the input function $G^\star: \Fqe^3 \to \Ff_{q'}$ satisfies
 \[
 \agr(G^\star, \RM_{q'}[6d, \Ff_q^3]) \leq \et.
 \]
\end{state}
 \begin{state} \label{state: rs final use}
 The initial state of $\rsi(g^\star, 6d, \lambda, \et')$ is doomed if and only if the input function $g^\star: \Ff_q \to \Ff_{q'}$ satisfies
 \[
 \agr(g^\star, \RS_{q'}[6d, \Fqe]) \leq \et'.
 \]
\end{state}
 Using these IOPPs, we construct a batched code membership IOPP satisfying the following.

\begin{lemma} \label{lm: iop rics batch}
There is a batched code-membership IOPP, which we call $\texttt{R1CSBatch}$, that satisfies the following properties:
\begin{itemize}
    \item Input: functions $G_i: \Ff_q^3 \to \Ff_{q'}$ with side condition functions $H_i: A_i \times B_i \times C_i \to \Ff_{q'}$ for $i \in [9]$, and functions $g_j: \Fqe \to \Ff_{q'}$ with side condition functions $h_j: A'_j \to \Ff_{q'}$ for $j \in [8]$.
    \item Completeness: if $G_i \in \RM_{q'}[d_i, \Ff_q^3 \; | \; H_i]$ and $g_i \in \RS_{q'}[e_j, \Fqe \; | \; h_j]$ for all $i \in [9], j \in [8]$, then the honest prover makes the verifier accept with probability $1$.
    \item Round-by-round soundness: $2^{-\lambda}$.
    \item Initial state: the initial state is doomed if and only if $
    \agr(G_i, \RM_{q'}[d_i, \Ff_q^3 \; | \; H_i]) \leq \eps_0$ 
    for some $i\in [9]$, or $\agr(g_j, \RS_{q'}[e_j, \Ff_q \; | \; h_j]) \leq \eps'_0$ for some $j \in [8]$.
    \item Round complexity: $O(\log \log d)$.
    \item Input query complexity: $O\left(\frac{\lambda}{\log(1/\eps_0)} \right)$.
    \item Proof query complexity: $O\left(\frac{\lambda}{\log(1/\eps_0)} \right) + O\left(\frac{\lambda^2}{\log^2(n)} \cdot \log \log(n)\right)$.
    \item Length: $O(q^3)$.
\end{itemize}
\end{lemma}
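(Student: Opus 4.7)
The construction mirrors the batched IOPPs of Section~\ref{sec: batch low rate}, handling two kinds of codes on parallel tracks whose outputs are verified independently. For the trivariate Reed-Muller track, for each $i \in [9]$ the prover sends functions $g_{i,1}, g_{i,2} : \Ff_q^3 \to \Ff_{q'}$ and a fill $\Fill_i : \Ff_q \times \Ff_q \times C_i \to \Ff_{q'}$; the verifier forms
\[
g_{i,3} \;=\; \Quo_3\bigl(G_i - V_{A_i,1}\cdot g_{i,1} - V_{B_i,2}\cdot g_{i,2} - \tilde{H}_i,\; C_i,\; \Fill_i\bigr)
\]
where $\tilde{H}_i$ is obtained from the prover-supplied extension of $H_i$ to the enclosing $O(1)$-size subcube, and then $F_i = \combine_{d_i}(g_{i,1},g_{i,2},g_{i,3})$. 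By Lemma~\ref{lm: total deg trivariate side condition decomp+combine}, if $G_i$ is $\eps_0$-far from $\RM_{q'}[d_i, \Ff_q^3 \mid H_i]$ then $\agr(F_i,\RM_{q'}[d_i,\Ff_q^3]) \le 1.04\,\eps_0$ with probability $1-\poly(d)/q'$. The nine $F_i$'s are then merged via Lemma~\ref{lm: combine multiple multivariate total degree} into a single $G^\star : \Ff_q^3 \to \Ff_{q'}$ of target degree $6d$, and both parties run $\rmi(G^\star, 6d, q, \qenc, \lambda, \et)$.

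The univariate Reed-Solomon track works analogously: the verifier uses $\Quo_1$ with prover-supplied fills $\Fill_j$ to form each $f_j = \Quo_1(g_j-\tilde{h}_j, A'_j, \Fill_j)$ (Lemmas~\ref{lm: uni quo completeness} and~\ref{lm: uni quo soundness}), combines the eight $f_j$'s into a single $g^\star : \Fqe \to \Ff_{q'}$ of target degree $6d$ via Lemma~\ref{lm: combine multiple univariate}, and both parties run $\rsi(g^\star, 6d, \lambda, \et')$. The final verifier accepts iff both subroutines accept. Completeness is immediate from the honest decompositions supplied by Lemmas~\ref{lm: uni quo completeness} and~\ref{lm: ind deg bivariate side condition decomp completeness case}: honest messages make $F_i$ of degree $\le d_i$ and $f_j$ of degree $\le e_j-|A'_j|$, so $G^\star$ and $g^\star$ are codewords.

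For round-by-round soundness, we call the post-interaction state doomed if either $\agr(G^\star, \RM_{q'}[6d, \Ff_q^3]) \leq \et$ or $\agr(g^\star, \RS_{q'}[6d, \Fqe]) \leq \et'$. If the initial state is doomed via some $G_{i^\star}$, the trivariate analysis --- applied with $\eps = \max(\eps_0, 1.02(6d/q)^{\tl})$ so as to satisfy the hypothesis of Lemma~\ref{lm: total deg trivariate side condition decomp+combine} --- together with Lemma~\ref{lm: combine multiple multivariate total degree} yields $\agr(G^\star, \RM_{q'}[6d, \Ff_q^3]) \leq 1.04 \cdot 1.01^3\,\eps \leq \et$ with probability $\geq 1 - 2^{-\lambda}$; the case of a doomed $g_{j^\star}$ is symmetric via Lemmas~\ref{lm: uni quo soundness} and~\ref{lm: combine multiple univariate}. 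Soundness of the remaining rounds then reduces to that of $\rmi$ and $\rsi$ by a union bound. The round complexity $O(\log\log d)$, query complexities, and length $O(q^3)$ follow from the recursive guarantees of the two subroutines, with the $\rsi$ length $O(\qenc^2 \log\log d)$ absorbed into $O(q^3)$ since $\qenc < q^{3/2}$, and the nine prover functions $g_{i,1}, g_{i,2}$ contributing another $O(q^3)$.

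The main obstacle is the parameter bookkeeping: verifying that the cascade of $1.04$-blowups from quotienting+combining, the $1.01^3$ factor from the final trivariate combine, and the $O(1/\qenc)$ slack from univariate quotienting all fit comfortably inside the gap between $\eps_0$ and $\et$ (and between $\eps'_0$ and $\et'$). This gap is large because $(6d/q)^{1/2} \ll (6d/q)^{\tl/2}$ when $6d/q < 1$ and $\tl < 1/2$, so the inequalities close; still, carrying out the chain for both tracks simultaneously and confirming that all invocations of the combining lemmas meet their own agreement-threshold hypotheses requires some care.
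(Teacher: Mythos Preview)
Your proposal is correct and follows essentially the same construction and analysis as the paper: quotient each $G_i$ using trivariate $\Quo_3$, combine the three pieces into $F_i$, merge the nine $F_i$'s into a single $G^\star$ and run $\rmi$; in parallel, quotient each $g_j$ via $\Quo_1$, combine into $g^\star$, and run $\rsi$. Your state function and the chain of lemmas (\ref{lm: total deg trivariate side condition decomp+combine}, \ref{lm: combine multiple multivariate total degree}, \ref{lm: uni quo soundness}, \ref{lm: combine multiple univariate}) match the paper's; your remark about taking $\eps = \max(\eps_0, 1.02(6d/q)^{\tl})$ to meet the hypothesis of Lemma~\ref{lm: total deg trivariate side condition decomp+combine} is in fact more careful than the paper, which glosses over this threshold check.
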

\begin{proof}
The IOP is described in \cref{iop: rics batch}:
    \begin{algorithm}[]
\caption{$\texttt{R1CSBatch}$} \label{iop: rics batch}    
\begin{algorithmic} [1]
    \STATE \textbf{P:} The prover sends the following functions for every $i \in [9], j \in [8]$:
    \begin{itemize}
        \item $G_{i,1}, G_{i,2}: \Ff_{q}^3 \to \Ff_{q'}$.
        \item $\Fill_i: \Ff_{q} \times \Ff_q \times C_i \to \Ff_{q'}$.
        \item $\Fill'_j: A'_i \to \Ff_{q'}$.
    \end{itemize}
    \STATE For each $i \in [9], j \in [8]$, let $\Tilde{H}_i: \Ff_q^3 \to \Ff_{q'}$ be the restriction to $\Ff^3_q$ of the low degree extension of $H_i$ to $\Ff^3_{q'}$ and let $\Tilde{h}_j: \Ff_q \to \Ff_{q'}$ be the restriction to $\Ff_q$ of the low degree extension of $h_j$ to $\Ff_{q'}$. Also let 
    \[
    G_{i,3} = \Quo_3(G_i - V_{A_i}G_{i,1} - V_{B_i}G_{i,2} - \Tilde{H}_j, C_i, \Fill_i),
    \]
    and
    \[
    g'_j = \Quo_1(g_j - \Tilde{h}_j, A'_j).
    \]
    \STATE \textbf{V:} For each $i \in [9]$, the verifier generates
    \[
    G^\star_i = \combine_{d_i}(G_{i,1}, G_{i,2}, G_{i,3}),
    \]
    with starting degrees $d_i - |A_i|, d_i - |B_i|, d_i - |C_i|$, according to \cref{def: multi combine tot}. The verifier chooses random coefficients from $\xi_1, \ldots, \xi_9 \in \Ff_{q'}$ according to the proximity generator from \cref{thm: prox gen RM without corr} and sets
    \[
    G^\star = \sum_{i=1}^9 \xi_i \cdot G^\star_i.
    \]
    The verifier also generates
    \[
    g^\star = \combine_{6d}(g'_1, \ldots, g'_j),
    \]
    with starting degrees $e_1 - |A'_1|, \ldots, e_j - |A'_j|$ according to \cref{def: uni combine}.
    \STATE \textbf{P+V:} Both parties run $\rmi(G^\star, 6d, \lambda, \et)$ from \cref{thm: tot rm constant rate} and $\rsi(g^\star, 6d, \lambda, \et')$ from \cref{thm: RS low rate}.
\end{algorithmic}
\end{algorithm}
    The round complexity, input query complexity, and proof query complexity are straightforward to see. We discuss the completeness below.
    
     In the completeness case, for each $i \in [9]$ there exist $\wh{G}_{i,1},  \wh{G}_{i,2},  \wh{G}_{i,3}$ with degrees $d_i - |A_i|, d_i - |B_i|, d_i - |C_i|$ respectively, which satisfy
    \begin{equation}  \label{eq: honest prover rics batch}
    \wh{G}_i = V_{A,1}\cdot \wh{G}_{i,1} +  V_{B,1}\cdot \wh{G}_{i,2} +  V_{C,2}\cdot \wh{G}_{i,3} + \wh{H}_i.
    \end{equation}
    Here $\wh{H}_i$ is the low degree extension of $H_i$ over $\Ff_{q'}^3$. The honest prover gives $G_{i,1}, G_{i,2}$ which are the evaluations of $\wh{G}_{i,1},  \wh{G}_{i,2}$ from~\eqref{eq: honest prover rics batch}, and $\Fill_i$ which is the evaluation of $\wh{G}_{i,3}$ from~\eqref{eq: honest prover rics batch}. For each $j \in [8]$, $g'_j$ defined in step 2 of \cref{iop: rics batch} is the evaluation of $\wh{g'}_j = \frac{\wh{g}_j - \wh{h}_j}{\wh{V}_{A'_j}}$ where $\wh{g'}_j \in \Ff^{\leq e_i - |A'_i|}_{q'}$. With probability $1$, $g^\star$ is a function over $\Ff_q$ with degree at most $6d$ and $G^\star$ is a function over $\Ff_q^3$ with degree at most $6d$. The remainder of the completeness follows from the completeness of $\rmi$ and $\rsi$.
\end{proof}
\subsubsection{Round-by-Round Soundness for \cref{lm: iop rics batch}}
There is only one round of interaction, after which the state is as follows.
\begin{state} \label{state: rics batch}
The state is doomed if and only if at least one of the following holds:
\begin{itemize}
    \item $\agr(G^\star, \RM_{q'}[6d, \Ff_q^3]) \leq \et$
    \item $\agr(g^\star, \RS_{q'}[6d, \Ff_q]) \leq \et'$
\end{itemize}
\end{state}
Soundness of this round is established by the following claim.
\begin{claim}
    If the initial state is doomed as described in \cref{lm: iop rics batch}, then \cref{state: rics batch} is doomed with probability at least $1-2^{-\lambda}$.
\end{claim}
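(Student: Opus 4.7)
The plan is a two-case analysis based on which kind of initial condition is violated, following the template already used in the low-rate batched IOPPs of \cref{lm: batch rs,lm: batch irm}: quotient out the side conditions to reduce to a plain code-membership statement, and then apply the relevant proximity generator to collapse many such statements into a single one.

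\textbf{Univariate case.} Suppose some $g_j$ satisfies $\agr(g_j, \RS_{q'}[e_j, \Fqe \; | \; h_j]) \leq \eps'_0$. For any prover-supplied $\Fill'_j$, \cref{lm: uni quo soundness} gives $\agr(g'_j, \RS_{q'}[e_j - |A'_j|, \Fqe]) \leq \eps'_0 + |A'_j|/\qenc \leq 1.01\,\eps'_0$, since $|A'_j|$ is a small constant. We then apply \cref{lm: combine multiple univariate} to $g^\star = \combine_{6d}(g'_1, \dots, g'_8)$: except with probability $8\cdot\poly(d)/q'$ over the combine randomness, $\agr(g^\star, \RS_{q'}[6d, \Fqe]) \leq 1.03\,\eps'_0 \leq \et'$ (the last inequality uses that $6d/\qenc < 1$ and $\tl/2 < 1/2$, so $(6d/\qenc)^{\tl/2}$ dominates $(6d/\qenc)^{1/2}$ up to the stated constants).

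\textbf{Trivariate case.} Suppose some $G_i$ satisfies $\agr(G_i, \RM_{q'}[d_i, \Ff_q^3 \; | \; H_i]) \leq \eps_0$. By \cref{lm: total deg trivariate side condition decomp+combine}, for any $G_{i,1}, G_{i,2}, \Fill_i$ the prover chooses, with probability at least $1 - \poly(d)/q'$ over the coefficients in $\combine_{d_i}$ we have $\agr(G^\star_i, \RM_{q'}[d_i, \Ff_q^3]) \leq 1.04\,\eps_0$. Conditioning on this event, we apply the uncorrelated Reed-Muller proximity generator \cref{thm: prox gen RM without corr} (equivalently, the degree-lifting variant \cref{lm: combine multiple multivariate total degree}) to $G^\star = \sum_{i'} \xi_{i'} G^\star_{i'}$ at common target degree $6d$: its contrapositive yields $\agr(G^\star, \RM_{q'}[6d, \Ff_q^3]) \leq 1.05\,\eps_0 \leq \et$ except with probability $9\cdot\err$ over the $\xi_{i'}$'s.

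\textbf{Wrap-up.} A union bound over the $\leq 17$ input polynomials and over the two classes of failure events yields a total error of order $\poly(d)/q'$. The hypothesis $q' \geq 2^{\lambda+10}\cdot \max(q,\qenc)^4 / n^{2/3}$, together with $d = \Theta(n^{1/3})$, makes $\err = q^4/(d^2 q') \leq 2^{-\lambda-10}$, so the total failure probability is comfortably below $2^{-\lambda}$, as required.

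\textbf{Expected obstacle.} The one genuine subtlety is degree bookkeeping in the trivariate case: the $G^\star_i$'s live in codes of varying target degrees $d_i \leq 6d$, whereas the final test lives at target degree $6d$. The cleanest resolution is to read the linear combination $\sum \xi_{i'} G^\star_{i'}$ as an application of \cref{lm: combine multiple multivariate total degree}, with an implicit degree-lifting factor $\xi_0 \cdot x_1^{6d-d_i}$, so that farness of some $G^\star_i$ from $\RM_{q'}[d_i, \Ff_q^3]$ cleanly upgrades (via the Schwartz-Zippel step internal to \cref{lm: combine one multivariate}) to farness of $G^\star$ from $\RM_{q'}[6d, \Ff_q^3]$. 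Once this is in place, everything else is a routine composition of the quotient and proximity-generator estimates already stated.
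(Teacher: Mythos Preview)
Your proposal is correct and follows essentially the same approach as the paper: a two-case split into univariate and trivariate inputs, followed by quotienting (\cref{lm: uni quo soundness} and \cref{lm: total deg trivariate side condition decomp+combine}) and then proximity-generator combining (\cref{lm: combine multiple univariate} and \cref{lm: combine multiple multivariate total degree}), with a final union bound. The degree-bookkeeping obstacle you flag is exactly the right subtlety, and your resolution---reading the final random combination of the $G^\star_{i'}$'s as an instance of \cref{lm: combine multiple multivariate total degree} with implicit degree-lift to the common target $6d$---is precisely what the paper invokes as well.
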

\begin{proof}
    As the initial state is doomed, by \cref{lm: total deg trivariate side condition decomp+combine}, we know that one of the following holds:
    \begin{itemize}
    \item For at least one $i \in [9]$, 
         \[
    \agr(G_i, \RM_{q'}[d_i, \Ff_q^3 \; | \; H_i]) \leq \eps_0,
    \]
    \item For at least one $j \in [8]$,
    \[
    \agr(g_i, \RS_{q'}[e_i, \Ff_q \; | \; h_i]) \leq \eps'_0.
    \]
    \end{itemize}
    Suppose it is the first item, and let $i$ be the index for which it holds. Then for any $G_{i,1}, G_{i,2}, \Fill_i$ which the prover sends, by~\cref{lm: combine multiple multivariate total degree} we have
    \[
    \agr(G^\star_i, \RM_{q'}[d_i, \Ff_q^3]) \leq 1.04\eps_0
    \]
    with probability at least $1 - 2^{-\lambda - 10}$. Assuming that this is the case, by \cref{lm: combine multiple multivariate total degree}, we have
    \[
    \agr(G^\star, \RM_{q'}[6d, \Ff_q^3]) \leq 1.1\eps_0 \leq \et.
    \]
    with probability at least $1 - 2^{-\lambda - 10}$. Union bounding over these two events, we get the desired soundness in the case that the first item holds.

    Now suppose that it is the second item which holds and let $j$ be the index for which it holds. By~\cref{lm: uni quo soundness} we get
    \[
    \agr(g'_j, \RS_{q'}[e_i - |A'_i|, \Ff_q]) \leq \eps'_0 + \frac{|A'_i|}{q} \leq 1.04\eps'_0.
    \]
    It follows from \cref{lm: combine multiple univariate}, that with probability at least $1-2^{-\lambda-10}$ we have 
    \[
    \agr(g^\star, \RS_{q'}[6d, \Ff_q]) \leq 1.1\eps'_0 \leq \et'.
    \]
    This establishes the desired soundness in the case that the second item holds.
\end{proof}
Finally, to complete the proof of round-by-round soundness, note that if \cref{state: rics batch} is doomed, then the initial state of either $\rmi(G^\star, 6d, \lambda, \et)$ or $\rsi(g^\star, 6d, \lambda, \et')$ is doomed in step 4. The remainder of the round-by-round soundness follows follows from the round-by-round soundness of $\rmi(G^\star, 6d, \lambda, \et)$ and $\rsi(g^\star, 6d, \lambda, \et')$.

\subsection{Proof of \cref{thm: main rbr}}

The IOP satisfying \cref{thm: main rbr} is $\compile(\ricsp, \texttt{R1CSBatch})$. The agreement parameters used in compiling are $\eps_0$ for the trivariate polynomials and $\eps'_0$ for the univariate polynomials. The agreement parameters used in $\texttt{R1CSBatch}$ are $\et$ and $\et'$ as described in \cref{sec: compile rics}.

Using \cref{lm: iop rics batch}, one can check that $\texttt{R1CSBatch}$ satisfies the requirements of \cref{thm: poly iop transformation} and that the parameters of $\compile(\ricsp, \texttt{R1CSBatch})$ are indeed those described by \cref{thm: main rbr}.

\subsection{Proof of \cref{thm: main rbr set}}

We must apply \cref{thm: main rbr} with the appropriate setting of $q, \qenc,$ and $q'$. Specifically, we need $q = \Theta(n^{1/3})$ and $\qenc = \Theta(n^{2/5})$. To achieve this setting, take an integer $s$ such that $2^{3s} - 1$ is divisible by $100$, set $q = 2^{3s}$, and take $H \subseteq \Ff_{q}$ to be the multiplicative of $\Ff_q$ of size $\frac{q - 1}{100}$. Note that $H$ is a multiplicative subgroup of $\Ff_{q'}$ as well. Then, set $\qenc = 2^{5 \cdot s}$, and $q' = 2^{\lambda + c \cdot s}$ for some appropriate $s$ and large enough, but constant, $c$ which makes $\lambda + c \cdot s$ divisible by $15s$. It is clear that this setup can be conceived with arbitrarily large $s$ and thus can be used for arbitrarily large instances of $\rics$.

\subsection{Proof of \cref{thm: main standard}}

We again apply \cref{thm: main rbr} with the appropriate parameters, but we do not have to be as careful here as we are not concerned with linear length. Take $\qenc = q = 2^{2^s}$, $q' = 2^{2^{s+100}}$, and take $H$ to be the multiplicative subgroup of $\Ff_q$ of size $2^{2^{s-1}} - 1 \leq \sqrt{q}$. Once again it is clear that this setup can be conceived with arbitrarily large $s$ and thus can be used for arbitrarily large instances of $\rics$.

\section*{Acknowledgements}
We sincerely thank Eylon Yogev for telling us about IOPs and for pointing us to~\cite{acy}, for helpful comments on an earlier version of this paper and for providing us various references.

\bibliographystyle{alpha}
\bibliography{references}

\newcommand{\etalchar}[1]{$^{#1}$}
\begin{thebibliography}{ACFY24b}

\bibitem[ACFY24a]{acfy}
Gal Arnon, Alessandro Chiesa, Giacomo Fenzi, and Eylon Yogev.
\newblock {STIR:} reed-solomon proximity testing with fewer queries.
\newblock In {\em Advances in Cryptology - {CRYPTO} 2024 - 44th Annual International Cryptology Conference, Santa Barbara, CA, USA, August 18-22, 2024, Proceedings, Part {X}}, volume 14929 of {\em Lecture Notes in Computer Science}, pages 380--413. Springer, 2024.

\bibitem[ACFY24b]{WHIR}
Gal Arnon, Alessandro Chiesa, Giacomo Fenzi, and Eylon Yogev.
\newblock {WHIR:} reed-solomon proximity testing with super-fast verification.
\newblock {\em {IACR} Cryptol. ePrint Arch.}, page 1586, 2024.

\bibitem[ACY23]{acy}
Gal Arnon, Alessandro Chiesa, and Eylon Yogev.
\newblock {IOP}s with inverse polynomial soundness error.
\newblock In {\em 64th {IEEE} Annual Symposium on Foundations of Computer Science, {FOCS} 2023, Santa Cruz, CA, USA, November 6-9, 2023}, pages 752--761. {IEEE}, 2023.

\bibitem[ALM{\etalchar{+}}92]{ALMSS}
Sanjeev Arora, Carsten Lund, Rajeev Motwani, Madhu Sudan, and Mario Szegedy.
\newblock Proof verification and hardness of approximation problems.
\newblock In {\em 33rd Annual Symposium on Foundations of Computer Science, Pittsburgh, Pennsylvania, USA, 24-27 October 1992}, pages 14--23, 1992.

\bibitem[AS92]{AroraSafra}
Sanjeev Arora and Shmuel Safra.
\newblock Probabilistic checking of proofs; {A} new characterization of {NP}.
\newblock In {\em 33rd Annual Symposium on Foundations of Computer Science, Pittsburgh, Pennsylvania, USA, 24-27 October 1992}, pages 2--13, 1992.

\bibitem[AS97]{as}
Sanjeev Arora and Madhu Sudan.
\newblock Improved low-degree testing and its applications.
\newblock In {\em Proceedings of the Twenty-Ninth Annual {ACM} Symposium on the Theory of Computing, El Paso, Texas, USA, May 4-6, 1997}, pages 485--495. {ACM}, 1997.

\bibitem[BBC{\etalchar{+}}17]{BBCGGH17}
Eli Ben{-}Sasson, Iddo Bentov, Alessandro Chiesa, Ariel Gabizon, Daniel Genkin, Matan Hamilis, Evgenya Pergament, Michael Riabzev, Mark Silberstein, Eran Tromer, and Madars Virza.
\newblock Computational integrity with a public random string from quasi-linear pcps.
\newblock In {\em Advances in Cryptology - {EUROCRYPT} 2017 - 36th Annual International Conference on the Theory and Applications of Cryptographic Techniques, Paris, France, April 30 - May 4, 2017, Proceedings, Part {III}}, volume 10212 of {\em Lecture Notes in Computer Science}, pages 551--579, 2017.

\bibitem[BBHR18]{BBHR18}
Eli Ben{-}Sasson, Iddo Bentov, Yinon Horesh, and Michael Riabzev.
\newblock Fast reed-solomon interactive oracle proofs of proximity.
\newblock In {\em 45th International Colloquium on Automata, Languages, and Programming, {ICALP} 2018, July 9-13, 2018, Prague, Czech Republic}, volume 107 of {\em LIPIcs}, pages 14:1--14:17. Schloss Dagstuhl - Leibniz-Zentrum f{\"{u}}r Informatik, 2018.

\bibitem[BCG{\etalchar{+}}17]{BCGRS17}
Eli Ben{-}Sasson, Alessandro Chiesa, Ariel Gabizon, Michael Riabzev, and Nicholas Spooner.
\newblock Interactive oracle proofs with constant rate and query complexity.
\newblock In {\em 44th International Colloquium on Automata, Languages, and Programming, {ICALP} 2017, July 10-14, 2017, Warsaw, Poland}, volume~80 of {\em LIPIcs}, pages 40:1--40:15. Schloss Dagstuhl - Leibniz-Zentrum f{\"{u}}r Informatik, 2017.

\bibitem[BCGV16]{BCGV16}
Eli Ben{-}Sasson, Alessandro Chiesa, Ariel Gabizon, and Madars Virza.
\newblock Quasi-linear size zero knowledge from linear-algebraic {PCP}s.
\newblock In {\em Theory of Cryptography - 13th International Conference, {TCC} 2016-A, Tel Aviv, Israel, January 10-13, 2016, Proceedings, Part {II}}, volume 9563 of {\em Lecture Notes in Computer Science}, pages 33--64. Springer, 2016.

\bibitem[BCI{\etalchar{+}}23]{bciks}
Eli Ben{-}Sasson, Dan Carmon, Yuval Ishai, Swastik Kopparty, and Shubhangi Saraf.
\newblock Proximity gaps for reed-solomon codes.
\newblock {\em J. {ACM}}, 70(5):31:1--31:57, 2023.

\bibitem[BCS16]{BCS}
Eli Ben{-}Sasson, Alessandro Chiesa, and Nicholas Spooner.
\newblock Interactive oracle proofs.
\newblock In {\em Theory of Cryptography - 14th International Conference, {TCC} 2016-B, Beijing, China, October 31 - November 3, 2016, Proceedings, Part {II}}, volume 9986 of {\em Lecture Notes in Computer Science}, pages 31--60, 2016.

\bibitem[BDN17]{BDN}
Amey Bhangale, Irit Dinur, and Inbal~Livni Navon.
\newblock Cube vs. cube low degree test.
\newblock In {\em 8th Innovations in Theoretical Computer Science Conference, {ITCS} 2017, January 9-11, 2017, Berkeley, CA, {USA}}, volume~67 of {\em LIPIcs}, pages 40:1--40:31. Schloss Dagstuhl - Leibniz-Zentrum f{\"{u}}r Informatik, 2017.

\bibitem[BFL91]{BFL91}
L{\'{a}}szl{\'{o}} Babai, Lance Fortnow, and Carsten Lund.
\newblock Non-deterministic exponential time has two-prover interactive protocols.
\newblock {\em Comput. Complex.}, 1:3--40, 1991.

\bibitem[BGKS20]{BGKS}
Eli Ben{-}Sasson, Lior Goldberg, Swastik Kopparty, and Shubhangi Saraf.
\newblock {DEEP-FRI:} sampling outside the box improves soundness.
\newblock In Thomas Vidick, editor, {\em 11th Innovations in Theoretical Computer Science Conference, {ITCS} 2020, January 12-14, 2020, Seattle, Washington, {USA}}, volume 151 of {\em LIPIcs}, pages 5:1--5:32. Schloss Dagstuhl - Leibniz-Zentrum f{\"{u}}r Informatik, 2020.

\bibitem[BGKW88]{BGKW88}
Michael Ben{-}Or, Shafi Goldwasser, Joe Kilian, and Avi Wigderson.
\newblock Multi-prover interactive proofs: How to remove intractability assumptions.
\newblock In {\em Proceedings of the 20th Annual {ACM} Symposium on Theory of Computing, May 2-4, 1988, Chicago, Illinois, {USA}}, pages 113--131. {ACM}, 1988.

\bibitem[BHBH12]{BCN}
Andries~E Brouwer, Willem~H Haemers, Andries~E Brouwer, and Willem~H Haemers.
\newblock {\em Distance-regular graphs}.
\newblock Springer, 2012.

\bibitem[BKS18]{BKS}
Eli Ben{-}Sasson, Swastik Kopparty, and Shubhangi Saraf.
\newblock Worst-case to average case reductions for the distance to a code.
\newblock In Rocco~A. Servedio, editor, {\em 33rd Computational Complexity Conference, {CCC} 2018, June 22-24, 2018, San Diego, CA, {USA}}, volume 102 of {\em LIPIcs}, pages 24:1--24:23. Schloss Dagstuhl - Leibniz-Zentrum f{\"{u}}r Informatik, 2018.

\bibitem[Bli86]{blinovsky1986bounds}
Vladimir~M Blinovsky.
\newblock Bounds for codes in the case of list decoding of finite volume.
\newblock {\em Problems of Information Transmission}, 22(1):7--19, 1986.

\bibitem[BLNR22]{BLNR22}
Sarah Bordage, Mathieu Lhotel, Jade Nardi, and Hugues Randriam.
\newblock Interactive oracle proofs of proximity to algebraic geometry codes.
\newblock In {\em 37th Computational Complexity Conference, {CCC} 2022, July 20-23, 2022, Philadelphia, PA, {USA}}, volume 234 of {\em LIPIcs}, pages 30:1--30:45. Schloss Dagstuhl - Leibniz-Zentrum f{\"{u}}r Informatik, 2022.

\bibitem[BMV24]{BMV}
Mitali Bafna, Dor Minzer, and Nikhil Vyas.
\newblock Quasi-linear size {PCP}s with small soundness from {HDX}.
\newblock {\em CoRR}, abs/2407.12762, 2024.

\bibitem[BS08]{BSS}
Eli Ben{-}Sasson and Madhu Sudan.
\newblock Short {PCP}s with polylog query complexity.
\newblock {\em {SIAM} J. Comput.}, 38(2):551--607, 2008.

\bibitem[B{\"u}r00]{bur}
Peter B{\"u}rgisser.
\newblock {\em Completeness and Reduction in Algebraic Complexity Theory}, volume 7 of Algorithms and computation in mathematics.
\newblock Springer, 2000.

\bibitem[CY24]{ChiesaYogev2024}
Alessandro Chiesa and Eylon Yogev.
\newblock {\em Building Cryptographic Proofs from Hash Functions}.
\newblock 2024.

\bibitem[Din07]{Dinur07}
Irit Dinur.
\newblock The {PCP} theorem by gap amplification.
\newblock {\em J. {ACM}}, 54(3):12, 2007.

\bibitem[FGL{\etalchar{+}}91]{FGLSS}
Uriel Feige, Shafi Goldwasser, L{\'{a}}szl{\'{o}} Lov{\'{a}}sz, Shmuel Safra, and Mario Szegedy.
\newblock Approximating clique is almost {NP}-complete (preliminary version).
\newblock In {\em 32nd Annual Symposium on Foundations of Computer Science, San Juan, Puerto Rico, 1-4 October 1991}, pages 2--12, 1991.

\bibitem[GM15]{godsil2015erdos}
Christopher Godsil and Karen Meagher.
\newblock {\em Erdos--Ko--Rado theorems: algebraic approaches}, volume 149.
\newblock Cambridge University Press, 2015.

\bibitem[GMR85]{GMR85}
Shafi Goldwasser, Silvio Micali, and Charles Rackoff.
\newblock The knowledge complexity of interactive proof-systems (extended abstract).
\newblock In {\em Proceedings of the 17th Annual {ACM} Symposium on Theory of Computing, May 6-8, 1985, Providence, Rhode Island, {USA}}, pages 291--304. {ACM}, 1985.

\bibitem[HKSS24]{hkss}
Prahladh Harsha, Mrinal Kumar, Ramprasad Saptharishi, and Madhu Sudan.
\newblock An improved line-point low-degree test.
\newblock In {\em 65th {IEEE} Annual Symposium on Foundations of Computer Science, {FOCS} 2024, Chicago, IL, USA, October 27-30, 2024}, pages 1883--1892. {IEEE}, 2024.

\bibitem[LFKN92]{LFKN92}
Carsten Lund, Lance Fortnow, Howard~J. Karloff, and Noam Nisan.
\newblock Algebraic methods for interactive proof systems.
\newblock {\em J. {ACM}}, 39(4):859--868, 1992.

\bibitem[MR08]{MRpvp}
Dana Moshkovitz and Ran Raz.
\newblock Sub-constant error low degree test of almost-linear size.
\newblock {\em {SIAM} J. Comput.}, 38(1):140--180, 2008.

\bibitem[MR10]{MR10}
Dana Moshkovitz and Ran Raz.
\newblock Two-query {PCP} with subconstant error.
\newblock {\em J. {ACM}}, 57(5):29:1--29:29, 2010.

\bibitem[MZ23]{MZ23}
Dor Minzer and Kai Zheng.
\newblock Approaching the soundness barrier: {A} near optimal analysis of the cube versus cube test.
\newblock In {\em Proceedings of the 2023 {ACM-SIAM} Symposium on Discrete Algorithms, {SODA} 2023, Florence, Italy, January 22-25, 2023}, pages 2761--2776. {SIAM}, 2023.

\bibitem[PS94]{polishchuk1994nearly}
Alexander Polishchuk and Daniel~A Spielman.
\newblock Nearly-linear size holographic proofs.
\newblock In {\em Proceedings of the twenty-sixth annual ACM symposium on Theory of computing}, pages 194--203, 1994.

\bibitem[RR22]{RZR22}
Noga Ron{-}Zewi and Ron~D. Rothblum.
\newblock Proving as fast as computing: succinct arguments with constant prover overhead.
\newblock In {\em {STOC} '22: 54th Annual {ACM} {SIGACT} Symposium on Theory of Computing, Rome, Italy, June 20 - 24, 2022}, pages 1353--1363. {ACM}, 2022.

\bibitem[RR24]{RZR24}
Noga Ron{-}Zewi and Ron Rothblum.
\newblock Local proofs approaching the witness length.
\newblock {\em J. {ACM}}, 71(3):18, 2024.

\bibitem[RRR16]{RRR}
Omer Reingold, Guy~N. Rothblum, and Ron~D. Rothblum.
\newblock Constant-round interactive proofs for delegating computation.
\newblock In {\em Proceedings of the 48th Annual {ACM} {SIGACT} Symposium on Theory of Computing, {STOC} 2016, Cambridge, MA, USA, June 18-21, 2016}, pages 49--62. {ACM}, 2016.

\bibitem[RS96]{rubinfeld1996robust}
Ronitt Rubinfeld and Madhu Sudan.
\newblock Robust characterizations of polynomials with applications to program testing.
\newblock {\em SIAM Journal on Computing}, 25(2):252--271, 1996.

\bibitem[RS97]{RazSafra}
Ran Raz and Shmuel Safra.
\newblock A sub-constant error-probability low-degree test, and a sub-constant error-probability {PCP} characterization of {NP}.
\newblock In {\em Proceedings of the Twenty-Ninth Annual {ACM} Symposium on the Theory of Computing, El Paso, Texas, USA, May 4-6, 1997}, pages 475--484. {ACM}, 1997.

\bibitem[Sha92]{Shamir92}
Adi Shamir.
\newblock {IP} = {PSPACE}.
\newblock {\em J. {ACM}}, 39(4):869--877, 1992.

\end{thebibliography}
\appendix
\section{The Line versus Point Test} \label{sec: line vs point}

Here we give our analysis of the line versus point test. In our setting, both the points and the lines table have inputs from $\Ff_q$, while the outputs are in a larger field $\Ff_{q'}$ which contains $\Ff_q$ as a subfield. Recall that \cite{hkss} analyze the $q = q'$ case. Their proof goes through in the more general setting that we require as well, but we go over the details here for the sake of completeness. 


Throughout this section, fix two finite fields $\Ff_q$ and $\Ff_{q'}$ such that $\Ff_q$ is a subfield of $\Ff_{q'}$, and let $d < q$ be a degree parameter. We let $\mc{L}$ denote the set of affine lines in the ambient space in either $\Ff_q^2$. We only discuss the $m = 2$ and $m = 3$ cases. Let $\mc{L}_a$ be the set of affine lines going through the point $a$. We will prove the following two theorems.

\begin{thm} \label{thm: line versus point final redo}
Suppose $f: \Ff_q^2 \to \Ff_{q'}$ and degree $d$ lines table $T$ satisfy
\[
\Pr_{L \subseteq \Ff_q^2}[f|_L = T[L]] = \eps,
\]
for some $\eps$ such that $\eps^6 \geq \sqrt{d/q}$. Then, there is a degree $d$ oracle function $Q: \Ff_q^2 \to \Ff_{q'}$ such that 
\[
\Pr_{a \in \Ff_q^2}[Q(a) = f(a)] \geq \eps - 1.01\left(\frac{d}{q}\right)^{\tl}.
\]
\end{thm}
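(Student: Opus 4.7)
\textbf{Proof Plan for \cref{thm: line versus point final redo}.}

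My plan is to follow the HKSS-style approach, carefully adapted to the situation where the range $\Ff_{q'}$ is strictly larger than the base field $\Ff_q$. The first step is to define a candidate function $Q: \Ff_q^2 \to \Ff_{q'}$ via a plurality vote: for each point $a \in \Ff_q^2$, set $Q(a)$ to be the $\Ff_{q'}$-value that occurs most often among $\{T[L](a) : L \ni a\}$, where $L$ ranges over the $q+1$ affine lines through $a$. A simple averaging argument using the fact that a uniformly random line $L \ni a$ is (after conditioning) a uniformly random line in $\mc L$ will show that on at least an $\eps - O\bigl((d/q)^{\tl}\bigr)$ fraction of points $a$, we already have $Q(a) = f(a)$; this will handle the agreement bound once we know $Q$ is low-degree.

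The crux is showing that $Q$ agrees on $\Ff_q^2$ with some polynomial in $\Ff_{q'}[x,y]$ of total degree at most $d$. The key quantity to bound is the \emph{pairwise consistency} of the lines table: for two random intersecting lines $L_1, L_2$ meeting at a point $a$, one expects $T[L_1](a) = T[L_2](a)$ with probability close to $\eps^2$, since both tables tend to agree with $f$ at $a$. Making this precise requires a second-moment argument on the set $S \subseteq \Ff_q^2$ of "good" points where many lines through $a$ vote for the same value; here I will use the spectral bound $\lambda \leq 1/\sqrt{q}$ of \cref{lem:spectral_val_bd} for the line-point inclusion graph (together with \cref{lm: spectral sampling}) to show that the plurality vote at a point is robust with probability at least $\eps^2 - O((d/q)^{\tl})$, so that $Q$ is locally consistent with most line restrictions.

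From local consistency I will bootstrap to global consistency as follows. For a random line $L$, show that $T[L]$ and $Q|_L$ agree on an overwhelming fraction of the $q$ points of $L$; since $T[L]$ is a univariate polynomial of degree $\leq d$ and $q > d$, Schwartz-Zippel (\cref{lm: schwartz-zippel for rs}) then forces $Q|_L = T[L]$ for such lines. Then I will argue that on a random pair of lines $L_1, L_2$ whose line-tables are degree-$d$ polynomials, the agreement pattern of $T[L_1]$ and $T[L_2]$ at their intersection together with the plane-wide consistency forces $Q$ to coincide on $\Ff_q^2$ with the unique degree-$d$ bivariate extension of (say) a fixed degree-$d$ restriction $T[L_1]$ combined with a generic transversal. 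Together this yields that $Q \in \RM_{q'}[d, \Ff_q^2]$.

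The hardest step, and the one I expect to absorb the bulk of the $(d/q)^{\tl}$ error budget, will be the second paragraph: bootstrapping local plurality-robustness into a clean degree-$d$ function on essentially all of $\Ff_q^2$. The HKSS argument achieves this via a delicate combinatorial/spectral iteration in which one must control error terms through several sampling steps, and the exponent $\tl < 1/20$ emerges from balancing these steps. I will need to verify that each step carries over unchanged to the subfield-range setting, which is immediate because the spectral arguments are about indicator functions of subsets of $\Ff_q^2$ (living in $\R$-valued $L_2$), and all consistency checks are equalities of $\Ff_{q'}$-values; no step uses multiplication of range elements in a way sensitive to whether values live in $\Ff_q$ or $\Ff_{q'}$. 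The assumed bound $\eps^6 \geq \sqrt{d/q}$ is exactly what is needed so that errors accumulated across the (at most $O(1)$) bootstrapping steps remain smaller than $(d/q)^{\tl}$.
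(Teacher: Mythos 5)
Your proposal does not match the paper's argument, and more importantly the plurality-vote approach it rests on cannot work in this parameter regime.

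\textbf{Mismatch with HKSS.} You state you are "following the HKSS-style approach", but what you describe is a plurality-vote definition of $Q$ followed by local-to-global bootstrapping. That is not what \cite{hkss} (or the paper) does. The HKSS argument is a polynomial-method argument: it interpolates a nonzero trivariate $A(x,y,z)\in\Ff_{q'}[x,y,z]$ of low $(1,1,d)$-weighted degree vanishing on $\{(a,f(a))\}$ for many $a$ (this is \cref{thm: line vs point interpolation} in the paper), shows $\Disc_z(A)$ is nonzero using \cref{lm: nonzero disc aux}, and then applies a Hensel-lifting / formal-root-extraction lemma (\cref{lm: line vs point newton}, via \cite{bur}) to extract a degree-$d$ bivariate root $P$ of $A$ that agrees with $T[L]$ for many lines $L$. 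This produces a codeword with agreement $\Omega(\eps^6)$ (\cref{thm: line versus point redo}). Getting from $\Omega(\eps^6)$ to the $\eps - 1.01(d/q)^{\tl}$ stated in the theorem is then a separate list-decoding and spectral-sampling argument (\cref{thm: line vs point list decode}, \cref{thm: line vs point randomized agree}, \cref{sec:list_decode}), not a "spectral iteration for bootstrapping plurality robustness."

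\textbf{Why plurality vote fails here.} The theorem is in the low-agreement regime: $\eps$ is allowed to be as small as $(d/q)^{\tl/2}$, which is far below $1/2$. In that regime the pointwise plurality value need not equal $f(a)$ on an $\eps$-fraction of points, and need not be a low-degree function at all. Concretely: take $f\equiv 0$, let $T[L] = 0$ on an $\eps$-fraction of lines, and let $T[L]\equiv c$ for some fixed $c\neq 0$ on the remaining $(1-\eps)$-fraction. Then for every $a$ the plurality of $\{T[L](a): L\ni a\}$ is $c$, so your $Q\equiv c$, which has agreement $0$ with $f$. The theorem is of course satisfied (by $Q\equiv 0$), but the plurality function is not a valid witness, so the first sentence of your plan, "a simple averaging argument... will show that on at least an $\eps - O((d/q)^{\tl})$ fraction of points $a$, we already have $Q(a)=f(a)$", is false. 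This is the standard reason plurality decoding is confined to the high-acceptance regime ($\eps > 1/2$ or larger) and why low-soundness line-vs-point tests require either the interpolation/Hensel method or a Raz--Safra clique-style argument.

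\textbf{What is actually needed.} The subtle issue the proof has to solve in the low-soundness regime is that there may be a list of $\Theta(1/\eps)$ codewords each with agreement $\Theta(\eps)$, and one must single out the one that accounts for the $\eps$-fraction of fully-consistent lines. The paper does this by first getting a crude polynomial with $\Omega(\eps^6)$ point agreement via interpolation, list-decoding to obtain all such polynomials, arguing one of them matches $T[L]$ for an $\Omega(\eps)$ fraction of lines (using a carefully constructed modified lines table that randomizes the bad lines, plus \cref{lm: spectral sampling}), and only then converting lines-agreement to points-agreement. None of these steps appears in your proposal. You are correct that the adaptation from $\Ff_q = \Ff_{q'}$ to a subfield range is mostly mechanical (the spectral and Schwartz--Zippel steps do not care), but that observation does not rescue a plan whose core definition of $Q$ is not a valid certificate in this regime.
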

\begin{proof}
Deferred to~\cref{sec:list_decode}, after establishing some preliminary results.
\end{proof}

Starting with \cref{thm: line versus point final redo}, we can use standard boosting techniques to obtain a similar result for all dimensions $m > 2$. This is done in \cite{hkss} and comes at the price of some loss in the exponent $\tau$ in \cref{thm: line versus point redo}. For our purposes, we only need the $m = 3$ case, so we give a self contained boosting argument, which obtains a better exponent $\tau$ for the $m = 3$ case.

\begin{thm} \label{thm: line versus point m = 3}
Suppose $f: \Ff_q^3 \to \Ff_{q'}$ and degree $d$ lines table $T$ satisfy
\[
\Pr_{L \subseteq \Ff_q^2}[f|_L = T[L]] = \eps,
\]
for some $\eps$ such that $\eps^6 \geq \sqrt{d/q}$. Then, there is a degree $d$ function $Q: \Ff_q^3 \to \Ff_{q'}$ such that 
\[
\Pr_{a \in \Ff_q^3}[Q(a) = f(a)] \geq 0.49\eps - 1.01\left(\frac{d}{q}\right)^{\tl}.
\]
\end{thm}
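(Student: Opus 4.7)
The plan is to reduce to the two-dimensional statement~\cref{thm: line versus point final redo} by restricting $f$ to random affine planes of $\Ff_q^3$ and gluing the plane-level polynomials. Write $\alpha := (d/q)^{\tl}$. A uniformly random affine line of $\Ff_q^3$ has the same distribution as a uniformly random affine line within a uniformly random affine plane, so
\[
\E_{P \subseteq \Ff_q^3}\bigl[\Pr_{L \subseteq P}[f|_L = T[L]]\bigr] \;=\; \eps.
\]
A reverse-Markov argument then yields a density $\Omega(\eps)$ of \emph{good} planes $P$ with plane-agreement at least $\eps/2$. For each good plane $P$ the hypothesis $\eps^6 \geq \sqrt{d/q}$ is preserved up to constants (absorbable into the $1.01$ prefactor), so~\cref{thm: line versus point final redo} produces a bivariate degree-$d$ polynomial $Q_P : P \to \Ff_{q'}$ with $\agr(f|_P, Q_P) \geq \eps/2 - 1.01\alpha$.

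Next I will define the candidate $Q : \Ff_q^3 \to \Ff_{q'}$ by plurality: $Q(a)$ equals the most frequent value of $Q_P(a)$ over good planes $P$ through $a$. To show that $Q$ coincides with a genuine trivariate degree-$d$ polynomial, I will establish a cross-plane consistency claim: for two good planes $P_1, P_2$ intersecting in a line $\ell$, the univariate degree-$d$ restrictions $Q_{P_1}|_\ell$ and $Q_{P_2}|_\ell$ are equal. The reason is that each $Q_{P_i}$ agrees with $f|_{P_i}$ on an $\Omega(\eps)$ fraction, while $T$ agrees with $f|_{P_i}$ on an $\Omega(\eps)$ fraction of lines of $P_i$, so for most $\ell \subseteq P_i$ the univariate $Q_{P_i}|_\ell$ coincides with $T[\ell]$ on more than $d$ points and hence, by the Schwartz--Zippel lemma, equals $T[\ell]$ as a polynomial. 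Propagating this consistency across all good planes through each point and applying a standard interpolation argument upgrades $Q$ to a genuine trivariate degree-$d$ polynomial on $\Ff_q^3$ (up to an exceptional set of points absorbable into the $1.01\alpha$ slack).

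The main obstacle is obtaining the quantitatively sharp bound $0.49\eps - 1.01\alpha$, which is linear in $\eps$ rather than the naive quadratic $\Omega(\eps^2)$ that a blunt ``(density of good planes) $\times$ (per-plane agreement)'' double count would give. The planned fix is a weighted averaging that does not discard planes of agreement below $\eps/2$: rather than partitioning into good/bad, one uses the identity $\E_P[\text{plane-agreement}] = \eps$ together with the per-plane guarantee of~\cref{thm: line versus point final redo} (with threshold $(1-\delta)\eps$ for a small $\delta$), so that the total density of point-plane pairs with $Q_P(a) = f(a)$ approaches $\eps$. Combined with the cross-plane consistency above, the plurality $Q$ thereby inherits agreement with $f$ close to $\eps/2$, giving the claimed $0.49\eps - 1.01\alpha$. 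The interpolation step at the end is standard once line-wise degree-$d$ behavior is established on almost every line of $\Ff_q^3$.
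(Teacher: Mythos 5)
The opening moves of your proposal match the paper: restricting to random planes, the reverse-Markov step yielding an $\Omega(\eps)$ density of good planes, and applying the planar result (\cref{thm: line versus point final redo}) on each good plane to produce $Q_P$. The difficulty --- and the gap in your proposal --- is the cross-plane consistency claim.

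You assert that for \emph{any} two good planes $P_1, P_2$ meeting in a line $\ell$, one has $Q_{P_1}|_\ell = Q_{P_2}|_\ell$, and justify this by saying that for most $\ell \subseteq P_i$ the univariate $Q_{P_i}|_\ell$ coincides with $T[\ell]$. This step is broken. The set of points where $Q_{P_i}$ agrees with $f$ on $P_i$ has density only about $\eps/2$, and nothing forces this agreement set to hit the particular line $\ell = P_1 \cap P_2$; the same holds for the set of lines in $P_i$ where $Q_{P_i}$ equals the lines table. Concretely, one can have two good planes whose decoded polynomials disagree on their common line. This is not a removable ``exceptional set'' to be swallowed by $1.01(d/q)^{\tl}$ --- it can happen for a constant fraction of pairs of good planes when $\eps$ is small. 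Your plurality definition of $Q$ and the ``standard interpolation argument'' presuppose that consistency is essentially universal among good planes, which is exactly what fails.

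What the paper does instead is to work with the \emph{consistency graph} $G$ on planes, where $(P, P')$ is an edge iff both are good and the decoded polynomials agree on $P \cap P'$. Lemma~\ref{lm: many edges} shows via Cauchy--Schwarz over points (not over lines, and not passing through the lines table $T$) that $G$ has an $\Omega(\eps^2)$ density of edges, and Lemma~\ref{lm: beta transitive} shows $G$ is $\frac{d+1}{q}$-transitive. Raz--Safra's lemma (\cref{lm: delete transitive}) then extracts a \emph{clique} of fractional size $\Omega(\eps^2)$, and only within that clique is universal pairwise consistency available; \cref{lm: interpolate m=3} (a lift of Raz--Safra's interpolation lemma to the extension field $\Ff_{q'}$) then gives a genuine trivariate $Q$. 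This machinery --- quadratically many consistent pairs, transitivity, clique extraction, interpolation --- is exactly the ingredient your proposal omits. Relatedly, your closing worry about $\Omega(\eps^2)$ versus $0.49\eps$ and the proposed ``weighted averaging'' fix misdiagnose where the $\eps^2$ shows up: in the paper, the $\Omega(\eps^2)$ is only the clique density, which enters the final bound solely through the error term $q^{-1}/\sqrt{\Omega(\eps^2)}$ of a spectral sampling step, and that term is tiny; the linear $\eps/2$ comes directly from the per-plane agreement on clique planes, with no quadratic loss to repair.
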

\begin{proof}
 Deferred to~\cref{sec:boost}.
\end{proof}

\subsection{Preliminaries for the Line versus Point Test}
We will need the following two lemmas. To present them, we must first define a couple of notions, the first of which is weighted degrees.
\begin{definition}
For a vector $w\in\mathbb{N}^{m}$ 
and a polynomial $f\colon\mathbb{F}_q^m\to\mathbb{F}_{q'}$ we define
the $w$-weighted degree of
$f$ as $\sum\limits_{i=1}^m w_i d_i$ where $d_i$ is the individual degree of $x_i$ in $f$.
\end{definition}

Next, we define discriminants of univariate polynomials.
\begin{definition}
For two polynomials $A(x) = \sum\limits_{i=0}^{a} A_i x^i$ and $B(x) = \sum\limits_{i=0}^{b} B_i x^i$, the Sylvester matrix ${\sf Sylvester}(A,B)$ of $A$ and $B$ is defined as the $(a+b)$ by $(a+b)$ matrix whose first $b$ rows are the coefficients vector of $A$ $(A_0,\ldots,A_{a},0,\ldots,0)$and its shifts $(0,A_0,\ldots,A_{a},0,\ldots,0)$ up to $(0,\ldots,0,A_0,\ldots,A_{a})$, and whose last $a$ rows are the coefficient vector of $B$, $(B_0,\ldots,B_{b},0,\ldots,0)$ and its shifts $(0,B_0,\ldots,B_{b},0,\ldots,0)$ up to $(0,\ldots,0,B_0,\ldots,B_{b})$. The resultant of $A$ and $B$ is defined as the determinant of ${\sf Sylvester}(A,B)$. In the special case that $B = \partial_{x} A$, we denote
$\Disc_{x}(A) = {\sf Sylvester}(A,B) = {\sf Sylvester}(A,\partial_x A)$.
\end{definition}

We are now ready to state the two results we need from~\cite{hkss}.
\begin{lemma} \label{lm: nonzero disc aux}
Let $S \subseteq \Ff_{q'}^m$ be a set of points and let 
\[
\mc{Q}_{S,x_1} = \{Q \in \Ff_{q'}^m[x_1,\ldots, x_m] \; : \; Q(a) = 0, \; \forall a \in S, \; \partial_{x_1}(Q) \neq 0 \}.
\]
Then for any weight vector $w \in \mathbb{N}^m$, the polynomial $Q \in \mc{Q}_{S,x_1}$ of lowest $w$-weighted degree satisfies $\Disc_{x_1}(Q) \neq 0$ and is therefore square-free. 
\end{lemma}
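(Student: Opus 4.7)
The plan is to argue by contradiction: I would suppose that $Q \in \mc{Q}_{S,x_1}$ achieves the minimum $w$-weighted degree yet $\Disc_{x_1}(Q) = 0$, and then exhibit $Q' \in \mc{Q}_{S,x_1}$ with strictly smaller $w$-weighted degree, contradicting the minimality of $Q$. The square-freeness assertion will follow immediately: any repeated irreducible factor of $Q$ in $K[x_1]$ (with $K = \Ff_{q'}(x_2,\ldots,x_m)$) would force $\Disc_{x_1}(Q) = 0$, so its non-vanishing rules out repeated factors.

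First I would translate the discriminant hypothesis into a factorization statement. By the Sylvester/resultant interpretation, $\Disc_{x_1}(Q) = 0$ is equivalent to $Q$ and $\partial_{x_1}(Q)$ sharing a non-trivial common factor in $K[x_1]$. Applying Gauss's lemma to the UFD $\Ff_{q'}[x_1,\ldots,x_m] = \Ff_{q'}[x_2,\ldots,x_m][x_1]$, I obtain an irreducible $R \in \Ff_{q'}[x_1,\ldots,x_m]$ with $\deg_{x_1}(R) \geq 1$ dividing both $Q$ and $\partial_{x_1}(Q)$ in $\Ff_{q'}[x_1,\ldots,x_m]$. Writing $Q = R \cdot S$, the relation $R \mid \partial_{x_1}(Q) = R_{x_1} S + R\, S_{x_1}$ gives $R \mid R_{x_1} S$; since $R$ is irreducible and $\deg_{x_1}(R_{x_1}) < \deg_{x_1}(R)$, this splits into two cases: (i) $R^2 \mid Q$, or (ii) $R_{x_1} = 0$ (only possible in positive characteristic, where $R$ is then a polynomial in $x_1^p, x_2, \ldots, x_m$).

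In the repeated factor case (i), I would let $k \geq 2$ be maximal with $R^k \mid Q$, write $Q = R^k T$ with $R \nmid T$, and take $Q' := Q/R = R^{k-1} T \in \Ff_{q'}[x_1,\ldots,x_m]$. The $w$-weighted degree drops by $\mathrm{wt}(R) \geq 1$, and for each $a \in S$ the identity $R(a)^k T(a) = 0$ forces $R(a) T(a) = 0$, so $Q'(a) = R(a)^{k-1} T(a) = 0$ and $Q'$ vanishes on $S$. To verify $\partial_{x_1}(Q') \neq 0$, I would compute $\partial_{x_1}(Q') = (k-1) R^{k-2} R_{x_1} T + R^{k-1} T_{x_1}$; if this vanished then $R \mid (k-1) R_{x_1} T$, and combined with $R \nmid T$ (by maximality of $k$) and $R \nmid R_{x_1}$ (separability, which is exactly case~(i)), this would force the characteristic to divide $k-1$. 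In that exceptional subcase (necessarily $k \geq 3$), I would take $Q' := Q/R^2 = R^{k-2} T$ instead; the same computation then demands the characteristic divide $k-2$, which is incompatible with also dividing $k-1$, so a valid $Q' \in \mc{Q}_{S,x_1}$ always exists.

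The hardest part will be the inseparable case (ii), because naive quotienting by $R$ does not preserve vanishing on $S$: for $a \in S$ with $R(a) = 0$, the identity $Q(a) = R(a) S(a) = 0$ gives no information about $S(a)$. My approach would be to exploit the structure that $R$ is a polynomial in $x_1^p, x_2, \ldots, x_m$: after the Frobenius substitution $y = x_1^p$ one recovers a polynomial $\tilde R \in \Ff_{q'}[y, x_2, \ldots, x_m]$ that is separable in $y$, and one reduces to the separable setting at this level in order to produce a modified $Q' \in \mc{Q}_{S,x_1}$ that vanishes on $S$, still satisfies $\partial_{x_1}(Q') \neq 0$, and has strictly smaller $w$-weighted degree. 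This closes the contradiction and yields the lemma.
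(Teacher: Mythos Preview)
The paper does not prove this lemma; it defers to \cite[Lemma~2.9]{hkss}. Your handling of case~(i) --- a separable irreducible factor $R$ appearing to power $k\ge 2$ in $Q$ --- is correct and complete, including the subcase where $p\mid k-1$ forces you to pass to $Q/R^2$ instead of $Q/R$.

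Case~(ii), the inseparable factor, is a genuine gap. You correctly isolate the situation (an irreducible $R\mid Q$ with $\deg_{x_1}R\ge 1$ and $\partial_{x_1}R=0$), but your Frobenius sketch never produces a concrete $Q'\in\mc{Q}_{S,x_1}$ of smaller $w$-weight: writing $R=\tilde R(x_1^p,x_2,\dots,x_m)$ does not yield a polynomial of lower $x_1$-degree with the same zero set in $\Ff_{q'}^m$, because the other variables are not simultaneously raised to $p$-th powers, and over $\Ff_{q'}$ there is no way to undo this mismatch by a coefficient-Frobenius twist alone. In fact the lemma \emph{as stated here} appears to fail in exactly this case. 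Take $p=2$, $\Ff_{q'}=\Ff_8$, $m=2$, $w=(1,1)$, and $S=\{(a,a^2):a\in\Ff_8\}$; a short computation shows every $Q\in\mc{Q}_{S,x_1}$ has $w$-weight at least $4$, and the weight-$4$ elements are precisely $(\alpha x_1+\beta)(x_1^2-x_2)$ with $\alpha\ne 0$, all of which have $\Disc_{x_1}(Q)=0$. So \cite[Lemma~2.9]{hkss} most likely carries an additional hypothesis that the restatement in this paper dropped; you should consult the source directly before attempting to close case~(ii).
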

\begin{proof}
    The proof is given in~\cite[Lemma 2.9]{hkss}.
\end{proof}
\begin{lemma}\label{lem:bur}
    Let $A \in \Ff_{q'}^{m+1}[\vec{x}, z]$ be an $m+1$-variate polynomial and let $\alpha \in \Ff_{q'}$ be a zero of multiplicity one. Then, for every $k \geq 1$, there is a unique $m$-variate polynomial $\Phi_x(\vec{x})$ which satisfies the following:
    \begin{itemize}
        \item $A(\vec{x}, \Phi_k(\vec{x})) \equiv 0 \mod \langle x_1,\ldots, x_m \rangle^{k+1}$,
        \item $\alpha = \Phi_k(0) \equiv \Phi_k(x) \mod \langle x_1,\ldots, x_m \rangle$,
        \item $\Phi_k$ has total degree at most $k$.
    \end{itemize}
\end{lemma}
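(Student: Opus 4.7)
The plan is to proceed by induction on $k$, constructing $\Phi_k$ as a telescoping sum of homogeneous pieces $\Phi_k = \alpha + H_1 + H_2 + \cdots + H_k$ where each $H_i$ is homogeneous of total degree exactly $i$ in $x_1,\ldots,x_m$. This guarantees the degree bound and the normalization $\Phi_k(\vec{0}) = \alpha$ automatically. The hypothesis that $\alpha$ is a zero of multiplicity one should be read as $A(\vec{0},\alpha)=0$ and $\beta := \partial_z A(\vec{0},\alpha) \neq 0$, which is exactly what will let me invert at every step.

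For the inductive step, assume $\Phi_{k-1}$ has been constructed so that $A(\vec{x},\Phi_{k-1}(\vec{x})) \equiv 0 \bmod \langle x_1,\ldots,x_m\rangle^{k}$. Write the degree-$k$ homogeneous part of $A(\vec{x},\Phi_{k-1}(\vec{x}))$ as $P_k(\vec{x})$. Setting $\Phi_k = \Phi_{k-1} + H_k$ and Taylor-expanding in the $z$-slot,
\[
A(\vec{x},\Phi_{k-1}(\vec{x})+H_k(\vec{x})) = A(\vec{x},\Phi_{k-1}(\vec{x})) + H_k(\vec{x})\cdot \partial_z A(\vec{x},\Phi_{k-1}(\vec{x})) + R(\vec{x}),
\]
where the remainder $R(\vec{x})$ is a polynomial combination of $H_k(\vec{x})^j$ for $j\geq 2$, and hence lies in $\langle \vec{x}\rangle^{2k}\subseteq \langle \vec{x}\rangle^{k+1}$ since $k\geq 1$. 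The factor $\partial_z A(\vec{x},\Phi_{k-1}(\vec{x}))$ has constant term $\beta \neq 0$, so modulo $\langle\vec{x}\rangle^{k+1}$ the product $H_k\cdot \partial_z A(\vec{x},\Phi_{k-1})$ reduces to $\beta\cdot H_k(\vec{x})$ (only the constant part of the second factor contributes at total degree exactly $k$). Thus the congruence forces
\[
H_k(\vec{x}) = -\beta^{-1}\, P_k(\vec{x}),
\]
which exists, is a uniquely determined homogeneous polynomial of degree $k$, and completes the inductive construction. The base case $k=1$ is the same computation with $\Phi_0 = \alpha$.

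For uniqueness of $\Phi_k$ as a whole, suppose $\Phi_k$ and $\Phi'_k$ are two solutions and let $D = \Phi_k-\Phi'_k$. Then $D$ has total degree $\le k$, $D(\vec{0})=0$, and $A(\vec{x},\Phi'_k + D)-A(\vec{x},\Phi'_k) \equiv 0 \bmod \langle\vec{x}\rangle^{k+1}$. Taylor-expanding again yields $D\cdot U(\vec{x}) + R'(\vec{x}) \equiv 0 \bmod \langle\vec{x}\rangle^{k+1}$, where $U(\vec{0})=\beta\neq 0$ and $R'\in\langle D\rangle^2\subseteq\langle\vec{x}\rangle^{2j}$ if $j$ is the lowest degree of a nonzero term of $D$. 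If $j\leq k$, examining the degree-$j$ homogeneous component of the congruence gives $\beta\cdot(\text{degree-}j\text{ part of }D) = 0$, contradicting the choice of $j$. Hence $D=0$.

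The argument is essentially Hensel's lemma for formal power series cut off at degree $k$; no step is really an obstacle, but I would be careful to verify the two bookkeeping points that make everything tight: (i) the remainder $R$ from the Taylor expansion truly lies in $\langle\vec{x}\rangle^{2k}$ because $H_k$ is homogeneous of degree $k$ and $\partial_z^j A$ for $j\geq 2$ contributes a factor of $H_k^j$, and (ii) the factor $\partial_z A(\vec{x},\Phi_{k-1})$ may be replaced by its constant term $\beta$ at the level of the degree-$k$ graded piece, since any $\vec{x}$-monomial in it would multiply $H_k$ into $\langle\vec{x}\rangle^{k+1}$. Both are immediate once one is organized by total degree, which is why the inductive skeleton is so clean.
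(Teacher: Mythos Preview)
Your argument is correct: it is the standard Hensel-lifting construction, building $\Phi_k$ degree by degree and using the simple-root hypothesis $\partial_z A(\vec 0,\alpha)\neq 0$ to solve uniquely for each homogeneous piece $H_k$. The two bookkeeping points you flag are exactly the ones that matter, and both are handled correctly. One small remark: since the ambient field has positive characteristic, ``Taylor-expanding'' should be read as the binomial/Hasse expansion $A(\vec x,w+h)=\sum_{j\ge 0} h^{j}A_j(\vec x,w)$ with $A_1=\partial_z A$; you only use that $A_1=\partial_z A$ and that the $j\ge 2$ terms carry a factor of $h^j$, so nothing changes.

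As for comparison with the paper: the paper does not prove this lemma at all---it simply cites an external reference (\cite{bur}). Your write-up therefore supplies strictly more than what the paper contains, and is in fact the argument one would expect to find behind that citation.
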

\begin{proof}
    See \cite{bur}.
\end{proof}

\subsection{The Basic Decoding Result}\label{sec:proof_of_basic_lp}

The first step towards showing \cref{thm: line versus point final redo} is to show the following, weaker version in \cref{thm: line versus point redo}. It is weaker in
the sense that the guaranteed agreement between $f$ and $Q$ 
is smaller:
\begin{thm} \label{thm: line versus point redo}
Suppose $f: \Ff_q^2 \to \Ff_{q'}$ and degree $d$ lines table $T$ satisfy
\[
\Pr_{L \subseteq \Ff_q^2}[f|_L = T[L]] = \eps \geq \Omega\left(\frac{d}{q}\right)^{\tl}.
\]
Then, there is a degree $d$ function $Q: \Ff_q^2 \to \Ff_{q'}$ such that 
\[
\Pr_{a \in \Ff_q^2}[Q(a) = f(a)] \geq \Omega\left(\eps^6\right).
\]
\end{thm}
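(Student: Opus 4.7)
The plan is to adapt the approach of Haramaty--Kol--Shlafman--Sudan (HKSS) to extract a low-degree bivariate polynomial $Q$ from the high-agreement lines table $T$; the fact that $\Ff_{q'} \supseteq \Ff_q$ rather than $\Ff_{q'} = \Ff_q$ causes no new difficulty, since all the algebraic manipulations happen over $\Ff_{q'}$ and only use that $\Ff_q$ is a subfield for parametrizing lines. First I would isolate the \emph{popular} points. Call $a \in \Ff_q^2$ popular if $\Pr_{L \ni a}[T[L](a) = f(a)] \geq \eps/2$. Since $f|_L = T[L]$ implies $T[L](a) = f(a)$ for every $a \in L$, a standard double-counting argument shows that at least an $\eps/2$ fraction of points are popular; denote this set by $S$.

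Next I would construct an interpolating polynomial $P \in \Ff_{q'}[x_1, x_2, z]$ of suitably chosen $(1, 1, w_3)$-weighted degree at most $D$, vanishing on the graph $\{(a, f(a)) : a \in S\}$. Its existence follows from a dimension count comparing the number of monomials of weighted degree $\leq D$ against $|S| \geq \eps q^2 /2$. Applying \cref{lm: nonzero disc aux} with the same weight vector, I then take $P$ to have minimum weighted degree, so that $\Disc_{z}(P) \neq 0$; equivalently, $P$ is square-free as a polynomial in $z$, and so admits at most $\deg_z(P)$ distinct algebraic branches as an element of $\overline{\Ff_{q'}(x_1,x_2)}[z]$.

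Then I would propagate vanishing to lines and extract a polynomial branch. For each good line $L$ (meaning $f|_L = T[L]$), consider $P_L(t) := P(L(t), T[L](t))$. It is a univariate polynomial in $t$ whose degree is controlled by the weighted degree of $P$ together with the line restriction's degree $d$, and it vanishes at every $a \in L \cap S$. A proper choice of $D$ and $w_3$ ensures that for many good lines the number of popular points on $L$ exceeds $\deg P_L$, forcing $P_L \equiv 0$. Since $T[L](a) = f(a)$ is a simple root of $P(a,\cdot)$ at a popular $a$, \cref{lem:bur} then yields a unique low-degree local branch that agrees with $T[L]$; matching these branches across the many good lines through popular points, one concludes that some algebraic branch of $P$ is in fact a \emph{global} polynomial $Q \in \Ff_{q'}^{\leq d}[x_1, x_2]$ that agrees with $T[L]$, and hence with $f$, at every popular point of these lines. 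A final counting step then bounds $\Pr_a[Q(a) = f(a)]$ from below by combining the $\Omega(\eps)$ density of popular points, the $\Omega(\eps)$ density of good lines through a popular point, and the pigeonhole loss over the $z$-branches of $P$, giving $\Omega(\eps^6)$. The main obstacle will be tuning $w_3$ and $D$ so that three requirements hold simultaneously: $P$ exists by the dimension count, $P_L \equiv 0$ on typical good lines, and the number of $z$-branches of $P$ is small enough that at least one branch carries an $\Omega(\eps^6)$ share of the popular agreement. The assumption $\eps^6 \geq \sqrt{d/q}$ is exactly what lets all three tradeoffs close.
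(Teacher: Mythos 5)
Your proposal follows the HKSS skeleton, but it replaces the crucial interpolation step with a direct dimension count that breaks the parameter chain. You interpolate $P$ of $(1,1,d)$-weighted degree $\leq D$ vanishing on the $\Theta(\eps q^2)$ points $\{(a, f(a)) : a \in S\}$, which forces $D = \Theta\bigl((d\eps q^2)^{1/3}\bigr)$. The paper's interpolation (\cref{thm: line vs point interpolation}, following HKSS) is structurally different: it first passes to a two-direction set $H$ of density $\Omega(\eps^2)$ (\cref{cl: lvp claim 1}) and then imposes, for only $r = O(d/\eps^4)$ axis-parallel base points $\alpha \in S_1$, the \emph{polynomial identity} $A(\alpha, y, T[L^{(2)}_\alpha](y)) \equiv 0$ in $y$ (\cref{lm: interp proof aux}, \cref{lm: A lvp interpolate}); vanishing on the graph of $f$ over $H$ then follows by Schwartz--Zippel. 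Since each base point contributes $D+1$ constraints rather than one, $D = O(d/\eps^2)$ suffices --- far smaller than your $D$.

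This degree gap is fatal. To find an anchor where \cref{lem:bur} (equivalently \cref{lm: line vs point newton}) applies, you must locate $a \in \Ff_q^2$ with $\Disc_z(P)(a) \neq 0$, and the route is to bound $\deg \Disc_z(P) = \Theta(D^2/d)$ below $q$. With your $D$, $D^2/d = \Theta(\eps^{2/3} q^{4/3} d^{-1/3}) < q$ requires $\eps^2 \lesssim d/q$. But ruling out that the dimension-count polynomial lies in $\Ff_{q'}[x_1,x_2]$ alone --- which you never argue, yet must hold before \cref{lm: nonzero disc aux} gives a member of $\mc{Q}_{S,z}$ of weighted degree $\leq D$ --- requires $Dq < |S|$, i.e.\ $\eps^2 \gtrsim d/q$: the opposite inequality, and the constants in the dimension count make the two ranges disjoint. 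So in precisely the regime where $P$ has genuine $z$-dependence, $\Disc_z(P)$ may have degree $\geq q$ and vanish identically on $\Ff_q^2$, collapsing the anchor step. The paper's $D = O(d/\eps^2)$ gives $\deg \Disc_z(A) = O(d/\eps^4) < q$ whenever $\eps^4 > d/q$, well inside the hypothesis; the $\Omega(\eps^6)$ bound then comes from the double counting in \cref{cl: point with good lines}, as $\Omega(\eps^3)$ lines through a single anchor times $\Omega(\eps^3)$ per-line agreement. Your "pigeonhole over $z$-branches" would contribute a factor of roughly $d/D = \Theta\bigl((d^2/(\eps q^2))^{1/3}\bigr)$, which is not a power of $\eps$ and does not combine with your two $\Omega(\eps)$ factors to give $\Omega(\eps^6)$.
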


The rest of this section is devoted to the proof of \cref{thm: line versus point redo}. 
We will need the following lemma, analogous to~\cite[Lemma 3.1]{hkss}, which gives a local to global type statement. At a high level, it says that if for many lines in $L \in \mc{L}_a$ the polynomial $A(L(t), z) \in \Ff_{q'}[t, z]$ has a local low degree root, $T[L](t)$, then $A(x_1, \ldots, x_m, z)$ actually has a global low degree root $P(x_1,\ldots, x_m)$.

\begin{lemma} \label{lm: line vs point newton}
    Let $A(x_1,\ldots, x_m, z) \in \Ff_{q'}[x_1,\ldots, x_m, z]$ have $(1,\ldots, 1, d)$-degree at most $D$. Suppose there is a point $b \in \Ff_q^2$ such that the following holds:
    \begin{itemize}
        \item $A(b,z) \in \Ff_{q'}[z]$ has no repeated roots.
        \item There is a set of affine lines through $b$, $\mc{S} \subseteq \mc{L}_b$ of relative size $\mu(\mc{S}) > \frac{D}{q}$, such that for every line $L \in \mc{S}$, we have,
        \[
        A\left(L(t), T[L](t)\right) \equiv 0,
        \]
        and $T[L](0) = \alpha$, for some $\alpha \in \Ff_{q'}$ and all $L \in \mc{S}$. 
    \end{itemize}
    Then there exists $P \in \Ff^{\leq d}_{q'}[x_1,\ldots, x_m]$ such that $A(x, P(x)) \equiv 0$. Moreover, $P|_L \equiv T[L] \quad \forall L \in \mc{S}$.
\end{lemma}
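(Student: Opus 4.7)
The plan is to apply the Hensel-type result of~\cref{lem:bur} at the point $b$ to produce a candidate polynomial $P$, verify via uniqueness that the supplied line restrictions $T[L]$ must be compatible with it, and then promote the agreement on lines to an identity via a Schwartz--Zippel style argument.

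First, by performing the affine change of coordinates $\vec{x}\mapsto \vec{x}-b$ we may assume $b=0$. Plugging $t=0$ into $A(L(t),T[L](t))\equiv 0$ for any $L\in\mc{S}$ gives $A(0,\alpha)=0$, and the assumption that $A(0,z)\in\Ff_{q'}[z]$ has no repeated roots then means that $\alpha$ is a simple zero of $A(0,z)$. Hence the hypotheses of~\cref{lem:bur} are satisfied, and for $k=d$ it yields a unique $m$-variate polynomial $\Phi_d$ of total degree at most $d$ with $\Phi_d(0)=\alpha$ and
\[
A(\vec{x},\Phi_d(\vec{x}))\equiv 0\pmod{\langle x_1,\ldots,x_m\rangle^{d+1}}.
\]

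Next, for each $L\in\mc{S}$ consider the univariate polynomial $T[L](t)\in\Ff_{q'}^{\leq d}[t]$. Substituting $z=T[L](t)$ into $A(L(t),z)$ kills it identically in $t$, and $T[L](0)=\alpha$. Viewing $T[L](t)$ as an element of $\Ff_{q'}[[t]]$, both $T[L](t)$ and $\Phi_d(L(t))$ are solutions in $\Ff_{q'}[[t]]$ of $A(L(t),y)\equiv 0\pmod{t^{d+1}}$ taking the value $\alpha$ at $t=0$; the uniqueness clause of~\cref{lem:bur} (applied in $1$ ambient variable $t$, with the simple-root hypothesis transferring to $A(L(t),z)$ because $\partial_z A(0,\alpha)\neq 0$) forces $T[L](t)\equiv \Phi_d(L(t))\pmod{t^{d+1}}$, and since both sides have degree at most $d$ this is an actual equality.

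Now set $P:=\Phi_d$ and $R(\vec{x}):=A(\vec{x},P(\vec{x}))$. Since $A$ has $(1,\ldots,1,d)$-weighted degree at most $D$ and $P$ has total degree at most $d$, the polynomial $R$ has total degree at most $D$. By the previous paragraph, for every $L\in\mc{S}$ we have $R\circ L\equiv 0$, so $R$ vanishes identically on each of the $|\mc{S}|$ affine lines through the origin in $\mc{S}$. The union of these lines contains $|\mc{S}|(q-1)+1$ distinct points of $\Ff_q^m$, and the hypothesis $\mu(\mc{S})>D/q$ combined with $|\mc{L}_b|=(q^m-1)/(q-1)$ yields $|\mc{S}|(q-1)+1>Dq^{m-1}$. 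The Schwartz--Zippel lemma then forces $R\equiv 0$, i.e.\ $A(\vec{x},P(\vec{x}))\equiv 0$ as required, while the moreover clause $P|_L\equiv T[L]$ for $L\in\mc{S}$ is exactly what we already established.

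The main obstacle in this plan is the passage from vanishing of $R$ on each line of $\mc{S}$ to the identical vanishing of $R$. The degree bound on $R$ is precisely $D$, so the quantitative Schwartz--Zippel estimate is tight against the hypothesis $\mu(\mc{S})>D/q$; any weakening of the hypothesis would break this step, which is why the threshold $D/q$ appears in the statement. The Hensel-lifting step and the uniqueness step are standard once the simple-root hypothesis is verified, so the critical counting estimate in the final step is the substantive one.
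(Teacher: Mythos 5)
Your proposal is correct and follows essentially the same route as the paper's: translate to $b = 0$, apply the Hensel-type lifting (\cref{lem:bur}) to produce $\Phi_d$, use the uniqueness clause of \cref{lem:bur} on the univariate restrictions to each line $L\in\mc{S}$ to conclude $T[L]=\Phi_d\circ L$, and finish with a Schwartz--Zippel zero-counting argument. One place where your write-up is actually slightly more careful than the paper's: in the final step the paper writes only ``$B(u)=0$'' at the direction $u$ of $L$, which as literally stated gives one zero per line; you correctly observe that $R=A(\cdot,\Phi_d)$ vanishes on all of $L$, so the union of lines in $\mc{S}$ supplies $|\mc{S}|(q-1)+1>Dq^{m-1}$ zeroes, which is what the degree-$D$ Schwartz--Zippel bound actually needs.
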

\begin{proof}
    We may assume without loss of generality that $b = 0$, as otherwise we can translate $A$. This means that $L(0) = 0$ for every $L \in \mc{S}$. By assumption that $A(0, z)$ has no repeated roots, so we may apply \cref{lem:bur} and get that for every $k \geq 0$, there is a polynomial $\Phi_k(x) \in \Ff_{q'}[x_1,\ldots, x_m]$ satisfying
    \[
    \Phi_k(0) = \alpha \quad \text{and} \quad A(\vec{x}, \Phi_k(\vec{x})) = 0 \mod \langle \vec{x} \rangle^{k+1}.
    \]
    As a consequence, 
    \[
    A(L(t), \Phi_k \circ L(t)) = 0 \mod t^k, \quad \forall L \in \mc{S}.
    \]

    Fix some $L \in \mc{S}$ and consider $A'(t,z) = A(L(t), z) \in \Ff_{q'}[t,z]$. By assumption $A'(0, \alpha) = 0$ and $A'(0,z)$ is square-free as a polynomial in $z$. Thus, by \cref{lem:bur}, there is a unique polynomial $\Phi_L(t) \in \Ff_{q'}^{\leq d}[t]$ such that $\Phi_L(0) = \alpha$ and 
    \[
    A'(t, \Phi_L(t)) = A(L(t), \Phi_L(t)) = 0 \mod t^{d+1}.
    \]
    However, note that both $T[L]$ and $\Phi_d \circ L(t)$ satisfy the above. By the uniqueness in \cref{lem:bur} this means
    \[
    T[L](t) = \Phi_d \circ L(t) \quad \forall L \in \mc{S}.
    \]

    To conclude, we must show that $B(\vec{x}) = A(\vec{x}, \Phi_d(\vec{x})) = 0$. Since $\Phi_d(\vec{x})$ has degree at most $d$ and $A$ has $(1, \ldots, 1, d)$-degree at most $D$, we have that $B(\vec{x})$ has total degree at most $D$. For every $L \in \mc{S}$, let $u$ be the direction of the line. We have, $B(u) = A(u, \Phi_d(u)) = 0$. Thus, $B$ has more than $D \cdot q^{m-1}$ zeroes and therefore $B$ is the zero polynomial.
\end{proof}

The next theorem gives an interpolation result assuming that $f$ and $T$ pass the line versus point test with probability $\eps$ as in \cref{thm: line versus point redo}. In particular, we interpolate a polynomial $A \in \Ff_{q'}[x,y,z]$ which has low weighted degree and other helpful agreement properties with $T$ and $f$. We will ultimately apply \cref{lm: line vs point newton} to this polynomial $A$ to conclude \cref{thm: line versus point redo}.

\begin{thm} \label{thm: line vs point interpolation}
    Fix $\eps$ such that $\eps \geq \left(c_1 d/q\right)^{1/c_2}$ for some constants $c_1, c_2$. Suppose $f: \Ff_{q}^2 \to \Ff_{q'}$ and lines table $T$ pass the line versus point test with probability $\eps$. Then there is a non-zero polynomial $A \in \Ff_{q'}[x,y,z]$ and a set of points $S \subseteq \Ff_{q}^2$ such that
    \begin{itemize}
        \item The $(1,1,d)$-degree of $A$ is $O(d/\eps^2)$,
        \item $\mu(S) \geq \Omega(\eps^2)$,
        \item For every $x \in S$, $\eps_x :=\Pr_{L\ni x}[T[L](x) = f(x)] \geq \frac{\eps}{2}$.
        \item For every $x \in S$, $A(x, f(x)) = 0$.
        \item $\partial_z(A)$ and $\Disc_z(A)$ are nonzero polynomials.
    \end{itemize}
\end{thm}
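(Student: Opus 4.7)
The plan is to perform a capacity-based polynomial interpolation, in the spirit of the HKSS line versus point argument. First, define the typical set $S_0 = \{x \in \Ff_q^2 : \eps_x \geq \eps/2\}$. Using $\E_x[\eps_x] = \eps$ and $\eps_x \leq 1$, a routine averaging argument yields $\mu(S_0) \geq \eps/2$. The output set $S$ will be a subset of $S_0$, automatically satisfying the condition $\eps_x \geq \eps/2$ for $x \in S$.

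Second, set up the polynomial space. Take $D = C d/\eps^2$ for a sufficiently large constant $C$, let $\mc{V}$ be the space of polynomials in $\Ff_{q'}[x,y,z]$ of $(1,1,d)$-weighted degree at most $D$, and let $\mc{V}_0 \subseteq \mc{V}$ be the subspace of polynomials with $\partial_z(A) = 0$. A direct monomial count gives $\dim \mc{V} = \Theta(D^3/d) = \Theta(d^2/\eps^6)$ while $\dim \mc{V}_0 = \Theta(D^2) = \Theta(d^2/\eps^4)$, so $\dim \mc{V} - \dim \mc{V}_0 = \Theta(d^2/\eps^6)$ for small $\eps$. Pick $S \subseteq S_0$ of size $|S| = \min(|S_0|, c \cdot d^2/\eps^6)$ for sufficiently small $c > 0$, guaranteeing $|S| < \dim \mc{V} - \dim \mc{V}_0$. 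One verifies $\mu(S) \geq \Omega(\eps^2)$ in both sub-cases: either $S = S_0$ giving $\mu(S) \geq \eps/2 \geq \Omega(\eps^2)$, or $\mu(S) = c d^2/(\eps^6 q^2)$, which is $\Omega(\eps^2)$ provided $\eps \leq c' (d/q)^{1/4}$, a constraint consistent with the hypothesis for appropriate choices of $c_1, c_2$.

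Third, perform the interpolation and invoke the discriminant lemma. The evaluation map $\phi : \mc{V} \to \Ff_{q'}^{|S|}$ given by $\phi(A) = (A(x, f(x)))_{x \in S}$ satisfies $\dim \ker \phi \geq \dim \mc{V} - |S| > \dim \mc{V}_0$, so $\ker \phi \not\subseteq \mc{V}_0$ and there exists a nonzero polynomial vanishing on $\{(x, f(x)) : x \in S\} \subseteq \Ff_{q'}^3$ with $\partial_z(A) \neq 0$. Among all such polynomials, choose $A$ of minimal $(1,1,d)$-weighted degree; by construction its weighted degree is at most $D = O(d/\eps^2)$. Applying \cref{lm: nonzero disc aux} to the point set $\{(x, f(x)) : x \in S\}$, weight vector $w = (1,1,d)$, and distinguished variable $z$ yields $\Disc_z(A) \neq 0$, which together with $\partial_z(A) \neq 0$ establishes the remaining conditions of the theorem.

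The main obstacle is quantitative bookkeeping: ensuring that the monomial count $\dim \mc{V} - \dim \mc{V}_0$ strictly dominates $|S|$ while simultaneously $\mu(S) \geq \Omega(\eps^2)$, uniformly throughout the intended range of $\eps$. Tracking these constants is precisely what dictates the choice of $c_1, c_2$ in the hypothesis, and ultimately feeds into the value of $\tl$ in \cref{thm: line vs point}.
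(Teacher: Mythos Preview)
Your point-by-point interpolation has a genuine quantitative gap. With the $(1,1,d)$-degree of $A$ bounded by $D = O(d/\eps^2)$ as the theorem requires, the space of candidate polynomials has dimension $O(D^3/d) = O(d^2/\eps^6)$, so imposing one linear constraint per point of $S$ forces $|S| \leq O(d^2/\eps^6)$ and hence $\mu(S) \leq O\bigl(d^2/(\eps^6 q^2)\bigr)$. Demanding $\mu(S) \geq \Omega(\eps^2)$ then yields $\eps \leq O\bigl((d/q)^{1/4}\bigr)$, an \emph{upper} bound on $\eps$. You assert this is ``consistent with the hypothesis for appropriate choices of $c_1, c_2$'', but the hypothesis $\eps \geq (c_1 d/q)^{1/c_2}$ is a \emph{lower} bound: no choice of $c_1, c_2$ turns it into the upper bound you need. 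The theorem must cover every $\eps$ in $[(c_1 d/q)^{1/c_2}, 1]$, and your argument breaks once $\eps$ exceeds $(d/q)^{1/4}$; for instance, take $\eps$ a fixed constant, $d/q$ small, and $f$ genuinely degree $d$ with the honest table, so that $S_0 = \Ff_q^2$ and you are forced into your Case~2.

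What is missing is the use of the lines table to amplify few constraints into many vanishing points. The paper first locates two good axis-parallel directions and a dense set $H$ (\cref{cl: lvp claim 1}), then interpolates $A$ so that $A\bigl(\alpha, y, T[L^{(2)}_\alpha](y)\bigr) \equiv 0$ as a polynomial in $y$ for each $\alpha$ in a set $S_1$ of size only $r = O(d/\eps^4)$ (\cref{lm: A lvp interpolate}). This costs $r(D+1) = O(d^2/\eps^6)$ linear constraints, the same budget you have, but because each such constraint pins $A$ down along an entire column via the degree-$d$ table entry, a Schwartz--Zippel argument along rows then propagates vanishing to a set $S$ of density $\Omega(\eps^2)$ in $\Ff_q^2$, independent of $d/q$. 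That column-to-row amplification is precisely what a direct interpolation over arbitrary points cannot achieve. (A smaller aside: in characteristic $2$ the condition $\partial_z A = 0$ means $A \in \Ff_{q'}[x,y,z^2]$, so $\dim \mc{V}_0 = \Theta(D^3/d)$ rather than $\Theta(D^2)$; your final estimate of $\dim\mc{V} - \dim\mc{V}_0$ survives, but the paper handles this carefully by interpolating only over monomials of odd $z$-degree.)
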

\begin{proof}
    We defer the proof to \cref{sec: proof of lvp interpolation}.
\end{proof}

Applying \cref{thm: line vs point interpolation}, we get that there is a set of points $S \subseteq \Ff_q^2$ of measure $\mu(S) \geq \Omega(\eps^2)$ and a nonzero polynomial $A\in \Ff_{q'}[x,y,z]$ with $(1,1,d)$-degree at most $D = O(d/\eps^2)$ such that the polynomials $\partial_z(A)$ and $\Disc_z(A)$ are nonzero, and the following holds for all $a \in S$:
\begin{itemize}
    \item $A(a, f(a)) = 0$,
    \item $\eps_a \geq \frac{\eps}{2}$
\end{itemize}
From this set $S$, we will find a point $a \in S$ which satisfies some useful properties.
\begin{claim} \label{cl: point with good lines}
    There exists a point $a \in S$ and $\eta \geq \Omega(\eps^3)$ such that the following holds:
    \begin{itemize}
        \item $\Disc_z(A)(a) \neq 0$,
        \item For $\Omega(\eps^3)$-fraction of lines $L \in \mc{L}_a$, we have $A(L(t), T[L](t)) = 0$, $T[L](a) = f(a)$, and $T[L](b) = f(b)$ for $\eta$-fraction of $b \in L$.
    \end{itemize}
\end{claim}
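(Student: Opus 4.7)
The plan is to combine three ingredients: (i) a Schwartz--Zippel bound showing that $\Disc_z(A)$ vanishes on very few points, so almost all of $S$ avoids it; (ii) a degree bound forcing $A(L(t), T[L](t)) \equiv 0$ once sufficiently many points of $L$ simultaneously lie in $S$ and see $T[L]$ agree with $f$; and (iii) a double-counting argument that pinpoints a single point $a$ with many ``good'' lines through it.

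First, I set $D := O(d/\eps^2)$ for the $(1,1,d)$-weighted degree bound on $A$. Since $\deg_z(A) \le D/d$ and each coefficient of $A$ viewed as a polynomial in $(x,y)$ has degree at most $D$, the Sylvester matrix defining $\Disc_z(A)$ has size $O(D/d)$ with entries in $\Ff_{q'}^{\le D}[x,y]$, giving $\deg \Disc_z(A) \le O(D^2/d) = O(d/\eps^4)$. Schwartz--Zippel then bounds the zero set of $\Disc_z(A)$ in $\Ff_q^2$ by $O(d/(\eps^4 q))$, which is $o(\eps^2)$ since the hypothesis $\eps \ge \Omega((d/q)^{\tl})$ with $\tl < 1/20$ implies $\eps^6 \gg d/q$. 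Therefore $S' := \{a \in S : \Disc_z(A)(a) \neq 0\}$ has $\mu(S') \ge \mu(S)/2 = \Omega(\eps^2)$.

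Next, for each affine line $L$ I will consider
\[
\beta_L := \Pr_{x \in L}[\,x \in S \text{ and } T[L](x) = f(x)\,].
\]
Double counting the pairs $(a,L)$ with $a \in S \cap L$ and $T[L](a) = f(a)$, and using $\eps_a \ge \eps/2$ for all $a \in S$, yields $\E_L[\beta_L] \ge \mu(S)\eps/2 \ge c_0 \eps^3$. A standard Markov-type rearrangement then gives $\Pr_L[\beta_L \ge c\eps^3] \ge c\eps^3$ for $c := c_0/2$. The crucial observation is that $A(L(t), T[L](t))$ is a polynomial in $t$ of degree at most $D$ (by the weighted-degree bound) that vanishes at every $x \in L \cap S$ with $T[L](x) = f(x)$, since $A(x,f(x)) = 0$ on $S$. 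Choosing $c$ slightly smaller if necessary so that $c\eps^3 > D/q$ — which is possible because $\eps^5 \ge \Omega((d/q)^{5\tl}) \gg d/q$ — any line in $\mc{L}^{**} := \{L : \beta_L \ge c\eps^3\}$ has more than $D$ roots of $A(L(t), T[L](t))$ and therefore satisfies $A(L(t), T[L](t)) \equiv 0$; it also satisfies $\alpha_L \ge \beta_L \ge c\eps^3$.

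Finally, I will locate the desired point $a$ by a second double-counting step. The number of triples $(a,L)$ with $a \in S'$, $a \in L$ and $T[L](a) = f(a)$ is at least $|S'| \cdot (q+1) \cdot \eps/2 = \Omega(\eps^3 q^3)$. The contribution from lines $L \notin \mc{L}^{**}$ is bounded above by $\sum_{L \notin \mc{L}^{**}} \beta_L \, q \le c\eps^3 q \cdot |\mc{L}| = O(c \eps^3 q^3)$, which, for $c$ chosen small enough, is at most half of the previous lower bound. Hence at least $\Omega(\eps^3 q^3)$ such triples have $L \in \mc{L}^{**}$, and averaging over $a \in S'$ produces some $a \in S'$ with $\Omega(\eps^3 q^3)/|S'| = \Omega(\eps q)$ lines through it lying in $\mc{L}^{**}$ and satisfying $T[L](a) = f(a)$. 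Setting $\eta := c\eps^3$, this $a$ meets all three requirements of the claim (in fact the fraction of lines is $\Omega(\eps)$, which is more than enough).

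The main obstacle is purely bookkeeping: making sure the degree estimate on $\Disc_z(A)$ and the threshold $c\eps^3 > D/q$ are both compatible with the standing hypothesis $\eps \ge \Omega((d/q)^{\tl})$. Both reduce to polynomial inequalities in $\eps$ and $d/q$ that hold comfortably for $\tl < 1/20$, namely $\eps^6 \gg d/q$ (for Schwartz--Zippel on $\Disc_z(A)$) and $\eps^5 \gg d/q$ (for the agreement threshold to force $A(L(t), T[L](t)) \equiv 0$).
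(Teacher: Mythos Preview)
Your argument is correct and takes essentially the same route as the paper's: both average over the point--line incidence structure to isolate lines $L$ with $\beta_L \ge \Omega(\eps^3)$ (whence $A(L(t),T[L](t))\equiv 0$ by the degree bound), and then locate a point outside the zero set of $\Disc_z(A)$ incident to many such lines; you merely excise that zero set before the averaging rather than after. One harmless slip: in the last step, averaging over $a\in S'$ only yields $\Omega(\eps^3 q^3)/|S'|\ge \Omega(\eps^3 q)$ good lines (using $|S'|\le q^2$, not $|S'|=\Theta(\eps^2 q^2)$), rather than $\Omega(\eps q)$---but this is still the required $\Omega(\eps^3)$-fraction of $\mc{L}_a$.
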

\begin{proof}
    Consider the bipartite graph where the left side consists of all points $\Ff_q^2$, the right side consists of all affine lines, $\mc{L}$, and the edges consist of all pairs $(x,L)$ such that $x \in L$, $x \in S$, and $T[L](x) = f(x)$. Note that by \cref{thm: line vs point interpolation}, we have that each $x \in S$ has $\eps_x \cdot (q+1)$-neighbors. 

    For each line $L$, let $\sigma_L = \Pr_{x \in L}[(x,L) \in E]$. We have 
    \[
    \E_L [\sigma_L] = \Pr_{x, L \ni x}[(x, L) \in E] \geq \Omega(\eps^3).
    \]
    By an averaging argument, for at least $\Omega(\eps^3)$-fraction of lines $L$, we have $\sigma_L \geq \Omega(\eps^3)$. Let $\mc{L}'$ denote these lines. Now we wish to find a set of points $S'$ that each have many neighbors in $\mc{L}'$. Let $E'$ be the set of pairs $(x,L)$ such that $(x, L) \in E$, and $L \in \mc{L}'$. We have
    \[
    \Pr_{x, L \ni x}[(x, L) \in E'] \geq \Omega(\eps^3).
    \]
   By another averaging argument, we get that for $\Omega(\eps^3)$-fraction of points $x$, we have $x \in S$ and $\Pr_{L \ni x}[(x, L) \in E'] \geq \Omega(\eps^3)$. We let $S'$ be the set of such $x$, so that $\mu(S') \geq \Omega
\left(\eps^3\right)$.
    
    We now claim that some $a \in S'$ is the desired point. 
   Towards the first item, one can check that $\Disc_z(A)$ is the determinant of an $O(1/\eps^2) \times O(1/\eps^2)$-matrix whose entries are polynomials in $x,y$ of degree at most $D$. Moreover, by the last item of \cref{thm: line vs point interpolation}, $\Disc_z(A)$ is nonzero. Therefore, $\Disc_z(A)$ has degree at most $O(d/\eps^2)$, and in particular,
   \[
   \Pr_{a \in \Ff_q^2}[\Disc_z(A)(a) = 0] \leq \frac{d}{q \cdot \eps^2} < \Omega(\eps^3),
   \]
   where the last inequality holds provided that $c_2 > 6$.
   It follows that $\Disc_z(A)(a) \neq 0$ for at least one $a \in S'$, and we fix such an $a$ for the rest of the argument.
   
   Towards the second item, fix $L$ such that $(a, L) \in E'$. Since $a \in S$, we have $T[L](a) = f(a)$. Now let $B(t)$ be the univariate polynomial given by $A(L(t), T[L](t))$. As the $(1,1,d)$-degree of $A$ is at most $D$, the polynomial $B$ has degree at most $D = O(d/\eps^2)$. On the other hand, note that $B(t)$ is zero at every point $x$ such that $(x, L) \in E$. Indeed, for such points we have $T[L](x) = f(x)$ and $A(x, f(x))=0$. By construction, there are at least $\Omega(\eps^3) \cdot q$ such points. Since
   \[
   D = O\left(\frac{d}{\eps^2}\right) < \Omega(\eps^3)\cdot q,
   \]
   we get that $B$ must be identically $0$. Finally, we have that $T[L](x) = f(x)$ for all $x$ such that $(x, L) \in E$, which by construction is at least $\Omega(\eps^3)$-fraction of $L$. It follows that the conditions of item 2 are satisfied for all $L$ such that $(a, L) \in E'$, and this establishes the claim.
\end{proof}
Having established \cref{cl: point with good lines}, we are ready to conclude the proof of \cref{thm: line versus point redo}
\begin{proof}[Proof of \cref{thm: line versus point redo}]
    Suppose $A \in \Ff_{q'}[x,y,z]$, $a \in \Ff_q^2$, and $S \subseteq \Ff_q^2$ are as given by \cref{cl: point with good lines}. Let $\mc{L}' \subseteq \mc{L}_a$ be the set of lines such that for each $L \in \mc{L}'$, we have, 
    \[
    A(L(t), T[L](t)) = 0 \quad \text{and} \quad T[L](a) = f(a).
    \]
    Then by \cref{lm: line vs point newton}, we have that there is a degree $d$ polynomial $Q: \Ff_q^2 \to \Ff_{q'}$ such that $Q|_L = T[L]$ for every $L \in \mc{L}'$. Since $T[L](b) = f(b)$ for $\eta$-fraction of the points $b \in L$, we get that $Q(b) = f(b)$ for 
    \[
    \Omega(\eps^3) \cdot \Omega(\eps^3) = \Omega(\eps^6),
    \]
    fraction of points $b$.
\end{proof}
\subsubsection{Proof of \cref{thm: line vs point interpolation}} \label{sec: proof of lvp interpolation}
Throughout this subsection, $f: \Ff_q^2 \to \Ff_{q'}$ and lines table $T$ pass the line versus point test with probability $\eps$. 
Our argument for \cref{thm: line vs point interpolation} is essentially the same as the argument in~\cite[Theorem 4.3]{hkss}. We start with the following claim.

\begin{claim} \label{cl: lvp claim 1}
    There are two different directions $u,v \in \Ff_q^2$ and a set $H \subseteq \Ff_q^2$ such that the following hold:
    \begin{itemize}
        \item $\mu(H) \geq \Omega(\eps^2).$
        \item For every $x \in H$, we have $\eps_x \geq \frac{\eps}{2}$, and 
        \[
        T[L(x;u)](x) =  T[L(x;v)](x) = f(x).
        \]
        In the equation above, $L(x;u)$ is the line going in direction $u$ passing through $f$ and $L(x;v)$ is defined similarly. 
    \end{itemize}
\end{claim}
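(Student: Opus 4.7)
The plan is to establish the claim by a second-moment / averaging argument on the random variable $\eps_x$. First, I would define $S = \{x \in \Ff_q^2 : \eps_x \geq \eps/2\}$ and observe that since $\eps = \E_x[\eps_x]$ and $\eps_x \leq 1$, the splitting $\eps \leq \mu(S) \cdot 1 + (1-\mu(S)) \cdot \eps/2$ forces $\mu(S) \geq \eps/2$. So a constant fraction (depending on $\eps$) of points already have locally high agreement, and all the work is in finding two directions $u \neq v$ that simultaneously witness agreement at many such points.

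Next, I would consider the probability, over a uniformly random point $x \in \Ff_q^2$ and two independently uniform directions $u, v$ in $\mathbb{P}^1(\Ff_q)$ (so possibly $u = v$), of the joint event $\mc{E}(x,u,v) := \{T[L(x;u)](x) = T[L(x;v)](x) = f(x)\}$. Conditioning on $x$, the probability is exactly $\eps_x^2$ (by independence of the two direction choices), so Cauchy--Schwarz gives
\[
\Pr_{x,u,v}[\mc{E}(x,u,v)] = \E_x[\eps_x^2] \geq (\E_x[\eps_x])^2 = \eps^2.
\]
To isolate the contribution from $S$, I would use that $\eps_x < \eps/2$ off of $S$, which gives
\[
\E_x[\eps_x^2 \mathbf{1}_{S^c}] \leq \tfrac{\eps}{2}\cdot \E_x[\eps_x \mathbf{1}_{S^c}] \leq \tfrac{\eps}{2}\cdot \eps = \tfrac{\eps^2}{2},
\]
and therefore $\E_x[\eps_x^2 \mathbf{1}_S] \geq \eps^2/2$. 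So the probability of $\mc{E}(x,u,v)$ \emph{and} $x\in S$ is already at least $\eps^2/2$.

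Finally, I would restrict to the event $u \neq v$, which happens with probability $1 - 1/(q+1)$; the contribution of $u = v$ is at most $1/(q+1)$, which under our standing assumption $\eps \geq (c_1 d/q)^{1/c_2}$ with $c_2$ a modest constant is absorbed into $\eps^2/2$, leaving probability $\geq \eps^2/3$. An averaging argument over pairs of distinct directions then yields a specific pair $(u,v)$ with $u\neq v$ such that the set $H = \{x \in S : T[L(x;u)](x) = T[L(x;v)](x) = f(x)\}$ has $\mu(H) \geq \Omega(\eps^2)$, meeting both conclusions of the claim. The argument is essentially a Cauchy--Schwarz plus first-moment bookkeeping exercise, so I do not expect a serious obstacle; the only care needed is to ensure that the $u = v$ term is negligible compared to $\eps^2$, which is exactly what the hypothesis $\eps^{c_2} \geq c_1 d/q$ buys us for $q \gg d$.
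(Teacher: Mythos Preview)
Your proposal is correct and follows essentially the standard second-moment argument: use $\E_x[\eps_x^2]\ge \eps^2$ to get many $(x,u,v)$ with simultaneous agreement, restrict to $x$ with $\eps_x\ge\eps/2$ at a loss of at most $\eps^2/2$, discard the negligible $u=v$ diagonal, and average to pin down a single pair $(u,v)$. The paper does not spell out its own proof here but defers to \cite[Claim~A.1]{hkss}, whose argument is exactly of this flavor, so your approach matches.
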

\begin{proof}
    The proof is the same as in~\cite[Claim A.1]{hkss}.
\end{proof}

Henceforth, we fix $u, v$ and $H$ as in \cref{cl: lvp claim 1} and suppose without loss of generality that they are the $x$ and $y$ axes. Henceforth, for $\alpha \in \Ff_{q}$ we let $L^{(1)}_{\alpha}$ be the line in direction $u$ (or parallel to the $x$-axis) through $\alpha$, that is, and let $L^{(2)}_{\alpha}$ be the line in direction $y$ (or parallel to the $y$-axis) through $\alpha$. In the next lemma, set a constant $\gamma = \Omega(1)$ so that $2\gamma \cdot \eps^2 = \mu(H)$.

\begin{lemma} \label{lm: interp proof aux}
    For any $r$ such that $\frac{2\log q}{\gamma^2} \leq r \leq \gamma \cdot q$, there are subsets $S_1, S_2 \subseteq \Ff_q$ such that:
    \begin{itemize}
        \item $|S_1| = r$,
        \item $|S_2| \geq \gamma \cdot q$,
        \item $|(\Ff_q \times S_2) \cap H| \geq |H|/2 = \gamma \cdot q^2$,
        \item For every $b \in S_2$, we have $|\{a \in \Ff_q \; | \; (a,b) \in H \}| \geq \gamma \cdot q$,
        \item For every $b \in S_2$ we have $|\{a \in \Ff_q \; | \; (a,b) \in S_1 \}| \geq \gamma/2 \cdot |S_1|$.
    \end{itemize}
\end{lemma}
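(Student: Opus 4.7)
The plan is to construct $S_2$ deterministically by a heavy-columns / first-moment argument, and then to produce $S_1$ by a probabilistic construction (random sampling followed by Chernoff and a union bound). Intuitively, $H$ has density $\Omega(\eps^2)$ in $\Ff_q^2$, so by an averaging argument most of its mass is concentrated in columns that individually have column-density at least a constant factor times $\gamma$; these will form $S_2$. Given such $S_2$, a uniformly random subset $S_1 \subseteq \Ff_q$ of size $r$ will, with good probability, hit each heavy column in roughly the proportion predicted by its density, which is exactly what the last bullet demands (reading the typo in the statement as ``$|\{a \in S_1 : (a,b) \in H\}| \geq \tfrac{\gamma}{2}\cdot|S_1|$'').

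To carry out the first step, define
\[
S_2 := \left\{ b \in \Ff_q \;:\; \bigl|\{a \in \Ff_q : (a,b) \in H\}\bigr| \geq \gamma \cdot q \right\}.
\]
Writing $|H| = \sum_{b \in \Ff_q} |\{a : (a,b) \in H\}|$ and bounding the contribution of the non-heavy columns by $\gamma q \cdot q$, one immediately gets $|(\Ff_q \times S_2) \cap H| \geq |H|/2$ provided the threshold is calibrated against $|H| = 2\gamma\cdot q^2$ as in the statement. The same double-counting then yields $|S_2| \geq \gamma \cdot q$, since each heavy column contributes at most $q$ to $|H|$ and the heavy columns must account for at least $|H|/2 = \gamma q^2$ of the mass. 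This establishes the first four bullets.

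For the last bullet, draw $S_1$ by taking $r$ independent uniform samples from $\Ff_q$ (and, at the end, adjusting to an $r$-element set by arbitrary tie-breaking, or equivalently sampling a uniform $r$-subset and using a hypergeometric Chernoff bound). Fix $b \in S_2$; the column density is $p_b \geq \gamma$, so the expected number of hits $\E|\{a \in S_1 : (a,b) \in H\}|$ is at least $\gamma r$. The multiplicative Chernoff inequality gives
\[
\Pr\!\left[ |\{a \in S_1 : (a,b) \in H\}| < \tfrac{\gamma}{2}\cdot r \right] \;\leq\; \exp\!\left(-\Omega(\gamma^2 r)\right).
\]
Union-bounding over the at most $q$ choices of $b \in S_2$ yields a failure probability at most $q \cdot \exp(-\Omega(\gamma^2 r))$, which is strictly less than $1$ exactly when $r \geq \tfrac{2\log q}{\gamma^2}$ (the stated lower bound), so some realization of $S_1$ works.

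The main obstacle is simply quantitative bookkeeping: the threshold defining $S_2$, the fractional density target $\gamma/2$ in the Chernoff event, and the lower bound $r \geq 2\log q/\gamma^2$ all have to be synchronized so that the probability of deviation is genuinely of order $\exp(-\gamma^2 r)$ and the union bound over columns closes. The upper bound $r \leq \gamma q$ is only used so that $r$-element subsets of $\Ff_q$ exist with the requisite properties; it plays no role in the probabilistic analysis. Apart from this calibration, every ingredient is entirely standard.
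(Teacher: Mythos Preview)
Your proposal is correct and is essentially the standard argument. The paper does not give its own proof here---it simply cites \cite[Lemma~A.2]{hkss}---but the averaging/heavy-columns construction of $S_2$ followed by random sampling plus Hoeffding for $S_1$ is exactly the natural approach, and your calibration matches the stated lower bound $r\geq 2\log q/\gamma^2$ on the nose (Hoeffding gives failure probability $\exp(-\gamma^2 r/2)$ per column, so the union bound over $q$ columns closes). Your reading of the typo in the fifth bullet as $|\{a\in S_1:(a,b)\in H\}|\geq \tfrac{\gamma}{2}|S_1|$ is also the right one, as confirmed by how the lemma is used downstream.
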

\begin{proof}
    The proof is the same as in~\cite[Lemma A.2]{hkss}.
\end{proof}

For the next lemma, denote $N_{d, D}$ to be the number of exponent vectors, $(i,j,k)$ that have $(1,1,d)$-weighted degree at most, $D$, and have $k \neq 0 \pmod 2$.

\begin{lemma} \label{lm: A lvp interpolate}
    Suppose $D$ satisfies $N_{d,D} > r(D+1)$. Then there is a nonzero $A \in \Ff_{q'}[x,y,z]$ with $(1,1,d)$-degree at most $D$ satisfying:
    \begin{itemize}
        \item $A(\alpha, y, T[L^{(2)}_{a}](y)) = 0 \quad \forall \alpha \in S_1$,
        as a univariate polynomial in $y$.
        \item $A$ is supported on monomials whose $z$-degree is either $0$ or odd.
    \end{itemize}
\end{lemma}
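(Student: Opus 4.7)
The plan is to produce $A$ by a standard dimension-counting / linear-algebra interpolation argument.

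First, I would let $V \subseteq \Ff_{q'}[x,y,z]$ be the vector space spanned by all monomials $x^i y^j z^k$ whose $(1,1,d)$-weighted degree $i+j+dk$ is at most $D$ and whose $z$-exponent $k$ is either $0$ or odd. By definition of $N_{d,D}$, we have $\dim V = N_{d,D}$. Any nonzero element of $V$ automatically satisfies the $(1,1,d)$-degree bound and the second bullet of the lemma (support on monomials with $z$-degree $0$ or odd), so it only remains to arrange the vanishing condition.

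Next, fix $\alpha \in S_1$. Consider the linear map $\Phi_\alpha : V \to \Ff_{q'}[y]$ defined by $\Phi_\alpha(A)(y) := A\bigl(\alpha, y, T[L^{(2)}_\alpha](y)\bigr)$. Because $T[L^{(2)}_\alpha]$ has degree at most $d$ in $y$, substituting the monomial $x^i y^j z^k$ (with $j + dk \le D$ after fixing the $x$-exponent) yields a polynomial in $y$ of degree at most $j + dk \leq D$. Hence the image of $\Phi_\alpha$ lies inside the $(D+1)$-dimensional subspace $\Ff_{q'}^{\le D}[y]$, so $\dim(\ker \Phi_\alpha) \geq \dim V - (D+1)$.

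Intersecting these kernels over all $\alpha \in S_1$ gives the subspace $W := \bigcap_{\alpha \in S_1} \ker \Phi_\alpha \subseteq V$, and by subadditivity of codimensions,
\[
\dim W \;\geq\; \dim V - |S_1|\cdot (D+1) \;=\; N_{d,D} - r(D+1) \;>\; 0
\]
using the hypothesis $N_{d,D} > r(D+1)$. Therefore $W$ contains a nonzero element $A$, which satisfies both conclusions of the lemma: the first bullet because $\Phi_\alpha(A) = 0$ for every $\alpha \in S_1$, and the second bullet because $A \in V$ by construction.

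There is no real obstacle here beyond bookkeeping: the only subtlety is the degree bound on $\Phi_\alpha(A)$, which relies crucially on using the $(1,1,d)$-weighted degree rather than total degree—since $T[L^{(2)}_\alpha]$ has $y$-degree $d$, each unit of $z$-degree in $A$ translates to $d$ units of $y$-degree after substitution, exactly matching the weighting in the definition of $N_{d,D}$.
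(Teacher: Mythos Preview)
Your proof is correct and is precisely the standard dimension-counting interpolation argument from \cite{hkss} that the paper defers to. One minor quibble: per the paper's definition, $N_{d,D}$ counts only monomials with \emph{odd} $z$-degree (not $z$-degree zero), so in fact $\dim V \geq N_{d,D}$ rather than equality---but this only strengthens your inequality.
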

\begin{proof}
    The proof is the same as in~\cite[Lemma A.7]{hkss}.
\end{proof}

Now fix $A$ to be the polynomial satisfying \cref{lm: A lvp interpolate}. The next 
lemma is analogous to~\cite[Lemma A.8]{hkss}.
The proof is the same, but as the statement therein is written for $\mathbb{F}_q$-valued functions, we repeat it here for $\Ff_{q'}$-valued functions. The only difference is that we apply the Schwartz-Zippel lemma over the sub-domain $\Ff_q^2 \subseteq \Ff_{q'}^2$.

\begin{lemma}
    The polynomial $A$ also satisfies $A(\alpha,\beta, f(\alpha,\beta)) = 0$ for all $(\alpha,\beta) \in S := \{(\alpha,\beta)\in H \; | \; \beta \in S_2  \}$.
\end{lemma}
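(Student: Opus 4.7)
Proof plan. The plan is to fix an arbitrary $(\alpha,\beta)\in S$ and exploit the (already established) vanishing of $A$ along the vertical lines indexed by $S_1$ to deduce vanishing of an auxiliary univariate polynomial along the horizontal line through $\beta$. Concretely, I will consider
\[
B(x) \;:=\; A\bigl(x,\beta,\,T[L^{(1)}_\beta](x)\bigr) \in \Ff_{q'}[x],
\]
viewed as a univariate polynomial in $x$ obtained by specializing the second coordinate of $A$ to $\beta$ and substituting the line polynomial $T[L^{(1)}_\beta](x)$ (which has degree at most $d$ in $x$) into the third coordinate. Since $A$ has $(1,1,d)$-weighted degree at most $D$, this substitution gives $\deg(B)\le D$.

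The next step is to exhibit many zeros of $B$. By the last item of \cref{lm: interp proof aux}, the set
\[
T_\beta \;:=\;\{\alpha'\in S_1 : (\alpha',\beta)\in H\}
\]
has size at least $\tfrac{\gamma}{2}|S_1| = \tfrac{\gamma r}{2}$. For every $\alpha'\in T_\beta$, two things happen simultaneously. First, by the first bullet of \cref{lm: A lvp interpolate}, $A(\alpha',y,T[L^{(2)}_{\alpha'}](y))\equiv 0$ as a polynomial in $y$, so evaluating at $y=\beta$ gives $A(\alpha',\beta,T[L^{(2)}_{\alpha'}](\beta))=0$. Second, since $(\alpha',\beta)\in H$, \cref{cl: lvp claim 1} yields
\[
T[L^{(2)}_{\alpha'}](\beta) \;=\; f(\alpha',\beta) \;=\; T[L^{(1)}_\beta](\alpha').
\]
Plugging these two identities together gives $B(\alpha')=A(\alpha',\beta,T[L^{(1)}_\beta](\alpha'))=0$ for every $\alpha'\in T_\beta$.

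Thus $B$ is a polynomial of degree at most $D$ in $\Ff_{q'}[x]$ with at least $\tfrac{\gamma r}{2}$ distinct roots in $\Ff_q\subseteq \Ff_{q'}$. Choosing $r$ large enough that $\tfrac{\gamma r}{2}>D$ (which is compatible with the upper bound $r\le \gamma q$ in \cref{lm: interp proof aux} because $D=O(d/\eps^2)$ and the hypothesis $\eps\ge(c_1 d/q)^{1/c_2}$ makes $q\gg D$ for an appropriate choice of $c_2$), we conclude that $B$ is identically zero. In particular $B(\alpha)=0$, i.e., $A(\alpha,\beta,T[L^{(1)}_\beta](\alpha))=0$. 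Since $(\alpha,\beta)\in H$, we have $T[L^{(1)}_\beta](\alpha)=f(\alpha,\beta)$, and therefore $A(\alpha,\beta,f(\alpha,\beta))=0$ as required.

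The only slightly delicate points are bookkeeping ones: verifying that the degree bound on $B$ really is $D$ (this is exactly what $(1,1,d)$-weighted degree controls under the substitution $z\mapsto T[L^{(1)}_\beta](x)$, which is degree $\le d$ in $x$), and checking that the chosen $r$ in \cref{lm: interp proof aux} can simultaneously satisfy $r\ge 2\log q/\gamma^2$, $r\le \gamma q$, and $\gamma r/2 > D$; all of these are the same parameter choices used to make \cref{lm: A lvp interpolate} solvable. I do not anticipate any serious obstacle—the main ingredient is just combining the vertical-line vanishing from \cref{lm: A lvp interpolate} with the horizontal-line consistency built into $H$ via \cref{cl: lvp claim 1}, and closing the argument with a univariate Schwartz--Zippel over $\Ff_{q'}$.
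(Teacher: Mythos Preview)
Your proposal is correct and follows essentially the same approach as the paper: define the univariate polynomial $B(x)=A(x,\beta,T[L^{(1)}_\beta](x))$, use the vertical-line vanishing from \cref{lm: A lvp interpolate} together with the two-direction consistency of $H$ from \cref{cl: lvp claim 1} to exhibit $\ge \gamma r/2>D$ roots, and conclude $B\equiv 0$. The only cosmetic difference is that the paper has already fixed $D=\gamma r/3$ (so $\gamma r/2>D$ holds automatically), whereas you phrase this as a parameter choice to be made; both amount to the same thing.
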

\begin{proof}
    Fix a $\beta \in S_2$, and consider the polynomial $P(x) = A(x, \beta, T[L_{\beta}^{(2)}](x))$. For all $\alpha \in S_1$ such that $(\alpha, \beta) \in H$ we 
    \[
    T[L_{\beta}^{(2)}](\alpha, \beta) = T[L_{\alpha}^{(1)}](\alpha, \beta) = f(\alpha, \beta),
    \qquad\qquad
    P(\alpha) = A(\alpha, \beta, T[L_{\beta}^{(2)}](x)) = 0.
    \]
    Thus, $P$ is equal to $0$ on the entire set $\{\alpha\; | \;(\alpha, \beta) \in H,  \alpha \in S_1 \}$, which has size at least $\gamma |S_1|/2 > D$. On the other hand $P$ has degree at most $D$, so by the Schwartz-Zippel lemma it follows that $P$ is identically $0$. Thus, 
    \[
    P(\alpha) = A(\alpha, \beta,  T[L_{\beta}^{(2)}](x)(\alpha)) = A(\alpha, \beta, f(\alpha, \beta)) = 0,
    \]
    for all $\alpha$ such that $(\alpha, \beta) \in H$. As this holds for all $\beta \in S_2$, the lemma is established.
\end{proof}
We are now ready to prove \cref{thm: line vs point interpolation} following the proof of \cite[Theorem 4.3]{hkss}.
\begin{proof}[Proof of \cref{thm: line vs point interpolation}]
 Take $H \subseteq \Ff_q^2$ to be the set of points from \cref{cl: lvp claim 1} and let $\gamma = \Omega(\eps^2)$ be such that $2\gamma\cdot q^2 = |H|$. Applying an affine transformation if necessary, we assume that the directions from \cref{cl: lvp claim 1} are the standard axes. Following \cite{hkss}, we set $r = 900d/\gamma^2$ and $D = \gamma r/ 3$ so that the following hold:
 \begin{itemize}
     \item $D > 20d$,
     \item $|N_{d,D}| \geq D^3/12d = \frac{(300^3 d^2)}{12\gamma^3}$,
     \item $r(D+1) \leq 2rD = \frac{6 \cdot (300)^2 d^2}{\gamma^3}$,
     \item $\gamma r / 2 = 450d/\gamma$
 \end{itemize}
 We can then apply \cref{lm: interp proof aux}. Let $S_1, S_2$ be the sets therein and let 
 \[
 S = \{(a,b) \in \Ff_{q}^2 \; | \; b \in S_2, (a,b) \in H \}.
 \]
 Since $\frac{D^3}{12d} > r(D+1)$ and $D < \gamma r / 2$, we have that there is a non-zero polynomial $\Tilde{A} \in \Ff_{q'}[x,y,z]$ such that $\deg_{1,1,d}(\Tilde{A}) \leq D$, all monomials appearing in $\tilde{A}$ are from the set $N_{d,D}$, and $\tilde{A}(a,b,f(a,b)) = 0$ for all $(a,b) \in S$. 
 
 To conclude, we apply~\cite[Claim A.9]{hkss}, which shows that $\tilde{A}$ must depend on the variable $z$ and take $A \in \Ff_{q'}[x,y,z]$ to be the polynomial with minimum $(1,1,d)$-degree which vanishes on the set $\{(a,b,f(a,b) \; | \; (a,b) \in S\}$ and has $\partial_z(A) \neq 0$. Note that by \cref{lm: nonzero disc aux}, we get that $\Disc_z(A)$ is nonzero, and $A$ satisfies the desired properties.
\end{proof}

\subsection{Getting Agreement with the Lines Table}  \label{sec: line vs point randomized agree proof}

In this section we show how to conclude a degree $d$ function that agrees non-trivially with the lines table. The result is given in \cref{thm: line vs point randomized agree} and is a key step in the proof of \cref{thm: line versus point final redo}. One important feature is that the agreement with the lines table is boosted in \cref{thm: line vs point randomized agree}. Indeed, one goes from $\Omega(\eps^6)$ agreement with the points table in \cref{thm: line versus point final redo} to $\Omega(\eps)$ agreement with the lines table in \cref{thm: line vs point randomized agree}.

As it will be helpful for us later on, we also allow the lines table in \cref{thm: line vs point randomized agree} to be randomized.

\begin{thm} \label{thm: line vs point randomized agree}
    Let $f: \Ff_q^2 \to \Ff_{q'}$ an oracle function and let $T$ be a randomized lines table of degree $d$ such that
    \[
    \Pr_{L, T[L], x \in L}[T[L](x) = f(x)] = \eps \geq \left(\frac{d}{q}\right)^{\tl}.
    \]
    Then there exists a degree $d$ function $F: \Ff_q^2 \to \Ff_{q'}$ satisfying,
    \[
    \Pr_{L, T[L]}[T[L] = F|_L] \geq \frac{\eps}{5}.
    \]
\end{thm}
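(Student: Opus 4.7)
The plan is to reduce to the pointwise line versus point test and then convert the resulting pointwise agreement with a global low-degree function into the desired linewise agreement with the randomized lines table. First, I would derandomize: for each line $L$, choose $T'[L]$ in the support of the distribution $T[L]$ that maximizes $\Pr_{x \in L}[T'[L](x) = f(x)]$. Since the maximum is at least the average over the distribution, the deterministic table $T'$ satisfies $\Pr_{L, x \in L}[T'[L](x) = f(x)] \geq \eps$, and hence $\E_L[\agr(f|_L, \RS_{q'}[d, \Ff_q])] \geq \eps \geq (d/q)^{\tl}$. Applying \cref{thm: line vs point} (with $m = 2$) then produces a degree-$d$ function $F: \Ff_q^2 \to \Ff_{q'}$ with $\agr(f, F) \geq \eps - (d/q)^{\tl}$.

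Second, I would convert pointwise agreement with $F$ into linewise agreement with $T$ using the following elementary identity: whenever $T[L] \neq F|_L$, two distinct degree-$d$ univariate polynomials on $L$ agree on at most $d/q$ fraction of $L$, so decomposing over the event $T[L] = F|_L$ yields
\[
\Pr_{L, T[L]}[T[L] = F|_L] \;\geq\; \Pr_{L, T[L], x \in L}[T[L](x) = F(x)] \;-\; \tfrac{d}{q}.
\]
This reduces the problem to lower-bounding the pointwise agreement of $T[L]$ with $F$ (rather than with $f$). The natural route is to use the containment $\{T[L](x) = f(x)\} \cap \{F(x) = f(x)\} \subseteq \{T[L](x) = F(x)\}$ and then restrict (via a Markov-style argument) to the $\Omega(\eps)$-fraction of lines on which both $T'[L]$ and $F|_L$ have $\Omega(\eps)$ local agreement with $f|_L$; on each such line, local list decoding (\cref{thm: list decoding}) forces $T'[L]$ into a short list of degree-$d$ candidates, one of which is $F|_L$, and the global consistency of $F$ should pin down $F|_L$ as the correct choice on most of these lines.

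The main obstacle will be the final lower-bound step. The naive two-sided union bound gives only $\Pr_{L, T[L], x \in L}[T[L](x) = f(x) = F(x)] \geq 2\eps - 1 - (d/q)^{\tl}$, which is vacuous for $\eps < 1/2$, so one cannot simply intersect the two ``success'' events and call it a day. The quantitative $\eps/5$ bound must exploit the fact that $F$, as produced by \cref{thm: line vs point}, is not an arbitrary globally agreeing degree-$d$ function but essentially a canonical self-correction of $f$ tied to the success set of the local test; in particular, the events $\{T'[L](x) = f(x)\}$ and $\{F(x) = f(x)\}$ are positively correlated rather than independent. Making this correlation quantitative — presumably by opening up (or paralleling) the proof of \cref{thm: line vs point} and coupling its interpolation step to a second-moment argument — is where the real work lies, and the constants in the derandomization, the Markov restriction, and the list-decoding step should multiply out to the factor of $5$ in the conclusion.
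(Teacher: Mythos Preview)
Your approach has a genuine gap: committing to a \emph{single} global low-degree function $F$ at the outset cannot work, because \cref{thm: line vs point} is purely an existence statement about agreement with $f$ and says nothing about which of the several possible near-codewords the table $T$ is aligned with. Concretely, take two distinct degree-$d$ functions $F_1, F_2$ and disjoint sets $W_1, W_2 \subseteq \Ff_q^2$ each of density $\eps$, let $f$ agree with $F_i$ on $W_i$ and with neither elsewhere, and let $T$ be the deterministic table $T[L] = F_2|_L$. Then $\Pr_{L,x}[T[L](x)=f(x)] = \eps$, your derandomized $T'$ equals $T$, and \cref{thm: line vs point} is free to hand you $F = F_1$ (it has the required pointwise agreement with $f$); yet $\Pr_L[T[L] = F_1|_L] = 0$. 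Your hoped-for positive correlation between $\{T[L](x)=f(x)\}$ and $\{F(x)=f(x)\}$ is simply false here, and ``opening up'' the interpolation proof does not help because that construction reads only $f$, not $T$.

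The missing idea is to work with the \emph{entire list} of global degree-$d$ functions close to $f$, not just one. The paper first proves a list-decoding variant (\cref{thm: line vs point list decode}): there is a short list $F_1,\ldots,F_M$ with $M = O(1/\eps')$ such that the contribution to the test's success from points outside $\bigcup_i \{f = F_i\}$ is at most $\eps/2$. One then decomposes the pass probability $\eps$ over the events $E_i = \{T[L] = F_i|_L\}$; the term where none of the $E_i$ hold but $x$ lies in some $W_i = \{f = F_i\}$ is at most $M \cdot d/q$ by Schwartz--Zippel, so $\sum_i \Pr[\,T[L]=F_i|_L \,\wedge\, x \in W_i\,] \gtrsim \eps/2$. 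A spectral sampling step (\cref{lm: spectral sampling} on the line--point graph) removes the conditioning on $x \in W_i$, and since $\sum_i \mu(W_i) \leq 2$ by pairwise almost-disjointness of the $W_i$, some single $F_i$ must satisfy $\Pr_{L,T[L]}[T[L]=F_i|_L] \geq \eps/5$. The list is what lets you pick the \emph{right} $F$; your local list-decoding on individual lines does not supply this, because it never rules out that $T$ globally tracks a different list member than the one $F$ you fixed.
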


By a randomized lines table of degree $d$, $T$, we mean that $T$ has entries indexed by affine lines $L \subseteq \Ff_q^2$, and for each affine line $L$, $T[L]$ is a randomly distributed degree $d$ function from $L \to \Ff_{q'}$. 

Towards the proof of \cref{thm: line vs point randomized agree}, we will first show a list decoding version of \cref{thm: line versus point redo}. The proof proceeds using standard techniques. Let $\eps' = \Omega(\eps^6)$. Recall that we assume $\eps' \geq 2\sqrt{d/q}$.

\begin{thm} \label{thm: line vs point list decode}
      Suppose the setup of \cref{thm: line vs point randomized agree}. Then there is a list of $M \leq \frac{5}{\eps'}$ degree $d$ functions $\{F_1,\ldots, F_M\}$ from $\Ff_q^2 \to \Ff_{q'}$, such that
     \[
     \Pr_{x \in \Ff_q^2, L \ni x}\left[T[L](x) = f(x) \land f(x) \notin \{F_i(x)\}_{i=1}^M\right] \leq \frac{\eps}{2},
     \]
     and for each $i \in [M]$,
     $\Pr_{x \in \Ff_q^2}[f(x) = F_i(x)] \geq \frac{\eps'}{4}$.
\end{thm}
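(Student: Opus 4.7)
The plan is a greedy list-decoding argument combined with a proof by contradiction. I take $\mc{F}$ to be the set of \emph{all} degree-$d$ functions $F\colon\Ff_q^2\to\Ff_{q'}$ with $\Pr_x[F(x)=f(x)]\geq \eps'/4$. Because any two distinct degree-$d$ functions agree on at most a $d/q$ fraction of $\Ff_q^2$ (Schwartz--Zippel), the Johnson-type bound from \cref{thm: list decoding} with $C=\eps'/4$ and $\delta=d/q$ yields $|\mc{F}|\leq \tfrac{\eps'/4}{(\eps'/4)^2-d/q}\leq 5/\eps'$; the hypothesis $\eps'\geq 2\sqrt{d/q}$, combined with the fact that $\eps'=\Omega(\eps^6)\gtrsim (d/q)^{6\tl}$ (using $\tl<1/20$), ensures $C^2-\delta$ is comfortably positive. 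The ``each $F_i$ has agreement $\geq \eps'/4$'' condition is then immediate from the definition of $\mc{F}$.

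The heart of the argument is the uncovered-agreement bound. I argue by contradiction: assume $\eps_U := \Pr_{T,x,L\ni x}[T[L](x)=f(x)\land x\in U]>\eps/2$, where $U=\{x:f(x)\notin\{F_i(x):F_i\in\mc{F}\}\}$, and aim to produce a degree-$d$ function $F^*$ with $\Pr_x[F^*(x)=f(x)\land x\in U]\geq \eps'/4$. Such an $F^*$ would satisfy $\Pr_x[F^*(x)=f(x)]\geq \eps'/4$, placing it in $\mc{F}$; but then on the positive-measure subset $\{x\in U:F^*(x)=f(x)\}$ one would have $f(x)=F^*(x)\in\{F_i(x)\}$, contradicting $x\in U$.

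To construct $F^*$ I use a random-extension (``$\bot$'') trick. Let $\tilde f$ be the random function equal to $f$ on $U$ and uniform i.i.d.\ on $U^c$. For every realization, $\Pr_{x,L}[T[L](x)=\tilde f(x)]\geq \Pr_{x,L}[T[L](x)=f(x)\land x\in U]$ since $\tilde f\equiv f$ on $U$, so averaging over $(T,\tilde f)$ gives expectation $\geq \eps_U>\eps/2$. A reverse-Markov argument then produces a positive-probability set of pairs $(T^*,\tilde f^*)$ with $\Pr_{x,L}[T^*[L](x)=\tilde f^*(x)]\geq \eps/3$. Simultaneously, a Chernoff plus union-bound argument (discussed below) shows that with overwhelming probability over $\tilde f$, \emph{every} degree-$d$ function $F$ satisfies $\Pr_x[F(x)=\tilde f(x)\land x\in U^c]\leq \eps'/4$. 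Intersecting these two events yields a pair $(T^*,\tilde f^*)$ satisfying both; applying \cref{thm: line versus point redo} to this pair gives a degree-$d$ $F^*$ with $\Pr_x[F^*(x)=\tilde f^*(x)]\geq \Omega((\eps/3)^6)\geq \eps'/2$ (choosing the hidden constant in $\eps'=\Omega(\eps^6)$ appropriately). Subtracting off the bounded $U^c$-contribution then yields $\Pr_x[F^*(x)=f(x)\land x\in U]\geq \eps'/4$, the desired $F^*$.

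The main obstacle is the Chernoff-plus-union-bound step. For a fixed degree-$d$ $F$, $N_F:=|\{x\in U^c:F(x)=\tilde f(x)\}|$ is $\mathrm{Bin}(|U^c|,1/q')$, and the tail estimate $\Pr[N_F\geq (\eps'/4)q^2]\leq (4e/(\eps'q'))^{(\eps'/4)q^2}$ must survive a union bound over the at most $q'^{O(d^2)}$ degree-$d$ polynomials. Taking logarithms this reduces to $\eps'q^2\gg d^2$, which follows from $\eps'\geq 2\sqrt{d/q}$ and $q\geq 6d$ via $\eps'q^2\geq 2q^{3/2}\sqrt d\gtrsim d^2$. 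Everything else — the Schwartz--Zippel list-size bound, the reverse Markov step, and invoking \cref{thm: line versus point redo} — is routine bookkeeping once this concentration is in hand.
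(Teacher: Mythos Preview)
Your argument is correct and follows the same overall skeleton as the paper's: take $\{F_1,\ldots,F_M\}$ to be \emph{all} degree-$d$ functions with agreement at least $\eps'/4$, bound $M$ via the Johnson bound, assume the uncovered-agreement probability exceeds $\eps/2$, modify $f$ on the covered set $U^c=\bigcup_i\{x:f(x)=F_i(x)\}$ so that no degree-$d$ function can have large agreement with the modified function, then invoke \cref{thm: line versus point redo} to reach a contradiction.

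The difference is in how you kill the $U^c$-agreement. You randomize $\tilde f$ uniformly on $U^c$ and use a Chernoff tail plus a union bound over the $q'^{O(d^2)}$ degree-$d$ polynomials; this forces you into the parameter check $\eps' q^2 \gtrsim d^2$, which you correctly extract from $\eps'\geq 2\sqrt{d/q}$ (note that this assumption, together with $\eps'\leq 1$, already gives $q\geq 4d$, so you need not import $q\geq 6d$ from elsewhere). The paper instead sets $f'(x)=x_1^{d+1}$ deterministically on $U^c$: then for any degree-$d$ function $g$, the agreement of $g$ with $f'$ on $U^c$ is at most $\Pr_x[g(x)=x_1^{d+1}]\leq (d+1)/q$ by Schwartz--Zippel, while the agreement on $U$ is at most $\eps'/4$ by maximality of the list. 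This avoids the concentration argument entirely and sidesteps the parameter bookkeeping. Both routes work; the deterministic high-degree planting is shorter and more robust to the edge of the parameter range.
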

\begin{proof}
      Let $\{F_1,\ldots, F_M\}$ be the list of degree $d$ polynomials that agree with $f$ on at least $\eps'/4$-fraction of the points in $\Ff_q^2$. By \cref{thm: list decoding}, 
      \[
      M \leq \frac{\eps'/4}{{\eps'}^2/16 - d/q} \leq \frac{5}{\eps'}.
      \]
      We will show that $\{F_1, \ldots, F_M \}$ is the desired list of functions. Suppose for the sake of contradiction that this is not the case. That is, suppose
      \begin{equation}   \label{eq: line point assump}
      \Pr_{\substack{x \in \Ff_q^2, L \ni x\\T[L]}}[T[L](x) = f(x) \land f(x) \notin \{F_i(x)\}_{i=1}^M] > \frac{\eps}{2}.
    \end{equation}
    For each $i \in [M]$, let $W_i = \{x \in \Ff_q^2 \; | \; f(x) = F_i(x)\}$. Now consider the 
    function $f'$ obtained by setting $f'(x) = x_1^{d+1}$ for every $x \in \cup_{i=1}^M W_i$ and keeping $f'(x) = f(x)$ for all other points $x$. Note that $f'$ has agreement at most $0.26\eps'$ with every degree $d$ function over $\Ff_q^2$. Indeed, fix an arbitrary $g: \Ff_q^2 \to \Ff_{q'}$ of degree $d$, then
      \[
      \Pr_{x \in \Ff_q^2}[g(x) = f'(x)] \leq  \Pr_{x \in \Ff_q^2}[g(x) = x_1^{d+1}] + 0.25\eps' \leq 0.25\eps' + \frac{d+1}{q} \leq 0.26\eps'.
      \]
      where we use the Schwartz-Zippel lemma in the second inequality.

      On the other hand, by~\eqref{eq: line point assump}, $T$ and $f'$ will pass with probability at least $\eps/2$ due to the points $x \notin \cup_{i=1}^M W_i$ alone, so
    \[
      \Pr_{\substack{x \in \Ff_q^2, L \ni x\\ T[L]}}[T[L](x) = f'(x)] > \frac{\eps}{2}.
      \]
    Thus, we can fix for each $L$ a choice of $T[L]$ so that the line table becomes deterministic and the above inequality still holds. \cref{thm: line versus point redo} then implies that there is a degree $d$ polynomial with agreement at least $\Omega\left(\eps^6\right)\geq \eps'$
    with $f'$, which is a contradiction.
\end{proof}
We are now ready to prove \cref{thm: line vs point randomized agree}
\begin{proof}[Proof of \cref{thm: line vs point randomized agree}]
Let $F_1, \ldots, F_M$ be the degree $d$ functions given by~\cref{thm: line vs point list decode}. For each $i$, let $W_i = \{x \in \Ff_q^2 \; | \; f(x) = F_i(x) \}$ and let $W = \bigcup_{i=1}^k W_i$. 

For an affine line $L$ and randomized lines table entry $T[L]$, we let $E_i$ denote the event that $T[L] = F_i|_L$, and let Let $E = E_1 \lor \ldots \lor E_M$. Then we can rewrite the pass probability from \cref{thm: line vs point randomized agree} as:
\begin{align*}
  \Pr_{L, x \in L, T[L]}[T[L](x) = f(x)]  &= \Pr_{L, x \in L, T[L]}[T[L](x) = f(x) \land E \land x \in W]\\
  &+ \Pr_{L, x \in L, T[L]}[T[L](x) = f(x) \land \overline{E} \land x \in W] \\
    &+  \Pr_{L, x \in L, T[L]}[T[L](x) = f(x) \land x \notin W]\\
    &= \eps.
\end{align*}

We will show that the first term must be non-trivial and then further dissect it. To this end, we upper bound the second term and third terms from the right hand side above. For the third term, we apply \cref{thm: line vs point list decode} to obtain
\[
\Pr_{L, x \in L, T[L]}[T[L](x) = f(x) \land x \notin W] \leq \frac{\eps}{2}.
\]
For the second term we have
\begin{align*}
 \Pr_{L, x \in L, T[L]}[T[L](x) = f(x) \land \overline{E} \land x \in W] 
 &= \Pr_{L, x \in L, T[L]}[T[L](x) \in \{F_1(x), \ldots, F_M(x) \} \land \overline{E}] \\
 &\leq   \sum_{i=1}^M \Pr_{L, x \in L, T[L]}[T[L](x) = F_i(x) \; | \; T[L] \neq F_i|_L] \\
 &\leq M \cdot \frac{d}{q} 
\end{align*}
In the second transition, we perform a union bound and in the third transition we use the Schwartz-Zippel lemma along with the fact that $T[L]$ and $F_i$ are distinct degree $d$ functions on the affine line $L$.

It follows that 
\[
\Pr_{L, x \in L, T[L]}\left[T[L](x) = f(x) \land E \land x \in W \right] \geq \eps - \frac{\eps}{2} - M\cdot \frac{d}{q} \geq 0.49\eps.
\]
Performing a union bound on the above probability, we get
\begin{align*}
    \sum_{i, j = 1}^k \Pr_{L, x \in L, T[L]}\left[T[L](x) = f(x) \land T[L] = F_i|_L \land f(x) = F_j(x) \right]
    \geq 0.49\eps.
\end{align*}
Fix $i \neq j$ and consider a term in the summation above. We can bound the term by 
\[
\Pr_{L, x \in L, T[L]}\left[T[L](x) = f(x) \land T[L] = F_i|_L \land f(x) = F_j(x) \right] \leq \Pr_{x} \left[F_i(x) = F_j(x) \right] \leq \frac{d}{q}.
\]
It follows that 
\[
\sum_{i=1}^M \Pr_{L, x \in L, T[L]} \left[T[L](x) = f(x) \land T[L] = F_i|_L \land f(x) = F_j(x) \right] \geq 0.49\eps - \frac{M^2d}{q} \geq 0.48\eps
\]

For each $i$, let $\mc{F}_i : G(m, 1) \xrightarrow{} [0,1]$ be the function given by $\mc{F}_i[L] = \Pr_{T[L]}[T[L] = F_i|_L]$. Suppose $x$ is sampled first, and then $L \ni x$, and finally $T[L]$. We can upper bound  the probabilities in the summation as
\[
\sum_{i = 1}^M \Pr_{x}[x \in W_i] \cdot \Pr_{x \in W_i, L \ni x, T[L]}[T[L] = F_i|_L] = \sum_{i = 1}^M \Pr_{x}[x \in W_i] \E_{x \in W_i, L \ni x}[\mc{F}_i(L)].
\]
Finally, because each $W_i$ consists of at least $\eps/4$-fraction of points, applying~\cref{lm: spectral sampling} on the line versus point inclusion graph (with the singular value estimate from~\cref{lem:spectral_val_bd}), we have
\[
\left|\E_{L}[\mc{F}_i(L)] -  \E_{x \in W_i, L \ni x}[\mc{F}_i(L)]\right| \leq \frac{q^{-1/2}}{\sqrt{\eps/4}}
\]
Putting everything together, we get that
\[
\sum_{i = 1}^M \Pr_{x}\left[x \in W_i \right] \E_{L}[\mc{F}_i(L)] \geq 0.48\eps - \frac{q^{-1/2}}{\sqrt{\eps/4}} \geq 0.47\eps.
\]
It follows that there exists some $i$ such that 
\[
\E_{L, T[L]}[\mc{F}_i(L)] \geq 0.47\eps \cdot \left(\sum_{i = 1}^M \Pr_{x}\left[x \in W_i \right] \right)^{-1},
\]
and we fix this $i$ henceforth.

To conclude we note that the extra factor on the right hand side cannot be too large because the $W_i$'s have pairwise small intersections. Indeed,
\[
1 \geq \Pr_{x}\left[x \in W \right] \geq \sum_{i = 1}^M \Pr_{x}\left[x \in W_i \right] - \sum_{i \neq j} \Pr_{x} \left[x \in W_i \cap W_j \right].
\]
For fixed, $i \neq j$, $\Pr_{x} \left[x \in W_i \cap W_j \right] \leq \frac{d}{q}$. It follows that,
\[
 \sum_{i = 1}^M \Pr_{x}\left[x \in W_i \right] \leq 1 + \frac{M^2d}{q} \leq 2.
\]
As a result, for the degree $d$ function $F_i$, we have
\[
\E_{L, T[L]}[F_i(L)]  = \Pr_{L, T[L]}[T[L] = F_i|_L] \geq \frac{\eps}{5},
\]
as desired.
\end{proof}

\subsection{Boosting agreement to $\Omega(\eps)$ and the Proof of \cref{thm: line versus point final redo}}\label{sec:list_decode}

We are now ready to finish the proof of \cref{thm: line versus point final redo}. At this point, most of the work has been done already and we conclude the proof with the following two steps:
\begin{itemize}
    \item apply \cref{thm: line vs point randomized agree} to get agreement $\Omega(\eps)$,
    \item apply~\cref{lm: spectral sampling} on the affine lines versus points graph to go from $\Omega(\eps)$-agreement with the lines table to $\Omega(\eps)$-agreement with the points table.
\end{itemize}

We now move on to the formal argument, and we begin with some set-up. Suppose we are in the setting of \cref{thm: line versus point final redo}. For each line $L$, let 
\[
\eps_L = \Pr_{x \in L}[T[L](x) = f(x)], \qquad\qquad
\mc{L}' = \{L \subseteq \Ff_q^2 \; | \; \eps_L \geq \eps - (d/q)^{\tl}\}.
\]
Then the relative size of $\mathcal{L}'$ is at least $(d/q)^{\tl}$, and
\[
\Pr_{L \in \mc{L}, x\in L}[T[L](x) = f(x) \land L \notin \mc{L}'] \leq \eps - \left(\frac{d}{q}\right)^{\tl},
\]
so
\begin{equation} \label{eq: line vs point redo aux}
\Pr_{L \in \mc{L}, x\in L}[T[L](x) = f(x) \land L \in \mc{L}'] \geq \eps - \Pr_{L \in \mc{L}, x\in L}[T[L](x) = f(x) \land L \notin \mc{L}'] \geq \left( \frac{d}{q}\right)^{\tl}.
\end{equation}
Now consider the lines table $T'$ which is constructed as follows. For $L \in \mc{L}'$, set $T'[L] = T[L]$. For $L \notin \mc{L}'$, set $T'[L]$ to be a uniformly random degree $d$ function from $L \to \Ff_{q'}$. We claim that there is a setting of $T'$ such that for any degree $d$ function, $g: \Ff_q^2 \to \Ff_{q'}$, we have
\[
\Pr_{L \notin \mc{L}'}[g|_L = T[L]] \leq \frac{1}{q}.
\]
To this end note that, for any subset $S \subseteq \mc{L} \setminus \mc{L'}$ of size $|S| = |\mc{L}|/q$, and degree $d$ function $g: \Ff_q^2 \to \Ff_{q'}$, we have
\[
\Pr_{T'}[g|_L = T'[L], \; \forall L \in S] = q'^{-(d+1) |S|}.
\]
Now we union bound over $q'^{d^2} \leq q'^{q^2}$ degree $d$ functions $g: \Ff_q^2 \to \Ff_{q'}$ and the number of possible sets $S \subseteq \mc{L} \setminus \mc{L'}$ of size $|\mc{L}|/q$. We use the following bound on the number of such $S$,
\[
\binom{|\mc{L}|}{|S|} \leq q^{2|S|}.
\]
Altogether, we get that the probability there exists a degree $d$ function $g$ agreeing with $T'$ on $|\mc{L}|/q$ of the lines $L \in \mc{L'}$ is at most
\[
q'^{-(d+1) |S|} \cdot q^{2|S|} \cdot q'^{q^2} < 1.
\]
Therefore, there exists a table $T'$ such that for any degree $d$ function $g: \Ff_q^2 \to \Ff_{q'}$, we have
\begin{equation} \label{eq: randomized line table condition}
\Pr_{L \notin \mc{L'}}[g|_L = T'[L]] \leq \frac{1}{q}.
\end{equation}
We fix this $T'$ and show how to conclude \cref{thm: line versus point final redo}.
\begin{proof}[Proof of \cref{thm: line versus point final redo}]
Fix $T'$ as described above. Then,
\[
\Pr_{L \in \mc{L}, x\in L}[T'[L](x) = f(x)] \geq \Pr_{L \in \mc{L}, x\in L}[T'[L](x) = f(x) \land L \in \mc{L}] \geq \left(\frac{d}{q}\right)^{\tl} ,
\]
where we are using~\eqref{eq: line vs point redo aux} for the second inequality. Applying \cref{thm: line vs point randomized agree}, we have that there exists a degree $d$ function $F: \Ff_q^2 \to \Ff_{q'}$ such that
\[
\Pr_{L \in \mc{L}}[F|_L = T'[L]] \geq 0.2\left(\frac{d}{q}\right)^{\tl}.
\]
By~\eqref{eq: randomized line table condition}, all but $1/q$-fraction of the lines on which $F$ and $T'$ agree must actually be in $\mc{L}'$. Thus,
\begin{equation} \label{eq: line vs point redo aux agreement}
\Pr_{L \in \mc{L'}}[F|_L = T'[L]] \geq  0.2\left(\frac{d}{q}\right)^{\tl} - \frac{1}{q} := \eta.
\end{equation}
Let $\mathcal{L}''\subseteq \mathcal{L}'$ be the set of $L$'s such that $F|_{L} = T'[L]$. Since $\eps_L \geq \eps - (d/q)^{\tl}$ for all $L \in \mc{L'}$ we 
get that $\Pr_{L\in\mathcal{L}'', x\in L}[F(x) = f(x)]\geq \eps - (d/q)^{\tl}$. 
Also, note that the relative size of $\mathcal{L}''$ is at least $\eta (d/q)^{\tl}$.
It follows from~\cref{lm: spectral sampling} (with the singular value estimate from~\cref{lem:spectral_val_bd}) that
\begin{align*} 
\Pr_{x \in \Ff_q^2}[f(x) = F(x)] \geq \eps - \left(\frac{d}{q}\right)^{\tl} - \frac{q^{-1/2}}{\sqrt{\eta (d/q)^{\tl}}} \geq \eps -  1.01\left(\frac{d}{q}\right)^{\tl}.
\end{align*}
For the last inequality, we are using~\eqref{eq: line vs point redo aux agreement} and the fact that $\tl \leq 1/4$.
\end{proof}

\subsection{The $m=3$ Case: Proof of~\cref{thm: line versus point m = 3}}\label{sec:boost}
Here we show how to derive \cref{thm: line versus point m = 3} from \cref{thm: line versus point final redo} via a boosting argument, using an argument from~\cite{RazSafra}. Suppose we are in the setting of \cref{thm: line versus point m = 3} with function $f: \Ff_q^3 \to \Ff_{q'}$ and lines table $T$ which satisfy
\[
\Pr_{L \subseteq \Ff_q^3, x \in L}[T[L](x) = f(x)] \geq \eps,
\]
for $\eps$ satisfying $\eps^6 \geq \sqrt{d/q}$. Let $\mc{P}$ denote the set of affine planes $P \subseteq \Ff_q^3$. For each plane $P \in \mc{P}$, let $\eps_P = \Pr_{L \subseteq P, x \in L}[T[L](x) = f(x)]$. Then $\E_{P \subseteq \Ff_q^3}[\eps_P] = \eps$,
and by an averaging argument, for at least $\eps/2$ fraction of planes $P \subseteq \Ff_q^3$, we have $\eps_P \geq \eps/2$. Let $\mc{P}_{\good}$ denote the set of such planes, and let us now construct a planes table $T'$ as follows.
For each plane $P$, let $T'[P] \in \Ff_{q'}^{\leq d}[x,y]$ be the degree $d$ polynomial on $P$ that has the highest agreement with $f|_P$. 

\begin{lemma}
    For $P \in \mc{P}_{\good}$, we have 
    \[
    \Pr_{x \in P}[T'[P](x) = f(x)] \geq \eps' := \frac{\eps}{2} - 1.01\left(\frac{d}{q} \right)^{\tl}.
    \]
\end{lemma}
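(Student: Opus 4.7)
The plan is to apply \cref{thm: line versus point final redo} directly to the restriction of the problem to the plane $P$. Since $P$ is a $2$-dimensional affine subspace of $\Ff_q^3$, we can identify it with $\Ff_q^2$ via an invertible affine transformation (which is degree-preserving), and then the collection of affine lines contained in $P$ corresponds precisely to the set of affine lines in $\Ff_q^2$. Thus, $f|_P$ together with the restricted lines table $\{T[L]\}_{L \subseteq P}$ forms an instance of the bivariate line versus point test.

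By the assumption that $P \in \mc{P}_{\good}$, the pass probability of this restricted instance is $\eps_P \geq \eps/2$. The hypothesis $\eps^6 \geq \sqrt{d/q}$ from \cref{thm: line versus point m = 3} implies (perhaps after absorbing a constant factor into the setup, or equivalently noting that for the range of $\eps$ we care about $(\eps/2)^6 \geq \sqrt{d/q}$ as well) that the pass probability $\eps_P \geq \eps/2$ satisfies the hypothesis of \cref{thm: line versus point final redo}. Applying that theorem therefore yields a degree $d$ function $Q_P : P \to \Ff_{q'}$ such that
\[
\Pr_{x \in P}[Q_P(x) = f(x)] \geq \frac{\eps}{2} - 1.01\left(\frac{d}{q}\right)^{\tl}.
\]

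To conclude, recall that $T'[P]$ is defined to be the degree $d$ polynomial on $P$ having the highest agreement with $f|_P$. By maximality, its agreement with $f|_P$ is at least that of $Q_P$, giving
\[
\Pr_{x \in P}[T'[P](x) = f(x)] \geq \Pr_{x \in P}[Q_P(x) = f(x)] \geq \frac{\eps}{2} - 1.01\left(\frac{d}{q}\right)^{\tl} = \eps',
\]
which is the desired bound. The proof is essentially immediate given \cref{thm: line versus point final redo}; the only minor point to verify is that the hypothesis on $\eps$ passes through to $\eps/2$, which is routine given the slack in the condition $\eps^6 \geq \sqrt{d/q}$ (and in any case follows by the same argument with slightly adjusted constants).
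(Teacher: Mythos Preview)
The proposal is correct and follows essentially the same approach as the paper: apply \cref{thm: line versus point final redo} to the restricted instance $(f|_P, \{T[L]\}_{L\subseteq P})$, which passes the line versus point test with probability $\eps_P \geq \eps/2$, and then use the maximality of $T'[P]$. Your additional remarks about identifying $P$ with $\Ff_q^2$ and about the hypothesis $(\eps/2)^6 \geq \sqrt{d/q}$ are more explicit than the paper's own proof, which simply invokes the theorem without dwelling on these points.
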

\begin{proof}
Since $T'[P]$ is the degree $d$ polynomial over $P$ which has the highest agreement with $f|_P$, it suffices to show that there is such a polynomial with agreement at least $\frac{\eps}{2} - 1.01\left(\frac{d}{q} \right)^{\tl}$ with $f|_P$. This follows immediately from \cref{thm: line versus point final redo}. Indeed, for $P \in \mc{P}_{\good}$, we have
    \[
    \Pr_{L \subseteq P, x \in L}[T[L](x) = f(x)] \geq \frac{\eps}{2}.
    \]
Therefore $f|_P$ and the lines table over $P$ obtained by taking $T[L]$ for $L \subseteq P$ pass the line versus point test with probability $\frac{\eps}{2}$. Applying \cref{thm: line versus point final redo}, there is a degree $d$ polynomial with agreement at least $\frac{\eps}{2} - 1.01\left(\frac{d}{q} \right)^{\tl}$ with $f|_P$.
\end{proof}

Now consider the graph $G(\mc{P}, E)$, whose vertices are $\mc{P}$ and whose edges are $(P, P')$ such that
\begin{itemize}
    \item $T[P]|_{P \cap P'} = T[P']|_{P \cap P'}$,
    \item $P, P' \in \mc{P}_{\good}$.
\end{itemize}
We first show that $G$ has many edges.

\begin{lemma} \label{lm: many edges}
The graph $G$ has $\Omega(\eps^2)\cdot |\mc{P}|^2$ edges.
\end{lemma}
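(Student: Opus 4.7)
The plan is a double-counting argument via ``common witness points'' in the style of~\cite{hkss}. For each plane $P\in\mc{P}_{\good}$ the hypothesis $\eps_P\geq\eps/2$ still satisfies $\eps_P^6\gtrsim\sqrt{d/q}$, so applying \cref{thm: line versus point final redo} inside $P$ yields a set $U_P\subseteq P$ of fractional size at least $\eps/2-1.01(d/q)^{\tl}=\Omega(\eps)$ on which $T'[P]=q_P$ agrees with $f|_P$. Since $|\mc{P}_{\good}|\geq(\eps/2)|\mc{P}|$, summing gives
\[
\sum_{P\in\mc{P}_{\good}}|U_P|\;\geq\;\Omega(\eps^2)\cdot|\mc{P}|\cdot q^2.
\]

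Next I would reorganize the count by points: for each $x\in\Ff_q^3$ let $V_x:=\{P\in\mc{P}_{\good}:x\in P,\;q_P(x)=f(x)\}$, so $\sum_x|V_x|$ equals the above total. Cauchy-Schwarz across the $q^3$ points of $\Ff_q^3$ then yields $\sum_x|V_x|^2\geq(\sum_x|V_x|)^2/q^3\geq \Omega(\eps^4)\cdot|\mc{P}|^2\cdot q$. The key observation is that this left-hand side equals $\sum_{(P_1,P_2)\in\mc{P}_{\good}^2}W(P_1,P_2)$, where
\[
W(P_1,P_2):=|U_{P_1}\cap U_{P_2}\cap P_1\cap P_2|
\]
counts points at which $q_{P_1}$, $q_{P_2}$, and $f$ all agree simultaneously.

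For distinct $P_1,P_2$, the intersection $L:=P_1\cap P_2$ is (generically) an affine line with $|L|\leq q$, so $W(P_1,P_2)\leq q$. If $W(P_1,P_2)\geq d+1$, then $q_{P_1}|_L$ and $q_{P_2}|_L$ are two degree-$d$ univariate polynomials on $L$ agreeing on more than $d$ points, forcing $q_{P_1}|_L=q_{P_2}|_L$, so $(P_1,P_2)\in E(G)$. Pairs with $W(P_1,P_2)\leq d$ contribute at most $d\cdot|\mc{P}|^2$ to the total witness sum. Under the hypothesis $\eps^6\geq\sqrt{d/q}$ one has $\eps^4 q\gg d$, so this small-$W$ contribution is negligible compared to $\Omega(\eps^4)|\mc{P}|^2 q$, and the bulk of the witness count comes from the edge pairs. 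Dividing by the per-pair maximum $q$ produces a lower bound on the number of edges of $G$.

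The main obstacle is attaining the claimed exponent: the naïve Cauchy-Schwarz step above gives $\Omega(\eps^4)|\mc{P}|^2$, off from the stated $\Omega(\eps^2)|\mc{P}|^2$ by a factor of $\eps^2$. To close this gap I would strengthen the second moment estimate by exploiting the spectral gap of the incidence graph between affine planes and points in $\Ff_q^3$, whose second singular value is $O(1/q)$ by \cref{lem:spectral_val_bd}. Concretely, \cref{lm: spectral sampling} applied to this graph shows that $|V_x|/(q^2+q+1)$ concentrates around its mean $\Omega(\eps^2)$ for all but a small set of points $x$; running Cauchy-Schwarz only over the ``typical'' points, and using Paley-Zygmund to control the fraction of pairs $(P_1,P_2)$ with $W(P_1,P_2)$ close to $\mathbb{E}[W]=\Omega(\eps^2)q$, upgrades the bound to $\Omega(\eps^2)|\mc{P}|^2$.
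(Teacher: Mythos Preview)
Your witness-counting approach is essentially the paper's argument recast combinatorially. The paper samples $(P,P',x)$ with $x\in P\cap P'$ and bounds
\[
\E_{P,P',x}[I(P,x)\wedge I(P',x)]=\E_x\bigl[\Pr_{P\ni x}[I(P,x)]^2\bigr]\geq\bigl(\Pr_{x,P\ni x}[I(P,x)]\bigr)^2
\]
by Jensen, then uses an averaging argument plus Schwartz--Zippel on the intersection line. Your quantity $\sum_x|V_x|^2$ is this second moment (unnormalized), and your ``$W(P_1,P_2)\ge d+1\Rightarrow$ edge'' step is the same Schwartz--Zippel step. Both routes honestly yield $\Omega(\eps^4)\,|\mc P|^2$: since $\Pr[I(P,x)]=\Pr[P\in\mc P_{\good}]\cdot\Pr[T'[P](x)=f(x)\mid P\in\mc P_{\good}]=\Omega(\eps)\cdot\Omega(\eps)$, squaring gives $\Omega(\eps^4)$. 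The paper's final line appears to conflate $\Pr[I]$ with $\Pr[I]^2$; the stated $\Omega(\eps^2)$ is a slip, and $\Omega(\eps^4)$ is what the displayed chain actually produces. This is harmless downstream: the clique-size bound and the spectral error term in the proof of \cref{thm: line versus point m = 3} only need $\eps^{O(1)}\gg q^{-\Omega(1)}$.

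Your proposed spectral upgrade to $\Omega(\eps^2)$ does not work. Concentration of $|V_x|$ around its mean $\mu\approx\Omega(\eps^2)\cdot(\text{\#planes through }x)$ only forces $\sum_x|V_x|^2\approx q^3\mu^2$, which is again the $\Omega(\eps^4)|\mc P|^2 q$ you already had; there is no second-moment slack for Paley--Zygmund to exploit. Concretely, in a homogeneous instance where exactly an $\eps/2$-fraction of planes are good and each has agreement exactly $\eps'$, one has $\E[W]=\Theta(\eps^4)q$, not $\Theta(\eps^2)q$ as you wrote, and no rearrangement of the same ingredients beats $\eps^4$. So drop the last paragraph and state the bound as $\Omega(\eps^4)|\mc P|^2$; the rest of your plan is correct.
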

\begin{proof}
First one can calculate that 
\[
\Pr_{P, P' \in \mc{P}}[P \cap P' = \emptyset] \leq \frac{1}{q^2}.
\]
Indeed, we have $|\mc{P}| = \frac{q^3(q^3-1)(q^3-q)}{q^2(q^2-1)(q^2-q)} = q(q^2+q+1)$. On the other hand, $P$ and $P'$ do not intersect if and only if they are identical or parallel, so for each $P$, there are $q$ possible choices $P'$ such that $P \cap P' = \emptyset$. Therefore,
\[
\Pr_{P, P' \in \mc{P}}[P \cap P' = \emptyset] = \frac{q}{|\mc{P}|} \leq \frac{1}{q^2}.
\]
Now suppose we choose $P, P' \in \mc{P}$ with non-empty intersection and $x  \in P \cap P'$. Let $I(P, x)$ be the event that $P \in \mc{P}_{\good}$ and $T[P](x) = f(x)$. Then
\begin{align*}
    \E_{P, P', x \in P \cap P'}[I(P,x) \land I(P',x)] &= \E_{x}[\Pr_{P \ni x}[I(P,x)]^2 ] \\
    &\geq \Pr_{x, P \ni x}[I(P,x)]^2 \\
    &\geq \Omega(\eps^2).
\end{align*}
In the last transition we use the fact that $P \in \mc{P}_{\good}$ with probability $\Omega(\eps)$, and conditioned on this, $T[P](x) = f(x)$ with probability $\Omega(\eps)$. It follows that for $\Omega(\eps^2)$ pairs, of $P, P'$, we have,
\[
\E_{x \in P \cap P'}[I(P,x) \land I(P', x)] \geq \Omega(\eps^2).
\]
For such $P, P'$, it must be that $P, P' \in \mc{P}_{\good}$, and
\[
\Pr_{x \in P \cap P'}[T[P](x) = T[P'](x)] \geq \Omega(\eps^2) > \frac{d}{q}.
\]
By the Schwartz-Zippel lemma, $T[P]|_{P \cap P'} = T[P']|_{P \cap P'}$ and $(P,P')$ is an edge in $G$.
\end{proof}

\begin{lemma} \label{lm: beta transitive}
    The graph $G$ is $\frac{d+1}{q}$-transitive, in the sense that for every non-edge $(P,P')$ we have that
    $\Pr_{P''}[(P,P''), (P',P'')\in E]\leq \frac{d+1}{q}$.
\end{lemma}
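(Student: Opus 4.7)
The plan is to do a case analysis on the geometric relationship between the plane $P''$ and the line $L = P \cap P'$, using the fact that in the non-edge case with both planes in $\mc{P}_{\good}$, the polynomials $T[P]|_L$ and $T[P']|_L$ are distinct degree $d$ functions on $L$.

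First, I would dispose of the trivial case: if either $P$ or $P'$ is not in $\mc{P}_{\good}$, then no edge incident to that vertex exists, so the probability in question is $0$. So assume $P, P' \in \mc{P}_{\good}$ and $T[P]|_L \ne T[P']|_L$, where $L = P \cap P'$ (if $P \cap P' = \emptyset$, then $(P,P')$ would automatically be an edge since the restriction condition is vacuous, contradicting the non-edge assumption; so $L$ is a genuine line).

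Next I would split on how $P''$ meets $L$. Writing $L_1 = P \cap P''$ and $L_2 = P' \cap P''$, observe that $L_1 \cap L_2 = L \cap P''$, so there are three cases: $L \subseteq P''$, $L \cap P''$ is a single point $x$, or $L \cap P'' = \emptyset$. In the first case, any $T[P'']$ agreeing with $T[P]$ on $L_1 = L$ and with $T[P']$ on $L_2 = L$ would force $T[P]|_L = T[P']|_L$, a contradiction, so both edges cannot simultaneously hold. In the second case, being simultaneously connected to $P$ and $P'$ forces $T[P](x) = T[P''](x) = T[P'](x)$; but $T[P]|_L$ and $T[P']|_L$ are distinct degree $d$ polynomials on $L$, so they agree on at most $d$ of the $q$ points of $L$. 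Combined with a simple count (for any $x \in L$, the number of planes through $x$ not containing $L$ is $q^2$, so the intersection point is uniform on $L$ among Case~2 planes), this contributes at most $\frac{d}{q}$. The third case contributes trivially at most its own probability: counting plane directions containing the direction of $L$ gives $q+1$ directions and hence $q(q+1)$ parallel-or-containing planes, of which $q$ contain $L$, leaving $q^2$ planes with $L \cap P'' = \emptyset$; dividing by the total $|\mc{P}| = q(q^2+q+1)$ yields a contribution of at most $\frac{1}{q}$.

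Summing, the overall probability is at most $0 + \frac{d}{q} + \frac{1}{q} = \frac{d+1}{q}$, as required. The main (mild) obstacle is organizing the bookkeeping in Case~2 correctly—specifically verifying the uniformity of the intersection point over $L$ so that Schwartz--Zippel on $T[P]|_L - T[P']|_L$ can be applied; the remaining cases are essentially immediate once the geometric picture is set up.
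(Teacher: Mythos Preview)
Your proposal is correct and follows essentially the same approach as the paper: both dispose of the $\mc{P}_{\good}$ case trivially, then case-split on whether $P''$ contains, meets in a point, or misses the line $L = P \cap P'$, using Schwartz--Zippel on $T[P]|_L - T[P']|_L$ in the single-point case and bounding the disjoint case by $1/q$. Your write-up is slightly more explicit in the counting (verifying uniformity of the intersection point on $L$ and computing the number of disjoint planes), and you also note explicitly that $P \cap P' = \emptyset$ cannot occur for a non-edge with both planes good, a point the paper leaves implicit.
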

\begin{proof}
    Fix $P, P'$ which are non-adjacent. If either $P \notin \mc{P}_{\good}$ or  $P' \notin \mc{P}_{\good}$, then
    \[
    \Pr_{P''}[(P, P''), (P', P'') \in E] = 0.
    \]
    Otherwise, it must be the case that $T[P]|_{P \cap P'} \neq T[P]|_{P \cap P'} $. In this case, suppose we choose $P''$ randomly. Then with probability at most $1/q$ we have that $P''$ is disjoint from the line $P \cap P'$. If $P''$ is not adjacent from this line, then the only way for $P''$ to be adjacent to both $P$ and $P'$ is to have
    \begin{equation} \label{eq: transitive 1}        
   T[P]|_{P \cap P''} =  T[P'']|_{P \cap P''} \quad \text{and} \quad  T[P']|_{P \cap P''} =  T[P'']|_{P \cap P''}.
      \end{equation}
This implies that,
    \begin{equation} \label{eq: transitive 2}
      T[P]|_{P \cap P' \cap P''}    =  T[P']|_{P \cap P' \cap P''}.
    \end{equation}
    If $P'' \supseteq P \cap P'$, then we cannot have $P''$ adjacent to both $P$ and $P'$ as we cannot have~\eqref{eq: transitive 2} hold. If $P \cap P' \cap P''$ is a point, then this point is uniformly random point in $P \cap P'$, and by the Schwartz-Zippel lemma~\eqref{eq: transitive 2} holds with probability at most $d/q$. Overall this shows that if $P, P'$ are not adjacent then
    \[
    \Pr_{P''}[(P, P'') \in E \land (P', P'') \in E] \leq \frac{d+1}{q}.\qedhere
    \]
\end{proof}

Now we can use the following fact, asserting that almost transitive graphs can be made transitive by deleting a few edges.

\begin{lemma} \label{lm: delete transitive}
    Let $H = (V, E)$ be a graph which is $\beta$-transitive. Then, one can delete delete at most $2\sqrt{\beta}|V|^2$ edges from $H$ to obtain a graph $H'$ which is transitive.
\end{lemma}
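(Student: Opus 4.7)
The plan is to produce a partition $V = V_1 \sqcup \cdots \sqcup V_k$ of $V$ into cliques of $H$ and then delete exactly the edges crossing distinct parts; the resulting graph is a disjoint union of cliques, which is the same as a transitive graph. One must show such a partition exists with at most $2\sqrt{\beta}|V|^2$ cross-edges. A convenient starting point is the global bound
\[
\sum_{w \in V} b(w) \;\leq\; \tfrac{1}{2}\beta|V|^3,
\qquad b(w) := |\{(u_1,u_2) : u_1,u_2 \in N(w),\ (u_1,u_2) \notin E\}|,
\]
obtained by summing the $\beta$-transitivity estimate $|N(u_1)\cap N(u_2)| \leq \beta|V|$ over all non-edges $(u_1,u_2)$.

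To construct the partition, I would process vertices in order of decreasing degree. When the next unassigned vertex $v$ is processed, I open a new cluster
\[
C_v \;=\; \{v\} \cup \{u \in N(v) : u \text{ is unassigned and } |N(u)\cap N(v)| \geq (1-\sqrt{\beta})|V|\}.
\]
By inclusion--exclusion, any two $u,u' \in C_v$ satisfy $|N(u)\cap N(u')| \geq (1-2\sqrt{\beta})|V|$. For the range of $\beta$ making the lemma non-trivial (i.e.\ $\beta \leq 1/16$; otherwise $2\sqrt{\beta}|V|^2 \geq \binom{|V|}{2}$ and the lemma is vacuous), this exceeds $\beta|V|$, and the contrapositive of $\beta$-transitivity forces $(u,u') \in E$. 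Hence each $C_v$ is automatically an $H$-clique, and we only need to delete cross-cluster edges.

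The remaining step is to bound the number of cross-edges by $2\sqrt{\beta}|V|^2$. A cross-edge $(u, u')$ with $u \in C_{v_i}$, $u' \in C_{v_j}$ and $i<j$ can be reduced, using the clique structure of each $C_{v_\ell}$, to a cross-edge with at least one endpoint being a cluster center. In that reduced form---say $u = v_i$---the rejection of $u'$ from $C_{v_i}$ forces $|N(u') \cap N(v_i)| < (1-\sqrt{\beta})|V|$, and therefore $|N(u') \triangle N(v_i)| > \sqrt{\beta}|V|$. Each such cross-edge then produces $\Omega(\sqrt{\beta}|V|)$ cherries in $H$ whose outer endpoints are non-adjacent, i.e.\ bad cherries. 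Double-counting these bad cherries against the global budget $\tfrac{1}{2}\beta|V|^3$ yields the target bound $2\sqrt{\beta}|V|^2$.

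The main obstacle is the double-counting argument in the final step. One must charge each cross-edge to a disjoint (or at most $O(1)$-overlapping) family of bad cherries, while simultaneously ensuring that the reduction to a cluster center does not lose a factor in the count. The case where both endpoints of a cross-edge are themselves cluster centers must be treated separately and, in that case, exploits $\beta$-transitivity directly on the non-edge or near-edge between the two centers. Pinning down the precise constant $2$ rather than a larger absolute constant involves carefully balancing the clique threshold $(1-\sqrt{\beta})$ against the symmetric-difference loss $2\sqrt{\beta}$ arising from inclusion--exclusion.
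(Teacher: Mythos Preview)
The paper does not prove this lemma; it simply cites \cite{RazSafra}. Your high-level plan---greedy clique extraction and bounding cross-edges against the bad-cherry budget $\sum_w b(w)\le \beta|V|^3$---is the right shape, but the specific construction as written does not work.

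The concrete problem is your clustering threshold $|N(u)\cap N(v)|\ge (1-\sqrt{\beta})|V|$. This is a condition on the \emph{absolute} degree, not on how close $N(u)$ is to $N(v)$, and it fails on sparse graphs. Take $H$ to be a perfect matching on $|V|$ vertices: every non-edge $(u_1,u_2)$ has $|N(u_1)\cap N(u_2)|=0$, so $H$ is $0$-transitive, and it is already a disjoint union of cliques, so zero deletions suffice. But with $\beta=0$ your threshold reads $|N(u)\cap N(v)|\ge |V|$, which no pair meets; every cluster is a singleton and you delete all $|V|/2$ edges, far exceeding $2\sqrt{0}\,|V|^2=0$. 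The same failure occurs for any $\beta$-transitive graph whose maximum degree is below $(1-\sqrt{\beta})|V|$. A related symptom appears later: from $|N(u')\cap N(v_i)|<(1-\sqrt{\beta})|V|$ you infer $|N(u')\triangle N(v_i)|>\sqrt{\beta}|V|$, but this is false when both neighborhoods are small and equal.

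The fix is to make the threshold \emph{relative} to the current center's degree, e.g.\ put $u$ into $C_v$ when $|N(v)\setminus N(u)|\le \sqrt{\beta}|V|$. Then for $u,u'\in C_v$ one gets $|N(u)\cap N(u')|\ge |N(v)|-2\sqrt{\beta}|V|$, which exceeds $\beta|V|$ and forces $(u,u')\in E$ whenever $|N(v)|$ is at least, say, $3\sqrt{\beta}|V|$; vertices of smaller degree can be made singletons and their incident edges (at most $3\sqrt{\beta}|V|^2$ total) deleted outright. With this correction the cross-edge charging to bad cherries goes through: a rejected neighbor $u'$ of $v_i$ now genuinely witnesses $\ge \sqrt{\beta}|V|$ vertices $w\in N(v_i)\setminus N(u')$, each giving a bad cherry $u'$--$v_i$--$w$ centered at $v_i$, and these can be counted against the global budget. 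You will still need to be careful to avoid overcounting when several cross-edges share the same center, but at least the basic inequality you need is now true.
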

\begin{proof}
This is~\cite[Lemma 2]{RazSafra}.
\end{proof}

Combining \cref{lm: many edges}, \cref{lm: beta transitive}, and \cref{lm: delete transitive}, we get that one can delete at most $2\sqrt{\frac{d+1}{q}}|\mc{P}|^2$ edges from $G$ to obtain a graph $G'$ which is transitive. Thus $G'$ is transitive and contains at least $\left(\Omega(\eps^2) - \sqrt{\frac{d+1}{q}}\right) |\mc{P}|^2$ edges. Let $\mc{C}_1, \ldots, \mc{C}_k$ be the cliques whose union is $G'$, and let $\mc{C}_1$ be the largest clique. It follows that
\[
\Omega(\eps^2)|\mc{P}|^2 = \sum_{i=1}^k |\mc{C}_i|^2 \leq |\mc{C}_1|  \sum_{i=1}^k |\mc{C}_i| \leq |\mc{C}_1|  \cdot |\mc{P}|.
\]
Therefore, $|\mc{C}_1| \geq \Omega(\eps^2)|\mc{P}|$. The clique $\mc{C}_1$ appears in $G$ too, so $\mc{G}$ has a clique of fractional size $\Omega(\eps^2)$. Going back to the table $T'$, the clique $\mc{C}_1$ corresponds to a large number of planes $P$, such that the functions $T[P]$ for $P \in \mc{C}_1$ all mutually agree, and each $P \in \mc{P}_{\good}$. We now show that one can interpolate a degree $d$ function $Q$ such that $Q|_P = T[P]$ for all $P \in \mc{C}_1$. To do so, we first rely on the same result in the $q = q'$ case, which appears in \cite{RazSafra}.
\begin{lemma}[Lemma 4 \cite{RazSafra}] \label{lm: rs interpolate} 
    Let $\mc{C}$ be a set of affine planes $P \subseteq \Ff_{q}^3$ of fractional size $\Omega(\eps^2)$ in $\mc{P}$. Suppose for each plane $P$, there is a degree $d$ function $T[P]: P \to \Ff_q$ such that for all $P_1, P_2 \in \mc{C}$, we have $T[P_1]|_{P_1 \cap P_2} = T[P_2]|_{P_1 \cap P_2}$. Then there is a degree $d$ function $Q: \Ff_q^3 \to \Ff_{q}$ such that $Q|_P = T[P]$ for all $P \in \mc{C}$.
\end{lemma}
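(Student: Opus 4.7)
The plan is to construct $Q$ by pointwise voting among the tables $T[P]$ and then upgrade the pointwise definition to a global degree-$d$ polynomial via induction on $d$.

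First, I would define a candidate function $Q : \Ff_q^3 \to \Ff_q$ pointwise by $Q(x) := T[P](x)$ for any $P \in \mc{C}$ containing $x$. The pairwise agreement of the $T[P]$'s on intersections $P \cap P'$ makes this well defined: if $x \in P \cap P'$ for distinct $P, P' \in \mc{C}$, then $T[P](x) = T[P'](x)$ by hypothesis. Applying~\cref{lm: spectral sampling} to the plane-point inclusion graph (with second singular value $1/q$ from~\cref{lem:spectral_val_bd}) and using $|\mc{C}|/|\mc{P}| \geq \Omega(\eps^2)$ together with $\eps^6 \geq \sqrt{d/q}$, every point of $\Ff_q^3$ lies in $\Omega(\eps^2 q^2)$ planes of $\mc{C}$, so the pointwise definition covers all of $\Ff_q^3$.

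Second, I would verify that $Q$ is a polynomial of degree $\leq d$ by induction on $d$. The base case $d = 0$ is trivial: all $T[P]$'s are constants, and pairwise agreement forces them to be the same constant $c$, giving $Q \equiv c$. For the inductive step, fix any $P_0 \in \mc{C}$ with affine defining equation $\ell_{P_0} = 0$, and let $\tilde{T}_0 \in \Ff_q^{\leq d}[x, y, z]$ be an arbitrary degree-$d$ trivariate lift of $T[P_0]$. For each $P \in \mc{C}$ not parallel to $P_0$ (so $P \cap P_0$ is a line), the consistency hypothesis forces $T[P] - \tilde{T}_0|_P$ to vanish on $P \cap P_0$, hence to be divisible by $\ell_{P_0}|_P$ in the bivariate polynomial ring on $P$; define the quotient $g[P] \in \Ff_q^{\leq d - 1}[P]$. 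The induced collection $\{g[P]\}_{P}$ forms a smaller instance of the same lemma at degree $d - 1$: it is pairwise consistent on intersections since on $P \cap P'$ (not contained in $P_0$) the identity $\ell_{P_0} \cdot g[P] = T[P] - \tilde{T}_0$ combined with $T[P]|_{P \cap P'} = T[P']|_{P \cap P'}$ forces $g[P]|_{P \cap P'} = g[P']|_{P \cap P'}$ away from the zero set of $\ell_{P_0}|_{P \cap P'}$, and hence everywhere on $P \cap P'$ by Schwartz–Zippel. Applying the induction hypothesis to this reduced instance yields $g \in \Ff_q^{\leq d-1}[x, y, z]$ with $g|_P = g[P]$ for every non-parallel $P$, and one sets $Q := \tilde{T}_0 + \ell_{P_0} \cdot g \in \Ff_q^{\leq d}[x, y, z]$.

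The main obstacle is justifying that the induced instance of the lemma at degree $d-1$ satisfies the density hypothesis and dealing with planes in $\mc{C}$ parallel to $P_0$. For the density, one must show that the non-parallel subfamily $\mc{C}' = \{P \in \mc{C} : P \not\parallel P_0\}$ still has fractional size $\Omega(\eps^2)$; this is immediate since at most $q$ planes of $\mc{C}$ can be parallel to $P_0$ while $|\mc{C}| \geq \Omega(\eps^2 q^3) \gg q$. For the parallel planes, one must verify \emph{post facto} that $Q|_P = T[P]$ also for every $P \in \mc{C}$ parallel to $P_0$; this is handled by picking a third plane $P' \in \mc{C}'$ transverse to both $P_0$ and $P$ (whose existence follows again from the density of $\mc{C}$), observing that $Q|_{P \cap P'} = T[P']|_{P \cap P'} = T[P]|_{P \cap P'}$, and using that both $Q|_P$ and $T[P]$ are bivariate degree-$d$ polynomials agreeing on sufficiently many lines to force equality. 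Verifying these edge cases carefully constitutes the technical content of the Raz–Safra argument.
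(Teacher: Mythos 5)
The paper does not give a proof of this lemma; it is cited verbatim from \cite{RazSafra}, so there is no ``paper's route'' to compare against. Assessing your attempt on its own terms, there are two genuine gaps.

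First, the claim that ``every point of $\Ff_q^3$ lies in $\Omega(\eps^2 q^2)$ planes of $\mc{C}$'' is false and is not what \cref{lm: spectral sampling} gives you. Taking $\mc{C}$ to be all planes avoiding a single fixed point $x_0$ yields a family of fractional size $1-1/q$ with no plane through $x_0$. Tracking the inequality in \cref{lm: spectral sampling} with $G = \ind_{x_0}$ and $\lambda = 1/q$, the error term $\lambda/\sqrt{\mu(\mc{C})}$ is on the order of $1/(q\eps)$, which dominates the target mean $1/q^3$, so the bound is vacuous for a single point. Spectral mixing controls averages over sets of points; it cannot force coverage of every point. (This is not necessarily fatal on its own --- one could try to define $Q$ only on the covered set and then extend --- but your write-up leans on global pointwise well-definedness.)

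Second, and more seriously, the inductive step breaks. You divide $T[P]-\tilde T_0|_P$ by $\ell_{P_0}|_P$ to get $g[P]$ and claim the collection $\{g[P]\}$ is a lower-degree instance of the same lemma. The pairwise-agreement verification you give only covers pairs $P,P'$ with $P\cap P'\not\subseteq P_0$. When three planes $P,P',P_0\in\mc{C}$ share a common line $L=P\cap P'=P\cap P_0$, both the numerator $T[P]-\tilde T_0|_P$ and $\ell_{P_0}|_P$ vanish identically on $L$, so $g[P]|_L$ is a genuine first-order (derivative) quantity: it records how $T[P]$ departs from $T[P_0]$ in the direction transverse to $L$ inside $P$. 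This is information private to $P$, and there is no reason for it to match the analogous quantity $g[P']|_L$ for a different plane $P'\supseteq L$. Concretely, in coordinates with $L=\{(s,0,0)\}$, $P_0=\{z=0\}$, $P=\{y=0\}$, $P'=\{y=\alpha z\}$, one has $g[P]|_L(s)=\partial_z\bigl(T[P]-\tilde T_0|_P\bigr)(s,0)$ versus the corresponding derivative along a different transverse direction for $P'$. Changing the lift $\tilde T_0\mapsto\tilde T_0+\ell_{P_0}h$ shifts every $g[P]|_L$ by the same $-h|_L$ and so cannot reconcile them. Removing the offending planes also doesn't help: every plane $P$ meets $P_0$ in some line and thus participates in bad pairs, so you cannot carve out a subfamily of fractional size $\Omega(\eps^2)$ free of bad pairs. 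As written, the reduced family fails the hypotheses of the lemma, so the induction does not close.

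A proof that does go through (and is the kind of argument one finds in \cite{RazSafra}) builds $Q$ by explicit interpolation rather than induction on $d$: use the density of $\mc{C}$ to find $d+1$ parallel planes $P_0,\dots,P_d\in\mc{C}$ in a common pencil, set $Q$ to be the Lagrange interpolation across this pencil of the tables $T[P_0],\dots,T[P_d]$, and then verify $Q|_P=T[P]$ for a generic $P\in\mc{C}$ by observing that $Q|_P$ and $T[P]$ are two degree-$d$ bivariate functions on $P$ that agree on the $d+1$ parallel lines $P\cap P_i$, hence are equal; the remaining (degenerate) planes are handled by a density argument as in your last paragraph. You would need something of this flavor to repair the argument.
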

\begin{proof}
    This is \cite[Lemma 4]{RazSafra}.
\end{proof}

We now handle the general $q'$ case:
\begin{lemma} \label{lm: interpolate m=3}
    There is a degree $d$ function $Q: \Ff_q^3 \to \Ff_{q'}$ such that $Q|_P = T[P]$ for all $P \in \mc{C}_1$.
\end{lemma}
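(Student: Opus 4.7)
The plan is to reduce the $\Ff_{q'}$-valued case to the $\Ff_q$-valued case treated in \cref{lm: rs interpolate} via a basis decomposition. Let $k = [\Ff_{q'} : \Ff_q]$ and fix an $\Ff_q$-basis $\{e_1, \ldots, e_k\}$ of $\Ff_{q'}$. For each plane $P \in \mc{C}_1$, the function $T[P]$ is the evaluation on $P$ of a polynomial $\wh{T}[P] \in \Ff_{q'}[x,y]$ of total degree at most $d$ (parameterizing $P$ by two $\Ff_q$-coordinates). Expanding each coefficient of $\wh{T}[P]$ in the basis $\{e_\ell\}$ yields a unique decomposition $\wh{T}[P] = \sum_{\ell=1}^k e_\ell \cdot \wh{T}_\ell[P]$, where each $\wh{T}_\ell[P] \in \Ff_q[x,y]$ has degree at most $d$. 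Let $T_\ell[P]: P \to \Ff_q$ be the evaluation of $\wh{T}_\ell[P]$ on $P$.

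Next, I would verify the consistency condition for the component tables $T_\ell[\cdot]$. Fix any two planes $P_1, P_2 \in \mc{C}_1$ with nonempty intersection $P_1 \cap P_2$ (the empty intersection case is vacuous). By definition of $G$, we have $T[P_1]|_{P_1 \cap P_2} = T[P_2]|_{P_1 \cap P_2}$, i.e.\
\[
\sum_{\ell=1}^k e_\ell \cdot T_\ell[P_1](x) = \sum_{\ell=1}^k e_\ell \cdot T_\ell[P_2](x) \qquad \forall x \in P_1 \cap P_2.
\]
For each fixed $x \in P_1 \cap P_2 \subseteq \Ff_q^3$, both $T_\ell[P_1](x)$ and $T_\ell[P_2](x)$ lie in $\Ff_q$, so by the $\Ff_q$-linear independence of $\{e_\ell\}$ we conclude $T_\ell[P_1]|_{P_1 \cap P_2} = T_\ell[P_2]|_{P_1 \cap P_2}$ for every $\ell \in [k]$.

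Having established consistency in each component, I would then apply \cref{lm: rs interpolate} to each of the $k$ collections $\{T_\ell[P]\}_{P \in \mc{C}_1}$ (which satisfy the hypotheses of that lemma since $|\mc{C}_1|/|\mc{P}| \geq \Omega(\eps^2)$ and the component tables are mutually consistent as shown). This yields, for each $\ell \in [k]$, a degree $d$ function $Q_\ell: \Ff_q^3 \to \Ff_q$ with $Q_\ell|_P = T_\ell[P]$ for all $P \in \mc{C}_1$. Setting
\[
Q(x) = \sum_{\ell=1}^k e_\ell \cdot Q_\ell(x),
\]
we obtain a function $Q: \Ff_q^3 \to \Ff_{q'}$ which is the evaluation of the degree $d$ polynomial $\sum_\ell e_\ell \cdot \wh{Q}_\ell \in \Ff_{q'}[x_1,x_2,x_3]$ (where $\wh{Q}_\ell \in \Ff_q[x_1,x_2,x_3]$ is the degree $d$ polynomial underlying $Q_\ell$), and by construction $Q|_P = T[P]$ for every $P \in \mc{C}_1$.

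There is no real obstacle here: the argument is a straightforward $\Ff_q$-linear lifting of the known $q = q'$ statement. The only point requiring any care is to ensure the decomposition $\wh{T}[P] = \sum_\ell e_\ell \wh{T}_\ell[P]$ preserves the degree bound and that the consistency check on $P_1 \cap P_2$ can be performed component-wise, both of which follow immediately because $P_1 \cap P_2 \subseteq \Ff_q^3$ and $\{e_\ell\}$ is $\Ff_q$-linearly independent.
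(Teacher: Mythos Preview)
Your proposal is correct and follows essentially the same approach as the paper: both decompose $T[P]$ over an $\Ff_q$-basis of $\Ff_{q'}$, verify component-wise consistency on intersections via $\Ff_q$-linear independence, apply \cref{lm: rs interpolate} to each $\Ff_q$-valued component, and recombine. The only cosmetic difference is that the paper phrases the consistency step via a projection map $\proj_\ell$ while you argue directly from linear independence.
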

\begin{proof}
Note that one can view $\Ff_{q'}$ as a vector space over $\Ff_q$. As $\Ff_q$ is a subfield of $\Ff_{q'}$, there is some $k$ such that $q' = q^k$, and there exist $\Lambda_1, \ldots, \Lambda_k \in \Ff_{q'}$ such that each $\beta \in \Ff_{q'}$ has a unique decomposition as an $\Ff_q$-linear combination of $\Lambda_1, \ldots, \Lambda_k$:
\[
\beta = \sum_{i = 1}^k \gamma_i \cdot \Lambda_i.
\]
For every $\beta \in \Ff_{q'}$, let $\proj_i(\beta) = \gamma_i$ from the expansion above.

It follows that we can also view each function $T[P]: P \to \Ff_{q'}$, can also be viewed as a vector of $k$-functions from $P \to \Ff_q$. Specifically, we can write the polynomial expansion of $T[P]$ as 
\[
    T[P](x_1,x_2) = \sum_{i + j \leq d} \beta_{i,j} \cdot x_1^i x_2^j,
\]
where the $\beta_{i,j}$'s are coefficients from $\Ff_{q'}$. Then writing each $\beta_{i,j}$ as an $\Ff_q$-linear combination of $\Lambda_1, \ldots, \Lambda_k$, we get
\[
T[P](x_1, x_2) = \sum_{i + j \leq d} \sum_{\ell = 1}^k \gamma_{\ell, i, j} \cdot \Lambda_{\ell} \cdot x_1^i x_2^j = \sum_{\ell = 1}^k  \Lambda_{\ell} \sum_{i + j \leq d}\gamma_{\ell, i, j} \cdot x_1^i x_2^j,
\]
where the $\gamma_{\ell, i,j}$'s are all in $\Ff_q$. Thus, for each $P \in \mc{C}_1$ and each $\ell \in [k]$, we write
\[
T_{\ell}[P] = \sum_{i + j \leq d}\gamma_{\ell, i, j} \cdot x_1^i x_2^j,
\qquad\qquad
T[P] = \sum_{\ell = 1}^k \Lambda_{\ell} \cdot T_{\ell}[P],
\]
where each $T_{\ell}[P]: \Ff_q^2 \to \Ff_q$. One can check that $T_{\ell}[P](x) = \proj_{\ell}(T[P](x))$, and as a result, the functions $T_{\ell}[P]$ all mutually agree for $P \in \mc{C}_1$. Applying \cref{lm: rs interpolate}, for each $\ell \in [k]$, we get that there are degree $d$ functions $Q_1, \ldots, Q_{k}: \Ff_q^2 \to \Ff_q$, such that $Q_{\ell}|_P = T_{\ell}[P]$
for all $P \in \mc{C}_1$. Taking $Q = \sum_{\ell = 1}^k \Lambda_{\ell} \cdot Q_{\ell}$, we get that $Q: \Ff_q^2 \to \Ff_{q'}$ is a degree $d$ function satisfying $Q|_P = T[P]$
for all $P \in \mc{C}_1$.
\end{proof}

From here we are ready to conclude the proof of \cref{thm: line versus point m = 3}. 

\begin{proof}[Proof of \cref{thm: line versus point m = 3}]
    Take $Q$ to be the degree $d$ polynomial from \cref{lm: interpolate m=3}. Then,
    \[
    \Pr_{P \in \mc{C}_{1}, x \in P}[Q(x) = f(x)] = \Pr_{P \in \mc{C}_{1}, x \in P}[T[P](x) = f(x)] \geq \eps'.
    \]
    By \cref{lm: spectral sampling} with the planes versus points graph (with the singular value estimate from~\cref{lem:spectral_val_bd}), we have
    \[
    \Pr_{x \in \Ff_q^3}[Q(x) = f(x)] \geq \eps' - \frac{q^{-1}}{\sqrt{\Omega(\eps^2)}} \geq 0.49\eps - 1.01\left(\frac{d}{q}\right)^{\tl},
    \]
    and we are done. In the last inequality we are using the fact that $\Omega(\eps) \leq q^{-1/6}$.
\end{proof}

\subsection{Agreement with the Lines Table when $m = 3$} \label{sec: line vs point randomized agree m = 3}

In \cref{thm: line vs point randomized agree} we showed how to go from $\Omega(\eps^3)$ agreement with the points table to $\Omega(\eps)$ agreement with the lines table and obtain \cref{thm: line vs point randomized agree}. For future arguments, we will require a similar result which gives agreement with the lines table (instead of the points table) in the $m=3$ case.

\begin{thm} \label{thm: line vs point randomized agree m = 3}
    Let $f: \Ff_q^3 \to \Ff_{q'}$ an oracle function and let $T$ be a randomized lines table of degree $d$ such that
    \[
    \Pr_{L, T[L], x \in L}[T[L](x) = f(x)] = \eps \geq \left(\frac{d}{q}\right)^{\tl}.
    \]
    Then there exists a degree $d$ function $F: \Ff_q^3 \to \Ff_{q'}$ satisfying
    \[
    \Pr_{L, T[L]}[T[L] = F|_L] \geq \frac{\eps}{5}.
    \]
\end{thm}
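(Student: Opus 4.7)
The plan is to mirror the proof of \cref{thm: line vs point randomized agree} (the $m=2$ version) essentially line for line, but substituting \cref{thm: line versus point m = 3} in place of \cref{thm: line versus point final redo}, and using the spectral gap of the lines--points inclusion graph in $\mathbb{F}_q^3$, which by \cref{lem:spectral_val_bd} is still at most $1/\sqrt{q}$.

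First I would prove a three--dimensional analogue of \cref{thm: line vs point list decode}. Namely, taking $\varepsilon' \;=\; 0.49\varepsilon - 1.01(d/q)^{\tl}$ (which is $\Omega(\varepsilon)$ under the hypothesis $\varepsilon\geq (d/q)^{\tl}$, making the Johnson bound applicable), let $F_1,\dots,F_M$ be the list of all degree $d$ functions $\mathbb{F}_q^3\to\mathbb{F}_{q'}$ with agreement at least $\varepsilon'/4$ with $f$. By \cref{thm: list decoding}, $M\leq 5/\varepsilon'$. I would show
\[
\Pr_{x,\,L\ni x,\,T[L]}\bigl[T[L](x)=f(x)\ \wedge\ f(x)\notin\{F_i(x)\}_{i=1}^M\bigr]\leq \varepsilon/2
\]
by contradiction: if not, replace $f$ on $W=\bigcup_i W_i$ (where $W_i=\{x:f(x)=F_i(x)\}$) by the function $x\mapsto x_1^{d+1}$ to obtain $f'$; this new $f'$ has agreement at most $0.26\varepsilon'$ with every degree $d$ polynomial by Schwartz--Zippel, yet $f',T$ still pass the line--point test with probability exceeding $\varepsilon/2$. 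Fixing a deterministic realization of $T$ and applying \cref{thm: line versus point m = 3} yields a degree $d$ polynomial with agreement at least $\varepsilon'$ with $f'$, a contradiction.

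Next I would decompose the pass probability as
\[
\varepsilon=\Pr[\mathrm{pass}\wedge E\wedge x\in W]+\Pr[\mathrm{pass}\wedge\overline E\wedge x\in W]+\Pr[\mathrm{pass}\wedge x\notin W],
\]
where $E_i$ is the event $T[L]=F_i|_L$ and $E=E_1\vee\cdots\vee E_M$. The third term is $\leq\varepsilon/2$ by the list decoding step above; the second term is at most $M\cdot d/q$ by a union bound, since under $\overline E_i$ the degree $d$ functions $T[L]$ and $F_i|_L$ differ on all but at most $d/q$ fraction of $L$. Hence the first term is $\geq 0.49\varepsilon - Md/q - (d/q)^{\tl}\geq 0.48\varepsilon$. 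Expanding this term as a double sum $\sum_{i,j}\Pr[\mathrm{pass}\wedge E_i\wedge f(x)=F_j(x)]$, each off-diagonal term is at most $d/q$ (since $F_i\neq F_j$ agree on $\leq d/q$ fraction of $\mathbb{F}_q^3$), so the diagonal contributes $\geq 0.47\varepsilon$.

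Rewriting each diagonal term via
\[
\Pr\bigl[\mathrm{pass}\wedge E_i\wedge x\in W_i\bigr]=\Pr[x\in W_i]\cdot \mathbb{E}_{x\in W_i,\,L\ni x}[\mathcal{F}_i(L)],
\]
where $\mathcal{F}_i(L)=\Pr_{T[L]}[T[L]=F_i|_L]$, I would apply \cref{lm: spectral sampling} to the lines--points inclusion graph $\mathcal{G}(\mathcal{L},\mathbb{F}_q^3)$, whose second singular value is at most $1/\sqrt{q}$ by \cref{lem:spectral_val_bd}. This replaces the conditional expectation by $\mathbb{E}_L[\mathcal{F}_i(L)]$ up to error $q^{-1/2}/\sqrt{\varepsilon'/4}=o(\varepsilon)$, giving
$\sum_i\Pr[x\in W_i]\cdot\mathbb{E}_L[\mathcal{F}_i(L)]\geq 0.46\varepsilon$. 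Finally, $\sum_i\Pr[x\in W_i]\leq 1+M^2d/q\leq 2$ (since the $W_i\cap W_j$ have measure at most $d/q$ by Schwartz--Zippel), so by pigeonhole some $F_i$ satisfies $\mathbb{E}_L[\mathcal{F}_i(L)]\geq 0.46\varepsilon/2\geq\varepsilon/5$, which is exactly $\Pr_{L,T[L]}[T[L]=F_i|_L]\geq\varepsilon/5$.

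The main obstacle is verifying that the list decoding analogue in the first step goes through with the weaker $m=3$ guarantee; the remaining steps are routine once one substitutes the correct spectral gap for the $\mathbb{F}_q^3$ lines--points graph. No new techniques are needed beyond what was already developed for the $m=2$ version.
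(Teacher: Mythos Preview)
Your proposal is correct and takes essentially the same approach as the paper: the paper's proof consists of a single sentence stating that the argument of \cref{thm: line vs point randomized agree} goes through verbatim once one replaces \cref{thm: line versus point redo} by \cref{thm: line versus point m = 3}, which is exactly what you spell out. Your explicit re-derivation of the list-decoding step and the spectral-sampling step (with the $\mathbb{F}_q^3$ lines--points graph still having second singular value $\leq 1/\sqrt{q}$) is precisely what this one-line proof unpacks to.
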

\begin{proof}
     The proof is exactly the same as that of \cref{thm: line vs point randomized agree} except one replaces \cref{thm: line versus point redo} (which only works when $m=2$) with \cref{thm: line versus point m = 3} (which does works when $m = 3$).
\end{proof}
\section{Proofs for \cref{sec:local_test}}
In this section, we will prove the theorems from \cref{sec:local_test} 
via standard reductions from the soundness of the line versus point test in \cref{thm: line versus point final redo}. 

\subsection{Proof of \cref{thm: line vs point strong}} \label{sec: proof of lvp strong}
Fix the set up as in~\cref{thm: line vs point strong} and let $\mc{L}$ be the set of lines $L$ such that
\[
\agr(f|_L, \RS_{q'}[d, \Ff_q]) \geq 1.01\eps.
\]
Construct the randomized lines table $T$ as follows. If $L \in \mc{L}$, then set $T[L]$ to be the degree $d$ function that has the highest agreement with $f|_L$. Otherwise, set $T[L]$ to be a uniformly chosen univariate random degree $d$ function. It follows that
\[
\Pr_{L, x\in L, T[L]}[T[L](x) = f(x)] \geq \Pr_{L}[L \in \mc{L}] \cdot \eps \geq \eps^2 \geq \left(\frac{d}{q}\right)^{\tl}.
\]
Applying~\cref{thm: line vs point randomized agree} we get that there is a degree $d$ function $H: \Ff_q^2 \to \Ff_{q'}$ satisfying
\[
\Pr_{L, T[L]}[T[L] = H|_L] \geq \frac{\eps^2}{5}.
\]
However, almost all of the contribution to this probability is from $L \in \mc{L}$. Indeed, if $L \notin \mc{L}$, then $\Pr_{T[L]}[T[L] = H|_L] \leq q^{-(d+1)}$, so
\[
\Pr_{L \in \mc{L}, T[L]}[T[L] = H|_L] \geq \frac{\eps^2}{5} - \Pr_{T[L]}[T[L] = H|_L] \geq \frac{\eps^2}{5} - q^{-(d+1)} := \eps'.
\]
To conclude, we apply~\cref{lm: spectral sampling} (with spectral value estimates from~\cref{lem:spectral_val_bd}) to get that
\[
\Pr_{x \in \Ff_q^m}[f(x) = H(x)] \geq \Pr_{L \in \mc{L}, x \in L}[f(x) = H(x)] - \frac{q^{-1/2}}{\sqrt{\mu(\mc{L})}} \geq 1.01\eps -  \frac{q^{-1/2}}{\sqrt{\eps'}} \geq \eps. 
\]
\hfill $\Box$
\subsection{Proof of \cref{thm: plane vs point strong}} \label{sec: proof of pvp strong}
The proof is by contradiction. Suppose that
\[
\Pr_{P \subseteq \Ff_q^m}[\agr(f|_P, \RM_{q'}[d, \Fqt]) \geq 1.2\eps] > \frac{100}{\eps^2 q}.
\]
We will show that $\agr(f, \RM_{q'}[d, \Fqm]) > \eps$, contradicting the assumption of \cref{thm: plane vs point strong}.

For each $P \in \mc{S}$, let $T'[P]$ denote the degree $d$ function closest to $f|_P$. Let $\mathcal{S}$ be the set of planes $P$ satisfying $\agr(f|_P, \RM_{q'}[d, \Ff_q^2]) \geq 1.2\eps$, so that $\agr(T'[P], f|_P) \geq 1.2 \eps$ for all $P \in \mc{S}$. By assumption, we have $\mu(\mc{S}) > \frac{100}{\eps^2 q}$. Consider the following distribution, $\mc{D}$, which outputs an affine line $L$ and a degree $d$ function from $L \to \Ff_{q'}$:
\begin{itemize}
    \item Choose $P \in \mc{S}$ uniformly at random.
    \item Choose $L \subseteq P$ uniformly at random.
    \item Output the pair $(L, T'[P]|_L)$.
\end{itemize}

Using the degree $d$ functions $T'[P]$, for $P \in \mc{S}$, we can now generate a randomized lines table $T$ as follows. For each $L$, the degree $d$ function $T[L]$ is chosen according to the marginal distribution of $\mc{D}$ conditioned on the affine line being $L$. That is, for each line $L$ and any degree $d$ function $h: L \to \Ff_{q'}$,
\[
\Pr[T[L] = h] = \frac{\mc{D}(L, h)}{\mc{D}(L, \cdot)}.
\]
If there is no plane in $\mc{S}$ containing $L$, then $T[L]$ is uniformly random over all degree $d$ functions on $L$.

We claim that the randomized lines table $T$ passes the line versus point test with $f$ with probability close to $1.1\eps$. Note that in this test, one chooses a line $L$ uniformly at random, then $T[L]$ according to the joint distribution of $\mc{D}$, and finally a uniformly random $x \in L$. Thus, $(L, T[L])$ is almost distributed according to $\mc{D}$, save for the fact that $L$ is not quite uniformly random in $\mc{D}$. 

\begin{claim}
   \[
   \Pr_{L, x \in L, T[L]}[T[L](x) = f(x)] \geq \eps' := 1.1 \eps.
   \] 
\end{claim}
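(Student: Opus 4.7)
The plan is to decouple the test's sampling process from the distribution $\mc{D}$ on pairs $(L, T[L])$. Define $g(L) := \Pr_{T[L], x \in L}[T[L](x) = f(x)]$ so that the LHS of the claim equals $\E_{L \text{ uniform}}[g(L)]$. The crucial observation is that $T[L]$ is defined to have exactly the conditional distribution of $\mc{D}$ given $L$ (whenever some plane in $\mc{S}$ contains $L$), so the function $g(L)$ coincides with the conditional-on-$L$ pass probability under $\mc{D}$ as well. For lines $L$ not contained in any plane of $\mc{S}$ the value $g(L)$ is (essentially) zero, which only helps; equivalently we extend $g$ by $0$ on such lines. Thus the only difference between the two probabilities is the marginal distribution of $L$: uniform under the test, versus weighted by $N_{\mc{S}}(L) = |\{P \in \mc{S} : L \subseteq P\}|$ under $\mc{D}$.

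Next, I would lower bound the pass probability under $\mc{D}$. By the tower property we have
\[
\E_{L \sim \mc{D}}[g(L)] = \E_{(P, L, x) \sim \mc{D}}[T'[P](x) = f(x)] = \E_{P \in \mc{S}}\left[\agr(T'[P], f|_P)\right] \geq 1.2 \eps,
\]
using the definition of $\mc{D}$ and the guarantee $\agr(T'[P], f|_P) \geq 1.2\eps$ for every $P \in \mc{S}$. By the weighting above this is the same as $\E_{P \in \mc{S}, L \subseteq P}[g(L)]$.

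Finally, I would apply~\cref{lm: spectral sampling} to the bipartite plane-line inclusion graph $\mc{G}(\mc{P}, \mc{L})$ with the subset $\mc{S} \subseteq \mc{P}$ of size $\mu(\mc{S})$ and the function $g : \mc{L} \to [0,1]$. This yields
\[
\left|\E_{P \in \mc{S}, L \subseteq P}[g(L)] - \E_{L}[g(L)]\right| \leq \frac{\lambda(\mc{G})}{\sqrt{\mu(\mc{S})}}.
\]
The singular value estimate in~\cref{lem:spectral_val_bd} gives $\lambda(\mc{G}) \leq (1+o(1))/\sqrt{q}$, and the hypothesis $\mu(\mc{S}) > 100/(\eps^2 q)$ then forces the right-hand side to be at most $(1+o(1)) \cdot \eps/10$. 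Combining with the lower bound from the previous paragraph yields $\E_L[g(L)] \geq 1.2\eps - 0.1\eps \cdot (1+o(1)) \geq 1.1\eps = \eps'$, as required. The main conceptual step---and the only place where care is needed---is justifying that $g$ is the same function under both distributions, which follows directly from the construction of the randomized lines table $T$; once this is in place the rest is a clean application of the expander-mixing-type bound.
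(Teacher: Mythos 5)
Your proof is correct and follows essentially the same route as the paper's: define $g(L) = F(L)$ to be the conditional pass probability given the line, observe that $T[L]$ is built to have $\mc{D}$'s conditional distribution so that $\E_{L\sim\mc{D}}[g(L)]$ equals the pass probability under $\mc{D}$ (which is $\geq 1.2\eps$), and then close the gap between the $\mc{D}$-weighted and uniform expectations of $g$ via \cref{lm: spectral sampling} on the plane--line inclusion graph with $\mu(\mc{S}) > 100/(\eps^2 q)$. The only cosmetic deviation is your remark about ``extending $g$ by $0$'' off the planes of $\mc{S}$ — the paper instead defines $T[L]$ uniformly at random there so $g$ is already defined everywhere and in $[0,1]$, which is all the lemma needs — but this does not affect the argument.
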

\begin{proof}
Let $F \in L_2(G(\mc{L}))$ be the function given by,
\[
F(L) = \Pr_{x \in L, T[L]}[T[L](x) = f(x)].
\]
It is clear that all values of $F$ are in the interval $[0,1]$. The probability we are interested in is then,
\[
\Pr_{L, x \in L, T[L]}[T[L](x) = f(x)] = \E_{L}[F(L)].
\]
Then,
\begin{align*} 
\E_{P \in \mc{S}, L \subseteq P}[F(L)] = \Pr_{(L,T[L]) \sim \mc{D}, x\in L}[T[L](x) = f(x)] 
= \Pr_{P \in \mc{S}, x \in P}[T'[P](x) = f(x)] 
\geq 1.2\eps.
\end{align*}
where in the expectation $L$ is chosen according to the marginal distribution of $\mc{D}$. Applying \cref{lm: spectral sampling}, we have
\[
\left| \E_{L}[F(L)] - \E_{L \sim \mc{D}}[F(L)]  \right| \leq \frac{q^{-1/2}}{\sqrt{\mu(\mc{S})}}.
\]
It follows that
\[
\Pr_{L, x \in L, T[L]}[T[L](x) = f(x)]  = \E_{L}[F(L)] \geq 1.2\eps - \frac{q^{-1/2}}{\sqrt{100\eps^{-2}q^{-1}}} = 1.1\eps.\qedhere
\]
\end{proof}
Applying \cref{thm: line vs point randomized agree m = 3}, it follows that there exists a degree $d$ function $H: \Ff_q^m \to \Ff_{q'}$ satisfying: 
\begin{equation}  \label{eq: unif line agree}
\Pr_{L, T[L]}[H|_L = T[L]] \geq \frac{\eps'}{5},
\end{equation}
where $L$ is uniform. We would like to conclude that,
\[
\Pr_{(L, T[L]) \sim \mc{D}}[H|_L = T[L]] = \Pr_{P \in S, L \subseteq P} [H|_L = T'[P]|_L]
\]
is large. To accomplish this we will apply the spectral sampling lemma again.
\begin{claim} \label{cl: agree in S}
We have,
    \[
    \Pr_{P \in S, L \subseteq P}[H|_L = T'[P]|_L]\geq \frac{\eps}{10},
    \]
and as a consequence,
\[
\Pr_{P \in S}[T'[P] = H|_P] \geq \frac{\eps}{20}.
\]
\end{claim}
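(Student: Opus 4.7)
The plan is to transfer the lower bound on $\Pr_{L, T[L]}[H|_L = T[L]]$ from uniformly sampled $L$ (which was established in \eqref{eq: unif line agree}) to the plane-weighted distribution that picks $P \in \mathcal{S}$ uniformly and $L \subseteq P$ uniformly. These two distributions on $L$ differ only because the marginal of the plane-weighted one is proportional to $|\{P \in \mathcal{S} : L \subseteq P\}|$, a discrepancy controlled precisely by the spectral sampling lemma applied to the plane-versus-line inclusion graph. Once the plane-weighted bound is in hand, a short averaging argument combined with Schwartz--Zippel delivers the second inequality.

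Formally, define $F \colon \mathcal{L} \to [0,1]$ by $F(L) = \Pr_{T[L]}[H|_L = T[L]]$, where $T[L]$ is drawn from its conditional distribution. Inequality \eqref{eq: unif line agree} reads $\E_L[F(L)] \geq \eps'/5$. By construction of $T$, for any $L$ contained in at least one plane of $\mathcal{S}$, the conditional distribution of $T[L]$ is uniform over $\{T'[P]|_L : P \in \mathcal{S}, L \subseteq P\}$ with multiplicities, so
\[
\E_{P \in \mathcal{S}, L \subseteq P}[F(L)] = \Pr_{P \in \mathcal{S}, L \subseteq P}[H|_L = T'[P]|_L].
\]
Now apply Lemma~\ref{lm: spectral sampling} to the plane-versus-line inclusion graph, whose second singular value is at most $(1+o(1))/\sqrt{q}$ by Lemma~\ref{lem:spectral_val_bd}, with the set $\mathcal{S}$ of relative size at least $100/(\eps^2 q)$. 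This gives
\[
\left| \E_{P \in \mathcal{S}, L \subseteq P}[F(L)] - \E_L[F(L)] \right| \leq \frac{(1+o(1))/\sqrt{q}}{\sqrt{100/(\eps^2 q)}} \leq 0.11\,\eps,
\]
and combining the displays yields $\Pr_{P \in \mathcal{S}, L \subseteq P}[H|_L = T'[P]|_L] \geq \eps'/5 - 0.11\,\eps \geq \eps/10$, which is the first inequality of the claim.

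For the consequence, set $p_P = \Pr_{L \subseteq P}[H|_L = T'[P]|_L]$ for $P \in \mathcal{S}$; the first part gives $\E_{P \in \mathcal{S}}[p_P] \geq \eps/10$. If $H|_P = T'[P]$ as bivariate polynomials on $P \cong \Ff_q^2$, then $p_P = 1$. Otherwise $H|_P - T'[P]$ is a nonzero polynomial of degree at most $d$, and for a uniformly random affine line $L \subseteq P$ the restriction $(H|_P - T'[P])|_L$ is a degree $\leq d$ univariate polynomial in the line parameter that vanishes identically on $L$ only when $L$ is contained in the zero set of $H|_P - T'[P]$; by a Bezout-type count this holds for at most an $O(d/q)$-fraction of the lines in $P$. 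Hence $p_P = O(d/q)$ whenever $H|_P \neq T'[P]$, and
\[
\eps/10 \;\leq\; \E_{P \in \mathcal{S}}[p_P] \;\leq\; \Pr_{P \in \mathcal{S}}[H|_P = T'[P]] + O(d/q).
\]
Since $\eps \geq (d/q)^{\tl}$ with $\tl < 1/20$, we have $d/q \leq \eps^{20} \ll \eps/20$, so rearranging gives $\Pr_{P \in \mathcal{S}}[T'[P] = H|_P] \geq \eps/20$, as desired.

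The main obstacle is correctly identifying the mismatch between the uniform distribution on lines (used in \eqref{eq: unif line agree}) and the plane-weighted distribution that appears in the statement of the claim, and recognizing that the gap is exactly what the spectral sampling lemma quantifies through the second singular value of the plane-line inclusion graph. Verifying that the resulting slack $\lambda/\sqrt{\mu(\mathcal{S})}$ is strictly smaller than the $\eps'/5 - \eps/10$ budget is what forces the hypothesis $\mu(\mathcal{S}) \geq 100/(\eps^2 q)$ in \cref{thm: plane vs point strong}; everything else is routine.
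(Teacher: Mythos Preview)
Your argument is correct and matches the paper's proof essentially step for step: both define the same function $F(L)=\Pr_{T[L]}[H|_L=T[L]]$, apply Lemma~\ref{lm: spectral sampling} to the plane--line inclusion graph to transfer the bound from uniform $L$ to $(P\in\mathcal{S},L\subseteq P)$, and then use Schwartz--Zippel to rule out $H|_P\neq T'[P]$ for the second part. The only cosmetic difference is that the paper first averages to get $\eps/20$ of the planes satisfying $p_P\geq \eps/20$ and then invokes Schwartz--Zippel (with the tighter bound $(d/q)^2$, coming from sampling a line via two random points), whereas you decompose $\E_{P\in\mathcal{S}}[p_P]$ directly with the looser bound $p_P=O(d/q)$; either route works with room to spare.
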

\begin{proof}
Let $F': \mc{L}\to [0,1]$ be the function over affine lines of $\Ff_q^3$ given by
\[
F'(L) = \Pr_{T[L]}[H|_L = T[L]].
\] 
From~\eqref{eq: unif line agree}, we have $\E_{L}[F'(L)] = \mu(F') \geq \frac{\eps'}{5}$.
The probability of interest can be expressed as:
\[
\Pr_{P \in S, L \subseteq P}[H|_L = T'[P]|_L] = \Pr_{(L, T[L]) \sim \mc{D}}[H|_L = T[L]] = \E_{L \sim \mc{D}}[F'(L)] = \E_{P \in \mc{S}}\E_{L \subseteq P}[F'(L)].
\]
so applying \cref{lm: spectral sampling} we have
\begin{align*}
   \Pr_{P \in S, L \subseteq P}[H|_L = T'[P]|_L] \geq \mu(F') - \frac{q^{-1/2}}{\sqrt{\mu(S)}} \geq \frac{\eps'}{5} - \frac{\eps}{10} \geq \frac{\eps}{10},
\end{align*}
completing the first part of the claim. For the second part of the claim, it follows from an averaging argument that for $\frac{\eps}{20}$-fraction of the planes $P \in S$, we have,
\[
\Pr_{L \subseteq P}[H|_L = T'[P]|_L] \geq \frac{\eps}{20}.
\]
For such planes $P$, if $H|_P \neq T'[P]$, then by the Schwartz-Zippel lemma, we would have
\[
\Pr_{L \subseteq P}[H|_L = T'[P]|_L] \leq \Pr_{a,b \in P}[H|_P(a) = T'[P](a)] \leq \left(\frac{d}{q}\right)^2 < \frac{\eps}{20},
\]
so it follows that $H|_P = T'[P]$ for $\frac{\eps}{20}$-fraction of the planes $P \in S$.
\end{proof}
\begin{proof}[Proof of \cref{thm: plane vs point strong}]
Let $S'$ be the set of planes in $S$ on which $H|_P = T'[P]$. By \cref{cl: agree in S}, we have that $\mu(S') \geq \frac{\eps}{20} \cdot \mu(S) \geq \frac{5}{\eps q}$. Meanwhile, by definition of $S$, for each $P \in S' \subseteq S$, we have,
\[
\Pr_{x \in P}[H(x) = T'[P](x) = f(x)] \geq 1.2\eps,
\]
so evidently
\[
\E_{P \in S'}\Pr_{x \in P}[H(x) = f(x)] \geq 1.2\eps.
\]
Applying \cref{lm: spectral sampling} it follows that,
\begin{align*} 
\Pr_{x \in \Ff_q^3}[H(x) = f(x)] \geq 1.2\eps -  \frac{q^{-1}}{\sqrt{\mu(S')}} \geq 1.2\eps - \frac{q^{-1}}{\sqrt{5\eps^{-1} q^{-1}}} 
> \eps.
\end{align*}
This contradicts the assumption of \cref{thm: plane vs point strong} and completes the proof.
\end{proof}

\section{Proofs for \cref{sec: prox gen intro}} \label{sec: prox gen proof}
In this section we prove the theorems in \cref{sec: prox gen intro} and \cref{sec: combine intro}, and to do so we first show an auxiliary result for the total degree Reed-Muller code.

\begin{claim} \label{cl: good lines}
Suppose the setting of \cref{thm: prox gen RM without corr} and set $\delta \geq 1/\sqrt{q}$. Then for at least $1-\frac{2}{\delta^2 q}$-fraction of the affine lines $L \subseteq \Ff_{q}^m$, there is a set of points $A_L \subseteq L$ of size $\mu(A_L) \geq \eps - \delta$, such that the univariate functions $f_1|_{L}, \ldots f_k|_{L}$ have correlated agreement with degree $d$ on $A_L$. That is, for each $i \in [k]$, there is a univariate degree $d$ function $h_i: \Ff_q \to \Ff_{q'}$ satisfying
\[
f_i|_{A_L} = h_i|_{A_L}.
\]
\end{claim}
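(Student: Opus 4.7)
The plan is to lift the Reed-Solomon proximity generator with correlated agreement (\cref{thm: prox gen RS with corr}) from each individual line to a global statement, using a spectral sampling argument to control the loss. First I would set up notation: let $\Xi_{\good}$ be the set of coefficient tuples $\xi = (\xi_1,\dots,\xi_k)$ such that $\agr(\sum_i \xi_i f_i, \RM_{q'}[d,\Ff_q^2]) \geq \eps$. By hypothesis $\Pr_{\xi \sim \prox}[\xi \in \Xi_{\good}] \geq 2k \cdot \err(d,q,q')$. For each $\xi \in \Xi_{\good}$, fix a codeword $H_\xi \in \RM_{q'}[d, \Ff_q^2]$ achieving this agreement, and let $W_\xi \subseteq \Ff_q^2$ be the set where $\sum_i \xi_i f_i$ and $H_\xi$ coincide; then $\mu(W_\xi) \geq \eps$.

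Next I would use the line-versus-point spectral sampling (\cref{lm: spectral sampling} applied to the bipartite inclusion graph with singular value at most $1/\sqrt{q}$ from \cref{lem:spectral_val_bd}) to show that the set $B_\xi$ of ``bad'' lines $L \subseteq \Ff_q^2$ for which $|W_\xi \cap L|/|L| < \eps - \delta$ has measure at most $1/(\delta^2 q)$. Indeed, if $\mu(B_\xi) = \eta$, then taking $G(x) = \ind[x \in W_\xi]$ and $S = B_\xi$ in \cref{lm: spectral sampling} yields $|\E_{L \in B_\xi}[\mu(W_\xi \cap L)] - \mu(W_\xi)| \leq 1/\sqrt{\eta q}$, which forces $\delta \leq 1/\sqrt{\eta q}$. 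Since every line $L \notin B_\xi$ satisfies $\agr(\sum_i \xi_i f_i|_L, \RS_{q'}[d, \Ff_q]) \geq \eps - \delta$ (as $H_\xi|_L$ is an explicit witnessing degree $d$ polynomial), this gives, for each good $\xi$, a full fraction of lines where the combined restriction has high agreement with a univariate degree $d$ polynomial.

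The third step is a simple averaging swap. Computing in two ways,
\[
\E_{L}\Pr_{\xi \in \Xi_{\good}}[L \in B_\xi] \;=\; \E_{\xi \in \Xi_{\good}}\Pr_L[L \in B_\xi] \;\leq\; \frac{1}{\delta^2 q}.
\]
By Markov, for all but a $2/(\delta^2 q)$ fraction of lines $L$ we have $\Pr_{\xi \in \Xi_{\good}}[L \in B_\xi] \leq 1/2$, hence
\[
\Pr_{\xi \sim \prox}\!\left[\agr\!\left(\sum_i \xi_i f_i|_L,\; \RS_{q'}[d, \Ff_q]\right) \geq \eps - \delta\right] \;\geq\; \tfrac{1}{2}\cdot 2k \cdot \err(d,q,q') \;=\; k \cdot \err(d,q,q').
\]
For each such line $L$, I would then apply the Reed-Solomon proximity generator with correlated agreement (\cref{thm: prox gen RS with corr}) to the univariate functions $f_1|_L, \ldots, f_k|_L$ with agreement parameter $\eps - \delta$. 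This produces a common set $A_L \subseteq L$ of size at least $\eps - \delta$ on which each $f_i|_L$ agrees with a univariate degree $d$ function $h_i$, exactly as required.

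The main technical point to be careful about is ensuring the hypotheses of \cref{thm: prox gen RS with corr} are met on each good line: we need $\eps - \delta \geq \sqrt{d/q}$, which follows from the standing assumption $\eps \geq (d/q)^{\tl}$ of \cref{thm: prox gen RM without corr} together with $\delta \geq 1/\sqrt{q}$ (provided the constants are chosen accordingly, which we may assume without loss of generality). The factor of $2$ in the hypothesis $\geq 2k \cdot \err$ is precisely what is absorbed by the Markov step, leaving the exact threshold $k \cdot \err$ needed to invoke the RS proximity generator on each remaining line.
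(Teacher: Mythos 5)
Your proposal is correct and takes essentially the same route as the paper: both use spectral sampling on the line--point inclusion graph to show each good $\xi$ has few bad lines, swap the order of expectation, invoke a Markov/averaging step to halve the density (absorbing the factor $2$ in $2k\cdot\err$), and then apply the Reed--Solomon proximity generator with correlated agreement line by line. The only cosmetic difference is that you phrase the bad-lines bound in terms of fixing the optimal witness $H_\xi$ explicitly, while the paper speaks directly of the agreement set $B_\xi$; the argument is the same.
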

\begin{proof}
Fix $\delta \geq 1/\sqrt{q}$. For each choice of $(\xi_1, \ldots, \xi_k) \sim \prox(k, d, q, q')$ according to the proximity generator, let $B_{\xi_1, \ldots, \xi_k}$ be the set of points on which $\sum_{i=1}^k \xi_i \cdot f_i$ and its closest degree $d$ function agree. Let $S \subseteq \Ff_{q'}^k$ be the set of tuples $(\xi_1, \ldots, \xi_k)$ such that $\mu(B_{\xi_1, \ldots, \xi_k}) \geq \eps$. By assumption,
    \[
    \Pr_{\xi_1, \ldots, \xi_k \in \Ff_{q'}}[(\xi_1,\ldots, \xi_k) \in S] \geq 2k\err(d,q,q').
    \]
    Fix a $(\xi_1, \ldots, \xi_k) \in S$ and let
    \[
    \mc{L}' = 
    \left\{L \in \mc{L} \; | \; |L \cap B_{\xi_1, \ldots, \xi_k}|  \leq  (\eps - \delta)|L| 
    \right\}
    \]
    be the set of lines that have small intersection with $\B_{\xi_1,\ldots, \xi_k}$. Applying \cref{lm: spectral sampling} with the lines versus points graph, we have
    \[
  \delta \leq \left|\Pr_{L \in \Lc', x \in L}[x \in B_{\xi_1, \ldots, \xi_k}] - \Pr_{x \in \Ff_q^m}[x \in B_{\xi_1, \ldots, \xi_k}]\right| \leq  \frac{q^{-1/2}}{\sqrt{\mu_{\mc{L}}(\Lc')}}.
    \]
    Thus, $\mu_{\Lc}(\Lc') \leq \frac{1}{q\cdot \delta^2}$, and so
    \[
    \E_{\xi_1,\ldots,\xi_k \in S}\left[\Pr_{L}\left[\agr_d\left(\sum_{i=1}^k \xi_i \cdot f_i|_L\right) \geq \eps - \delta \right]\right] \geq 1 - \frac{1}{q\cdot \delta^2}. 
    \]
    Switching the order of expectations above, we get,
    \[
\E_{L}\left[\Pr_{\xi_1,\ldots,\xi_k \in S}\left[\agr_d\left(\sum_{i=1}^k \xi_i \cdot f_i|_L\right) \geq \eps - \delta \right] \right] \geq 1 - \frac{1}{\delta^2 q}. 
    \]
    By an averaging argument, it follows that for at least $1 - 2/(\delta^2 q)$ fraction of affine lines $L$, we have,
    \[
 \Pr_{\xi_1,\ldots, \xi_k \in S}\left[\agr_d\left(\sum_{i=1}^k \xi_i \cdot f_i|_L\right) \geq \eps - \delta \right] \geq 0.5.
    \]
    Call these affine lines good. It remains to show that $f_1|_L, \ldots, f_k|_L$ have correlated agreement for every good $L$. 
    
    To this end, note that \cref{thm: prox gen RS with corr} can be applied. Indeed, $0.5$-fraction of $S$ is at least $k\cdot \err(d,q,q')$ fraction of the proximity generators sampled from $\prox(k,d,q,q')$, so it follows that for every good line $L$, the univariate functions $f_1|_L, \ldots, f_k|_L$ satisfy the assumption in \cref{thm: prox gen RS with corr}:
    \[
    \Pr_{\xi_1, \ldots, \xi_k \in \Ff_{q'}}\left[\agr_d\left(\sum_{i=1}^k \xi_i \cdot f_i|_L\right) \geq \eps\right] \geq k \cdot \err(d,q,q').
    \]
    Therefore, by \cref{thm: prox gen RS with corr} for every good line $L$, there is a set of points $A_L$ of size $\mu(A_L) \geq \eps - \delta$ on which $f_1|_L, \ldots, f_k|_L$ each agree with some degree $d$ function.
\end{proof}

\subsection{Proof of \cref{thm: prox gen RM without corr}}
\begin{proof} [Proof of \cref{thm: prox gen RM without corr}] Suppose the setting of \cref{thm: prox gen RM without corr}. We apply \cref{cl: good lines} with $\delta = \frac{100}{\sqrt{q}}$, and let $\Lg$ be the set of lines given. Thus, for each $L \in \Lg$, the functions $f_1|_L, \ldots, f_k|_L$ have $\eps-\delta$ correlated agreement with degree $d$, and we let $A_L \subseteq L$ be the site of this correlated agreement. 
For each $f_i$, define a lines table $T_i$ using the degree $d$ functions that are correlated with $f_i$ on the good lines. Specifically, for each $i \in [k]$, and each good line $L$, let $T_i[L]$ be the degree $d$ function such that
\[
T_i[L]|_{A_L} = f_i|_{A_L}.
\]
For every line $L \notin \Lg$, set $T_i[L]$ to be a randomly chosen degree $d$ function. Now note that $T_i$ and $f_i$ pass the line versus point test with probability at least $0.999 \eps$. Indeed, for each $i \in [k]$
\begin{equation} \label{eq: good sample without} 
\Pr_{L \in \Lc, x \in L}[T_i[L](x) = f_i(x)] \geq \Pr_{L \in \Lc}[L \in \Lg] \cdot  \Pr_{L \in \Lg, x \in L}[T_i[L](x) = f_i(x)] \geq \left(1 - \frac{2}{\delta^2q}\right)(\eps-\delta),
\end{equation}
and this final quantity is at least $\frac{9998}{10000}(\eps - \delta) \geq 0.999 \eps$. Applying \cref{thm: line vs point}, it follows that $f_i$ has agreement at least 
\[
 0.999\eps - \left(\frac{d}{q} \right)^{\tl} \geq 0.998\eps
\]
with a degree $d$ function. Moreover, this holds for every $i \in [k]$, so we are done.
\end{proof}

\subsection{Proof of \cref{thm: prox gen RM with corr}}
\begin{proof}[Proof of \cref{thm: prox gen RM with corr}]
Suppose the setting of \cref{thm: prox gen RM with corr} and apply \cref{cl: good lines}. Call the lines satisfying \cref{cl: good lines} good, and denote the set of good lines by $\Lg$.  For each $f_i$, we will define a lines table, $T_i$, using the degree $d$ functions that are correlated with $f_i$ on the good lines. Specifically, for each $i \in [k]$, and each good line $L$, let $T_i[L]$ be the degree $d$ function agreeing with $f_i$ on $A_L$.

For every line other line, $L$ that is not good, set $T_i[L]$ to be a randomly chosen degree $d$ function. Then $T_i$ and $f_i$ pass the line versus point test with probability at least 
\[
0.98\cdot (\eps-10/\sqrt{q}) \geq 0.97 \eps.
\]
Indeed, for each $i \in [k]$
\begin{equation} \label{eq: good sample} 
\Pr_{L \in \Lc, x \in L}[T_i[L](x) = f_i(x)] \geq \Pr_{L \in \Lc}[L \in \Lg] \cdot  \Pr_{\substack{L \in \Lg\\ x \in L}}[T_i[L](x) = f_i(x)] \geq 0.998\cdot (\eps-10/\sqrt{q}) \geq 0.97 \eps.
\end{equation}

Now applying \cref{thm: line vs point list decode}, we get that for each $i \in [k]$, there is a list of $M \leq (5/(0.97k\eps))^7$ degree $d$ functions $\{F_{i,1}, \ldots F_{i, M}\}: \Ff_q^m \to \Ff_{q'}$ such that,
\begin{equation} \label{eq: list decode in prox gen}  
\Pr_{L \in \Lc, x \in L}[T_i[L](x) = f_i(x) \land T_i[L] \notin \{F_{i,j}|_L\}_{j=1}^M] \leq \frac{0.97\eps}{10k}.
\end{equation}
For each $i \in [k]$, let $\Lc_{i, j} = \{L \in \Lg \; | \; T_i[L] = F_{i,j}|_L \}$. We have
\begin{align*}
    &\Pr_{L \in \Lc, x \in L}[L \in \cap_{i}(\cup_{j}\Lc_{i,j})]\\
    &\qquad\geq  \Pr_{L \in \Lc, x \in L}[L \in \Lg \land x \in A_L] -  \Pr_{L \in \Lc, x \in L}[L \in \Lg \land x \in A_L \land 
    (\cup_i T_i[L] \notin \{F_{i, j}|_L\}_{j=1}^M)] \\
    &\qquad\geq 0.97\eps - \frac{0.97\eps}{10k}\cdot k \\
    &\qquad\geq 0.87\eps,
\end{align*}
where in the second transition we are using~\eqref{eq: good sample},~\eqref{eq: list decode in prox gen}. Since this holds for every $i$, we can find $(j_1, \ldots, j_k) \in [M]^k$ such that:
\[
\Pr_{L \in \Lc}
\left[L \in \bigcap_{i \in [k]}\Lc_{i, j_i}\right] \geq 0.87\eps \cdot \frac{1}{M^{k}} \geq \frac{1}{100q^{1/2}},
\]
where in the last transition we are using the assumption on the size of $\eps$. Let $\Lc^{\star} = \bigcap_{i \in [k]}\Lc_{i, j_i}$. For each $i \in [k]$, we have that, $f_i$ and the degree $d$ function $F_{i, j_i}$ agree on $A^\star = \bigcup_{L \in \Lc^\star} A_L$. To conclude, we use \cref{lm: spectral sampling} with the lines versus points graph to lower bound the size of $A^\star$:
\[
\mu(A^\star) \geq \eps - \frac{10}{\sqrt{q}} - \frac{q^{-1/2}}{\sqrt{\mu_{\Lc}(\Lc^\star)}} \geq \eps - \frac{10}{\sqrt{q}} -\frac{1}{10q^{1/4}} \geq 0.999\eps.
\qedhere
\]
\end{proof}

\subsection{Proof of \cref{thm: prox gen iRM with corr}} \label{sec: proof of prox gen irm with corr}
Suppose the setting of \cref{thm: prox gen iRM with corr}. Our strategy will be to first show correlated agreement with a degree $2d$ function, and then use list decoding and the Schwartz-Zippel lemma to show that this agreement must actually be with a degree $(d,d)$ function. 

To this end, let $\mc{H}$ consist of all pairs of degree $2d$ functions $(h^{(i)}_1, h^{(i)}_2)$, such that there exists a set $A_i \subseteq \Ff_q^m$ of size at least $0.99\eps$ satisfying $h^{(i)}_j|_{A_i} = f_j|_{A_i}$ for all $j \in [2]$. By \cref{thm: list decoding}, we have $|\mc{H}| = M \leq \frac{2}{\eps}$. Label the tuples in $\mc{H}$ as $(h^{(i)}_1, h^{(i)}_2)$ for $i \in [M]$. For each $i$, let $A_i$ be the set on which $f_1,f_2$ have correlated agreement with $(h^{(i)}_1, h^{(i)}_2)$ and let $A = \bigcup_{i=1}^M A_i$. If for any $i$ we have that $h^{(i)}_1$ and $h^{(i)}_2$ are both degree $(d,d)$ functions, then we are done, so suppose this is not the case.
Choosing $\xi_1,\xi_2\in \Ff_{q'}$, define the following events:
\begin{itemize}
    \item $E_1$ is the event that there exists a degree $(d, d)$ function $H$ and $V \subseteq \Ff_q^m$ satisfying $\mu(V) \geq \eps$, $\mu(V \cap A) \leq M \cdot \frac{2d}{q}$ and $\left(\xi_1 \cdot f_1 + \xi_2 \cdot f_2 \right)|_V = H|_V$.
    \item $E_2$ is the event that there exists a degree $(d, d)$ function $H$ and $V \subseteq \Ff_q^m$ satisfying $\mu(V) \geq \eps$, $\mu(V \cap A) > M \cdot \frac{2d}{q}$ and $\left(\xi_1 \cdot f_1 + \xi_2 \cdot f_2 \right)|_V = H|_V$.
\end{itemize} 
Note that 
\begin{equation} \label{eq: irm prox aux}
\Pr[E_1] + \Pr[E_2] \geq \Pr[\agr_{(d,d)}(\xi_1 \cdot f_1 + \xi_2 \cdot f_2) \geq \eps] \geq 2\cdot \err(2d,q,q'),
\end{equation}

where the second inequality is the assumption from \cref{thm: prox gen iRM with corr}. We first bound $\Pr[E_1]$.
\begin{claim} \label{cl: prox aux}
    We have $\Pr[E_1] < \err(2d,q,q')$.
\end{claim}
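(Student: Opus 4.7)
The plan is to assume for contradiction that $\Pr[E_1] \geq \err(2d, q, q')$ and to derive a contradiction with the standing assumption that no pair in $\mc{H}$ consists of two functions of individual degrees $(d,d)$. Since any function of individual degrees $(d,d)$ is in particular of total degree at most $2d$, the event $E_1$ is contained in the event that $\agr(\xi_1 f_1 + \xi_2 f_2, \RM_{q'}[2d, \Ff_q^m]) \geq \eps$. Hence the hypothesis of the correlated total-degree Reed-Muller proximity generator $\cref{thm: prox gen RM with corr}$ is met with $k=2$ and degree $2d$, producing a pair $(h_1^*, h_2^*) \in \RM_{q'}[2d, \Ff_q^m]^2$ and a subset $A^* \subseteq \Ff_q^m$ with $\mu(A^*) \geq 0.999\eps$ such that $f_i|_{A^*} = h_i^*|_{A^*}$ for $i = 1, 2$. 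By the construction of $\mc{H}$ this pair equals some $(h_1^{(i^*)}, h_2^{(i^*)})$, so $A^* \subseteq A_{i^*} \subseteq A$.

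The next step is to upgrade $(h_1^*, h_2^*)$ from total degree $2d$ to individual degrees $(d,d)$, which yields the desired contradiction. For each $(\xi_1, \xi_2) \in E_1$ with witness $V$ and witness polynomial $H = H_{\xi_1, \xi_2}$ of individual degrees $(d,d)$, the identity
\[
\xi_1 h_1^*(x) + \xi_2 h_2^*(x) \;=\; \xi_1 f_1(x) + \xi_2 f_2(x) \;=\; H(x)
\]
holds for every $x \in A^* \cap V$. Both sides are polynomials of total degree at most $2d$, so if $\mu(A^* \cap V) > 2d/q$ then the Schwartz-Zippel lemma forces the polynomial identity $\xi_1 h_1^* + \xi_2 h_2^* = H$, and in particular $\xi_1 h_1^* + \xi_2 h_2^*$ has individual degrees $(d,d)$. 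Taking any two linearly independent $(\xi_1, \xi_2) \in E_1$ for which this identity holds and inverting the resulting $2 \times 2$ linear system would then force both $h_1^*$ and $h_2^*$ to have individual degrees $(d,d)$, contradicting the standing assumption on $\mc{H}$.

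The main obstacle is producing such a positive-measure set of $(\xi_1, \xi_2)$: a priori we know only that $\mu(A^* \cap V) \leq \mu(V \cap A) \leq M \cdot 2d/q$ (since $A^* \subseteq A$), which is an upper bound in the wrong direction and does not suffice to invoke Schwartz-Zippel. To circumvent this, the plan is to re-apply $\cref{thm: prox gen RM with corr}$ to a modification of $(f_1, f_2)$ designed to isolate the contribution coming from $\Ff_q^m \setminus A$: for instance replace $f_i$ on $A$ by a generic polynomial of degree exceeding $2d$, so that any correlated agreement of the modified pair with degree-$2d$ functions must live in $\Ff_q^m \setminus A$. Each $(\xi_1, \xi_2) \in E_1$ then yields agreement of measure at least $\eps - M \cdot 2d/q$ of the random combination of the modified functions with a $(d,d)$-polynomial on $V \setminus A$; if $\Pr[E_1] \geq \err(2d, q, q')$, the proximity generator produces a new correlated-agreement pair whose witness set lies almost entirely outside $A$. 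But the definition of $\mc{H}$ already enumerates every correlated-agreement pair for $(f_1, f_2)$, so this new pair coincides with some $(h_1^{(i)}, h_2^{(i)}) \in \mc{H}$ and its witness set is contained in $A_i \subseteq A$, contradicting the location of the witness set and establishing the claim.
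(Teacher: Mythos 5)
Your third paragraph is essentially the paper's argument, and it is correct: suppose for contradiction $\Pr[E_1] \geq \err(2d,q,q')$, redefine each $f_j$ on $A$ to be a fixed polynomial of degree exceeding $2d$ (the paper uses $x_1^{d+1}x_2^d$), note that for each $(\xi_1,\xi_2)\in E_1$ the witness set $V$ has $\mu(V\cap A)\leq M\cdot 2d/q$ so the modified combination still has agreement at least $0.999\eps$ with degree $2d$, apply \cref{thm: prox gen RM with corr} to get a correlated-agreement pair $(g_1,g_2)$ on a set $U$ with $\mu(U)\geq 0.998\eps$, and contradict. Where the paper derives the contradiction by directly bounding $\mu(U) \leq 0.99\eps + \tfrac{2d+1}{q} < 0.998\eps$ (agreement outside $A$ is capped by the construction of $\mc{H}$, agreement inside $A$ is tiny by Schwartz--Zippel), you instead observe that $(g_1,g_2)$ has correlated agreement $\geq 0.99\eps$ with the original $(f_1,f_2)$ on $U\setminus A$ and so must coincide with some $(h_1^{(i)},h_2^{(i)})\in\mc{H}$, forcing $U\setminus A\subseteq A_i\subseteq A$ and hence $U\setminus A=\emptyset$, contradicting its large measure. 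These are equivalent phrasings of the same final step. Your first two paragraphs (applying the proximity generator directly, then trying to upgrade $(h_1^*,h_2^*)$ via Schwartz--Zippel) are a dead end that you correctly abandon and that contributes nothing to the eventual argument, so they should simply be deleted.
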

\begin{proof}
    Suppose for the sake of contradiction that the inequality above is false. For each $j \in [2]$, define $f'_j$ by setting $f'_j(x) = f_j(x)$ for all $x \in \Ff_q^m \setminus A$ and setting $f'_j(x) = x_1^{d+1}x_2^d$ for all $x \in A$. Then by our assumption, we have
    \begin{equation}  \label{eq: prox gen application in irm prox gen proof}
    \Pr_{\xi_1,\xi_2 \in \Ff_{q'}}\left[\agr_{2d}\left(\xi_1\cdot f'_1 + \xi_2 \cdot f'_2\right) \geq 0.999\eps \right] \geq \Pr_{\xi_1,\xi_2 \in \Ff_{q'}}[E_1] \geq \err(2d,q,q').
     \end{equation}

    Indeed, if $\xi_1 \cdot f_1 + \xi_2 \cdot f_2$ agrees with some degree $(d,d)$ function on $V$, then $\xi_1 \cdot f'_1 + \xi_2 \cdot f'_2$ agrees with the same function on $V \setminus A$. Thus, if $E_1$ occurs, then there is such $V$ satisfying
    \begin{equation*}   
    \mu(V \setminus A) \geq \mu(V) - \mu(V \cap A) \geq \eps - M \cdot \frac{2d}{q} > 0.999 \eps.
    \end{equation*}
    Applying \cref{thm: prox gen RM with corr} to \cref{eq: prox gen application in irm prox gen proof}, we get that there exists a set $U \subseteq \Ff_{q}^m$ of size at least $0.998\eps$ and a pair of degree $2d$ functions $(g_1, g_2)$ such that $g_j|_U = f'_j|_U$ for all $j \in [2]$. 
    
    However, we claim that this is a contradiction because our construction of $f'_1$ and $f'_2$ guarantees that they cannot have such high correlated agreement with degree $2d$. First, they cannot have significant correlated agreement outside of $A$. Indeed, for any pair of degree $2d$ functions $(g_1, g_2)$, we have, 
    \[
    \Pr_{x \in \Ff_{q}^m \setminus A}[g_1(x) = f'_1(x) \land g_2(x) = f'_2(x)] = \Pr_{x \in \Ff_{q}^m \setminus A}[g_1(x) = f_1(x) \land g_2(x) = f_2(x)] \leq 0.99 \eps.
    \]
    The equality is because $f'_j$ and $f_j$ agree outside of $A$ for $j \in [2]$ and the inequality is by the definition of $\mc{H}$ and $A$. On the other hand, $f'_1$ and $f'_2$ cannot have significant correlated agreement inside of $A$ either as
    \[ 
     \Pr_{x \in A}[g_1(x) = f'_1(x) \land g_2(x) = f'_2(x)] \leq \frac{2d+1}{q}
    \]
    by the Schwartz-Zippel lemma. Combining these two inequalities, we get that any set $U \subseteq \Ff_q^m$ where $f'_1$ and $f'_2$ have correlated agreement with degree $2d$ can have size at most
    \[
    \mu(U) \leq 0.99\eps + \frac{2d+1}{q} < 0.998\eps,
    \]
    contradicting our prior conclusion.
\end{proof}
~\cref{cl: prox aux} and~\eqref{eq: irm prox aux} give that $\Pr[E_2]\geq \err(2d,q,q')$. From here, we can conclude the proof of \cref{thm: prox gen iRM with corr}.

\begin{proof}[Proof of \cref{thm: prox gen iRM with corr}] 
Recall that we assume every $(h^{(i)}_1, h^{(2)}_2)\in \mc{H}$ is a pair of functions whose individual degree is not $(d,d)$. Thus, for each $(h^{(i)}_1, h^{(2)}_2)\in \mc{H}$, we have that $\xi_1 \cdot h^{(i)}_1 + \xi_2 \cdot h^{(i)}_2$ is degree $(d,d)$ with probability at most $1/q'$ when $\xi_1, \xi_2 \in \Ff_{q'}$ are chosen uniformly at random. Letting $E_3$ denote the event that this is the case for some $i \in [M]$, we get that $\Pr[E_3] \leq M/q'$. Thus, we get that 
\[
\Pr_{\xi_1, \xi_2 \in \Ff_{q'}}[E_2 \land \overline{E_3}] > \err(2d,q,q') - \frac{M}{q'} > 0,
\]
so there exist coefficients $\xi_1, \xi_2 \in \Ff_{q'}$ such that the following holds for $F = \xi_1 \cdot f_1 + \xi_2 \cdot f_2$:
\begin{itemize}
    \item There exists a degree $(d, d)$ function $H$ and $V \subseteq \Ff_q^m$ such that $\mu(V \cap A) > M\cdot \frac{2d}{q}$ and $F|_V = H|_V$.
    \item The function $\xi_1 \cdot h^{(i)}_1 + \xi_2 \cdot h^{(i)}_2$ is not degree $(d,d)$ for any $i \in [M]$. In particular, $\xi_1 \cdot h^{(i)}_1 + \xi_2 \cdot h^{(i)}_2 \neq H$.
\end{itemize}
We now reach a contradiction. First, by
a union bound $\mu(V \cap A) \leq \sum_{i=1}^M \mu(V \cap A_i)$.
For each $i\in [M]$ and $x \in V \cap A_i$ we have 
\[
\xi_1 \cdot h^{(i)}_1(x) + \xi_2 \cdot h^{(i)}_2(x) = \xi_1 \cdot f_1(x) + \xi_2 \cdot f_2(x) = F(x) = H(x),
\]
where the first equality is because $x \in A_i$, the second equality is by definition, and the third equality is because $x \in V$. As a result,
\[
\mu(V \cap A_i) \leq \Pr_{x \in \Ff_q^m}[\xi_1 \cdot h^{(i)}_1(x) + \xi_2 \cdot h^{(i)}_2(x) = H(x)] \leq \frac{2d}{q}
\]
by the Schwartz-Zippel lemma and the fact that $\xi_1 \cdot h^{(i)}_1 + \xi_2 \cdot h^{(i)}_2 \neq H$ are distinct functions with total degree at most $2d$. It follows that $\mu(V \cap A) \leq M \cdot \frac{2d}{q}$, and contradiction.
\end{proof}

\section{$\rsi(2\sqrt{d}, f, \lambda, \eps_{k-1})$ Construction: Proof of~\cref{lm: 2sqrtd rsi}} \label{sec: 2sqrtd}
We construct $\rsi(f, 2\sqrt{d}, q, \lambda, \eps_{k-1})$ according to \cref{lm: 2sqrtd rsi} here. As the IOPP is fairly simple, we give the construction of $\rsi(f, 2\sqrt{d}, q, \lambda, \eps_{k-1})$ directly, instead of constructing a Poly-IOP first and then compiling it.

\begin{algorithm}[H]\label{iop: 2sqrtd}
  \caption{Reed-Solomon IOP: $\rsi(f, 2\sqrt{d}, q, \lambda, \eps_{k-1})$}
  \begin{algorithmic}[1]
    \STATE \textbf{P:} The prover sends $f_1: \Ff_q \to \Ff_{q'}$. We will assume that $f_1(0) = f(0)$. In order for this assumption to work, the verifier queries $f(0)$ to obtain the value of $f_1(0)$.
    \STATE  Define the function $f_2: \Ff_q \to \Ff_{q'}$ as follows:
   \begin{equation*}
       f_2(x) = 
       \begin{cases}
       \frac{f(x) - f_1(x)}{x^{\sqrt{d}}} \text{ if } x \neq 0, \\
       0 \text{ if } x = 0.
       \end{cases}
   \end{equation*}
   In the honest case, $f_2$ has degree at most $\sqrt{d}$ and, 
   \[
   f(x) = f_1(x) + x^{\sqrt{d}}f_2(x), \; \forall x \in \Ff_q.
   \]
   \STATE \textbf{V:} The verifier chooses $\gamma \in \Ff_{q'}$ uniformly at random and sets $g: \Ff_q \to \Ff_{q'}$ to be
   \[
   g(x) = f_1(x) +  \gamma\cdot f_2(x).
   \]
   \STATE \textbf{P+V:} Both parties run $\rsi(g, \sqrt{d}, q, \lambda, \eps_{k-1})$.
\end{algorithmic}
\end{algorithm}

\begin{proof}[Proof of \cref{lm: 2sqrtd rsi}]
    The length, round complexity, and query complexities are straightforward to verify.  For the completeness, if $f \in \RS_{q'}[2\sqrt{d}, \Ff_q]$, then its low degree extension $\wh{f} \in \Ff_{q'}^{\leq 2\sqrt{d}}[x]$ can be decomposed as $\wh{f} = \wh{f}_1 + x^{\sqrt{d}}\wh{f}_2$
    for $\wh{f}_1, \wh{f}_2 \in \Ff_{q'}^{\leq \sqrt{d}}[x]$.
    The honest prover sends $f_1$ to be the evaluation of $\wh{f}_1$ over $\Ff_q$.  In this case, one can check that $f_2$ as defined in step 2 is the evaluation of $\wh{f}_2$ over $\Ff_q$. Thus, both $f_1, f_2 \in \RS_{q'}[\sqrt{d}, \Ff_q]$. It follows that $g \in \RS_{q'}[\sqrt{d}, \Ff_q]$ with probability $1$ and the remainder of the completeness follows from that of $\rsi(f, \sqrt{d}, q, \lambda, \eps_{k-1})$.

    For the soundness case, suppose that 
    \begin{equation} \label{eq: 2sqrt soundness assumption} 
    \agr(f, \RS_{q'}[2\sqrt{d}, \Ff_q]) \leq \eps_{k-1}.
     \end{equation}
    Note that
    \begin{equation}  \label{eq: 2sqrtd rbr}
    \Pr_{\gamma \in \Ff_{q'}}[\agr(g, \RS_{q'}[\sqrt{d}, \Ff_q]) \geq \eps_{k-1}] \leq \frac{\poly(\sqrt{d},q)}{q'} \leq 2^{-\lambda}.
    \end{equation}
    Indeed, otherwise by~\cref{lm: combine one univariate} $f_1$ and $f_2$ have $\eps_{k-1}$ correlated agreement with degree $\sqrt{d}$ functions $F_1, F_2: \Ff_q \to \Ff_{q'}$, implying $\agr(f, F_1 + x^{\sqrt{d}}\cdot F_2) \geq \eps_{k-1}$.
    Since $F_1 + x^{\sqrt{d}}\cdot F_2 \in \RS_{q'}[2\sqrt{d}, \Ff_q]$, this contradicts~\eqref{eq: 2sqrt soundness assumption}. 
    
    With~\eqref{eq: 2sqrt soundness assumption} established, the round-by-round soundness then follows from the round-by-round soundness of $\rsi(g, \sqrt{d}, q, \lambda, \eps_{k-1})$.
\end{proof}

\section{Proofs for \cref{sec: iop for rics}} \label{app: rics proofs}

\subsection{Univariate Sumcheck}
Towards the proof of \cref{lm: sumcheck poly iop}, we will first describe a univariate sumcheck Poly-IOP, and then use this in the multivariate sumcheck Poly-IOP. The following fact will be useful.
\begin{fact} \label{fact: sum}
    If $H$ is a multiplicative subgroup of $\Ff_{q'}$ and $\wh{R} \in \Ff_{q'}^{\leq |H|-1}[x]$ be a polynomial, then
    \[
    \sum_{\alpha \in H} \wh{R}(\alpha) = \wh{R}(0) \cdot |H|.
    \]
\end{fact}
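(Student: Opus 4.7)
The plan is a standard character-sum calculation, reduced by linearity to the case of pure monomials. First I would expand $\wh{R}(x) = \sum_{i=0}^{|H|-1} c_i x^i$ and, using linearity of the sum over $\alpha \in H$, write
\[
\sum_{\alpha \in H} \wh{R}(\alpha) \;=\; \sum_{i=0}^{|H|-1} c_i \Bigl(\sum_{\alpha \in H} \alpha^i\Bigr).
\]
Since $\wh{R}(0) = c_0$ and $\sum_{\alpha \in H} \alpha^0 = |H|$, the claim reduces to showing that $S_i := \sum_{\alpha \in H} \alpha^i = 0$ whenever $1 \leq i \leq |H|-1$.

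The key step, and the only nontrivial one, is this vanishing of power sums over a finite multiplicative subgroup. I would handle it by the classical ``translation trick'': for $1 \leq i \leq |H|-1$, the fact that $H$ is a finite (hence cyclic) multiplicative subgroup of order $|H|$ guarantees the existence of some $\beta \in H$ with $\beta^i \neq 1$ (otherwise every element of $H$ would be an $i$-th root of unity, forcing $|H|$ to divide $i$, which fails in this range). Multiplication by $\beta$ permutes $H$, so
\[
S_i \;=\; \sum_{\alpha \in H} \alpha^i \;=\; \sum_{\alpha \in H} (\beta \alpha)^i \;=\; \beta^i \, S_i,
\]
and therefore $(1 - \beta^i) S_i = 0$ in $\Ff_{q'}$. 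Since $\beta^i \neq 1$ and $\Ff_{q'}$ is a field, we conclude $S_i = 0$.

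Plugging this back in, only the $i = 0$ term survives, yielding $\sum_{\alpha \in H} \wh{R}(\alpha) = c_0 \cdot |H| = \wh{R}(0) \cdot |H|$, as desired. There is no real obstacle here beyond correctly invoking the cyclicity of a finite multiplicative subgroup of a field; the degree hypothesis $\deg(\wh{R}) \leq |H|-1$ is precisely what ensures every exponent appearing lies in the range $\{0, 1, \ldots, |H|-1\}$ so that the translation trick kills every term other than the constant one.
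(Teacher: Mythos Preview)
Your proof is correct and follows essentially the same approach as the paper: both reduce by linearity to showing that $\sum_{\alpha \in H} \alpha^i = 0$ for $1 \le i \le |H|-1$. The only cosmetic difference is that the paper evaluates this power sum via the geometric series formula using a generator of $H$, whereas you use the translation trick $S_i = \beta^i S_i$; these are two standard one-line arguments for the same elementary fact.
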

\begin{proof}
Writing $H = \{w^0,\ldots,w^{|H|-1}\}$ for some $w$,
for every $1\leq j\leq |H|-1$ we have that 
\[
\sum\limits_{w\in H} w^j 
= \sum\limits_{i=0}^{|H|-1}(w^j)^{i}
=(w^j-1)^{-1}((w^j)^{|H|}-1) = 0
\]
via the geometric sum formula. Expanding $R$ into a linear combination of monomials and using this fact gives the result.
\end{proof}

\begin{lemma} \label{lm: univariate sumcheck poly iop}
    There is a Poly-IOP with the following guarantees:
    \begin{itemize}
        \item Input: A Polynomial $\wh{f} \in \Ff_{q'}^{\leq d'}[x]$, a multiplicative subgroup $H \subseteq \Ff_{q'}$ of size $d+1$, and a target sum $\alpha$.
        \item Completeness: If $\sum_{b \in H} \wh{f}(b) = \alpha$ then the honest prover makes the verifier accept with probability $1$.
        \item Round-by-Round Soundness: $\frac{d'}{q'}$.
        \item Initial State: The initial state is doomed if and only if  $\sum_{b \in H} \wh{f}(b) \neq \alpha$.
        \item Number of Polynomials: $O(1)$.
        \item Round Complexity: $O(1)$.
        \item Input Query Complexity: $\wh{f}$ is queried $1$ time.
        \item Proof Query Complexity: $O(1)$.
    \end{itemize}
\end{lemma}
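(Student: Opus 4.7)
The plan is to implement the standard Aurora-style univariate sumcheck via a polynomial decomposition of $\wh{f}$ modulo the vanishing polynomial $\wh{V}_H$, then reduce the check of the sum to a single polynomial identity that can be verified pointwise.

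The key observation is that by polynomial division, any $\wh{f} \in \Ff_{q'}^{\leq d'}[x]$ can be uniquely written as
\[
\wh{f}(x) = \wh{V}_H(x) \cdot \wh{q}(x) + \wh{r}(x),
\]
where $\wh{q} \in \Ff_{q'}^{\leq d' - |H|}[x]$ and $\wh{r} \in \Ff_{q'}^{\leq |H|-1}[x]$. Since $\wh{V}_H$ vanishes on $H$, Fact~\ref{fact: sum} applied to $\wh{r}$ yields $\sum_{b \in H} \wh{f}(b) = \sum_{b \in H} \wh{r}(b) = |H| \cdot \wh{r}(0)$. Thus the sum equals $\alpha$ if and only if $\wh{r}(0) = \alpha/|H|$, which in turn is equivalent to the existence of a polynomial $\wh{r}' \in \Ff_{q'}^{\leq |H|-2}[x]$ with $\wh{r}(x) = \alpha/|H| + x \cdot \wh{r}'(x)$. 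This motivates the following one-round Poly-IOP: the prover sends $\wh{q}$ and $\wh{r}'$ as above; the verifier then samples $z \in \Ff_{q'}$ uniformly at random, queries $\wh{f}(z), \wh{q}(z), \wh{r}'(z)$, and accepts iff
\[
\wh{f}(z) = \wh{V}_H(z)\cdot \wh{q}(z) + z \cdot \wh{r}'(z) + \alpha/|H|.
\]

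For round-by-round soundness I would use the following state function: the initial state is doomed iff $\sum_{b \in H}\wh{f}(b) \neq \alpha$; after the prover sends $\wh{q}, \wh{r}'$, the state is doomed iff the formal polynomial identity $\wh{f}(x) = \wh{V}_H(x) \wh{q}(x) + x \wh{r}'(x) + \alpha/|H|$ fails in $\Ff_{q'}[x]$; after the verifier's check, the state is doomed iff the verifier rejects. The first round has soundness $0$: if the identity were to hold as polynomials, then summing both sides over $H$ (and using that $\wh{V}_H$ vanishes on $H$ together with Fact~\ref{fact: sum} applied to the degree $\leq |H|-1$ polynomial $x \wh{r}'(x)$, whose constant term is $0$) would force $\sum_{b \in H}\wh{f}(b) = \alpha$, contradicting the initial doomed state. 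The second round has soundness at most $d'/q'$ by the Schwartz--Zippel lemma applied to the nonzero degree $\leq d'$ polynomial $\wh{f}(x) - \wh{V}_H(x)\wh{q}(x) - x \wh{r}'(x) - \alpha/|H|$.

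Completeness is immediate from the existence of the decomposition. The parameter bookkeeping is routine: the prover sends $2$ polynomials, there is $O(1)$ rounds, the verifier makes a single query to $\wh{f}$ and $O(1)$ queries to the prover's polynomials. The only mildly subtle point, which I have already addressed above, is the formal polynomial-identity framing of the intermediate state, ensuring that the ``if it held as polynomials then the sum would be $\alpha$'' argument goes through cleanly regardless of the adversarial choice of $\wh{q}, \wh{r}'$; this is what gives perfect soundness in the first round and isolates all the error into the single Schwartz--Zippel step.
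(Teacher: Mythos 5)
Your proof is correct and follows essentially the same construction as the paper: decompose $\wh{f} = \wh{V}_H \cdot \wh{q} + (\alpha/|H| + x\,\wh{r}')$, have the prover send $\wh{q},\wh{r}'$, and have the verifier check the identity at a random point, with soundness $d'/q'$ by Schwartz--Zippel. The only cosmetic difference is that you factor the state analysis into a ``perfect soundness'' step (prover cannot make the formal identity hold) followed by a Schwartz--Zippel step, whereas the paper folds both into a single round with the state defined as doomed iff the verifier's check fails; the underlying argument is identical.
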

\begin{proof}
    We begin with a formal description of the Poly-IOP:
    \begin{algorithm}[H]
    \caption{Univariate Sumcheck Poly-IOP} \label{poly iop: univar sumcheck}
    \begin{algorithmic}[1]
        \STATE \textbf{P:} The prover sends $\wh{F} \in \Ff_{q'}^{\leq d' - |H|}[x]$, and $\wh{R} \in \Ff_{q'}^{\leq |H|-2}[x]$. In the honest case,
        \[
        \wh{f} = \frac{\alpha}{|H|} + \wh{V}_H \cdot \wh{F} + x \cdot \wh{R}.
        \]
        \STATE \textbf{V:} The verifier chooses $\gamma \in \Ff_{q'}$ uniformly at random and checks if
        \[
        \wh{f}(\gamma) = \frac{\alpha}{|H|} + \wh{V}_H(\gamma) \cdot \wh{F}(\gamma) + \gamma \cdot \wh{R}(\gamma).
        \]
        If the check fails then the verifier rejects.
    \end{algorithmic}
\end{algorithm}
    The number of polynomials, round complexity, input query complexity, and proof query complexity are clear. 
    
    \paragraph{Completeness:} assume that $\sum_{b \in H} \wh{f}(b) = \alpha$. By~\cref{lm: side condition vanish decomp} we may write 
    \[
    \wh{f} = \wh{V}_H \cdot \wh{F} + \wh{R'},
    \]
    for $\wh{F} \in \Ff_{q'}^{\leq d' - |H|}[x]$, and $\wh{R}' \in \Ff_{q'}^{\leq |H|-1}[x]$ that agrees with $\wh{f}$ on $H$. Furthermore, by \cref{fact: sum}, it must be the case that $\wh{R}(0) = \alpha/|H|$. Therefore, the honest prover can find $\wh{F} \in \Ff_{q'}^{\leq d' - |H|}[x]$ and $\wh{R} \in \Ff_{q'}^{\leq |H|-2}[x]$ which satisfy 
    \[
    \wh{f} = \alpha/|H| + \wh{V}_H \cdot \wh{F} + x\wh{R}.
    \]
    It follows that the equation in step 2 is satisfied for every $\gamma$ and the verifier always accepts.

   \paragraph{Round-by-round soundness:} there is only one round of interaction, after which we define the state as doomed if and only if the check fails and the verifier rejects. If the initial state is doomed, then it follows that for any  $\wh{F} \in \Ff_{q'}^{\leq d' - |H|}[x]$ and $\wh{R} \in \Ff_{q'}^{\leq |H|-2}[x]$, 
    \[
      \wh{f} \neq \alpha/|H| + \wh{V}_H \cdot \wh{F} + x\wh{R}.
    \]
    This is because the right hand side is a polynomial summing to $\alpha$ over $H$ by \cref{fact: sum}. As the polynomial on the right hand side has degree $d'$, by the Schwartz-Zippel lemma, the check in step $2$ fails with probability at least $1 - d'/q'$.
\end{proof}

\subsubsection{Proof of \cref{lm: sumcheck poly iop}}
The IOP is described in \cref{poly iop: mvar sumcheck}.
\begin{algorithm}[H]
    \caption{Multivariate Sumcheck Poly-IOP} \label{poly iop: mvar sumcheck}
    \begin{algorithmic}[1]
    \STATE \textbf{P: } The prover sends $\wh{F}_1 \in \Ff^{\leq 6d}_{q'}[x]$. In the honest case,
    \[
    \wh{F}_1(x) = \sum_{\gamma_2, \gamma_3 \in H} \wh{f}(x, \gamma_2, \gamma_3),
    \]
    and 
    \begin{equation} \label{eq: sumcheck equation 1}   
    \sum_{\gamma \in H} \wh{F}_1(\gamma) = b.
    \end{equation}
    \STATE \textbf{V: } The verifier chooses $\zeta_1 \in \Ff_{q'}$ uniformly at random and sends it to the prover. The verifier queries $\wh{F}_1(\zeta_1)$.
    \STATE \textbf{P: } The prover sends $\wh{F}_2 \in \Ff_{q'}^{\leq 6d}[x]$. In the honest case,
    \[
    \wh{F}_2(x) = \sum_{\gamma_3 \in H} \wh{f}(\zeta_1, x, \gamma_3),
    \]
    and
    \begin{equation}\label{eq: sumcheck equation 2}
            \sum_{\gamma \in H} \wh{F}_2(\gamma) = \wh{F}_1(\zeta_1).
    \end{equation}
    \STATE \textbf{V: } The verifier chooses $\zeta_2 \in \Ff_{q'}$ uniformly at random and queries $\wh{F}_2(\zeta_2)$.
    \STATE \textbf{P + V: } Both parties run the univariate sumcheck protocol to check~\eqref{eq: sumcheck equation 1},~\eqref{eq: sumcheck equation 2}, and
     \begin{equation} \label{eq: sumcheck equation 3}  
     \sum_{\gamma_3 \in H} \wh{f}(\zeta_1, \zeta_2, \gamma_3) = \wh{F}_2(\zeta_2),
    \end{equation}
    hold. 
     \STATE \textbf{V:} The verifier accepts if they reach this stage. That is, they accept if they have not rejected yet during steps 2,5, or 7.
    \end{algorithmic}
\end{algorithm}
The number of polynomials, round complexity, input query complexity, and proof query complexity are clear. We discuss the completeness and Round-by-Round soundness below.
\paragraph{Completeness.} In the completeness case, the prover sends $\wh{F}_1(x) = \sum_{\gamma_2, \gamma_3 \in H} \wh{f}(x, \gamma_2, \gamma_3)$. Note that indeed $\wh{F}_1 \in \Ff_{q'}^{\leq 6d}[x,y]$ and that~\eqref{eq: sumcheck equation 1} is satisfied. As a result, the univariate sumcheck in step 2 passes with probability $1$ by the completeness in \cref{lm: univariate sumcheck poly iop}. In step $4$, the prover sends $\wh{F}_2 = \sum_{\gamma_3 \in H}\wh{f}(\zeta_1, x, \gamma_3)$. Once again $\wh{F}_2 \in \Ff_{q'}^{\leq 6d}[x]$ and Equation~\eqref{eq: sumcheck equation 2} is satisfied, so step 5 always passes by the completeness in \cref{lm: univariate sumcheck poly iop}. We also have $\sum_{\gamma_3 \in H} \wh{f}(\zeta_1, \zeta_2, \gamma_3) = \wh{F}_2(\zeta_2)$ in this case, so the sumcheck in step 7 is once passes by the completeness in \cref{lm: univariate sumcheck poly iop} and the verifier accepts at the end.

\paragraph{Round-by-Round Soundness.}
After the first round of interaction (step 2), the state function is as follows.

\begin{state} \label{state: sumcheck state 1}
The state is doomed if and only if at least one of the following holds
\begin{itemize}
    \item $\wh{F}_1(\zeta_1) \neq \sum_{\gamma_2, \gamma_3}\wh{f}(\zeta_1, \gamma_2, \gamma_3)$.
    \item $\sum_{\gamma \in H} \wh{F}_1(\gamma) \neq b$.
\end{itemize}
\end{state}

After the second round of interaction (step 4), the state function is as follows.
\begin{state}\label{state: sumcheck state 2}
The state is doomed if and only if at least one of the following holds
\begin{itemize}
    \item $\wh{F}_2(\zeta_2) \neq \sum_{\gamma_3}\wh{f}(\zeta_1, \zeta_2, \gamma_3)$.
    \item $\sum_{\gamma \in H} \wh{F}_2(\gamma) \neq \wh{F}_1(\zeta_1)$.
    \item $\sum_{\gamma \in H} \wh{F}_1(\gamma) \neq b$.
\end{itemize}
\end{state}

\begin{proof}
    Suppose the initial state is doomed, meaning $\sum_{\gamma_1, \gamma_2, \gamma_3 \in H} \wh{f}(\gamma_1, \gamma_2, \gamma_3) \neq b$. If the prover sends $\wh{F}_1 \in \Ff_{q'}^{\leq 6d}[x]$ not satisfying~\eqref{eq: sumcheck equation 1}, then \cref{state: sumcheck state 1} is doomed, and soundness of the first round is done. Otherwise,~\eqref{eq: sumcheck equation 1} is satisfied and it must be the case that 
    \[
    \wh{F}_1(x) \neq \sum_{\gamma_2, \gamma_3 \in H} \wh{f}(x, \gamma_2, \gamma_3).
    \]
    Note that both sides of the above equation are degree at most $6d$ polynomials in $x$. By the Schwartz-Zippel lemma, we have 
    \[
    \wh{F}_1(\zeta_1) \neq \sum_{\gamma_2, \gamma_3\in H}\wh{f}(\zeta_1, \gamma_2, \gamma_3)
    \]
    with probability at least $1 - 6d/q' \geq 1-2^{-\lambda}$. This shows soundness for the first round of interaction.

    For the second round of interaction, suppose that \cref{state: sumcheck state 1} is doomed. If it is due to the second item in \cref{state: sumcheck state 1}, then \cref{state: sumcheck state 2} is automatically doomed by item 3 of  \cref{state: sumcheck state 2}. If the prover sends $\wh{F}_2$ not satisfying~\eqref{eq: sumcheck equation 2}, then  \cref{state: sumcheck state 2} is automatically doomed by item 2 of \cref{state: sumcheck state 2}. Otherwise, it must be the case that 
  \[
    \wh{F}_2(x) \neq \sum_{\gamma_3 \in H} \wh{f}(\zeta_1, x, \gamma_3),
    \]
    in which case item 1 of \cref{state: sumcheck state 2} holds with probability at least $1-6d/q' \geq 2^{-\lambda}$. This establishes soundness of the second round of interaction.

    For the remainder of the Round-by-Round soundness, note that going into step 5, the state is doomed if and only if at least one of \eqref{eq: sumcheck equation 1}, \eqref{eq: sumcheck equation 2}, or \eqref{eq: sumcheck equation 3} is not satisfied. In this case the Round-by-Round soundness of \cref{lm: univariate sumcheck poly iop} takes care of the soundness of the remaining rounds.  
\end{proof}

\end{document}